\documentclass[12pt, onecolumn, letterpaper]{article}

\usepackage[utf8]{inputenc} 
\usepackage[T1]{fontenc}
\usepackage{url}
\usepackage{ifthen}
\usepackage{cite}
\usepackage[cmex10]{amsmath} 
\usepackage{amssymb}
\usepackage{algorithm}
\usepackage{algorithmic}
\usepackage{subfigure}
\usepackage{tikz}
\usepackage{algorithm}
\usepackage{algorithmic}
\usepackage{amsmath}
\usepackage{amsthm}
\usepackage{amssymb}
\usepackage{ecltree}
\usepackage{enumerate}
\usepackage{mathrsfs}
\usepackage{xcolor}
\usepackage{comment}
\usepackage{tikz}
\usepackage{wrapfig}
\usepackage{mathtools}
\usetikzlibrary{trees}

\usepackage[normalem]{ulem}
\usepackage{here}
\usepackage{cases}
\usepackage{stackengine}
\usepackage{longtable}
\usepackage{diagbox}


\newtheorem{definition}{Definition}
\newtheorem{lemma}{Lemma}
\newtheorem{theorem}{Theorem}

\newtheorem{corollary}{Corollary}
\newtheorem{remark}{Remark}
\newtheorem{example}{Example}

\def\trans{\mathrm{\tau}}

\def\PREF{\mathcal{P}}

\def\hdec{\text{-}\mathrm{dec}}
\def\ext{\mathrm{ext}}
\def\irr{\mathrm{irr}}
\def\reg{\mathrm{reg}}

\def\hopt{\text{-}\mathrm{opt}}

\def\Huff{\mathrm{Huff}}

\def\pref{\mathrm{pref}}
\def\suff{\mathrm{suff}}
\def\kernel{\mathcal{R}}

\def \prefset{\mathscr{P}}

\newcommand{\argmin}{\mathop{\rm arg~min}\limits}


\begin{document}
\title{Properties of $k$-bit Delay Decodable Codes} 
\author{Kengo Hashimoto, Ken-ichi Iwata}

\date{University of Fukui, \\
             E-mail: \{khasimot, k-iwata\}@u-fukui.ac.jp}

\maketitle
\begin{abstract}
The class of $k$-bit delay decodable codes, source codes allowing decoding delay of at most $k$ bits for $k \geq 0$, can attain a shorter average codeword length than Huffman codes.
This paper discusses the general properties of the class of $k$-bit delay decodable codes with a finite number of code tables and  proves two theorems which enable us to limit the scope of code-tuples to be considered when discussing optimal $k$-bit delay decodable code-tuples.
\end{abstract}

\section{Introduction}
\label{sec:introduction}

It is known that one can achieve a shorter average codeword length than Huffman codes by allowing multiple code tables and some decoding delay.
AIFV (almost instantaneous fixed-to-variable length) codes developed by Yamamoto, Tsuchihashi, and Honda \cite{Yamamoto2015} attain a shorter average codeword length than Huffman codes by using a time-variant encoder with two code tables and allowing decoding delay of at most two bits.
AIFV codes are generalized to AIFV-$m$ codes, which can achieve a shorter average codeword length than AIFV codes for $m \geq 3$, allowing $m$ code tables and decoding delay of at most $m$ bits \cite{Hu2017}.
The worst-case redundancy of AIFV-$m$ codes is analyzed in \cite{Hu2017, Fujita2020} for $m = 2,3,4,5$. 
The literature \cite{IY:ISITA16, IY:ITW17, Fujita2019, Fujita2018, ISIT2018, ISITA2018, Golin2019, Golin2020, ISIT2020, Golin2021, Golin2022, Sumigawa2017, Hashimoto2019, ITW2020} propose the code construction and coding method of AIFV and AIFV-$m$ codes.
Extensions of AIFV-$m$ codes are proposed in \cite{Sugiura2018, Sugiura2022}.

The literature \cite{Hashimoto2022} formalized a binary encoder with a finite number of code tables as a \emph{code-tuple} and
introduced the class of code-tuples decodable with a delay of at most $k$ bits as the class of \emph{$k$-bit delay decodable codes}, which includes the class of AIFV-$k$ codes as a proper subclass.
Also, \cite{Hashimoto2022} proved that Huffman codes achieve the optimal average codeword length in the class of $1$-bit delay decodable code-tuples.
The literature \cite{Hashimoto2021} indicates that the class of AIFV codes achieves the optimal average codeword length in the class of $2$-bit delay decodable code-tuples with two code tables.

This paper discusses the general properties of $k$-bit delay decodable code-tuples for $k \geq 0$ and 
proves two theorems as the main results.
The first theorem guarantees that it is not the case that one can achieve an arbitrarily small average codeword length by using arbitrarily many code tables.
This leads to the existence of an \emph{optimal} $k$-bit delay decodable code-tuple, which achieves an average codeword length shorter than or equal to any other $k$-bit delay code-tuple.
The first theorem also gives an upper bound of the required number of code tables for an optimal $k$-bit delay decodable code-tuples.
The second theorem gives a necessary condition for a $k$-bit delay decodable code-tuple to be optimal, which is a generalization of a property of Huffman codes that each internal node in a code tree has two child nodes.
Both theorems enable us to limit the scope of code-tuples to be considered when discussing optimal $k$-bit delay decodable code-tuples.
As an application of these theorems, we can prove of the optimality of AIFV codes in the class of $2$-bit delay decodable codes with a finite number of code tables \cite{Hashimoto2023}.
 
This paper is organized as follows.
In Section \ref{sec:preliminary}, we prepare some notations, describe our data compression scheme, introduce some notions including $k$-bit delay decodable codes, and show their basic properties used to prove our main result.
Then we prove two theorems in Section \ref{sec:main} as the main results of this paper.
Lastly, we conclude this paper in Section \ref{sec:conclusion}.
To clarify the flow of the discussion, we relegate the proofs of most of the lemmas to \ref{sec:proofs}.
The main notations are listed in \ref{sec:notation}.

\begin{table}
\caption{Two examples of an code-tuple: $F^{(\alpha)}(f^{(\alpha)}_0, f^{(\alpha)}_1, f^{(\alpha)}_2, \allowbreak \trans^{(\alpha)}_0, \trans^{(\alpha)}_1, \trans^{(\alpha)}_2)$ and $F^{(\beta)}(f^{(\beta)}_0, f^{(\beta)}_1, f^{(\beta)}_2, \trans^{(\beta)}_0, \trans^{(\beta)}_1, \trans^{(\beta)}_2)$}
\label{tab:code-tuple}
\centering
\begin{tabular}{c | lclclc}
\hline
$s \in \mathcal{S}$ & $f^{(\alpha)}_0$ & $\trans^{(\alpha)}_0$ & $f^{(\alpha)}_1$ & $\trans^{(\alpha)}_1$ & $f^{(\alpha)}_2$ & $\trans^{(\alpha)}_2$\\
\hline
a & 01 & 0 & 00 & 1 & 1100 & 1\\
b & 10 & 1 & $\lambda$ & 0 & 1110 & 2\\
c & 0100 & 0 & 00111 & 1 & 111000 & 2\\
d & 01 & 2 & 00111 & 2 & 110 & 2\\
\hline
\end{tabular}

\vspace{8pt}

\begin{tabular}{c | lclclc}
\hline
$s \in \mathcal{S}$ & $f^{(\beta)}_0$ & $\trans^{(\beta)}_0$ & $f^{(\beta)}_1$ & $\trans^{(\beta)}_1$ & $f^{(\beta)}_2$ & $\trans^{(\beta)}_2$\\
\hline
a & $\lambda$ & 1 & 0110 & 1 & $\lambda$ & 2\\
b & 101 & 2 & 01 & 1 & $\lambda$ & 2\\
c & 1011 & 1 & 0111 & 1 & $\lambda$ & 2\\
d & 1101 & 2 & 01111 & 1 & $\lambda$ & 2\\
\hline
\end{tabular}
\end{table}

\section{Preliminaries}
\label{sec:preliminary}

First, we define some notations as follows.
Most of the notations in this paper are based on \cite{Hashimoto2022}.
Let $\mathbb{R}$ denote the set of all real numbers, and let $\mathbb{R}^m$ denote the set of all $m$ dimensional real row vectors for an integer $m \geq 1$.
Let $|\mathcal{A}|$ denote the cardinality of a finite set $\mathcal{A}$.
Let $\mathcal{A} \times \mathcal{B}$ denote the Cartesian product of $\mathcal{A}$ and $\mathcal{B}$, that is, $\mathcal{A} \times \mathcal{B} \coloneqq \{(a, b) : a \in \mathcal{A}, b \in \mathcal{B}\}$.
Let $\mathcal{A}^k$ (resp. $\mathcal{A}^{\leq k}$, $\mathcal{A}^{\geq k}$, $\mathcal{A}^{\ast}$, $\mathcal{A}^{+}$) denote the set of all sequences of length $k$ (resp. of length less than or equal to $k$, of length greater than or equal to $k$, of finite length, of finite positive length) over a set $\mathcal{A}$.
Thus, $\mathcal{A}^{+} = \mathcal{A}^{\ast} \setminus \{\lambda\}$, where $\lambda$ denotes the empty sequence.
The length of a sequence $\pmb{x}$ is denoted by $|\pmb{x}|$, in particular, $|\lambda| = 0$.
For a non-empty sequence $ \pmb{x} = x_1x_2\ldots x_{n}$, 
we define $\pref(\pmb{x}) \coloneqq x_1x_2\ldots x_{n-1}$ and $\suff(\pmb{x}) \coloneqq x_2\ldots x_{n-1}x_{n}$.
Namely, $\pref(\pmb{x})$ (resp. $\suff(\pmb{x})$) is the sequence obtained by deleting the last (resp. first) letter from $\pmb{x}$.
We say $\pmb{x} \preceq \pmb{y}$ if $\pmb{x}$ is a prefix of $\pmb{y}$, that is, there exists a sequence $\pmb{z}$, possibly $\pmb{z} = \lambda$, such that $\pmb{y} = \pmb{x}\pmb{z}$.
Also, we say $\pmb{x} \prec \pmb{y}$ if $\pmb{x} \preceq \pmb{y}$ and $\pmb{x} \neq \pmb{y}$.
For sequences $\pmb{x}$ and $\pmb{y}$ such that $\pmb{x} \preceq \pmb{y}$,
let $\pmb{x}^{-1}\pmb{y}$ denote the unique sequence $\pmb{z}$ such that $\pmb{x}\pmb{z} = \pmb{y}$.
Note that a notation $\pmb{x}^{-1}$ behaves like the ``inverse element'' of $\pmb{x}$ as stated in the following statements (i)--(iii).
\begin{itemize}
\item[(i)] For any $\pmb{x}$, we have $\pmb{x}^{-1}\pmb{x} = \lambda$.
\item[(ii)] For any $\pmb{x}$ and $\pmb{y}$ such that $\pmb{x} \preceq \pmb{y}$, we have $\pmb{x}\pmb{x}^{-1}\pmb{y} =\pmb{y}$.
\item[(iii)] For any $\pmb{x}, \pmb{y}$, and $\pmb{z}$ such that $\pmb{x}\pmb{y} \preceq \pmb{z}$, we have $(\pmb{x}\pmb{y})^{-1}\pmb{z} = \pmb{y}^{-1}\pmb{x}^{-1}\pmb{z}$.
\end{itemize}
The main notations used in this paper are listed in \ref{sec:notation}.

We now describe the details of our data compression system. 
In this paper, we consider a data compression system consisting of a source, an encoder, and a decoder.

\begin{itemize}
\item Source: We consider an i.i.d.\ source, which outputs a sequence $\pmb{x} = x_1x_2\ldots x_n$ of symbols of the source alphabet $\mathcal{S} = \{s_1, s_2, \ldots, s_{\sigma}\}$, where $n$ and $\sigma$ denote the length of $\pmb{x}$ and the alphabet size, respectively.
Each source output follows a fixed probability distribution $(\mu(s_1), \mu(s_2), \ldots, \mu(s_{\sigma}))$, where $\mu(s_i)$ is the probability of occurrence of $s_i$ for $i = 1, 2, \ldots, \sigma$.
In this paper, we assume $\sigma \geq 2$.

\item Encoder: The encoder has $m$ fixed code tables $f_0, f_1, \ldots, f_{m-1} : \mathcal{S} \rightarrow \mathcal{C}^{\ast}$, where $\mathcal{C} \coloneqq \{0, 1\}$ is the coding alphabet.
The encoder reads the source sequence $\pmb{x} \in \mathcal{S}^{\ast}$ symbol by symbol from the beginning of $\pmb{x}$ and encodes them according to the code tables.
For the first symbol $x_1$, we use an arbitrarily chosen code table from $f_0, f_1, \ldots, f_{m-1}$.
For $x_2, x_3, \ldots, x_n$, we determine which code table to use to encode according to $m$ fixed mappings $\trans_0, \trans_1, \ldots, \trans_{m-1} : \mathcal{S} \rightarrow [m] \coloneqq \{0, 1, 2, \ldots, m-1\}$.
More specifically, if the previous symbol $x_{i-1}$ is encoded by the code table $f_j$, then the current symbol $x_i$ is encoded by the code table $f_{\trans_j(x_{i-1})}$.
Hence, if we use the code table $f_i$ to encode $x_1$, then a source sequence $\pmb{x} = x_1x_2\ldots x_n$ is encoded to a codeword sequence $f(\pmb{x}) \coloneqq f_{i_1}(x_1)f_{i_2}(x_n)\ldots f_{i_n}(x_n)$, where
\begin{equation}
i_j \coloneqq
\begin{cases}
i &\,\,\text{if}\,\, j = 1,\\
\trans_{i_{j-1}}(x_{j-1})  &\,\,\text{if}\,\, j \geq 2
\end{cases}
\end{equation}
for $j = 1, 2, \ldots, n$.

\item Decoder: 
The decoder reads the codeword sequence $f(\pmb{x})$ bit by bit from the beginning of $f(\pmb{x})$.
Each time the decoder reads a bit, the decoder recovers as long prefix of $\pmb{x}$ as the decoder can uniquely identify from the prefix of $f(\pmb{x})$ already read.
We assume that the encoder and decoder share the index of the code table used to encode $x_1$ in advance.
\end{itemize}

\subsection{Code-tuples}
\label{subsec:treepair}

The behavior of the encoder and decoder for a given source sequence is completely determined by $m$ code tables $f_0, f_1, \ldots, f_{m-1}$ and $m$ mappings $\trans_0, \trans_1, \ldots, \trans_{m-1}$ if we fix the index of code table used to encode $x_1$.
Accordingly, we name a tuple $F(f_0, f_1, \ldots, f_{m-1}, \trans_0, \trans_1, \ldots, \trans_{m-1})$ as a \emph{code-tuple} $F$ and identify a source code with a code-tuple $F$.

\begin{definition}
  \label{def:treepair}
Let $m$ be a positive integer.
An \emph{$m$-code-tuple} $F(f_0, f_1, \ldots, f_{m-1}, \allowbreak \trans_0, \trans_1, \ldots, \trans_{m-1})$ is a tuple of
$m$ mappings $f_0, f_1, \ldots, f_{m-1} : \mathcal{S} \rightarrow \mathcal{C}^{\ast}$ and $m$ mappings $\trans_0, \trans_1, \ldots, \trans_{m-1} : \mathcal{S} \rightarrow [m]$.

We define $\mathscr{F}^{(m)}$ as the set of all $m$-code-tuples.
Also, we define
$\mathscr{F} \coloneqq  \mathscr{F}^{(1)} \cup \mathscr{F}^{(2)} \cup \mathscr{F}^{(3)} \cup \cdots.$
An element of $\mathscr{F}$ is called a \emph{code-tuple}.
\end{definition}

We write $F(f_0, f_1, \ldots, f_{m-1}, \trans_0, \trans_1, \ldots, \trans_{m-1})$ also as $F(f, \trans)$ or $F$ for simplicity.
For $F \in \mathscr{F}^{(m)}$, let $|F|$ denote the number of code tables of $F$, that is, $|F| \coloneqq m$.
We write $[|F|] = \{0, 1, 2, \ldots, |F|-1\}$ as $[F]$ for simplicity.

\begin{example}
Table \ref{tab:code-tuple} shows two examples $F^{(\alpha)}$ and $F^{(\beta)}$ of a $3$-code-tuple 
for $\mathcal{S} = \{\mathrm{a}, \mathrm{b}, \mathrm{c}, \mathrm{d}\}$.
\end{example}

\begin{example}
\label{ex:encode}
We consider encoding of a source sequence $\pmb{x} = x_1x_2x_3x_4 \coloneqq \mathrm{badb}$ with the code-tuple $F(f, \trans) \coloneqq F^{(\alpha)}(f^{(\alpha)}, \trans^{(\alpha)})$ in Table \ref{tab:code-tuple}.
If $x_1 = \mathrm{b}$ is encoded with the code table $f_0$, then the encoding process is as follows.
\begin{itemize}
\item $x_1 = \mathrm{b}$ is encoded to $f_0(\mathrm{b}) = 10$. The index of the next code table is $\trans_0(\mathrm{b}) = 1$.
\item $x_2 = \mathrm{a}$ is encoded to $f_1(\mathrm{a}) = 00$. The index of the next code table is $\trans_1(\mathrm{a}) = 1$.
\item $x_3 = \mathrm{d}$ is encoded to $f_1(\mathrm{d}) = 00111$. The index of the next code table is $\trans_1(\mathrm{d}) = 2$.
\item $x_4 = \mathrm{b}$ is encoded to $f_2(\mathrm{b}) = 1110$. The index of the next code table is $\trans_2(\mathrm{b}) = 2$.
\end{itemize}
As the result, we obtain a codeword sequence $\pmb{c} \coloneqq f_0(\mathrm{b})f_1(\mathrm{a})f_1(\mathrm{d})f_2(\mathrm{b}) = 1000001111110$.

The decoding process of $\pmb{c} = 1000001111110$ is as follows.
\begin{itemize}
\item After reading the prefix $10$ of $\pmb{c}$, the decoder can uniquely identify $x_1 = \mathrm{b}$ and $10 = f_0(\mathrm{b})$. The decoder can also know that $x_2$ should be decoded with $f_{\trans_0(\mathrm{b})} = f_1$.
\item After reading the prefix $1000 = f_0(\mathrm{c})f_0(\mathrm{a})$ of $\pmb{c}$, the decoder still cannot uniquely identify  $x_2 = \mathrm{a}$ because there remain three possible cases: the case $x_2 = \mathrm{a}$, the case $x_2 = \mathrm{c}$, and the case $x_2 = \mathrm{d}$.
\item After reading the prefix $10000$ of $\pmb{c}$, the decoder can uniquely identify $x_2 = \mathrm{a}$ and $10000 = f_0(\mathrm{b})f_1(\mathrm{a})0$. The decoder can also know that $x_3$ should be decoded with $f_{\trans_1(\mathrm{a})} = f_1$.
\item After reading the prefix $100000111 = f_0(\mathrm{b})f_1(\mathrm{a})\allowbreak f_1(\mathrm{d})$ of $\pmb{c}$, the decoder still cannot uniquely identify $x_3 = \mathrm{d}$ because there remain two possible cases: the case $x_3 = \mathrm{c}$ and the case $x_3 = \mathrm{d}$.
\item After reading the prefix $10000011111$ of $\pmb{c}$, the decoder can uniquely identify $x_3 = \mathrm{d}$ and $10000011111 = f_0(\mathrm{b})f_1(\mathrm{a})f_1(\mathrm{d})11$.
The decoder can also know that $x_4$ should be decoded with $f_{\trans_1(\mathrm{d})} = f_2$.
\item After reading the prefix $\pmb{c} = 1000001111110$, the decoder can uniquely identify $x_4 = \mathrm{b}$ and $1000001111110 = f_0(\mathrm{b})f_1(\mathrm{a})f_1(\mathrm{d})f_2(\mathrm{b})$.
\end{itemize}
As the result, the decoder recovers the original sequence $\pmb{x} = \mathrm{badb}$.
\end{example}

In encoding $\pmb{x} = x_1x_2 \ldots x_{n} \in \mathcal{S}^{\ast}$ with $F(f, \trans) \in \mathscr{F}$, 
the $m$ mappings $\trans_0, \trans_1, \ldots, \trans_{m-1}$ determine which code table to use to encode $x_2, x_3, \ldots, x_n$.
However, there are choices of which code table to use for the first symbol $x_1$.
For $i \in [F]$ and $\pmb{x} \in \mathcal{S}^{\ast}$, we define $f^{\ast}_i(\pmb{x}) \in \mathcal{C}^{\ast}$ as the codeword sequence in the case where $x_1$ is encoded with $f_i$.
Also, we define $\trans^{\ast}_i(\pmb{x}) \in [F]$ as the index of the code table used next after encoding $\pmb{x}$ in the case where $x_1$ is encoded with $f_i$.
We give formal definitions of $f^{\ast}_i$ and $\trans^{\ast}_i$ in the following Definition \ref{def:f_T} as recursive formulas.

\begin{definition}
 \label{def:f_T}
For $F(f, \trans) \in \mathscr{F}$ and $i \in [F]$, we define a mapping $f_i^{\ast} : \mathcal{S}^{\ast} \rightarrow \mathcal{C}^{\ast}$ and a mapping $\trans_i^{\ast} : \mathcal{S}^{\ast} \rightarrow [F]$ as
\begin{equation}
\label{eq:fstar}
f_i^{\ast}(\pmb{x}) = 
\begin{cases}
\lambda &\,\,\text{if}\,\, \pmb{x} = \lambda,\\
f_i(x_1)f_{\trans_i(x_1)}^{\ast}(\suff(\pmb{x})) &\,\,\text{if}\,\, \pmb{x} \neq \lambda,\\
\end{cases}
\end{equation}
\begin{equation}
\label{eq:tstar}
\trans_i^{\ast}(\pmb{x}) = 
\begin{cases}
i &\,\,\text{if}\,\, \pmb{x} = \lambda,\\
\trans^{\ast}_{\trans_i(x_1)}(\suff(\pmb{x})) &\,\,\text{if}\,\, \pmb{x} \neq \lambda\\
\end{cases}
\end{equation}
for $\pmb{x} = x_1 x_2 \ldots x_{n} \in \mathcal{S}^{\ast}$.
\end{definition}

\begin{example}
We consider $F(f, \trans) \coloneqq F^{(\alpha)}(f^{(\alpha)}, \trans^{(\alpha)})$ of Table \ref{tab:code-tuple}.
Then $f_0^{\ast}(\mathrm{badb})$ and $\trans^{\ast}_0(\mathrm{badb})$ is given as follows (cf. Example \ref{ex:encode}):
\begin{align*}
f_0^{\ast}(\mathrm{badb})
&= f_0(\mathrm{b}) f_1^{\ast}(\mathrm{adb})\\
&= f_0(\mathrm{b}) f_1(\mathrm{a}) f_1^{\ast}(\mathrm{db})\\
&= f_0(\mathrm{b}) f_1(\mathrm{a}) f_1(\mathrm{d}) f_2^{\ast}(\mathrm{b})\\
&= f_0(\mathrm{b}) f_1(\mathrm{a}) f_1(\mathrm{d}) f_2(\mathrm{b}) f_2^{\ast}(\lambda)\\
&= 1000001111110
\end{align*}
and
\begin{align*}
\trans^{\ast}_0(\mathrm{badb})
= \trans^{\ast}_{1}(\mathrm{adb})
= \trans^{\ast}_{1}(\mathrm{db})
= \trans^{\ast}_{2}(\mathrm{b})
= \trans^{\ast}_{2}(\mathrm{\lambda})
= 2.
\end{align*}
\end{example}

The following Lemma \ref{lem:f_T} follows from Definition \ref{def:f_T}.
\begin{lemma}
\label{lem:f_T}
For any $F(f, \trans) \in \mathscr{F}$, $i \in [F]$, and $\pmb{x}, \pmb{y} \in \mathcal{S}^{\ast}$, 
the following statements (i)--(iii) hold.
\begin{itemize}
\item[(i)] $f_i^{\ast}(\pmb{x} \pmb{y}) = f_i^{\ast}(\pmb{x}) f^{\ast}_{\trans_i^{\ast}(\pmb{x})}(\pmb{y})$. 
\item[(ii)] $\trans_i^{\ast}(\pmb{x} \pmb{y}) = \trans^{\ast}_{\trans^{\ast}_i(\pmb{x})}(\pmb{y})$.
\item[(iii)] If $\pmb{x} \preceq \pmb{y}$, then $f^{\ast}_i(\pmb{x}) \preceq f^{\ast}_i(\pmb{y})$.
\end{itemize}
\end{lemma}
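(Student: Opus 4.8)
The plan is to prove statements (ii) and (i) by induction on $|\pmb{x}|$, in that order (the right-hand side of (i) mentions $\trans_i^{\ast}$, so it is cleanest to have (ii)-type bookkeeping available, though as it turns out (i) only needs the defining clauses directly), and then to obtain (iii) as an immediate consequence of (i). Throughout, the only inputs are the two recursive clauses \eqref{eq:fstar} and \eqref{eq:tstar} of Definition \ref{def:f_T}, together with the associativity of concatenation; there is no essential difficulty, only index-bookkeeping to keep straight.

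For (ii): when $\pmb{x} = \lambda$, the left-hand side is $\trans_i^{\ast}(\pmb{y})$ and the right-hand side is $\trans^{\ast}_{\trans^{\ast}_i(\lambda)}(\pmb{y}) = \trans^{\ast}_i(\pmb{y})$ by the first clause of \eqref{eq:tstar}, so equality holds. For $|\pmb{x}| \geq 1$, write $\pmb{x} = x_1 \suff(\pmb{x})$. Applying the second clause of \eqref{eq:tstar} to $\pmb{x}\pmb{y} = x_1(\suff(\pmb{x})\pmb{y})$ gives $\trans_i^{\ast}(\pmb{x}\pmb{y}) = \trans^{\ast}_{\trans_i(x_1)}(\suff(\pmb{x})\pmb{y})$; the induction hypothesis applied to $\suff(\pmb{x})$ (of length $|\pmb{x}|-1$) with starting index $\trans_i(x_1)$ rewrites this as $\trans^{\ast}_{\trans^{\ast}_{\trans_i(x_1)}(\suff(\pmb{x}))}(\pmb{y})$, and finally $\trans^{\ast}_{\trans_i(x_1)}(\suff(\pmb{x})) = \trans_i^{\ast}(\pmb{x})$ by \eqref{eq:tstar} again, yielding the claim.

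For (i): when $\pmb{x} = \lambda$, both sides equal $f_i^{\ast}(\pmb{y})$, since $f_i^{\ast}(\lambda) = \lambda$ and $\trans_i^{\ast}(\lambda) = i$. For $|\pmb{x}| \geq 1$, write again $\pmb{x} = x_1 \suff(\pmb{x})$. By \eqref{eq:fstar}, $f_i^{\ast}(\pmb{x}\pmb{y}) = f_i(x_1)\,f^{\ast}_{\trans_i(x_1)}(\suff(\pmb{x})\pmb{y})$; the induction hypothesis applied to $\suff(\pmb{x})$ expands the second factor as $f^{\ast}_{\trans_i(x_1)}(\suff(\pmb{x}))\,f^{\ast}_{\trans^{\ast}_{\trans_i(x_1)}(\suff(\pmb{x}))}(\pmb{y})$. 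Now $f_i(x_1)\,f^{\ast}_{\trans_i(x_1)}(\suff(\pmb{x})) = f_i^{\ast}(\pmb{x})$ by \eqref{eq:fstar}, and $\trans^{\ast}_{\trans_i(x_1)}(\suff(\pmb{x})) = \trans_i^{\ast}(\pmb{x})$ by \eqref{eq:tstar}, so the whole expression collapses to $f_i^{\ast}(\pmb{x})\,f^{\ast}_{\trans_i^{\ast}(\pmb{x})}(\pmb{y})$, as required.

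Finally, (iii) follows from (i): if $\pmb{x} \preceq \pmb{y}$, put $\pmb{z} \coloneqq \pmb{x}^{-1}\pmb{y}$ so that $\pmb{y} = \pmb{x}\pmb{z}$; then (i) gives $f_i^{\ast}(\pmb{y}) = f_i^{\ast}(\pmb{x})\,f^{\ast}_{\trans_i^{\ast}(\pmb{x})}(\pmb{z})$, which exhibits $f_i^{\ast}(\pmb{x})$ as a prefix of $f_i^{\ast}(\pmb{y})$. I do not expect any genuine obstacle here: it is a routine structural induction, and the only point needing mild care is to remember that the induction runs on the length of the \emph{first} argument $\pmb{x}$ with $\pmb{y}$ held fixed but arbitrary, and to invoke the defining clauses of $f_i^{\ast}$ and $\trans_i^{\ast}$ at the correct code-table indices.
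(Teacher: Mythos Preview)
Your proof is correct. The paper does not spell out a proof of this lemma at all, merely remarking that it ``follows from Definition~\ref{def:f_T}''; your structural induction on $|\pmb{x}|$ using the recursive clauses \eqref{eq:fstar} and \eqref{eq:tstar}, followed by deriving (iii) from (i), is exactly the routine argument the paper is implicitly pointing to.
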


\subsection{$k$-bit Delay Decodable Code-tuples}

In Example \ref{ex:encode}, despite $f_0(\mathrm{b})f_1(\mathrm{a}) = 1000$, to uniquely identify $x_1x_2 = \mathrm{ba}$, it is required to read $10000$ including the additional $1$ bit. 
Namely, a decoding delay of $1$ bit occurs to decode $x_2 = \mathrm{a}$.
Similarly, despite $f_0(\mathrm{b})f_0(\mathrm{a})f_1(\mathrm{d}) = 100000111$, to uniquely identify $x_1x_2x_3 = \mathrm{bad}$, it is required to read $100000111\allowbreak 11$ including the additional $2$ bits.
Namely, a decoding delay of $2$ bits occurs to decode $x_3 = \mathrm{d}$.
In general, in the decoding process with $F^{(\alpha)}$ in Table \ref{tab:code-tuple},
it is required to read the additional at most $2$ bits for the decoder to uniquely identify each symbol of a given source sequence.
We say a code-tuple is \emph{$k$-bit delay decodable} if the decoder can always uniquely identify each source symbol by reading the additional $k$ bits of the codeword sequence.
The code-tuple $F^{(\alpha)}$ is an example of a $2$-bit delay decodable code-tuple.
To state the formal definition of a $k$-bit delay decodable code-tuple,
we introduce the following Definitions \ref{def:pref} and \ref{def:prefstar}.

\begin{definition}
\label{def:pref}
For an integer $k \geq 0$, $F(f, \trans) \in \mathscr{F}, i \in [F]$, and $\pmb{b} \in \mathcal{C}^{\ast}$, we define 
\begin{equation}
\label{eq:pref1}
\PREF^k_{F, i}(\pmb{b}) \coloneqq \{\pmb{c} \in \mathcal{C}^k : \pmb{x} \in \mathcal{S}^{+}, f^{\ast}_i(\pmb{x}) \succeq \pmb{b}\pmb{c}, f_i(x_1) \succeq \pmb{b}  \},
\end{equation}
\begin{equation}
\label{eq:pref2}
\bar{\PREF}^k_{F, i}(\pmb{b}) \coloneqq \{\pmb{c} \in \mathcal{C}^k : \pmb{x} \in \mathcal{S}^{+}, f^{\ast}_i(\pmb{x}) \succeq \pmb{b}\pmb{c}, f_i(x_1) \succ \pmb{b}  \},
\end{equation}
where $x_1$ denotes the first symbol of $\pmb{x}$.
Namely, $\PREF^k_{F, i}(\pmb{b})$ (resp. $\bar{\PREF}^k_{F, i}(\pmb{b})$) is the set of all $\pmb{c} \in \mathcal{C}^k$ such that there exists $\pmb{x} = x_1x_2\ldots x_n \in \mathcal{S}^{+}$ satisfying $f^{\ast}_i(\pmb{x}) \succeq \pmb{b}\pmb{c}$ and $f_i(x_1) \succeq \pmb{b}$ (resp. $f_i(x_1) \succ \pmb{b}$).
\end{definition}

\begin{definition}
\label{def:prefstar}
For $F(f, \trans) \in \mathscr{F}, i \in [F]$, and $\pmb{b} \in \mathcal{C}^{\ast}$, we define 
\begin{equation}
\PREF^{\ast}_{F, i}(\pmb{b}) \coloneqq \PREF^{0}_{F, i}(\pmb{b}) \cup \PREF^{1}_{F, i}(\pmb{b}) \cup \PREF^{2}_{F, i}(\pmb{b}) \cup \cdots,
\end{equation}
\begin{equation}
\bar{\PREF}^{\ast}_{F, i}(\pmb{b}) \coloneqq \bar{\PREF}^{0}_{F, i}(\pmb{b}) \cup \bar{\PREF}^{1}_{F, i}(\pmb{b}) \cup \bar{\PREF}^{2}_{F, i}(\pmb{b}) \cup \cdots.
\end{equation}
\end{definition}

We write $\PREF^k_{F, i}(\lambda)$ (resp. $\bar{\PREF}^k_{F, i}(\lambda)$) as $\PREF^k_{F, i}$ (resp. $\bar{\PREF}^k_{F, i}$) for simplicity.
Also, we write $\PREF^{\ast}_{F, i}(\lambda)$ (resp. $\bar{\PREF}^{\ast}_{F, i}(\lambda)$) as $\PREF^{\ast}_{F, i}$ (resp. $\bar{\PREF}^{\ast}_{F, i}$).
We have
\begin{align}
\PREF^k_{F, i} &\overset{(\mathrm{A})}{=} \{\pmb{c} \in \mathcal{C}^k : \pmb{x} \in \mathcal{S}^{+}, f^{\ast}_i(\pmb{x}) \succeq \pmb{c}\}
\overset{(\mathrm{B})}{=} \{\pmb{c} \in \mathcal{C}^k : \pmb{x} \in \mathcal{S}^{\ast}, f^{\ast}_i(\pmb{x}) \succeq \pmb{c}\},\label{eq:pref3}
\end{align}
where (A) follows from (\ref{eq:pref1}), and (B) is justified as follows.
The relation ``$\subseteq$'' holds by $\mathcal{S}^{+} \subseteq \mathcal{S}^{\ast}$.
We show the relation ``$\supseteq$''.
We choose $\pmb{c} \in \mathcal{C}^k$ such that $f^{\ast}_i(\pmb{x}) \succeq \pmb{c}$ for some $\pmb{x} \in \mathcal{S}^{\ast}$ arbitrarily and show that $f^{\ast}_i(\pmb{x}') \succeq \pmb{c}$ for some $\pmb{x}' \in \mathcal{S}^{+}$.
The case $\pmb{x} \in \mathcal{S}^{+}$ is trivial.
In the case $\pmb{x} \in\{\lambda\} = \mathcal{S}^{\ast} \setminus \mathcal{S}^{+}$,
then since $\pmb{c} \preceq f^{\ast}_i(\pmb{x}) = f^{\ast}_i(\lambda) = \lambda$ by (\ref{eq:fstar}), we have $\pmb{c} = \lambda$, which leads to that any $\pmb{x}' \in \mathcal{S}^{+}$ satisfies $f^{\ast}_i(\pmb{x}') \succeq \lambda = \pmb{c}$.
Hence, the relation ``$\supseteq$'' holds.

\begin{example}
For $F^{(\alpha)}$ in Table \ref{tab:code-tuple}, we have
\begin{align*}
\PREF^0_{F^{(\alpha)}, 1}(00) &= \{\lambda\},\\
\PREF^1_{F^{(\alpha)}, 1}(00) &= \{0, 1\},\\
\PREF^2_{F^{(\alpha)}, 1}(00) &= \{00, 01, 10, 11\},\\
\PREF^3_{F^{(\alpha)}, 1}(00) &= \{000, 001, 010, 011, 100, 101, 111\}
\end{align*}
and
\begin{align*}
\bar{\PREF}^0_{F^{(\alpha)}, 1}(00) &= \{\lambda\},\\
\bar{\PREF}^1_{F^{(\alpha)}, 1}(00) &= \{1\},\\
\bar{\PREF}^2_{F^{(\alpha)}, 1}(00) &= \{11\},\\
\bar{\PREF}^3_{F^{(\alpha)}, 1}(00) &= \{111\}.
\end{align*}
For $F^{(\beta)}$ in Table \ref{tab:code-tuple}, we have
\begin{align*}
\PREF^0_{F^{(\beta)}, 2} &= \{\lambda\}, &\bar{\PREF}^0_{F^{(\beta)}, 2} &= \emptyset,\\
\PREF^1_{F^{(\beta)}, 2} &= \emptyset, &\bar{\PREF}^1_{F^{(\beta)}, 2} &= \emptyset.
\end{align*}
\end{example}

We consider the situation where the decoder has already read the prefix $\pmb{b}'$ of a given codeword sequence and identified a prefix $x_1x_2\ldots x_l$ of the original sequence $\pmb{x}$.
Then we have $\pmb{b}' = f_{i_1}(x_1)f_{i_2}(x_2)\ldots f_{i_l}(x_l)\pmb{b}$ for some $\pmb{b} \in \mathcal{C}^{\ast}$.
Put $i \coloneqq i_{l+1}$ and let $\{s_1, s_2, \ldots, s_r\}$ be the set of all symbols $s \in \mathcal{S}$ such that $f_i(s) = \pmb{b}$.
Then there are the following $r+1$ possible cases for the next symbol $x_{l+1}$: the case $x_{l+1} = s_1$, the case $x_{l+1} = s_2$, $\ldots,$ the case $x_{l+1} = s_r$, and the case $f_i(x_{l+1}) \succ \pmb{b}$.
For a code-tuple $F$ to be $k$-bit delay decodable, the decoder must always be able to distinguish these $r+1$ cases by reading the following $k$ bits of the codeword sequence.
Namely, it is required that the $r+1$ sets listed below are disjoint:
\begin{itemize}
\item $\PREF^k_{F, \trans_i(s_1)}$, the set of all possible following $k$ bits in the case $x_{l+1} = s_1$,
\item $\PREF^k_{F, \trans_i(s_2)}$, the set of all possible following $k$ bits in the case $x_{l+1} = s_2$,
\item $\cdots$,
\item $\PREF^k_{F, \trans_i(s_r)}$, the set of all possible following $k$ bits in the case $x_{l+1} = s_r$,
\item $\bar{\PREF}^k_{F, i}(\pmb{b})$, the set of all possible following $k$ bits in the case $f_i(x_{l+1}) \succ \pmb{b}$.
\end{itemize}
This discussion leads to the following Definition \ref{def:k-bitdelay}.

 \begin{definition}
  \label{def:k-bitdelay}
 Let $k \geq 0$ be an integer. 
A code-tuple $F(f, \trans)$ is said to be \emph{$k$-bit delay decodable} if the following conditions (i) and (ii) hold.
\begin{itemize}
\item[(i)] For any $i \in [F]$ and $s \in \mathcal{S}$, it holds that $\PREF^k_{F, \trans_i(s)} \cap \bar{\PREF}^k_{F, i}(f_i(s)) = \emptyset$.
\item[(ii)] For any $i \in [F]$ and $s, s' \in \mathcal{S}$, if $s \neq s'$ and $f_i(s) = f_i(s')$, then $\PREF^k_{F, \trans_i(s)} \cap \PREF^k_{F, \trans_i(s')} =  \emptyset$.
\end{itemize}
 For an integer $k \geq 0$, we define $\mathscr{F}_{k\hdec}$ as the set of all $k$-bit delay decodable code-tuples, that is, 
$\mathscr{F}_{k\hdec} \coloneqq \{F \in \mathscr{F} : F \text{ is } k \text{-bit delay decodable} \}$.
\end{definition}

Definition \ref{def:k-bitdelay} is equivalent to the definition of $k$-bit delay decodable codes in \cite{Hashimoto2022}.
See Appendix \ref{sec:equiv} for the proof.

\begin{example}
We define $F(f, \trans)$ as $F^{(\alpha)}$ in Table \ref{tab:code-tuple}.
Then we have $F \in \mathscr{F}_{2\hdec}$ while $F \not\in \mathscr{F}_{1\hdec}$ because
\begin{equation*}
\PREF^1_{F, \trans_0(\mathrm{a})} \cap \PREF^1_{F, \trans_0(\mathrm{d})}
= \{0, 1\} \cap \{1\} = \{1\} \neq \emptyset,
\end{equation*}
that is, $F$ does not satisfy Definition \ref{def:k-bitdelay} (ii) for $k = 1$.

Next, we define $F(f, \trans)$ as $F^{(\beta)}$ in Table \ref{tab:code-tuple}.
Then we have $F \in \mathscr{F}_{1\hdec}$ while $F \not\in \mathscr{F}_{0\hdec}$ because
\begin{equation*}
\PREF^0_{F, \trans_1(\mathrm{c})} \cap \bar{\PREF}^0_{F, 1}(f_1(\mathrm{c})) = \{\lambda\} \cap \{\lambda\} = \{\lambda\} \neq \emptyset,
\end{equation*}
that is, $F$ does not satisfy Definition \ref{def:k-bitdelay} (i) for $k = 0$.
\end{example}

\begin{remark}
A $k$-bit delay decodable code-tuple $F$ is not necessarily uniquely decodable, that is, the mappings $f^{\ast}_0, f^{\ast}_1, \ldots, f^{\ast}_{|F|-1}$ are not necessarily injective. 
For example, for $F^{(\alpha)} \in \mathscr{F}_{2\hdec}$ in Table \ref{tab:code-tuple}, we have ${f^{(\alpha)}_0}^{\ast}(\mathrm{bc}) = 1000111 = {f^{(\alpha)}_0}^{\ast}(\mathrm{bd})$. In general, it is possible that the decoder cannot uniquely recover the last few symbols of the original source sequence in the case where the rest of the codeword sequence is less than $k$ bits. In such a case, we should append additional information for practical use.
\end{remark}

For $F(f, \trans) \in \mathscr{F}$ and $i \in [F]$, the mapping $f_i$ is said to be \emph{prefix-free} if
for any $s, s^{\prime} \in \mathcal{S}$, if $f_i(s) \preceq f_i(s^{\prime})$, then $s = s^{\prime}$.
A $0$-bit delay decodable code-tuple is characterized as a code-tuple all of which code tables are prefix-free \cite[Lem. 4]{Hashimoto2022}.

\begin{lemma}
\label{lem:0dec-prefixfree}
A code-tuple $F(f, \trans) \in \mathscr{F}$ satisfies $F \in \mathscr{F}_{0\hdec}$ if and only if the code tables $f_0, f_1, \ldots, f_{|F|-1}$ are prefix-free.
\end{lemma}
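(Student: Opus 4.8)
The plan is to unwind Definition \ref{def:k-bitdelay} in the special case $k = 0$, where $\mathcal{C}^0 = \{\lambda\}$ collapses all the sets involved to either $\{\lambda\}$ or $\emptyset$. First I would record that $\PREF^0_{F, j} = \{\lambda\}$ for every $j \in [F]$: by (\ref{eq:pref3}) (or directly from (\ref{eq:pref1})), $\PREF^0_{F,j}$ consists of those $\pmb{c} \in \mathcal{C}^0 = \{\lambda\}$ for which some $\pmb{x} \in \mathcal{S}^{+}$ satisfies $f^{\ast}_j(\pmb{x}) \succeq \pmb{c} = \lambda$, and this holds for any such $\pmb{x}$, which exists since $\mathcal{S} \neq \emptyset$. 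Thus each $\PREF^0$ term equals $\{\lambda\}$.

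Next I would compute $\bar{\PREF}^0_{F, i}(f_i(s))$ for $i \in [F]$ and $s \in \mathcal{S}$. By (\ref{eq:pref2}) with $k = 0$, we have $\bar{\PREF}^0_{F, i}(f_i(s)) = \{\lambda\}$ if there is $\pmb{x} = x_1 x_2 \ldots x_n \in \mathcal{S}^{+}$ with $f^{\ast}_i(\pmb{x}) \succeq f_i(s)$ and $f_i(x_1) \succ f_i(s)$, and $\bar{\PREF}^0_{F, i}(f_i(s)) = \emptyset$ otherwise. The key observation is that the condition $f^{\ast}_i(\pmb{x}) \succeq f_i(s)$ is automatically implied by $f_i(x_1) \succ f_i(s)$, since by (\ref{eq:fstar}) we have $f^{\ast}_i(\pmb{x}) = f_i(x_1) f^{\ast}_{\trans_i(x_1)}(\suff(\pmb{x})) \succeq f_i(x_1) \succ f_i(s)$; conversely, given any $s' \in \mathcal{S}$ with $f_i(s') \succ f_i(s)$, one may take the length-one sequence $\pmb{x} \coloneqq s' \in \mathcal{S}^{+}$ as a witness (here $f^{\ast}_i(s') = f_i(s')$ by (\ref{eq:fstar})). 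Hence $\bar{\PREF}^0_{F, i}(f_i(s)) = \{\lambda\}$ precisely when $f_i(s) \prec f_i(s')$ for some $s' \in \mathcal{S}$, and $\bar{\PREF}^0_{F, i}(f_i(s)) = \emptyset$ otherwise.

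With these two facts, Definition \ref{def:k-bitdelay}(i) for $k = 0$ becomes $\{\lambda\} \cap \bar{\PREF}^0_{F, i}(f_i(s)) = \emptyset$, i.e., $\bar{\PREF}^0_{F, i}(f_i(s)) = \emptyset$, i.e., there is no $s' \in \mathcal{S}$ with $f_i(s) \prec f_i(s')$; and Definition \ref{def:k-bitdelay}(ii) for $k = 0$ asks that whenever $s \neq s'$ and $f_i(s) = f_i(s')$ we have $\{\lambda\} \cap \{\lambda\} = \emptyset$, which is impossible, so (ii) is equivalent to the non-existence of $s \neq s'$ with $f_i(s) = f_i(s')$, i.e., to injectivity of $f_i$. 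Finally I would note that, for a fixed $i$, the conjunction ``$f_i(s) \not\prec f_i(s')$ for all $s, s' \in \mathcal{S}$'' and ``$f_i$ is injective'' is exactly equivalent to ``$f_i(s) \preceq f_i(s') \Rightarrow s = s'$ for all $s, s' \in \mathcal{S}$'', i.e., to $f_i$ being prefix-free; quantifying over all $i \in [F]$ then yields both directions of the claim. The only mildly delicate point, and the step I would be most careful about, is the computation of $\bar{\PREF}^0$ — specifically the remark that the prefix requirement $f^{\ast}_i(\pmb{x}) \succeq f_i(s)$ is redundant once $f_i(x_1) \succ f_i(s)$; everything else is bookkeeping with the degenerate alphabet $\mathcal{C}^0 = \{\lambda\}$.
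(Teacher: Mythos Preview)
Your argument is correct. The paper itself does not prove this lemma---it is quoted from \cite[Lem.~4]{Hashimoto2022}---so there is no in-paper proof to compare against; your direct unwinding of Definition~\ref{def:k-bitdelay} at $k=0$ via $\mathcal{C}^0=\{\lambda\}$ is the natural route and all steps (in particular the reduction of $\bar{\PREF}^0_{F,i}(f_i(s))$ to the existence of $s'$ with $f_i(s)\prec f_i(s')$, and the final equivalence of ``no strict prefix $+$ injective'' with prefix-freeness) are sound.
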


\subsection{Extendable Code-tuples}

For the code-tuple $F^{(\beta)}$ in Table \ref{tab:code-tuple}, we can see that ${{f^{(\beta)}_2}^\ast}(\pmb{x}) = \lambda$ for any $\pmb{x} \in \mathcal{S}^{\ast}$.
To exclude such abnormal and useless code-tuples, we introduce a class $\mathscr{F}_{\ext}$ in the following Definition \ref{def:F_ext}.

\begin{definition}
\label{def:F_ext}
A code-tuple $F$ is said to be \emph{extendable} if $\PREF^1_{F, i} \neq \emptyset$ for any $i \in [F]$.
We define $\mathscr{F}_{\ext}$ as the set of all extendable code-tuples, that is,
$\mathscr{F}_{\ext} \coloneqq \{F \in \mathscr{F} : {}^{\forall}i \in [F]; \PREF^1_{F, i} \neq \emptyset\}$.
\end{definition}

\begin{example}
For $F^{(\alpha)}$ in Table \ref{tab:code-tuple}, we have
\begin{equation*}
\PREF^1_{F^{(\alpha)}, 0} = \{0, 1\}, \quad \PREF^1_{F^{(\alpha)}, 1} = \{0, 1\}, \quad \PREF^1_{F^{(\alpha)}, 2} = \{1\}.
\end{equation*}
Therefore, we have $F^{(\alpha)} \in \mathscr{F}_{\ext}$.
For $F^{(\beta)}$ in Table \ref{tab:code-tuple},
we have
\begin{equation*}
\PREF^1_{F^{(\beta)}, 0} = \{0, 1\}, \quad \PREF^1_{F^{(\beta)}, 1} = \{0\}, \quad \PREF^1_{F^{(\beta)}, 2} = \emptyset.
\end{equation*}
Since $\PREF^1_{F^{(\beta)}, 2} = \emptyset$, we have $F^{(\beta)} \not\in \mathscr{F}_{\ext}$.
\end{example}

The following Lemma \ref{lem:F_ext} shows that for an extendable code-tuple $F$, we can extend the length of $f^{\ast}_i(\pmb{x})$ as long as we want by appending symbols to $\pmb{x}$ appropriately.

\begin{lemma}
\label{lem:F_ext}
A code-tuple $F(f, \trans)$ is extendable if and only if for any $i \in [F]$ and integer $l \geq 0$, 
there exists $\pmb{x} \in \mathcal{S}^{\ast}$ such that $|f^{\ast}_i(\pmb{x})| \geq l$.
\end{lemma}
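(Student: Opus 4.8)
The plan is to prove both directions of the equivalence, with the "only if" direction being the substantive one. For the "if" direction, suppose that for every $i \in [F]$ and every $l \geq 0$ there exists $\pmb{x} \in \mathcal{S}^{\ast}$ with $|f^{\ast}_i(\pmb{x})| \geq l$. Applying this with $l = 1$ gives, for each $i$, some $\pmb{x} \in \mathcal{S}^{\ast}$ with $|f^{\ast}_i(\pmb{x})| \geq 1$; by definition of $f^{\ast}_i$ we have $\pmb{x} \neq \lambda$, so $\pmb{x} \in \mathcal{S}^{+}$, and any single bit $\pmb{c} \in \mathcal{C}^1$ with $f^{\ast}_i(\pmb{x}) \succeq \pmb{c}$ witnesses $\PREF^1_{F,i} \neq \emptyset$ via characterization (\ref{eq:pref3})(A). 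Hence $F$ is extendable.

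For the "only if" direction, assume $F$ is extendable, fix $i \in [F]$, and prove by induction on $l$ that there exists $\pmb{x} \in \mathcal{S}^{\ast}$ with $|f^{\ast}_i(\pmb{x})| \geq l$. The base case $l = 0$ is trivial ($\pmb{x} = \lambda$). For the inductive step, suppose we have found $\pmb{x}$ with $|f^{\ast}_i(\pmb{x})| \geq l$. Let $j \coloneqq \trans^{\ast}_i(\pmb{x})$. Since $F$ is extendable, $\PREF^1_{F,j} \neq \emptyset$, so by (\ref{eq:pref3})(A) there exists $\pmb{y} \in \mathcal{S}^{+}$ with $|f^{\ast}_j(\pmb{y})| \geq 1$. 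Now apply Lemma~\ref{lem:f_T}(i): $f^{\ast}_i(\pmb{x}\pmb{y}) = f^{\ast}_i(\pmb{x}) f^{\ast}_{\trans^{\ast}_i(\pmb{x})}(\pmb{y}) = f^{\ast}_i(\pmb{x}) f^{\ast}_j(\pmb{y})$, whence $|f^{\ast}_i(\pmb{x}\pmb{y})| = |f^{\ast}_i(\pmb{x})| + |f^{\ast}_j(\pmb{y})| \geq l + 1$. This completes the induction, and setting $\pmb{x}$ accordingly for each prescribed $l$ finishes the direction.

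The only mildly delicate point is making sure the extendability hypothesis, which only asserts that \emph{one} extra bit is reachable from each table index $j$, can be chained: this is exactly what Lemma~\ref{lem:f_T}(i) delivers, since $\trans^{\ast}_i$ tracks the table index that would be used "next" after encoding $\pmb{x}$, and concatenation of source strings corresponds to concatenation of codeword strings with the table index threaded through. So the main (very minor) obstacle is simply invoking Lemma~\ref{lem:f_T}(i) with the right decomposition and recognizing that $|f^{\ast}_j(\pmb{y})| \geq 1$ follows from $\PREF^1_{F,j} \neq \emptyset$ via (\ref{eq:pref3}); there is no real difficulty beyond bookkeeping.
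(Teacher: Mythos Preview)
Your proposal is correct and follows essentially the same approach as the paper's proof: both directions are handled identically (the ``if'' direction by specializing to $l=1$ and invoking (\ref{eq:pref3}), the ``only if'' direction by induction on $l$, using extendability at the index $\trans^{\ast}_i(\pmb{x})$ to find a $\pmb{y}$ contributing at least one more bit and then applying Lemma~\ref{lem:f_T}(i)). The only cosmetic difference is that you index the induction step as $l \to l+1$ whereas the paper writes it as $l-1 \to l$.
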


\begin{proof}[Proof of Lemma \ref{lem:F_ext}]
(Sufficiency) Fix $i \in [F]$ arbitrarily.
Applying the assumption with $l = 1$, we see that there exists $\pmb{x} \in \mathcal{S}^{\ast}$ such that $|f^{\ast}_i(\pmb{x})| \geq 1$.
Then there exists $c \in \mathcal{C}$ such that $f^{\ast}_i(\pmb{x}) \succeq c$,
which leads to $c \in \PREF^1_{F, i}$ by (\ref{eq:pref3}), that is, $\PREF^1_{F, i} \neq \emptyset$ as desired.

(Necessity) 
Assume $F \in \mathscr{F}_{\ext}$.
We prove by induction for $l$.
The base case $l = 0$ is trivial. We consider the induction step for $l \geq 1$.
By the induction hypothesis, there exists $\pmb{x} \in \mathcal{S}^{\ast}$ such that
\begin{equation}
\label{eq:h9xce9wdd1iv}
|f^{\ast}_i(\pmb{x})| \geq l-1.
\end{equation}
Also, by $F \in \mathscr{F}_{\ext}$, there exists $c \in \PREF^1_{F, \trans^{\ast}_i(\pmb{x})}$.
By (\ref{eq:pref3}), there exists $\pmb{y} \in \mathcal{S}^{\ast}$ such that
\begin{equation}
\label{eq:cak1mzrahy9z}
f^{\ast}_{\trans^{\ast}_i(\pmb{x})}(\pmb{y}) \succeq c.
\end{equation}
Thus, we obtain
\begin{equation}
|f^{\ast}_i(\pmb{x}\pmb{y})| \overset{(\mathrm{A})}{=} |f^{\ast}_i(\pmb{x})| + |f^{\ast}_{\trans^{\ast}_i(\pmb{x})}(\pmb{y})| \overset{(\mathrm{B})}{\geq} (l-1) + 1 = l,
\end{equation}
where
(A) follows from Lemma \ref{lem:f_T} (i),
and (B) follows from (\ref{eq:h9xce9wdd1iv}) and (\ref{eq:cak1mzrahy9z}).
This completes the induction.
\end{proof}

This property leads to the following Lemma \ref{lem:pref-inc} and Corollary \ref{cor:pref-inc}.

\begin{lemma}
\label{lem:pref-inc}
Let $k, k'$ be two integers such that $0 \leq k \leq k'$.
For any $F(f, \trans) \in \mathscr{F}_{\ext}, i \in [F]$, $\pmb{b} \in \mathcal{C}^{\ast}$, and $\pmb
{c} \in \mathcal{C}^k$, the following statements (i) and (ii) hold.
\begin{itemize}
\item[(i)] $\pmb{c} \in \PREF^k_{F, i}(\pmb{b}) \iff {}^{\exists}\pmb{c}' \in \mathcal{C}^{k'-k}; \pmb{c} \pmb{c}' \in \PREF^{k'}_{F, i}(\pmb{b}).$
\item[(ii)] $\pmb{c} \in \bar{\PREF}^k_{F, i}(\pmb{b}) \iff {}^{\exists}\pmb{c}' \in \mathcal{C}^{k'-k}; \pmb{c} \pmb{c}' \in \bar{\PREF}^{k'}_{F, i}(\pmb{b}).$
\end{itemize}
\end{lemma}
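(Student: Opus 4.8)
The plan is to prove both statements by unfolding Definition~\ref{def:pref} and using Lemma~\ref{lem:F_ext} to extend codeword sequences. Statement~(i) and~(ii) have nearly identical proofs; the only difference is whether we demand $f_i(x_1) \succeq \pmb{b}$ or $f_i(x_1) \succ \pmb{b}$, a condition that is unaffected by extending $\pmb{x}$ on the right. So I would prove~(i) in full and remark that~(ii) follows by the same argument with ``$\succeq$'' replaced by ``$\succ$'' in the relevant place.

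For the ``$\Leftarrow$'' direction of~(i): assume $\pmb{c}\pmb{c}' \in \PREF^{k'}_{F,i}(\pmb{b})$ for some $\pmb{c}' \in \mathcal{C}^{k'-k}$. By~(\ref{eq:pref1}) there is $\pmb{x} = x_1x_2\ldots x_n \in \mathcal{S}^+$ with $f^{\ast}_i(\pmb{x}) \succeq \pmb{b}\pmb{c}\pmb{c}'$ and $f_i(x_1) \succeq \pmb{b}$. Since $\pmb{b}\pmb{c} \preceq \pmb{b}\pmb{c}\pmb{c}' \preceq f^{\ast}_i(\pmb{x})$, the same $\pmb{x}$ witnesses $\pmb{c} \in \PREF^k_{F,i}(\pmb{b})$. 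This direction needs nothing beyond prefix transitivity.

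For the ``$\Rightarrow$'' direction: assume $\pmb{c} \in \PREF^k_{F,i}(\pmb{b})$, so there is $\pmb{x} = x_1\ldots x_n \in \mathcal{S}^+$ with $f^{\ast}_i(\pmb{x}) \succeq \pmb{b}\pmb{c}$ and $f_i(x_1) \succeq \pmb{b}$. I need to produce $\pmb{x}' \in \mathcal{S}^+$ whose image under $f^{\ast}_i$ has a prefix $\pmb{b}\pmb{c}\pmb{c}'$ of length $|\pmb{b}| + k'$, still with $f_i(x_1') \succeq \pmb{b}$. The idea is to take $\pmb{x}' \coloneqq \pmb{x}\pmb{y}$ for a suitable $\pmb{y}$, so that $x_1' = x_1$ (hence the first-symbol condition is preserved) and, by Lemma~\ref{lem:f_T}(i), $f^{\ast}_i(\pmb{x}\pmb{y}) = f^{\ast}_i(\pmb{x})\,f^{\ast}_{\trans^{\ast}_i(\pmb{x})}(\pmb{y})$. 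Applying Lemma~\ref{lem:F_ext} to the index $j \coloneqq \trans^{\ast}_i(\pmb{x})$ with $l \coloneqq |\pmb{b}| + k'$, there is $\pmb{y} \in \mathcal{S}^{\ast}$ with $|f^{\ast}_j(\pmb{y})| \geq |\pmb{b}| + k'$; then $|f^{\ast}_i(\pmb{x}\pmb{y})| \geq |f^{\ast}_i(\pmb{x})| + |\pmb{b}| + k' \geq |\pmb{b}| + k'$ (actually much longer, but length $\geq |\pmb{b}|+k'$ is all I need), so $f^{\ast}_i(\pmb{x}\pmb{y})$ has a prefix of the form $\pmb{b}\pmb{c}\pmb{c}'$ with $\pmb{c}' \in \mathcal{C}^{k'-k}$ — here I use that $\pmb{b}\pmb{c} \preceq f^{\ast}_i(\pmb{x}) \preceq f^{\ast}_i(\pmb{x}\pmb{y})$ by Lemma~\ref{lem:f_T}(iii), so the length-$(|\pmb{b}|+k')$ prefix of $f^{\ast}_i(\pmb{x}\pmb{y})$ indeed extends $\pmb{b}\pmb{c}$. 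This gives $\pmb{c}\pmb{c}' \in \PREF^{k'}_{F,i}(\pmb{b})$.

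The main obstacle — really the only thing requiring care — is the bookkeeping of lengths and prefixes: making sure that extending $\pmb{x}$ on the right does not disturb the hypothesis on $x_1$ (handled by $x_1' = x_1$), and that the newly appended bits past position $|\pmb{b}|+k$ can be grouped as $\pmb{c}' \in \mathcal{C}^{k'-k}$ so that $\pmb{c}\pmb{c}' \in \mathcal{C}^{k'}$ exactly. Everything else is a direct application of Lemmas~\ref{lem:f_T} and~\ref{lem:F_ext}. For~(ii), the only change is that the witness $\pmb{x}$ from $\bar{\PREF}^k_{F,i}(\pmb{b})$ satisfies $f_i(x_1) \succ \pmb{b}$, and since $x_1' = x_1$ this strict condition is likewise preserved under the extension, so the identical argument goes through.
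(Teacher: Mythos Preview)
Your proposal is correct and follows essentially the same approach as the paper's proof: both directions are handled by unfolding Definition~\ref{def:pref}, and the forward direction extends the witness $\pmb{x}$ to $\pmb{x}\pmb{y}$ via Lemma~\ref{lem:F_ext} and Lemma~\ref{lem:f_T}(i) so that the first symbol (and hence the condition on $f_i(x_1)$) is preserved. The only cosmetic difference is that the paper applies Lemma~\ref{lem:F_ext} with $l = k'-k$ rather than your $l = |\pmb{b}|+k'$, but as you yourself note, any sufficiently large $l$ works.
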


\begin{proof}[Proof of Lemma \ref{lem:pref-inc}]
We prove (i) only because (ii) follows by the similar argument. 

($\implies$):
Assume $\pmb{c} \in \PREF^k_{F, i}(\pmb{b})$.
Then by (\ref{eq:pref1}), there exists $\pmb{x} \in \mathcal{S}^{+}$ such that
\begin{equation}
\label{eq:saiykvl29fkc}
f^{\ast}_i(\pmb{x}) \succeq \pmb{b}\pmb{c},
\end{equation}
 \begin{equation}
\label{eq:14t9yng7xr22}
 f_i(x_1) \succeq \pmb{b}.
 \end{equation}
By $F \in \mathscr{F}_{\ext}$ and Lemma \ref{lem:F_ext}, there exists $\pmb{y} \in \mathcal{S}^{\ast}$ such that
\begin{equation}
\label{eq:3xtrpxddvj3t}
|f^{\ast}_{\trans^{\ast}_i(\pmb{x})}(\pmb{y})| \geq k'-k.
\end{equation}
Hence, we have
\begin{equation}
\label{eq:6s5jhli3q894}
|f^{\ast}_i(\pmb{x} \pmb{y})| \overset{(\mathrm{A})}{=}  |f^{\ast}_i(\pmb{x})| + |f^{\ast}_{\trans^{\ast}_i(\pmb{x})}(\pmb{y})| \overset{(\mathrm{B})}{\geq} |\pmb{b}\pmb{c}| + k'-k,
\end{equation}
where
(A) follows from Lemma \ref{lem:f_T} (i),
and (B) follows from (\ref{eq:saiykvl29fkc}) and (\ref{eq:3xtrpxddvj3t}).
By (\ref{eq:saiykvl29fkc}) and (\ref{eq:6s5jhli3q894}),
there exists $\pmb{c}' \in \mathcal{C}^{k'-k}$ such that
\begin{equation}
\label{eq:o055efargj7k}
f^{\ast}_i(\pmb{x} \pmb{y}) \succeq \pmb{b}\pmb{c}\pmb{c}'.
\end{equation}
Equations (\ref{eq:14t9yng7xr22}) and (\ref{eq:o055efargj7k}) lead to $\pmb{c}\pmb{c}' \in \PREF^{k'}_{F, i}(\pmb{b})$ by (\ref{eq:pref1}).

($\impliedby$):
Assume that there exists $\pmb{c}' \in \mathcal{C}^{k'-k}$ such that $\pmb{c}\pmb{c}' \in \PREF^{k'}_{F, i}(\pmb{b})$.
Then by (\ref{eq:pref1}), there exists $\pmb{x} = x_1x_2\ldots x_n \in \mathcal{S}^{+}$ such that
$f^{\ast}_i(\pmb{x}) \succeq \pmb{b}\pmb{c}\pmb{c}'$ and $f_i(x_1) \succeq \pmb{b}$.
This clearly implies $f^{\ast}_i(\pmb{x}) \succeq \pmb{b}\pmb{c}$ and $f_i(x_1) \succeq \pmb{b}$,
which leads to $\pmb{c} \in \PREF^k_{F, i}(\pmb{b})$ by (\ref{eq:pref1}).
\end{proof}

\begin{corollary}
\label{cor:pref-inc}
For  any integer $k \geq 0$, $F \in \mathscr{F}_{\ext}$, $i \in [F]$, and $\pmb{b} \in \mathcal{C}^{\ast}$,
we have $\PREF^k_{F, i}(\pmb{b}) = \emptyset$ (resp. $\bar{\PREF}^k_{F, i}(\pmb{b}) = \emptyset$) if and only if $\PREF^0_{F, i}(\pmb{b}) = \emptyset$ (resp. $\bar{\PREF}^0_{F, i}(\pmb{b}) = \emptyset$).
\end{corollary}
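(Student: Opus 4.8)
The plan is to read this off from Lemma \ref{lem:pref-inc} by specializing its two parameters. Concretely, I would apply Lemma \ref{lem:pref-inc} with the pair $(k, k')$ replaced by $(0, k)$; this is legitimate for every integer $k \geq 0$ since $0 \leq 0 \leq k$. Because $\mathcal{C}^0 = \{\lambda\}$, the only admissible choice of $\pmb{c} \in \mathcal{C}^0$ in the lemma is $\pmb{c} = \lambda$, and $\lambda \pmb{c}' = \pmb{c}'$ for any $\pmb{c}' \in \mathcal{C}^{k-0} = \mathcal{C}^k$. Hence Lemma \ref{lem:pref-inc} (i) specializes to the statement that $\lambda \in \PREF^0_{F, i}(\pmb{b})$ holds if and only if there exists $\pmb{c}' \in \mathcal{C}^k$ with $\pmb{c}' \in \PREF^k_{F, i}(\pmb{b})$.

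Next I would translate both sides into non-emptiness statements. Since $\PREF^0_{F, i}(\pmb{b}) \subseteq \mathcal{C}^0 = \{\lambda\}$, the condition $\lambda \in \PREF^0_{F, i}(\pmb{b})$ is equivalent to $\PREF^0_{F, i}(\pmb{b}) \neq \emptyset$; and since every candidate $\pmb{c}'$ lies in $\mathcal{C}^k$, the existence of such a $\pmb{c}'$ in $\PREF^k_{F, i}(\pmb{b})$ is equivalent to $\PREF^k_{F, i}(\pmb{b}) \neq \emptyset$. Thus $\PREF^0_{F, i}(\pmb{b}) \neq \emptyset \iff \PREF^k_{F, i}(\pmb{b}) \neq \emptyset$, and taking the contrapositive yields the asserted equivalence $\PREF^k_{F, i}(\pmb{b}) = \emptyset \iff \PREF^0_{F, i}(\pmb{b}) = \emptyset$. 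The barred case is handled identically, invoking Lemma \ref{lem:pref-inc} (ii) in place of (i).

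There is essentially no obstacle here: all the real content — in particular the use of extendability through Lemma \ref{lem:F_ext} to produce arbitrarily long codeword sequences — has already been absorbed into Lemma \ref{lem:pref-inc}, so this corollary is a routine reformulation tailored to emptiness. The only minor point to keep in mind is the degenerate role of $k = 0$ (where the claim is a tautology) and the need for the quantifier over $\pmb{c}' \in \mathcal{C}^k$ to be non-vacuous when $k \geq 1$, which is immediate since $\mathcal{C}^k \neq \emptyset$ for all $k$.
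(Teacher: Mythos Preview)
Your proposal is correct and matches the paper's intent: the paper states this as an immediate corollary of Lemma~\ref{lem:pref-inc} without giving a separate proof, and your specialization of that lemma to $(k,k')=(0,k)$ together with the observation $\mathcal{C}^0=\{\lambda\}$ is precisely the routine unpacking that justifies it.
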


The following Lemma \ref{lem:longest} gives a lower bound of the length of a codeword sequence for $F \in \mathscr{F}_{\ext} \cap \mathscr{F}_{k\hdec}$.
See Appendix \ref{sec:proof-longest} for the proof of Lemma  \ref{lem:longest}.

\begin{lemma}
\label{lem:longest}
For any integer $k \geq 0$, $F(f, \trans) \in \mathscr{F}_{\ext} \cap \mathscr{F}_{k\hdec}, i \in [F]$, and $\pmb{x} \in \mathcal{S}^{\ast}$, we have $|f^{\ast}_i(\pmb{x})| \geq \lfloor |\pmb{x}| / |F|  \rfloor$.
\end{lemma}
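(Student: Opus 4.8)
\textbf{Proof plan for Lemma \ref{lem:longest}.}
The plan is to induct on $|\pmb{x}|$, but the key realization is that the bound $\lfloor |\pmb{x}|/|F|\rfloor$ only increases by one each time $|\pmb{x}|$ crosses a multiple of $|F|$, so it suffices to show that every consecutive block of $|F|$ symbols contributes at least one bit to the codeword sequence. More precisely, I would prove the following strengthened claim: for any $\pmb{x} \in \mathcal{S}^{\ast}$ with $|\pmb{x}| \geq |F|$ and any $i \in [F]$, there is a prefix $\pmb{x}'$ of $\pmb{x}$ with $|\pmb{x}'| \leq |F|$ such that $|f^{\ast}_i(\pmb{x}')| \geq 1$. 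Granting this claim, the lemma follows by a straightforward induction: split $\pmb{x}$ as $\pmb{x}'\pmb{x}''$ with $|\pmb{x}'| \le |F|$ and $|f^{\ast}_i(\pmb{x}')| \geq 1$, apply Lemma \ref{lem:f_T}(i) to get $|f^{\ast}_i(\pmb{x})| = |f^{\ast}_i(\pmb{x}')| + |f^{\ast}_{\trans^{\ast}_i(\pmb{x}')}(\pmb{x}'')| \geq 1 + \lfloor |\pmb{x}''|/|F|\rfloor \geq 1 + \lfloor (|\pmb{x}|-|F|)/|F|\rfloor = \lfloor |\pmb{x}|/|F|\rfloor$, using the induction hypothesis on the shorter string $\pmb{x}''$.

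So the heart of the matter is the strengthened claim, and this is where $k$-bit delay decodability (not just extendability) must enter. Suppose for contradiction that $f^{\ast}_i(\pmb{x}') = \lambda$ for \emph{every} prefix $\pmb{x}'$ of $\pmb{x}$ of length at most $|F|$; in particular, taking the prefix of length exactly $|F|$, the states visited are $i = i_0, i_1 = \trans_{i_0}(x_1), i_2 = \trans^{\ast}_{i_0}(x_1x_2), \dots, i_{|F|}$, and $f_{i_{j-1}}(x_j) = \lambda$ for $j = 1, \dots, |F|$. Since there are only $|F|$ distinct states, among $i_0, i_1, \dots, i_{|F|}$ two must coincide, say $i_a = i_b$ with $a < b$; then the symbols $x_{a+1}, \dots, x_b$ form a ``loop'' at state $i_a$, all encoded to $\lambda$. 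I would then argue that such a loop contradicts $F \in \mathscr{F}_{\ext} \cap \mathscr{F}_{k\hdec}$: by extendability there exists some $\pmb{c} \in \PREF^1_{F, i_a}$, i.e.\ a continuation producing a nonempty output, but the loop symbol(s) also belong to state $i_a$'s table and produce $\lambda$. Concretely, let $s \coloneqq x_{a+1}$, so $f_{i_a}(s) = \lambda$ and $\trans_{i_a}(s) = i_{a+1}$, and repeating the loop shows that $\PREF^k_{F, i_{a+1}}$ (and hence, tracing around, $\PREF^k_{F, i_a}$) must contain arbitrarily... — rather, the clean contradiction is that $\bar\PREF^k_{F, i_a}(f_{i_a}(s)) = \bar\PREF^k_{F, i_a}(\lambda) = \PREF^k_{F, i_a}$ while $\PREF^k_{F, \trans_{i_a}(s)} = \PREF^k_{F, i_{a+1}}$, and by Definition \ref{def:k-bitdelay}(i) these must be disjoint; but iterating the loop (going $i_a \to i_{a+1} \to \cdots \to i_b = i_a$ with all-$\lambda$ outputs) forces $\PREF^k_{F, i_a} \subseteq \PREF^k_{F, i_{a+1}} \subseteq \cdots \subseteq \PREF^k_{F, i_a}$, hence equality, so $\PREF^k_{F, i_a} \cap \PREF^k_{F, i_{a+1}} = \PREF^k_{F, i_a}$, which must be empty — contradicting extendability via Corollary \ref{cor:pref-inc} (which gives $\PREF^k_{F, i_a} \neq \emptyset$ since $\PREF^0_{F, i_a} \ni \lambda \neq \emptyset$ when... ).

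I expect the main obstacle to be getting the inclusion chain $\PREF^k_{F, i_j} \subseteq \PREF^k_{F, i_{j+1}}$ exactly right and confirming it yields a genuine contradiction. The inclusion should follow because if $\pmb{c} \in \PREF^k_{F, i_{j+1}}$ then there is $\pmb{y} \in \mathcal{S}^{+}$ with $f^{\ast}_{i_{j+1}}(\pmb{y}) \succeq \pmb{c}$, and then $f^{\ast}_{i_j}(x_{\cdot}\pmb{y}) = f_{i_j}(x_{\cdot})f^{\ast}_{i_{j+1}}(\pmb{y}) = f^{\ast}_{i_{j+1}}(\pmb{y}) \succeq \pmb{c}$ since the loop letter encodes to $\lambda$, giving $\pmb{c} \in \PREF^k_{F, i_j}$; composing these around the loop gives $\PREF^k_{F, i_a} \subseteq \PREF^k_{F, i_a}$ trivially but also each intermediate containment, and combined with the reverse direction of one link being forced by Definition \ref{def:k-bitdelay}(i) disjointness applied at the step where the loop ``closes'', we get that a nonempty set is disjoint from itself. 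I would need to be careful about the degenerate possibility $k = 0$ and about whether the relevant $\PREF$ sets are over $\mathcal{C}^k$ vs.\ allowing the empty continuation, but Corollary \ref{cor:pref-inc} together with extendability handles nonemptiness uniformly. An alternative, possibly cleaner, route for the claim is to invoke Lemma \ref{lem:F_ext} directly: a loop of all-$\lambda$ symbols at state $i_a$ would let us build, for every $l$, a string $\pmb{x}$ with $f^{\ast}_{i_a}(\pmb{x}) = \lambda$ arbitrarily long, and then pair it with an extending continuation to violate decodability — I would check which of these two framings produces the shortest rigorous argument and use that one.
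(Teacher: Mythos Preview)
Your reduction and cycle argument match the paper exactly: reduce to showing every length-$|F|$ block contributes at least one bit, assume $f^{\ast}_i(\pmb{x}) = \lambda$ for some $\pmb{x} \in \mathcal{S}^{|F|}$, pigeonhole to find repeated states $i_a = i_b$ along an all-$\lambda$ path, and use the inclusion chain $\PREF^k_{F,i_a} \supseteq \PREF^k_{F,i_{a+1}} \supseteq \cdots \supseteq \PREF^k_{F,i_b} = \PREF^k_{F,i_a}$ (this is Lemma~\ref{lem:pref-sum}(ii) in the paper) to conclude all these sets coincide.

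The gap is in how you extract the contradiction. Your claimed equality $\bar{\PREF}^k_{F,i_a}(\lambda) = \PREF^k_{F,i_a}$ is false: the barred version requires $f_{i_a}(x_1) \succ \lambda$, so it only sees continuations whose \emph{first} codeword is nonempty, whereas $\PREF^k_{F,i_a}$ has no such restriction. Consequently Definition~\ref{def:k-bitdelay}(i) only gives you $\PREF^k_{F,i_{a+1}} \cap \bar{\PREF}^k_{F,i_a} = \emptyset$, not $\PREF^k_{F,i_{a+1}} \cap \PREF^k_{F,i_a} = \emptyset$. If it happens that $\bar{\PREF}^k_{F,i_a} = \emptyset$ (i.e.\ \emph{every} symbol at state $i_a$ encodes to $\lambda$), condition~(i) is vacuous and your argument stalls.

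The paper closes this gap with a case split on whether $\bar{\PREF}^k_{F,j} = \emptyset$ (writing $j \coloneqq i_a$). If $\bar{\PREF}^k_{F,j} \neq \emptyset$, then since $\PREF^k_{F,i_{a+1}} = \PREF^k_{F,j} \supseteq \bar{\PREF}^k_{F,j}$, condition~(i) is violated. If $\bar{\PREF}^k_{F,j} = \emptyset$, then by Corollary~\ref{cor:pref-inc} every symbol at $j$ encodes to $\lambda$; since $\sigma \geq 2$, pick $s \neq x_{a+1}$ with $f_j(s) = \lambda = f_j(x_{a+1})$, and now Definition~\ref{def:k-bitdelay}(ii) forces $\PREF^k_{F,\trans_j(x_{a+1})} \cap \PREF^k_{F,\trans_j(s)} = \emptyset$, while both sets are contained in $\PREF^k_{F,j} = \PREF^k_{F,i_{a+1}}$ and the latter is nonempty by extendability --- contradiction. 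So you need condition~(ii) and the hypothesis $\sigma \geq 2$, neither of which your sketch invokes.
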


\subsection{Average Codeword Length of Code-Tuple}
\label{subsec:evaluation}

We introduce the average codeword length $L(F)$ of a code-tuple $F$.
From now on, we fix an arbitrary probability distribution $\mu$ of the source symbols, that is, a real-valued function $\mu : \mathcal{S} \rightarrow \mathbb{R}$ such that $\sum_{s \in \mathcal{S}} \mu(s) = 1$ and $0 < \mu(s) \leq 1$ for any $s \in \mathcal{S}$.
Note that we exclude the case where $\mu(s) = 0$ for some $s \in \mathcal{S}$ without loss of generality.

First, for $F(f, \trans) \in \mathscr{F}$ and $i, j \in [F]$, we define the transition probability $Q_{i, j}(F)$ as the probability of using the code table $f_j$ next after using the code table $f_i$ in the encoding process.

 \begin{definition}
\label{def:transprobability}
For $F(f, \trans) \in \mathscr{F}$ and $i, j \in [F]$,
we define the \emph{transition probability} $Q_{i,j}(F)$ as
\begin{equation}
\label{eq:9x9htdrx1001}
 Q_{i,j}(F) \coloneqq \sum_{s \in \mathcal{S}, \trans_i(s) = j} \mu(s).
 \end{equation}
We also define the \emph{transition probability matrix} $Q(F)$ as the following $|F|\times|F|$ matrix:
 \begin{equation}
  \left[
    \begin{array}{cccc}
      Q_{0,0}(F) & Q_{0,1}(F) & \cdots & Q_{0, |F|-1}(F) \\
       Q_{1,0}(F) &  Q_{1,1}(F) & \cdots &  Q_{1, |F|-1}(F) \\
      \vdots & \vdots & \ddots & \vdots \\
      Q_{|F|-1, 0}(F) &  Q_{|F|-1, 1}(F) & \cdots &  Q_{|F|-1, |F|-1}(F) 
    \end{array}
  \right].
 \end{equation}
  \end{definition}

We fix $F \in \mathscr{F}$ and consider the encoding process with $F$.
Let $I_i \in [F]$ be the index of the code table used to encode the $i$-th symbol of a source sequence for $i = 1, 2, 3, \ldots$.
Then $\{I_i\}_{i = 1, 2, 3, \ldots}$ is a Markov process with the transition probability matrix $Q(F)$.
We consider a stationary distribution of the Markov process $\{I_i\}_{i = 1, 2, 3, \ldots}$, formally defined as follows.

\begin{definition}
\label{def:stationary}
For $F \in \mathscr{F}$, a solution $\pmb{\pi} = (\pi_0, \allowbreak \pi_1, \ldots, \pi_{|F|-1}) \in \mathbb{R}^{|F|}$ of the following simultaneous equations (\ref{eq:stationary1}) and (\ref{eq:stationary2})
is called a \emph{stationary distribution of $F$}:
\begin{numcases}{}
\pmb{\pi}Q(F) = \pmb{\pi}\label{eq:stationary1},\\
 \sum_{i \in [F]} \pi_i = 1. \label{eq:stationary2}
 \end{numcases}
\end{definition}

A code-tuple has at least one stationary distribution without a negative element as shown in the following Lemma \ref{lem:stationary}.
See Appendix \ref{subsec:proof-stationary} for the proof of Lemma \ref{lem:stationary}.
 \begin{lemma}
 \label{lem:stationary}
 For any $F \in \mathscr{F}$, there exists a stationary distribution $\pmb{\pi} = (\pi_0, \pi_1, \ldots, \pi_{|F|-1})$ of $F$ such that $\pi_i \geq 0$ for any $i \in [F]$.
 \end{lemma}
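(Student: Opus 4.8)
The plan is to prove the existence of a nonnegative stationary distribution by appealing to the standard fact that every row-stochastic matrix has a nonnegative left eigenvector for the eigenvalue $1$, and then verifying that $Q(F)$ is indeed row-stochastic. First I would check from Definition \ref{def:transprobability}, equation (\ref{eq:9x9htdrx1001}), that each row of $Q(F)$ sums to $1$: for a fixed $i \in [F]$, the mappings $\{s \in \mathcal{S} : \trans_i(s) = j\}$ over $j \in [F]$ partition $\mathcal{S}$ (since $\trans_i : \mathcal{S} \to [F]$ is a function), so $\sum_{j \in [F]} Q_{i,j}(F) = \sum_{j \in [F]} \sum_{s : \trans_i(s) = j} \mu(s) = \sum_{s \in \mathcal{S}} \mu(s) = 1$. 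Also each entry $Q_{i,j}(F)$ is a sum of values $\mu(s) > 0$, hence nonnegative. So $Q(F)$ is a stochastic matrix.

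Next I would invoke the Perron--Frobenius theorem (or, more elementarily, Brouwer's fixed point theorem applied to the continuous self-map $\pmb{\pi} \mapsto \pmb{\pi}Q(F)$ on the probability simplex $\Delta \coloneqq \{\pmb{\pi} \in \mathbb{R}^{|F|} : \pi_i \geq 0, \sum_i \pi_i = 1\}$, which is nonempty, compact, and convex, and is mapped into itself because $Q(F)$ is stochastic). Either route yields a point $\pmb{\pi} \in \Delta$ with $\pmb{\pi}Q(F) = \pmb{\pi}$, which is exactly a stationary distribution of $F$ in the sense of Definition \ref{def:stationary} satisfying $\pi_i \geq 0$ for all $i \in [F]$. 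If the paper prefers a self-contained linear-algebra argument, one can instead note that $1$ is an eigenvalue of $Q(F)$ because $Q(F)\pmb{1}^{\top} = \pmb{1}^{\top}$ (row sums equal $1$), hence $\det(Q(F) - I) = 0$, so there is a nonzero left eigenvector $\pmb{v}$ with $\pmb{v}Q(F) = \pmb{v}$; the nontrivial part is then to show $\pmb{v}$ can be chosen nonnegative, which is precisely the content of Perron--Frobenius for the spectral-radius eigenvalue, and finally normalize by $\sum_i v_i$ (after checking this sum is nonzero, which again follows from nonnegativity and $\pmb{v} \neq \pmb{0}$).

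The main obstacle is the nonnegativity claim: producing \emph{some} left eigenvector for eigenvalue $1$ is trivial linear algebra, but guaranteeing it has no negative component requires either the fixed-point argument on the simplex or citing Perron--Frobenius, and one must be slightly careful that $|F|$ could be $1$ (trivial: $Q(F) = [1]$, $\pmb{\pi} = (1)$) and that the normalization step is legitimate. I would therefore structure the proof as: (1) $Q(F)$ is stochastic; (2) the map $\pmb{\pi} \mapsto \pmb{\pi} Q(F)$ sends the simplex $\Delta$ into itself continuously; (3) Brouwer gives a fixed point $\pmb{\pi} \in \Delta$; (4) this $\pmb{\pi}$ satisfies (\ref{eq:stationary1}) and (\ref{eq:stationary2}) with all $\pi_i \geq 0$, completing the proof.
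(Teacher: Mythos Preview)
Your proof is correct, but the paper takes a different, fully self-contained route that avoids Brouwer and Perron--Frobenius. Like you, the paper first observes that rows of $Q(F)$ sum to $1$, so $\det(Q(F)-E)=0$ and there exists a nonzero $\pmb{\pi}$ with $\pmb{\pi}Q(F)=\pmb{\pi}$. For the nonnegativity step --- which you correctly identify as the nontrivial part and propose to outsource to Perron--Frobenius --- the paper instead proves a small combinatorial lemma (Lemma~\ref{lem:kernel02}): for any such $\pmb{\pi}$, the index sets $\mathcal{I}_+=\{i:\pi_i>0\}$ and $\mathcal{I}_-=\{i:\pi_i<0\}$ are both closed under the transition maps $\trans_i$, hence $Q_{i,j}(F)=0$ whenever $i\in\mathcal{I}_+$, $j\notin\mathcal{I}_+$ (and similarly for $\mathcal{I}_-$). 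This decoupling lets one check directly that the normalized absolute-value vector $\pi'_i=|\pi_i|/\sum_j|\pi_j|$ again satisfies $\pmb{\pi}'Q(F)=\pmb{\pi}'$. Your approach is shorter and more standard; the paper's approach is more elementary (no topology) and, more importantly, the auxiliary closed-set lemma is reused later in the proof of Lemma~\ref{lem:kernel}, so it earns its keep.
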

 
As stated later in Definition \ref{def:evaluation}, the average codeword length $L(F)$ of $F$ is defined depending on the stationary distribution $\pmb{\pi}$ of $F$.
However, it is possible that a code-tuple has multiple stationary distributions.
Therefore, we limit the scope of consideration to a class $\mathscr{F}_{\reg}$ defined as the following Definition \ref{def:regular}, which is the class of code-tuples with a unique stationary distribution.

 \begin{definition}
 \label{def:regular}
A code-tuple $F$ is said to be \emph{regular} if $F$ has a unique stationary distribution.
 We define $\mathscr{F}_{\reg}$ as the set of all regular code-tuples, that is,
$\mathscr{F}_{\reg} \coloneqq \{F \in \mathscr{F} : F \text{ is regular}\}.$
 For $F \in \mathscr{F}_{\reg}$, we define $\pmb{\pi}(F) = (\pi_0(F), \pi_1(F), \ldots, \pi_{|F|-1}(F))$
 as the unique stationary distribution of $F$.
 \end{definition}
 
Since the transition probability matrix $Q(F)$ depends on $\mu$, it might seem that the class $\mathscr{F}_{\reg}$ also depends on $\mu$.
However, we show later as Lemma \ref{lem:kernel} that in fact $\mathscr{F}_{\reg}$ is independent from $\mu$.
More precisely, whether a code-tuple $F(f, \trans)$ belongs to $\mathscr{F}_{\reg}$ depends only on $\trans_0, \trans_1, \ldots, \trans_{|F|-1}$.
 
We also note that for any $F \in \mathscr{F}_{\reg}$, the unique stationary distribution $\pmb{\pi}(F)$ of $F$ satisfies $\pi_i(F) \geq 0$ for any $i \in [F]$ by Lemma \ref{lem:stationary}.
  
The asymptotical performance (i.e., average codeword length per symbol) of a regular code-tuple does not depend on which code table we start encoding:
the average codeword length $L(F)$ of a regular code-tuple $F$ is
the weighted sum of the average codeword lengths of the code tables $f_0, f_1, \ldots, f_{|F|-1}$ weighted by the stationary distribution $\pmb{\pi}(F)$.
Namely, $L(F)$ is defined as the following Definition \ref{def:evaluation}. 
 
 \begin{definition} 
 \label{def:evaluation}
 For $F(f, \trans) \in \mathscr{F}$ and $i \in [F]$, we define the \emph{average codeword length $L_i(F)$ of the single code table} $f_i : \mathcal{S} \rightarrow \mathcal{C}^{\ast}$ as
  \begin{equation}
 L_i(F) \coloneqq \sum_{s \in \mathcal{S}} |f_i(s)| \cdot \mu(s).
  \end{equation}
For $F \in \mathscr{F}_{\reg}$,
we define the \emph{average codeword length $L(F)$ of the code-tuple $F$} as 
 \begin{equation}
 \label{eq:evaluation}
 L(F) \coloneqq \sum_{i \in [F]} \pi_i(F)L_i(F).
 \end{equation}
 \end{definition}

 \begin{example}
 \label{ex:evaluation}
We consider $F \coloneqq F^{(\alpha)}$ of Table \ref{tab:code-tuple}, where $(\mu(\mathrm{a}), \mu(\mathrm{b}), \mu(\mathrm{c}), \mu(\mathrm{d})) \allowbreak = (0.1, 0.2, 0.3, 0.4)$.
We have 
 \begin{equation*}
  Q(F) = \left[
    \begin{array}{cccc}
      0.4 & 0.2 & 0.4 \\
       0.2 &  0.4 & 0.4\\
      0 &  0.1 & 0.9 
    \end{array}
  \right].
\end{equation*}
The code-tuple $F$ has a unique stationary distribution $\pmb{\pi}(F) = (\pi_0(F), \pi_1(F), \allowbreak \pi_2(F)) = (1/20, 3/20, 16/20)$.
Hence, we have $F \in \mathscr{F}_{\reg}$.
Also, we have
\begin{align*}
L_0(F) = 2.6, \quad L_1(F) = 3.7, \quad L_2(F) = 4.2.
\end{align*}
Therefore, $L(F)$ is given as
\begin{align*}
L(F) &= \pi_0(F)L_0(F) + \pi_1(F)L_1(F) + \pi_2(F)L_2(F) 
= 4.045.
\end{align*}
 \end{example}

\begin{remark}
\label{rem:transitionprobability}
Note that $Q(F), L_i(F), L(F)$ and $\pmb{\pi}(F)$ depend on $\mu$.
However, since we are now discussing on a fixed $\mu$,
the average codeword length $L_i(F)$ of $f_i$ (resp. the transition probability matrix $Q(F)$) is determined only by the mapping $f_i$ (resp. $\trans_0, \trans_1, \ldots, \trans_{|F|-1}$) and therefore the stationary distribution $\pmb{\pi}(F)$ of a regular code-tuple $F$ is also determined only by $\trans_0, \trans_1, \ldots, \trans_{|F|-1}$.
\end{remark}

\subsection{Irreducible parts of Code-tuple}

As we can see from (\ref{eq:evaluation}), the code tables $f_i$ of $F(f, \trans) \in \mathscr{F}_{\reg}$ such that $\pi_i(F) = 0$ does not contribute to $L(F)$.
It is useful to remove such non-essential code tables and obtain an \emph{irreducible} code-tuple:
we say that a regular code-tuple $F$ is \emph{irreducible} if $\pi_i(F) > 0$ for any $i \in [F]$ as formally defined later in Definition \ref{def:F_irr}.
In this subsection, we introduce an \emph{irreducible part} of $F \in \mathscr{F}_{\reg}$, which is an irreducible code-tuple obtained by removing all the code tables $f_i$ such that $\pi_i(F) = 0$ from $F$.
The formal definition of an irreducible part of $F$ is stated using a notion of \emph{homomorphism} defined in the following Definition \ref{def:homo}.

\begin{definition}
\label{def:homo}
For $F(f, \trans), F'(f', \trans') \in \mathscr{F}$, a mapping $\varphi: [F'] \rightarrow [F]$ is called a \emph{homomorphism from $F'$ to $F$} if
\begin{equation}
\label{eq:phi1}
f'_i(s) = f_{\varphi(i)}(s),
\end{equation}
\begin{equation}
\label{eq:phi2}
\varphi(\trans'_i(s)) = \trans_{\varphi(i)}(s)
\end{equation}
for any $i \in [F']$ and $s \in \mathcal{S}$.
\end{definition}

Given a homomorphism of code-tuples, the following Lemma \ref{lem:duplicate} holds between the two code-tuples.
See Appendix \ref{subsec:proof-duplicate} for the proof of Lemma  \ref{lem:duplicate}.

\begin{lemma}
\label{lem:duplicate}
For any $F(f, \trans), F'(f', \trans') \in \mathscr{F}$ and a homomorphism $\varphi: [F'] \rightarrow [F]$ from $F'$ to $F$ ,
the following statements (i)--(vi) hold.
\begin{itemize}
\item[(i)] For any $i \in [F']$ and $\pmb{x} \in \mathcal{S}^{\ast}$, we have $f'^{\ast}_i(\pmb{x}) = f^{\ast}_{\varphi(i)}(\pmb{x})$ and $\varphi(\trans'^{\ast}_i(\pmb{x})) = \trans^{\ast}_{\varphi(i)}(\pmb{x})$.
\item[(ii)] For any $i \in [F']$ and $\pmb{b} \in \mathcal{C}^{\ast}$, we have $\PREF^{\ast}_{F', i}(\pmb{b}) = \PREF^{\ast}_{F, \varphi(i)}(\pmb{b})$ and $\bar{\PREF}^{\ast}_{F', i}(\pmb{b}) = \bar{\PREF}^{\ast}_{F, \varphi(i)}(\pmb{b})$.
\item[(iii)] For any stationary distribution $\pmb{\pi}' = (\pi'_0, \pi'_1, \ldots, \allowbreak \pi'_{|F'|-1})$ of $F'$, the vector $\pmb{\pi} = (\pi_0, \pi_1, \ldots, \pi_{|F|-1}) \in \mathbb{R}^{|F|}$ defined as
\begin{equation}
\label{eq:3720vpgjeuph}
\pi_j = \sum_{j' \in \mathcal{A}_j} \pi'_{j'} \,\,\text{for}\,\, j \in [F]
\end{equation}
is a stationary distribution of $F$, where \begin{equation}
\label{eq:twqab1injvre}
\mathcal{A}_i \coloneqq \{i' \in [F'] : \varphi(i') = i\}
\end{equation}
for $i \in [F]$.
\item[(iv)] If $F \in \mathscr{F}_{\ext}$, then $F' \in \mathscr{F}_{\ext}$.
\item[(v)] If $F, F' \in \mathscr{F}_{\reg}$, then $L(F') = L(F)$.
\item[(vi)] For any integer $k \geq 0$, if $F \in \mathscr{F}_{k\hdec}$, then $F' \in \mathscr{F}_{k\hdec}$.
\end{itemize}
\end{lemma}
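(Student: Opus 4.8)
The plan is to prove the six statements in the order (i) $\to$ (ii) and (iv) $\to$ (iii) $\to$ (v) $\to$ (vi), since each later part leans on an earlier one, and the only genuinely computational step is (iii).

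First I would prove (i) by induction on $|\pmb{x}|$, directly unwinding the recursions (\ref{eq:fstar}) and (\ref{eq:tstar}). The base case $\pmb{x} = \lambda$ is immediate: $f'^{\ast}_i(\lambda) = \lambda = f^{\ast}_{\varphi(i)}(\lambda)$ and $\varphi(\trans'^{\ast}_i(\lambda)) = \varphi(i) = \trans^{\ast}_{\varphi(i)}(\lambda)$. For the inductive step, write $\pmb{x} = x_1\suff(\pmb{x})$; then $f'^{\ast}_i(\pmb{x}) = f'_i(x_1) f'^{\ast}_{\trans'_i(x_1)}(\suff(\pmb{x}))$, and applying (\ref{eq:phi1}) to the first factor and the induction hypothesis together with (\ref{eq:phi2}) to the second factor gives $f_{\varphi(i)}(x_1) f^{\ast}_{\trans_{\varphi(i)}(x_1)}(\suff(\pmb{x})) = f^{\ast}_{\varphi(i)}(\pmb{x})$; the transition statement is analogous. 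Next I would observe that (i) together with (\ref{eq:phi1}) makes the defining conditions in (\ref{eq:pref1}) and (\ref{eq:pref2}) for the pair $(F',i)$ literally identical to those for $(F,\varphi(i))$, so that $\PREF^k_{F',i}(\pmb{b}) = \PREF^k_{F,\varphi(i)}(\pmb{b})$ and $\bar{\PREF}^k_{F',i}(\pmb{b}) = \bar{\PREF}^k_{F,\varphi(i)}(\pmb{b})$ for every $k \geq 0$. Statement (ii) follows by taking the union over $k$, and statement (iv) follows by taking $k=1$ and $\pmb{b} = \lambda$, so that $\PREF^1_{F',i} = \PREF^1_{F,\varphi(i)} \neq \emptyset$.

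Then I would handle (iii), which I expect to be the main obstacle because of the bookkeeping over the partition $\{\mathcal{A}_i\}_{i \in [F]}$ of $[F']$ defined by (\ref{eq:twqab1injvre}). The key intermediate identity is $Q_{\varphi(i'),j}(F) = \sum_{j' \in \mathcal{A}_j} Q_{i',j'}(F')$ for all $i' \in [F']$ and $j \in [F]$, which comes straight from (\ref{eq:9x9htdrx1001}) and (\ref{eq:phi2}): summing $\mu(s)$ over symbols $s$ with $\trans'_{i'}(s) \in \mathcal{A}_j$ is the same as summing over $s$ with $\trans_{\varphi(i')}(s) = j$. Since every element of $[F']$ lies in exactly one $\mathcal{A}_i$, the normalization $\sum_{i \in [F]} \pi_i = \sum_{i' \in [F']} \pi'_{i'} = 1$ is immediate, so (\ref{eq:stationary2}) holds. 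For (\ref{eq:stationary1}) I would compute $(\pmb{\pi}Q(F))_j = \sum_{i \in [F]} \pi_i Q_{i,j}(F) = \sum_{i' \in [F']} \pi'_{i'} Q_{\varphi(i'),j}(F)$, then substitute the identity above and swap the order of summation to obtain $\sum_{j' \in \mathcal{A}_j} (\pmb{\pi}'Q(F'))_{j'} = \sum_{j' \in \mathcal{A}_j} \pi'_{j'} = \pi_j$, using that $\pmb{\pi}'$ is a stationary distribution of $F'$. This is the step I would write most carefully.

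Statement (v) then follows quickly: $L_{i'}(F') = L_{\varphi(i')}(F)$ by (\ref{eq:phi1}), and by (iii) the vector built from the unique stationary distribution $\pmb{\pi}(F')$ via (\ref{eq:3720vpgjeuph}) is a stationary distribution of $F$, which by regularity of $F$ must equal $\pmb{\pi}(F)$; substituting $\pi_j(F) = \sum_{j' \in \mathcal{A}_j} \pi_{j'}(F')$ into the definition (\ref{eq:evaluation}) of $L(F)$ and regrouping over the partition yields $L(F) = \sum_{i' \in [F']}\pi_{i'}(F')L_{i'}(F') = L(F')$. Finally, for (vi) I would verify the two clauses of Definition \ref{def:k-bitdelay} for $F'$: for any $i' \in [F']$ and $s, s' \in \mathcal{S}$, the relations $\PREF^k_{F',\trans'_{i'}(s)} = \PREF^k_{F,\trans_{\varphi(i')}(s)}$ (from the $\PREF$-equality established above combined with (\ref{eq:phi2})), $f'_{i'}(s) = f_{\varphi(i')}(s)$, and $\bar{\PREF}^k_{F',i'}(f'_{i'}(s)) = \bar{\PREF}^k_{F,\varphi(i')}(f_{\varphi(i')}(s))$ reduce each required disjointness for $F'$ at index $i'$ to the corresponding one for $F$ at index $\varphi(i')$, which holds because $F \in \mathscr{F}_{k\hdec}$ (noting that the hypothesis $f'_{i'}(s) = f'_{i'}(s')$ in clause (ii) transfers to $f_{\varphi(i')}(s) = f_{\varphi(i')}(s')$).
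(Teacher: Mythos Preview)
Your proposal is correct and follows essentially the same approach as the paper: induction on $|\pmb{x}|$ for (i), direct transfer of the defining conditions via (i) and (\ref{eq:phi1}) for (ii) and (iv), the same key identity $\sum_{j' \in \mathcal{A}_j} Q_{i',j'}(F') = Q_{\varphi(i'),j}(F)$ for (iii), uniqueness of the stationary distribution plus regrouping over the partition for (v), and reduction of both clauses of Definition~\ref{def:k-bitdelay} to $F$ at index $\varphi(i')$ for (vi). The only cosmetic difference is that you interleave (iv) immediately after (ii) rather than treating it separately, which is harmless.
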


We also introduce the set $\kernel_F$ for $F \in \mathscr{F}$ as the following Definition \ref{def:kernel}.
We state in Lemma \ref{lem:kernel} that we can characterize a regular code-tuple $F$ by $\kernel_F$.

 \begin{definition}
\label{def:kernel}
For $F(f, \trans) \in \mathscr{F}$, we define $\kernel_F$ as
\begin{equation}
\label{eq:7tuxvj14yeno}
\kernel_F \coloneqq \{i \in [F] : {}^{\forall}j \in [F]; {}^{\exists}\pmb{x} \in \mathcal{S}^{\ast}; \trans^{\ast}_j(\pmb{x}) = i\}.
\end{equation}
Namely, $\kernel_F$ is the set of indices $i$ of the code tables such that for any $j \in [F]$, there exists $\pmb{x} \in \mathcal{S}^{\ast}$ such that $\trans^{\ast}_j(\pmb{x}) = i$.
\end{definition}

\begin{example}
For $F^{(\alpha)}$ and $F^{(\beta)}$ in Table \ref{tab:code-tuple}, we have 
$\kernel_{F^{(\alpha)}} = \{0, 1, 2\}$ and $\kernel_{F^{(\beta)}} = \emptyset$.
\end{example}

\begin{lemma}
\label{lem:kernel}
For any $F \in \mathscr{F}$, the following statements (i) and (ii) hold.
\begin{itemize}
\item[(i)] $F \in \mathscr{F}_{\reg}$ if and only if $\kernel_F \neq \emptyset$.
\item[(ii)] If $F \in \mathscr{F}_{\reg}$, then for any $i \in [F]$, the following equivalence relation holds: $\pi_i(F) > 0 \iff i \in \kernel_F$.
\end{itemize}
\end{lemma}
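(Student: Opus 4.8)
The plan is to analyze the Markov structure induced by $Q(F)$ in terms of the transition relation ``$\trans^{\ast}_j(\pmb{x}) = i$'' and to identify $\kernel_F$ with the set of states reachable from every state. First I would establish the key translation between the combinatorial reachability relation and the probabilistic one: since $\mu(s) > 0$ for all $s$, for $i, j \in [F]$ there exists $\pmb{x} \in \mathcal{S}^{\ast}$ with $\trans^{\ast}_j(\pmb{x}) = i$ if and only if $i$ is reachable from $j$ in the directed graph $G_F$ on vertex set $[F]$ with an edge $j \to j'$ whenever $Q_{j, j'}(F) > 0$ (equivalently, whenever $\trans_j(s) = j'$ for some $s \in \mathcal{S}$). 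This is immediate by induction on $|\pmb{x}|$ using Definition \ref{def:f_T} and (\ref{eq:9x9htdrx1001}). Consequently $\kernel_F$ is exactly the set of vertices of $G_F$ that are reachable from every vertex of $G_F$; I will call such vertices \emph{universal}.

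For part (i), the forward direction of the standard finite-Markov-chain fact is: the set of universal vertices, when nonempty, forms a single closed communicating class, and a chain has a unique stationary distribution precisely when it has exactly one closed (recurrent) communicating class. More concretely, I would argue: if $\kernel_F \neq \emptyset$, pick $i_0 \in \kernel_F$; every vertex reaches $i_0$ (by definition of $\kernel_F$), and $i_0$ reaches every universal vertex, so all universal vertices communicate and form the unique closed class $\kernel_F$; restricting $Q(F)$ to $\kernel_F$ gives an irreducible stochastic matrix, which has a unique stationary distribution $\pmb{\pi}'$ by Perron--Frobenius, and extending by zeros gives a stationary distribution of $F$; uniqueness follows because any stationary distribution must be supported on the closed recurrent classes, of which there is only one. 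Conversely, if $\kernel_F = \emptyset$, then $G_F$ has at least two distinct closed communicating classes $C_1 \neq C_2$ (every finite digraph where no vertex is reachable-from-all has $\geq 2$ terminal strongly connected components — a vertex in a unique terminal SCC would be reached from everyone); each $C_t$ carries its own stationary distribution supported on $C_t$, and these are distinct, so $F \notin \mathscr{F}_{\reg}$. I should double-check that the existence of a stationary distribution on each closed class is covered — it follows from Lemma \ref{lem:stationary} applied to the restricted code-tuple, or directly from Perron--Frobenius on the irreducible block; using Lemma \ref{lem:duplicate}(iii)-style bookkeeping may be cleanest but is not strictly needed.

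For part (ii), assume $F \in \mathscr{F}_{\reg}$, so $\kernel_F$ is the unique closed communicating class as above. If $i \in \kernel_F$, then $\pmb{\pi}(F)$ restricted to $\kernel_F$ is the Perron eigenvector of the irreducible block, hence strictly positive on $\kernel_F$, so $\pi_i(F) > 0$. If $i \notin \kernel_F$, then $i$ is a transient state (it fails to be reached from some vertex, hence lies outside the unique closed class and can reach that class but not return), and a standard argument shows $\pi_i(F) = 0$: summing the stationary equation $\sum_{j} \pi_j Q_{j,i} = \pi_i$ over all transient $i$ and using that mass leaks irreversibly into $\kernel_F$ forces the total transient mass to be zero, whence each $\pi_i(F) = 0$ since $\pi_i(F) \geq 0$ by Lemma \ref{lem:stationary}.

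The main obstacle is making the finite-Markov-chain facts rigorous and self-contained at the level of generality the paper uses — in particular, proving cleanly that $\kernel_F \neq \emptyset$ is equivalent to $G_F$ having a unique terminal strongly connected component, and that the latter is equivalent to uniqueness of the stationary distribution of $Q(F)$. I would isolate the graph-theoretic claim (``a finite digraph has a vertex reachable from all vertices iff it has exactly one terminal SCC'') as a small self-contained observation, then invoke Perron--Frobenius on the irreducible block and the transient-mass argument for the rest; the translation lemma between $\trans^{\ast}$-reachability and $Q(F)$-positivity, while conceptually trivial, must be stated carefully because it is what ties the purely combinatorial object $\kernel_F$ to the probabilistic notion of stationarity.
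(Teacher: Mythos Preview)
Your plan is correct and would work, but it takes a genuinely different route from the paper. You frame everything in terms of standard finite Markov-chain structure theory: identify $\kernel_F$ with the unique terminal strongly connected component of the reachability digraph (when one exists), invoke Perron--Frobenius on the irreducible block for existence, uniqueness, and strict positivity, and use a transient-mass argument to show the stationary distribution vanishes off $\kernel_F$. The paper instead stays completely elementary and never cites Perron--Frobenius. Its proof hinges on two small self-contained lemmas about \emph{closed} sets (Lemmas \ref{lem:closed2} and \ref{lem:closed}): $\kernel_F$ is closed and is contained in every nonempty closed set, and every nonempty closed set supports a stationary distribution. For sufficiency in (i), rather than Perron--Frobenius, the paper assumes two distinct stationary distributions, looks at their \emph{difference} $\pmb{x} = \pmb{\pi} - \pmb{\pi}'$, and shows via Lemma \ref{lem:kernel02} that $\{i : x_i > 0\}$ and $\{i : x_i < 0\}$ are both nonempty closed sets, forcing $\kernel_F \subseteq \mathcal{I}_+ \cap \mathcal{I}_- = \emptyset$. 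For necessity it proves a graph lemma (Lemma \ref{lem:kernel0}) equivalent to your ``$\geq 2$ terminal SCCs'' observation. Part (ii) then drops out of the same closed-set machinery. Your approach is more conceptual and shorter if one is willing to import Markov-chain folklore; the paper's approach is more self-contained and avoids external theory, at the cost of the auxiliary Lemmas \ref{lem:kernel02}--\ref{lem:kernel0}. One small caution on your side: the ``summing the stationary equation over transient states'' argument as you phrased it only yields an inequality that collapses to equality, which by itself does not immediately give $\pi_i = 0$; you would need the stronger fact that $(Q^n)_{j,i} \to 0$ for transient $i$, or else reproduce something like the paper's closed-set containment to finish cleanly.
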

The proof of Lemma \ref{lem:kernel} is given in Appendix. \ref{subsec:proof-kernel}.

By Lemma \ref{lem:kernel} (ii), a regular code-tuple $F(f, \trans)$ satisfies $\pi_i(F) > 0$ for any $i \in [F]$ if and only if $F$ is an irreducible code-tuple defined as follows.

\begin{definition}
\label{def:F_irr}
A code-tuple $F$ is said to be \emph{irreducible} if $\kernel_F = [F]$.
We define $\mathscr{F}_{\irr}$ as the set of all irreducible code-tuples, that is,
$\mathscr{F}_{\irr} \coloneqq \{F \in \mathscr{F} :\kernel_F = [F]\}.$
\end{definition}
Note that $\mathscr{F}_{\irr} \subseteq \mathscr{F}_{\reg}$ since $F \in \mathscr{F}_{\reg}$ is equivalent to $\kernel_{F} \neq \emptyset$ by Lemma \ref{lem:kernel} (i).

Now we define an \emph{irreducible part} $\bar{F}$ of a code-tuple $F$ as the following Definition \ref{def:irr-part}.

\begin{definition}
\label{def:irr-part}
An irreducible code-tuple $\bar{F}$ is called an \emph{irreducible part} of a code-tuple $F$ if
there exists an injective homomorphism $\varphi : [\bar{F}] \rightarrow [F]$ from $\bar{F}$ to $F$.
\end{definition}

The following property of $\bar{F}$ is immediately from Definition \ref{def:irr-part} and Lemma \ref{lem:duplicate} (iv)--(vi).

\begin{lemma}
\label{lem:remove}
For any integer $k \geq 0$, $F \in \mathscr{F}_{\reg} \cap \mathscr{F}_{\ext} \cap \mathscr{F}_{k\hdec}$, and an irreducible part $\bar{F}$ of $F$, we have $\bar{F} \in \mathscr{F}_{\irr} \cap \mathscr{F}_{\ext} \cap \mathscr{F}_{k\hdec}$ and  $L(\bar{F}) = L(F)$.
\end{lemma}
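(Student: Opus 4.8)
The plan is to simply unwind the definition of an irreducible part and read off each assertion from the corresponding item of Lemma \ref{lem:duplicate}. First I would record that, by Definition \ref{def:irr-part}, an irreducible part $\bar{F}$ of $F$ is by definition an irreducible code-tuple, so $\bar{F} \in \mathscr{F}_{\irr}$ requires no argument; moreover Definition \ref{def:irr-part} supplies an injective homomorphism $\varphi : [\bar{F}] \rightarrow [F]$ from $\bar{F}$ to $F$, and this $\varphi$ is the single object through which every remaining property is transferred.

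Next I would handle extendability and $k$-bit delay decodability. Since $F \in \mathscr{F}_{\ext}$ by hypothesis, Lemma \ref{lem:duplicate} (iv) applied to $\varphi$ gives $\bar{F} \in \mathscr{F}_{\ext}$. Since $F \in \mathscr{F}_{k\hdec}$ by hypothesis, Lemma \ref{lem:duplicate} (vi) applied to $\varphi$ (with the same $k$) gives $\bar{F} \in \mathscr{F}_{k\hdec}$. Combining this with the previous paragraph yields $\bar{F} \in \mathscr{F}_{\irr} \cap \mathscr{F}_{\ext} \cap \mathscr{F}_{k\hdec}$, which is the first claim.

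Finally, for $L(\bar{F}) = L(F)$ I would invoke Lemma \ref{lem:duplicate} (v), whose hypothesis is that \emph{both} $F$ and $\bar{F}$ are regular. We have $F \in \mathscr{F}_{\reg}$ by assumption, and $\bar{F} \in \mathscr{F}_{\irr} \subseteq \mathscr{F}_{\reg}$ by the inclusion noted just after Definition \ref{def:F_irr} (equivalently, $\kernel_{\bar{F}} = [\bar{F}] \neq \emptyset$ together with Lemma \ref{lem:kernel} (i)). Hence Lemma \ref{lem:duplicate} (v) applies to $\varphi$ and gives $L(\bar{F}) = L(F)$, completing the proof.

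There is essentially no obstacle here: the entire content is already packaged in Lemma \ref{lem:duplicate}, and the proof is a matter of matching hypotheses to conclusions. The only point that deserves a moment's attention is verifying the regularity hypothesis of part (v) for $\bar{F}$ — this is not among the stated hypotheses of the lemma and must instead be obtained from $\mathscr{F}_{\irr} \subseteq \mathscr{F}_{\reg}$.
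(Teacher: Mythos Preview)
Your proposal is correct and follows exactly the approach the paper takes: the paper states that the lemma ``is immediately from Definition \ref{def:irr-part} and Lemma \ref{lem:duplicate} (iv)--(vi),'' and you have unpacked precisely those references. Your explicit verification that $\bar{F} \in \mathscr{F}_{\irr} \subseteq \mathscr{F}_{\reg}$ (needed for the hypothesis of Lemma \ref{lem:duplicate} (v)) is a detail the paper leaves implicit.
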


The existence of an irreducible part is guaranteed as the following Lemma  \ref{lem:irr-part}.
See Appendix \ref{subsec:proof-irr-part} for the proof of Lemma  \ref{lem:irr-part}.

\begin{lemma}
\label{lem:irr-part}
For any $F \in \mathscr{F}_{\reg}$, there exists an irreducible part $\bar{F}$ of $F$.
\end{lemma}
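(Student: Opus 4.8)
The plan is to build $\bar F$ as a reindexed copy of the restriction of $F$ to the code tables whose indices lie in $\kernel_F$. Since $F \in \mathscr{F}_{\reg}$, Lemma \ref{lem:kernel} (i) gives $\kernel_F \neq \emptyset$, so we may write $\kernel_F = \{i_0, i_1, \ldots, i_{r-1}\}$ with $r \coloneqq |\kernel_F| \geq 1$ and let $\psi \colon \{0, 1, \ldots, r-1\} \to \kernel_F$ be the bijection $\psi(a) = i_a$.

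The first step I would carry out is to show that $\kernel_F$ is closed under the transition maps $\trans_i$ for $i \in \kernel_F$: if $i \in \kernel_F$ and $s \in \mathcal{S}$, then $\trans_i(s) \in \kernel_F$. Indeed, for an arbitrary $j \in [F]$ the definition of $\kernel_F$ gives some $\pmb{x} \in \mathcal{S}^{\ast}$ with $\trans^{\ast}_j(\pmb{x}) = i$, and then $\trans^{\ast}_j(\pmb{x}s) = \trans^{\ast}_{\trans^{\ast}_j(\pmb{x})}(s) = \trans^{\ast}_i(s) = \trans_i(s)$ by Lemma \ref{lem:f_T} (ii) and (\ref{eq:tstar}); as $j$ was arbitrary, $\trans_i(s) \in \kernel_F$. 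This closure is precisely what makes the restricted object a well-defined code-tuple.

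Next I would define $\bar F(\bar f, \bar\trans) \in \mathscr{F}^{(r)}$ by $\bar f_a(s) \coloneqq f_{\psi(a)}(s)$ and $\bar\trans_a(s) \coloneqq \psi^{-1}(\trans_{\psi(a)}(s))$ for $a \in \{0, 1, \ldots, r-1\}$ and $s \in \mathcal{S}$, where the second formula makes sense precisely because $\trans_{\psi(a)}(s) \in \kernel_F$ by the closure step. Regarding $\psi$ as a map $\varphi \colon [\bar F] \to [F]$, the two defining identities give immediately $\bar f_a(s) = f_{\varphi(a)}(s)$ and $\varphi(\bar\trans_a(s)) = \trans_{\varphi(a)}(s)$, so $\varphi$ is a homomorphism from $\bar F$ to $F$, and it is injective since $\psi$ is a bijection onto $\kernel_F$.

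Finally I would verify $\bar F \in \mathscr{F}_{\irr}$, i.e. $\kernel_{\bar F} = [\bar F]$. Fix $a, b \in [\bar F]$. Since $\psi(a) \in \kernel_F$, the definition of $\kernel_F$ yields $\pmb{x} \in \mathcal{S}^{\ast}$ with $\trans^{\ast}_{\psi(b)}(\pmb{x}) = \psi(a)$. Applying Lemma \ref{lem:duplicate} (i) to the injective homomorphism $\varphi$ gives $\varphi(\bar\trans^{\ast}_b(\pmb{x})) = \trans^{\ast}_{\varphi(b)}(\pmb{x}) = \psi(a) = \varphi(a)$, and injectivity of $\varphi$ forces $\bar\trans^{\ast}_b(\pmb{x}) = a$. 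Since $a, b$ were arbitrary, $\kernel_{\bar F} = [\bar F]$, so $\bar F$ is an irreducible part of $F$, as required. I expect the only slightly delicate point to be the closure step, but it follows directly from the composition identity for $\trans^{\ast}$; the rest is bookkeeping with the reindexing bijection $\psi$ together with Lemma \ref{lem:duplicate} (i).
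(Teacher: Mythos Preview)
Your proof is correct and takes essentially the same approach as the paper: the paper packages your closure step and your construction of $\bar F$ into the auxiliary Lemmas \ref{lem:closed2} (i) and \ref{lem:closed} (i), but the argument is identical, and the verification of irreducibility via Lemma \ref{lem:duplicate} (i) matches the paper's exactly.
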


\section{Main Results}
\label{sec:main}

In this section, we discuss the average codeword length for code-tuples of the class $\mathscr{F}_{\reg} \cap \mathscr{F}_{\ext} \cap \mathscr{F}_{k\hdec}$ for $k \geq 0$ and prove Theorems \ref{thm:differ} and \ref{thm:complete} as the main results of this paper.

\subsection{Theorem \ref{thm:differ}}

The first theorem claims that for any $F \in \mathscr{F}_{\reg} \cap \mathscr{F}_{\ext} \cap \mathscr{F}_{k\hdec}$, there exists $F^{\dagger} \in \mathscr{F}_{\irr} \cap \mathscr{F}_{\ext} \cap \mathscr{F}_{k\hdec}$ such that $L(F^{\dagger}) \leq L(F)$ and $\PREF^k_{F^{\dagger}, 0}, \PREF^k_{F^{\dagger}, 1}, \ldots, \PREF^k_{F^{\dagger}, |F^{\dagger}|-1}$ are distinct.
Namely, Theorem \ref{thm:differ} guarantees that it suffices to consider only irreducible code-tuples with at most $2^{(2^k)}$ code tables to achieve a short average codeword length.
In particular, it is not the case that one can achieve an arbitrarily small average codeword length by using arbitrarily many code tables.
To state Theorem \ref{thm:differ}, we prepare the following Definition \ref{def:prefset}.

\begin{definition}
\label{def:prefset}
For an integer $k \geq 0$ and $F \in \mathscr{F}$, we define $\prefset^k_F$ as
\begin{equation}
\prefset^k_F \coloneqq \{\PREF^k_{F, i} : i \in [F]\}.
\end{equation}
\end{definition}

\begin{example}
For $F^{(\alpha)}$ in Table \ref{tab:code-tuple}, we have 
\begin{align*}
\prefset^0_{F^{(\alpha)}} &= \{\{\lambda\}\},\\
\prefset^1_{F^{(\alpha)}} &= \{\{0, 1\}, \{1\}\},\\
\prefset^2_{F^{(\alpha)}} &= \{\{01, 10\}, \{00, 01, 10\}, \{11\}\}.
\end{align*}
\end{example}

Note that $\PREF^k_{F, 0}, \PREF^k_{F, 1}, \ldots, \PREF^k_{F, |F|-1}$ are distinct if and only if $|\prefset^k_F| = |F|$.
Also, note that the following Lemma \ref{lem:irr-prefset} holds by Lemma \ref{lem:duplicate} (ii).
\begin{lemma}
\label{lem:irr-prefset}
For any integer $k \geq 0$, $F \in \mathscr{F}_{\reg}$, and an irreducible part $\bar{F}$ of $F$,
we have $\prefset^k_{\bar{F}} \subseteq \prefset^k_{F}$.
\end{lemma}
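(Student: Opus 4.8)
The plan is to unwind both definitions and invoke Lemma \ref{lem:duplicate} (ii) directly. By Definition \ref{def:irr-part}, since $\bar{F}$ is an irreducible part of $F$, there exists an injective homomorphism $\varphi : [\bar{F}] \rightarrow [F]$ from $\bar{F}$ to $F$. I want to show $\prefset^k_{\bar{F}} \subseteq \prefset^k_F$, i.e., every set of the form $\PREF^k_{\bar{F}, i}$ with $i \in [\bar{F}]$ equals some set of the form $\PREF^k_{F, j}$ with $j \in [F]$.

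First I would fix an arbitrary $i \in [\bar{F}]$ and consider the element $\PREF^k_{\bar{F}, i} = \PREF^k_{\bar{F}, i}(\lambda) \in \prefset^k_{\bar{F}}$. The candidate matching index in $[F]$ is of course $\varphi(i)$. So the key step is to verify $\PREF^k_{\bar{F}, i} = \PREF^k_{F, \varphi(i)}$. Lemma \ref{lem:duplicate} (ii) gives $\PREF^{\ast}_{\bar{F}, i}(\pmb{b}) = \PREF^{\ast}_{F, \varphi(i)}(\pmb{b})$ for all $\pmb{b} \in \mathcal{C}^{\ast}$, which is a statement about the union over all lengths rather than the fixed length $k$. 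I would recover the length-$k$ version by intersecting with $\mathcal{C}^k$: since $\PREF^k_{\bar{F}, i}(\pmb{b}) = \PREF^{\ast}_{\bar{F}, i}(\pmb{b}) \cap \mathcal{C}^k$ (the sets $\PREF^j_{\bar{F},i}(\pmb{b})$ for distinct $j$ are disjoint as they live in $\mathcal{C}^j$), applying this with $\pmb{b} = \lambda$ and using the stated equality at level $\ast$ yields $\PREF^k_{\bar{F}, i} = \PREF^{\ast}_{\bar{F}, i} \cap \mathcal{C}^k = \PREF^{\ast}_{F, \varphi(i)} \cap \mathcal{C}^k = \PREF^k_{F, \varphi(i)}$. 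Hence $\PREF^k_{\bar{F}, i} \in \prefset^k_F$.

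Since $i \in [\bar{F}]$ was arbitrary, this shows $\prefset^k_{\bar{F}} = \{\PREF^k_{\bar{F}, i} : i \in [\bar{F}]\} \subseteq \{\PREF^k_{F, j} : j \in [F]\} = \prefset^k_F$, as desired. There is no serious obstacle here; the only point needing a little care is the passage from the $\ast$-level equality in Lemma \ref{lem:duplicate} (ii) to the fixed-length-$k$ equality, which is handled by the disjoint-by-length decomposition noted above. Alternatively, one could avoid even that remark by directly checking $\PREF^k_{\bar{F}, i} = \PREF^k_{F, \varphi(i)}$ from the characterization (\ref{eq:pref3}) together with Lemma \ref{lem:duplicate} (i), which gives $f'^{\ast}_i(\pmb{x}) = f^{\ast}_{\varphi(i)}(\pmb{x})$ for all $\pmb{x} \in \mathcal{S}^{\ast}$; then $\pmb{c} \in \PREF^k_{\bar{F}, i}$ iff $f'^{\ast}_i(\pmb{x}) \succeq \pmb{c}$ for some $\pmb{x} \in \mathcal{S}^{\ast}$ iff $f^{\ast}_{\varphi(i)}(\pmb{x}) \succeq \pmb{c}$ for some $\pmb{x} \in \mathcal{S}^{\ast}$ iff $\pmb{c} \in \PREF^k_{F, \varphi(i)}$. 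Either route is short.
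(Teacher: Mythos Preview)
Your proposal is correct and follows essentially the same approach as the paper, which simply states that the lemma holds by Lemma~\ref{lem:duplicate}~(ii). Your added clarification about recovering the length-$k$ equality from the $\ast$-level equality (or the alternative via (\ref{eq:pref3}) and Lemma~\ref{lem:duplicate}~(i)) just makes explicit what the paper leaves implicit.
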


Using Definition \ref{def:prefset}, we state Theorem \ref{thm:differ} as follows.

\begin{theorem}
\label{thm:differ}
For any integer $k \geq 0$ and $F \in \mathscr{F}_{\reg} \cap \mathscr{F}_{\ext} \cap \mathscr{F}_{k\hdec}$, there exists $F^{\dagger} \in \mathscr{F}$ satisfying the following conditions (a)--(d).
\begin{itemize}
\item[(a)] $F^{\dagger}  \in \mathscr{F}_{\irr} \cap \mathscr{F}_{\ext} \cap \mathscr{F}_{k\hdec}$.
\item[(b)] $L(F^{\dagger}) \leq L(F)$.
\item[(c)] $\prefset^k_{F^{\dagger}} \subseteq \prefset^k_F$.
\item[(d)] $|\prefset^k_{F^{\dagger}}| = |F^{\dagger}|$.
\end{itemize}
\end{theorem}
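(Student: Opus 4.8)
The plan is to start from $F \in \mathscr{F}_{\reg} \cap \mathscr{F}_{\ext} \cap \mathscr{F}_{k\hdec}$, first pass to an irreducible part to get into $\mathscr{F}_{\irr}$, and then iteratively ``merge'' code tables whose $\PREF^k_{F,i}$ sets coincide until condition (d) holds. Concretely, by Lemma \ref{lem:irr-part} there exists an irreducible part $\bar{F}$ of $F$, and by Lemma \ref{lem:remove} we have $\bar{F} \in \mathscr{F}_{\irr} \cap \mathscr{F}_{\ext} \cap \mathscr{F}_{k\hdec}$ with $L(\bar{F}) = L(F)$; moreover $\prefset^k_{\bar{F}} \subseteq \prefset^k_F$ by Lemma \ref{lem:irr-prefset}. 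So it suffices to prove the theorem under the extra assumption $F \in \mathscr{F}_{\irr}$, and then compose. If already $|\prefset^k_F| = |F|$ we are done, so assume there are distinct $i, j \in [F]$ with $\PREF^k_{F,i} = \PREF^k_{F,j}$.

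The core construction is to build a new code-tuple $F'$ on the index set $[F] \setminus \{j\}$ (relabeled to $[|F|-1]$) that behaves like $F$ except every transition that used to land on $j$ now lands on $i$ instead. Formally, define $\psi : [F] \to [F]\setminus\{j\}$ by $\psi(j) = i$ and $\psi(\ell) = \ell$ otherwise, set $f'_\ell := f_{\ell}$ for $\ell \neq j$ and $\trans'_\ell(s) := \psi(\trans_\ell(s))$. I would then check that the natural inclusion $\iota : [F']\setminus\{\text{removed}\} \hookrightarrow [F]$ is \emph{not} quite a homomorphism, because transitions into $j$ get redirected — so instead I would analyze $F'$ directly. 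The key technical point is that $\PREF^k_{F',\ell} = \PREF^k_{F,\ell}$ for every $\ell \neq j$: since $\PREF^k_{F,i} = \PREF^k_{F,j}$, redirecting transitions from $j$ to $i$ does not change the set of codeword continuations reachable from any state; this should follow by induction on the length of the source string witnessing membership in $\PREF^k$, using Lemma \ref{lem:f_T} and equations (\ref{eq:pref1})--(\ref{eq:pref3}). Granting this, the $k$-bit delay decodability conditions (i),(ii) of Definition \ref{def:k-bitdelay} for $F'$ reduce directly to those for $F$ (again using $\PREF^k_{F,\trans'_\ell(s)} = \PREF^k_{F,\trans_\ell(s)}$, which holds because $\psi$ only ever replaces $j$ by $i$ and those two sets are equal), so $F' \in \mathscr{F}_{k\hdec}$. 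Extendability of $F'$ follows similarly since $\PREF^1_{F',\ell} = \PREF^1_{F,\ell} \neq \emptyset$.

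For condition (b), I would compare stationary distributions: if $\pmb{\pi} = \pmb{\pi}(F)$ is the stationary distribution of $F$, then $\pmb{\pi}' := (\pi_0, \ldots, \pi_{i}+\pi_j, \ldots, \widehat{\pi_j}, \ldots)$ (the vector obtained by adding $\pi_j$ into coordinate $i$ and deleting coordinate $j$) satisfies $\pmb{\pi}'Q(F') = \pmb{\pi}'$ and sums to $1$ — this is exactly the pushforward of a stationary distribution under a merging map, analogous to Lemma \ref{lem:duplicate} (iii). Since $L_\ell(F') = L_\ell(F)$ for $\ell \neq j$ and $L_i(F') = L_i(F)$, we get $L(F') = \sum_{\ell \neq j}\pi'_\ell L_\ell(F') = \sum_{\ell}\pi_\ell L_\ell(F) = L(F)$, so in fact equality holds in (b). Then $F'$ may or may not be irreducible, so I would apply Lemma \ref{lem:irr-part}/\ref{lem:remove} again to pass to an irreducible part $\bar{F'}$, preserving membership in $\mathscr{F}_{\irr}\cap\mathscr{F}_{\ext}\cap\mathscr{F}_{k\hdec}$, the average length, and the inclusion of $\prefset^k$ sets. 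Since $|F'| < |F|$, this process terminates; taking $F^\dagger$ to be the final code-tuple gives (a), (b) with $L(F^\dagger) \leq L(F)$, (c) by composing the chain of inclusions $\prefset^k_{F^\dagger} \subseteq \cdots \subseteq \prefset^k_F$, and (d) because the process only stops when $|\prefset^k_{F^\dagger}| = |F^\dagger|$.

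The main obstacle I anticipate is the invariance claim $\PREF^k_{F',\ell} = \PREF^k_{F,\ell}$ under the redirection: one must be careful that redirecting \emph{one} transition edge from $j$ to $i$, when iterated along a source string, genuinely produces the same set of codeword-sequence prefixes, and the cleanest way is probably to first establish $f'^{\ast}_\ell(\pmb{x}) = f^{\ast}_{\ell}(\pmb{x})$ fails in general (the codewords diverge once a redirected edge is used) — so instead one argues set-equality of $\PREF^k$ by building, for each witness $\pmb{x}$ in $F$, a (possibly different) witness in $F'$ and vice versa, exploiting $\PREF^k_{F,i}=\PREF^k_{F,j}$ to ``repair'' the suffix after the first redirected transition. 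Handling the bookkeeping of when and where the first redirect occurs, together with Lemma \ref{lem:f_T} (i),(iii), is the delicate part; everything else is routine.
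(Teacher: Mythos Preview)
Your argument for condition (b) has a genuine gap. The claim that the pushforward vector $\pmb{\pi}'$ (with $\pi'_i = \pi_i + \pi_j$ and $\pi'_\ell = \pi_\ell$ otherwise) is stationary for $F'$ is false in general: writing out the balance equation for $F'$ at a state $\ell \neq i$ gives
\[
\sum_{m\neq j}\pi'_m\, Q_{m,\ell}(F') \;=\; \pi_\ell + \pi_j\bigl(Q_{i,\ell}(F) - Q_{j,\ell}(F)\bigr),
\]
which equals $\pi'_\ell = \pi_\ell$ only if the rows $i$ and $j$ of $Q(F)$ coincide. The hypothesis $\PREF^k_{F,i} = \PREF^k_{F,j}$ says nothing about the outgoing transitions or code tables of $i$ and $j$; in particular your map $\psi$ is \emph{not} a homomorphism $F \to F'$ in the sense of Definition~\ref{def:homo} (that would force $f_j = f_i$), so the analogy with Lemma~\ref{lem:duplicate}~(iii) breaks down. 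Even granting stationarity, the step $\sum_{\ell\neq j}\pi'_\ell L_\ell(F') = \sum_\ell \pi_\ell L_\ell(F)$ still requires $L_i(F) = L_j(F)$. A concrete failure: take $\mathcal{S}=\{a,b\}$ with $\mu(a)=\mu(b)=\tfrac12$, two states with $f_0(a)=0$, $f_0(b)=1$, $\trans_0(a)=\trans_0(b)=1$ and $f_1(a)=00$, $f_1(b)=11$, $\trans_1(a)=\trans_1(b)=0$. Then $F\in\mathscr{F}_{\irr}\cap\mathscr{F}_{\ext}\cap\mathscr{F}_{1\hdec}$ with $\PREF^1_{F,0}=\PREF^1_{F,1}=\{0,1\}$ and $L(F)=\tfrac32$, but redirecting into state $1$ and deleting state $0$ yields a one-state code-tuple with $L(F')=2>L(F)$.

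The paper resolves this by choosing \emph{which} state to redirect to: it takes $p\in\argmin_{i\in\mathcal{I}} h_i(\bar F)$, where $\mathcal{I}$ is the full set of indices sharing the given $\PREF^k$ and $h(\bar F)$ is the bias vector of Lemma~\ref{lem:potential}, redirects every transition landing in $\mathcal{I}$ to $p$, and invokes the policy-improvement inequality of Lemma~\ref{lem:improve} to conclude $L(F')\leq L(\bar F)$. In the example above this criterion selects state~$0$, giving $L(F')=1\leq\tfrac32$. Your treatment of the $\PREF^k$-invariance and of $F'\in\mathscr{F}_{\ext}\cap\mathscr{F}_{k\hdec}$ is essentially Lemma~\ref{lem:pref-unchanged} and is fine; what is missing is precisely the Markov-decision-process ingredient (Lemmas~\ref{lem:potential} and~\ref{lem:improve}) that makes the length comparison go through.
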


\begin{table}
\caption{The code-tuple $F^{(\delta)}$ is an optimal $2$-bit delay decodable code-tuple satisfying Theorem \ref{thm:differ} (a)--(d) with $F = F^{(\gamma)}$, where $(\mu(\mathrm{a}), \mu(\mathrm{b}), \mu(\mathrm{c}), \mu(\mathrm{d})) = (0.1, 0.2, \allowbreak 0.3, 0.4)$}
\label{tab:optimal}
\centering
\begin{tabular}{c | lclclclc}
\hline
$s \in \mathcal{S}$ & $f^{(\gamma)}_0$ & $\trans^{(\gamma)}_0$ & $f^{(\gamma)}_1$ & $\trans^{(\gamma)}_1$ & $f^{(\gamma)}_2$ & $\trans^{(\gamma)}_2$ & $f^{(\gamma)}_3$ & $\trans^{(\gamma)}_3$\\
\hline
a & 0010 & 2 &100 & 1 & 1100 & 1 & 010 & 0\\
b & 0011 & 0 & 00 & 0 & $11$ & 2 & $011$ & 1\\
c & 000 & 1 & 01 & 1 & 01 & 1 & 100 & 0\\
d & $\lambda$ & 2 & 1 & 2 & 10 & 0 & 1 & 2\\
\hline
\end{tabular}

\vspace{8pt}

\begin{tabular}{c | lclclc}
\hline
$s \in \mathcal{S}$ & $f^{(\delta)}_0$ & $\trans^{(\delta)}_0$ & $f^{(\delta)}_1$ & $\trans^{(\delta)}_1$\\
\hline
a & 100 & 0 & 1100 & 0\\
b & 00 & 0 & $11$ & 1\\
c & 01 & 0 & 01 & 0\\
d & 1 & 1 & 10 & 0\\
\hline
\end{tabular}
\end{table}

\begin{example}
Let $(\mu(\mathrm{a}), \mu(\mathrm{b}), \mu(\mathrm{c}), \mu(\mathrm{d})) = (0.1, 0.2, \allowbreak 0.3, 0.4)$ and $F \coloneqq F^{(\gamma)}$ in Table \ref{tab:optimal}.
Then we have $F \in \mathscr{F}_{\reg} \cap \mathscr{F}_{\ext} \cap \mathscr{F}_{2\hdec}$, $L(F) \approx 1.98644$, and
$\prefset^2_F = \{\{00, 01, 10, 11\}, \{01, 10, 11\}\}$.
The code-tuple $F^{\dag} \coloneqq F^{(\delta)}$ in Table \ref{tab:optimal} satisfies Theorem \ref{thm:differ} (a)--(d) because $\kernel_{F^{\dag}} = \{0, 1\} = [F^{\dag}]$, $L(F^{\dag}) = 1.8667 \leq L(F)$, and $\prefset^2_{F^{\dag}} = \{\{00, 01, 10, 11\}, \{01, 10, 11\}\}$.
\end{example}

\begin{example}
We confirm that Theorem \ref{thm:differ} holds for $k = 0$.
Choose $F \in \mathscr{F}_{\reg} \cap \mathscr{F}_{\ext} \cap \mathscr{F}_{0\hdec}$ arbitrarily and define
$F^{\dag}(f^{\dag}, \trans^{\dag}) \in \mathscr{F}^{(1)}$ as
\begin{align}
f^{\dag}_0(s) &= f_p(s), \label{eq:9au4ayripz8y}\\
\trans^{\dag}_0(s) &= 0
\end{align}
for $s \in \mathcal{S}$, where
\begin{equation}
\label{eq:sjdc1c0vvtgn}
p \in \argmin_{i \in [F]} L_i(F).
\end{equation}
Namely, $F^{\dag}$ is the $1$-code-tuple consisting of the most efficient code table of $F$.

We can see that $F^{\dag}$ satisfies Theorem \ref{thm:differ} (a)--(d) as follows.
\begin{itemize}
\item[(a)] We obtain $F^{\dag} \in \mathscr{F}_{\irr}$ directly from $|F^{\dag}| = 1$.
By $F \in \mathscr{F}_{0\hdec}$ and Lemma \ref{lem:0dec-prefixfree}, all code tables of $F$ are prefix-free.
In particular, $f^{\dag}_0 = f_p$ is prefix-free and thus $F^{\dag} \in \mathscr{F}_{0\hdec}$.
Moreover, since $f^{\dag}_0$ is prefix-free and $\sigma \geq 2$, we have $f^{\dag}_0(s) \neq \lambda$ for some $s \in \mathcal{S}$, which shows $F^{\dag} \in \mathscr{F}_{\ext}$.
\item[(b)] We have
\begin{align*}
L(F^{\dag}) &= L_0(F^{\dag}) \overset{(\mathrm{A})}{=}  L_p(F)
= \sum_{i \in [F]}\pi_i(F)L_p(F) \overset{(\mathrm{B})}{\leq} \sum_{i \in [F]}\pi_i(F)L_i(F)
= L(F),
\end{align*}
where
(A) follows from (\ref{eq:9au4ayripz8y}),
and (B) follows from (\ref{eq:sjdc1c0vvtgn}).
\item[(c)] By $\prefset_{F^{\dag}}^0 = \{\{\lambda\}\} = \prefset_{F}^0$.
\item[(d)] By $|\prefset_{F^{\dag}}^0| = |\{\{\lambda\}\}| = 1 = |F^{\dag}|$.
\end{itemize}
\end{example}

As a preparation for the proof of Theorem \ref{thm:differ}, we state the following Lemmas \ref{lem:pref-unchanged}--\ref{lem:improve}.
See Appendix \ref{subsec:proof-pref-unchanged}--\ref{subsec:proof-improve} for the proofs of Lemmas \ref{lem:pref-unchanged}, \ref{lem:chooseone}, and \ref{lem:improve}.

\begin{lemma}
\label{lem:pref-unchanged}
Let $k \geq 0$ be an integer and let $F(f, \trans)$ and $F'(f', \trans')$ be code-tuples such that $|F| = |F'|$.
Assume that the following conditions (a) and (b) hold.
\begin{itemize}
\item[(a)] $f_i(s) = f'_i(s)$ for any $i \in [F]$ and $s \in \mathcal{S}$.
\item[(b)] $\PREF^k_{F, \trans_i(s)} = \PREF^k_{F, \trans'_i(s)}$ for any $i \in [F]$ and $s \in \mathcal{S}$.
\end{itemize}
Then the following statements (i)--(iii) hold.
\begin{itemize}
\item[(i)] For any $i \in [F']$ and $\pmb{b} \in \mathcal{C}^{\ast}$, we have $\PREF^k_{F, i}(\pmb{b}) = \PREF^k_{F', i}(\pmb{b})$ and $\bar{\PREF}^k_{F, i}(\pmb{b}) = \bar{\PREF}^k_{F', i}(\pmb{b})$.
\item[(ii)] If $F \in \mathscr{F}_{\ext}$, then $F' \in \mathscr{F}_{\ext}$.
\item[(iii)] If $F \in \mathscr{F}_{k\hdec}$, then $F' \in \mathscr{F}_{k\hdec}$.
\end{itemize}
\end{lemma}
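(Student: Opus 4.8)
The plan is to prove the three statements in order, noting that (i) is the substantive claim and (ii), (iii) follow from it almost immediately.

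First I would prove (i) by showing more generally that $\PREF^k_{F, i}(\pmb{b}) = \PREF^k_{F', i}(\pmb{b})$ and $\bar{\PREF}^k_{F, i}(\pmb{b}) = \bar{\PREF}^k_{F', i}(\pmb{b})$ for all $i$ and all $\pmb{b}$. The key observation is that, by assumption (a), the two code-tuples have identical code tables $f_i = f'_i$; they differ only in the transition maps $\trans_i$ versus $\trans'_i$, and assumption (b) says that corresponding transitions lead to code tables with the same $\PREF^k$ sets. Unwinding Definition \ref{def:pref}, membership of $\pmb{c}$ in $\PREF^k_{F, i}(\pmb{b})$ asserts the existence of $\pmb{x} = x_1 x_2 \ldots x_n \in \mathcal{S}^{+}$ with $f^{\ast}_i(\pmb{x}) \succeq \pmb{b}\pmb{c}$ and $f_i(x_1) \succeq \pmb{b}$. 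Using Lemma \ref{lem:f_T} (i) we can write $f^{\ast}_i(\pmb{x}) = f_i(x_1) f^{\ast}_{\trans_i(x_1)}(\suff(\pmb{x}))$, so the ``tail'' of the codeword sequence after $f_i(x_1)$ is produced starting from code table $f_{\trans_i(x_1)}$. The natural approach is induction on $n = |\pmb{x}|$: for $n = 1$ the condition involves only $f_i(x_1)$, which is common to both code-tuples; for $n \geq 2$, we split off $x_1$ and observe that the remaining requirement is that $f^{\ast}_{\trans_i(x_1)}(\suff(\pmb{x}))$ extends the appropriate suffix of $\pmb{b}\pmb{c}$, i.e.\ that some length-$(k - |{f_i(x_1)}^{-1}\pmb{b}\pmb{c}|$ truncation$)$ string lies in $\PREF^{k'}_{F, \trans_i(x_1)}$ for a suitable $k'$. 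Here I would invoke Lemma \ref{lem:pref-inc} to reduce everything to statements about the sets $\PREF^{k}_{F, \trans_i(x_1)}$ with the \emph{same} exponent $k$, which by hypothesis (b) coincide with $\PREF^{k}_{F', \trans'_i(x_1)}$; a symmetric argument runs in the other direction. The $\bar{\PREF}$ case is identical except that $f_i(x_1) \succ \pmb{b}$ replaces $f_i(x_1) \succeq \pmb{b}$, which does not interact with the transition maps at all.

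Once (i) is established, (ii) and (iii) are short. For (ii), extendability of $F$ means $\PREF^1_{F, i} \neq \emptyset$ for all $i \in [F]$; since $\PREF^1_{F, i} = \PREF^1_{F', i}$ by (i) applied with $k$ replaced by $1$ and $\pmb{b} = \lambda$ — wait, here one must be careful, as (i) is stated for the fixed $k$; I would instead note that $\PREF^1_{F, i} \neq \emptyset$ is equivalent to $\PREF^0_{F, i} \neq \emptyset$ by Corollary \ref{cor:pref-inc}, and similarly for $F'$, provided we already know $F, F' \in \mathscr{F}_{\ext}$ — this is circular, so the cleaner route is to prove (i) in the stronger form ``for every integer $k \geq 0$'' (the hypothesis (b) is about the fixed $k$, but the $n=1$ base case of the induction does not use (b), and the inductive step only ever reduces to $\PREF$ sets at exponents $\leq$ the current one via Lemma \ref{lem:pref-inc}; actually (b) at exponent $k$ does not obviously give (b) at other exponents). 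The right fix: prove (i) only for the given $k$, and for (ii) observe directly that $\PREF^k_{F, \trans_i(s)} = \PREF^k_{F, \trans'_i(s)} \neq \emptyset$ by (b) together with $F \in \mathscr{F}_{\ext}$ and Corollary \ref{cor:pref-inc}; then since every $j' \in [F']$ equals $\trans'_i(s)$ for suitable $i, s$ is \emph{not} guaranteed, so instead use that $\PREF^k_{F', j} = \PREF^k_{F, j}$ for all $j$ — this does follow from (i) with $\pmb{b} = \lambda$ — and then $\PREF^1_{F', j} \neq \emptyset \iff \PREF^0_{F', j} \neq \emptyset \iff \PREF^0_{F, j} \neq \emptyset \iff \PREF^1_{F, j} \neq \emptyset \neq \emptyset$ via Corollary \ref{cor:pref-inc} and $\PREF^0_{F', j} = \PREF^0_{F, j}$. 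For (iii), both conditions (i) and (ii) of Definition \ref{def:k-bitdelay} for $F'$ are statements about sets of the form $\PREF^k_{F', \trans'_i(s)}$ and $\bar{\PREF}^k_{F', i}(f'_i(s))$; using (a) to rewrite $f'_i(s) = f_i(s)$, using part (i) of this lemma to rewrite $\bar{\PREF}^k_{F', i}(f_i(s)) = \bar{\PREF}^k_{F, i}(f_i(s))$, and using hypothesis (b) together with part (i) at $\pmb{b} = \lambda$ to rewrite $\PREF^k_{F', \trans'_i(s)} = \PREF^k_{F, \trans'_i(s)} = \PREF^k_{F, \trans_i(s)}$, the disjointness conditions for $F'$ become literally the disjointness conditions for $F$, which hold by $F \in \mathscr{F}_{k\hdec}$.

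The main obstacle I anticipate is bookkeeping in the induction for (i): one has to track how the prefix $\pmb{b}$ and the $k$ extra bits $\pmb{c}$ get consumed as symbols are peeled off $\pmb{x}$, and make sure that the reduced instance always lands on a $\PREF^{k''}_{F, j}(\pmb{b}'')$ set where either $\pmb{b}'' = \lambda$ (so hypothesis (b) applies, after possibly padding with Lemma \ref{lem:pref-inc} to match exponent $k$) or the residual prefix still lies within the common first codeword $f_i(x_1) = f'_i(x_1)$. A clean way to organize this is to first reduce, via Lemma \ref{lem:pref-inc}, to the equivalent statement about $\PREF^{\ast}$ rather than $\PREF^k$, i.e.\ show $\PREF^{\ast}_{F, i}(\pmb{b}) = \PREF^{\ast}_{F', i}(\pmb{b})$, where the exponent disappears and only the existence of a suitable $\pmb{x}$ matters; this makes the induction on $|\pmb{x}|$ transparent, and then (i) follows by intersecting with $\mathcal{C}^k$. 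I would carry out this $\PREF^{\ast}$ reformulation first, prove it by induction on $|\pmb{x}|$, and deduce (i)–(iii) as above.
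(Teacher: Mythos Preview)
Your induction plan for (i) --- peel off $x_1$, pad the residual to length exactly $k$ via Lemma~\ref{lem:pref-inc}, then invoke hypothesis (b) --- is the paper's argument. One small slip: hypothesis (b) equates $\PREF^k_{F,\trans_i(s)}$ with $\PREF^k_{F,\trans'_i(s)}$, both taken in $F$; it does not mention $F'$. The swap from $F'$ to $F$ at index $\trans'_i(x_1)$ comes from the induction hypothesis on $|\pmb{x}|$, and (b) is then used separately, within $F$, to move from $\trans'_i(x_1)$ to $\trans_i(x_1)$. The paper formalizes this by proving, for all $\pmb{c}\in\mathcal{C}^{\leq k}$, that $(f'^{\ast}_i(\pmb{x})\succeq\pmb{b}\pmb{c},\ f'_i(x_1)\succeq\pmb{b})$ implies the existence of $\pmb{x}'$ with $(f^{\ast}_i(\pmb{x}')\succeq\pmb{b}\pmb{c},\ f_i(x'_1)\succeq\pmb{b})$. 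Your argument for (iii) matches the paper's.

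There are two genuine gaps. First, your ``clean'' $\PREF^{\ast}$ reformulation is false: $\PREF^{\ast}_{F,i}(\pmb{b})=\PREF^{\ast}_{F',i}(\pmb{b})$ does not follow from (a) and (b). Take $k=0$, $|F|=2$, $\mathcal{S}=\{a,b\}$, $f_0(a)=0$, $f_0(b)=1$, $f_1(a)=0$, $f_1(b)=\lambda$, $\trans\equiv 0$, $\trans'\equiv 1$; then (a) and (b) hold but $01\in\PREF^2_{F,0}\setminus\PREF^2_{F',0}$. Hypothesis (b) constrains only the first $k$ bits beyond each codeword boundary, so the induction yields $\PREF^{k'}_{F,i}(\pmb{b})=\PREF^{k'}_{F',i}(\pmb{b})$ for every $k'\leq k$ and nothing beyond; the bookkeeping you flagged as the ``main obstacle'' is essential and cannot be bypassed by passing to $\PREF^{\ast}$.

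Second, your final chain for (ii) is both circular (it applies Corollary~\ref{cor:pref-inc} to $F'$ before $F'\in\mathscr{F}_{\ext}$ is known) and vacuous ($\PREF^0_{\cdot,j}=\{\lambda\}$ always, so the middle equivalence carries no information). The paper's route is to note that the induction for (i), being stated for all $\pmb{c}\in\mathcal{C}^{\leq k}$, already delivers $\PREF^1_{F,i}=\PREF^1_{F',i}$, from which (ii) is immediate. Your discarded idea of strengthening (i) was close: the correct strengthening is ``for every $k'\leq k$'', not ``for every $k\geq 0$'', and this falls out automatically once you write the auxiliary implication with $\pmb{c}\in\mathcal{C}^{\leq k}$.
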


\begin{lemma}
\label{lem:chooseone}
For any $F(f, \trans) \in \mathscr{F}_{\irr}$, $\mathcal{I} \subseteq [F]$, and $p \in \mathcal{I}$,
the code-tuple $F'(f', \trans') \in \mathscr{F}^{(|F|)}$ defined as (\ref{eq:cp8qdsjrwe14}) and (\ref{eq:hbm12zjixhzy}) satisfies $F' \in \mathscr{F}_{\reg}$:
\begin{align}
f'_i(s) &= f_i(s), \label{eq:cp8qdsjrwe14}\\
\trans'_i(s) &= 
\begin{cases}
p & \,\,\text{if}\,\, \trans_i(s) \in \mathcal{I},\\
\trans_i(s) & \,\,\text{if}\,\, \trans_i(s) \not\in \mathcal{I}
\end{cases}
\label{eq:hbm12zjixhzy}
\end{align}
for $i \in [F']$ and $s \in \mathcal{S}$.
\end{lemma}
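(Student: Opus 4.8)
The plan is to show that $\kernel_{F'} \neq \emptyset$, since by Lemma \ref{lem:kernel} (i) this is equivalent to $F' \in \mathscr{F}_{\reg}$. More precisely, I would prove the stronger statement that $p \in \kernel_{F'}$, i.e.\ that for every $j \in [F']$ there exists $\pmb{x} \in \mathcal{S}^{\ast}$ with $\trans'^{\ast}_j(\pmb{x}) = p$.

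The first step is to understand how ${\trans'}^{\ast}$ relates to $\trans^{\ast}$. The key observation is that the transition structure of $F'$ agrees with that of $F$ except that every transition landing in $\mathcal{I}$ is rerouted to $p \in \mathcal{I}$. I would prove by induction on $|\pmb{x}|$ a statement along the lines of: for any $j \in [F]$ and $\pmb{x} \in \mathcal{S}^{\ast}$, if $\trans^{\ast}_j(\pmb{x}) \notin \mathcal{I}$ then ${\trans'}^{\ast}_j(\pmb{x}) = \trans^{\ast}_j(\pmb{x})$, while tracking what happens once a prefix of $\pmb{x}$ first hits $\mathcal{I}$. Actually the cleaner route: show that for any $j$ and $\pmb{x}$, if $\trans^{\ast}_j(\pmb{x}') \notin \mathcal{I}$ for every prefix $\pmb{x}' \preceq \pmb{x}$, then ${\trans'}^{\ast}_j(\pmb{x}) = \trans^{\ast}_j(\pmb{x})$; and separately, if $\trans^{\ast}_j(\pmb{x}) \in \mathcal{I}$ then ${\trans'}^{\ast}_j(\pmb{x}) = p$ provided we stop $\pmb{x}$ at the first moment it enters $\mathcal{I}$. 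These follow from the recursive definition of $\trans^{\ast}$ in Definition \ref{def:f_T} together with (\ref{eq:hbm12zjixhzy}) and the composition rule Lemma \ref{lem:f_T} (ii).

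The second step uses irreducibility of $F$. Fix $j \in [F']= [F]$. Since $F \in \mathscr{F}_{\irr}$, we have $\kernel_F = [F]$, so $p \in \kernel_F$, hence there exists $\pmb{x} \in \mathcal{S}^{\ast}$ with $\trans^{\ast}_j(\pmb{x}) = p$. Take such an $\pmb{x}$ and let $\pmb{x}_0 \preceq \pmb{x}$ be the shortest prefix with $\trans^{\ast}_j(\pmb{x}_0) \in \mathcal{I}$ (such a prefix exists because $\pmb{x}$ itself works, as $p \in \mathcal{I}$). By minimality, every proper prefix of $\pmb{x}_0$ maps outside $\mathcal{I}$, so by the first step ${\trans'}^{\ast}_j(\pref(\pmb{x}_0)) = \trans^{\ast}_j(\pref(\pmb{x}_0)) \notin \mathcal{I}$; then applying one more symbol and using (\ref{eq:hbm12zjixhzy}) (the transition from $\trans^{\ast}_j(\pref(\pmb{x}_0))$ on the last symbol of $\pmb{x}_0$ lands in $\mathcal{I}$ under $\trans$, hence is redirected to $p$ under $\trans'$), together with Lemma \ref{lem:f_T} (ii), gives ${\trans'}^{\ast}_j(\pmb{x}_0) = p$. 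Thus $p \in \kernel_{F'}$, so $\kernel_{F'} \neq \emptyset$ and $F' \in \mathscr{F}_{\reg}$ by Lemma \ref{lem:kernel} (i).

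The main obstacle I anticipate is setting up the induction in the first step cleanly — in particular handling the ``first entry into $\mathcal{I}$'' correctly, since after the trajectory has entered $\mathcal{I}$ the two code-tuples $F$ and $F'$ may diverge arbitrarily and we lose control. The fix is exactly to truncate at the first entry time and only ever compare $F$ and $F'$ along trajectories that have stayed outside $\mathcal{I}$ so far, where the transition maps literally coincide by (\ref{eq:hbm12zjixhzy}); one extra symbol then lands us at $p$. Everything else is routine bookkeeping with Definitions \ref{def:f_T}, \ref{def:kernel} and Lemmas \ref{lem:f_T}, \ref{lem:kernel}.
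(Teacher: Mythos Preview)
Your proposal is correct and follows essentially the same approach as the paper: show $p \in \kernel_{F'}$ by using irreducibility of $F$ to find an $F$-path from any $j$ to $p$, then argue the $F$- and $F'$-trajectories agree up to the first transition landing in $\mathcal{I}$, at which point $F'$ jumps to $p$. One small edge case to tighten when you write it up: if $j \in \mathcal{I}$ already, your ``shortest prefix $\pmb{x}_0$ with $\trans^{\ast}_j(\pmb{x}_0) \in \mathcal{I}$'' is $\pmb{x}_0 = \lambda$, so $\pref(\pmb{x}_0)$ is undefined and $\trans'^{\ast}_j(\lambda) = j$ need not equal $p$; the paper handles this by taking $r$ to be the minimum \emph{positive} integer with $\trans^{\ast}_j(x_1\cdots x_r) \in \mathcal{I}$ (the rule (\ref{eq:hbm12zjixhzy}) depends only on where $\trans_i(s)$ lands, not on whether $i$ itself lies in $\mathcal{I}$, so the starting state being in $\mathcal{I}$ is harmless).
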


\begin{lemma}
\label{lem:potential}
For any $F \in \mathscr{F}$, there exists $(h_0, h_1, \ldots, \allowbreak h_{|F|-1}) \in \mathbb{R}^{|F|}$ satisfying
\begin{equation}
\label{eq:potential}
{}^{\forall}i \in [F]; L(F) = L_i(F) + \sum_{j \in [F]}(h_j - h_i)Q_{i, j}(F).
\end{equation}
\end{lemma}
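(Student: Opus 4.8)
The plan is to recognize \eqref{eq:potential} as the \emph{Poisson equation} (or \emph{fundamental equation}) associated with the Markov chain whose transition matrix is $Q(F)$ and whose cost at state $i$ is $L_i(F)$, and to establish solvability of that linear system by a rank/range argument. Since $L(F)$ is defined only for $F\in\mathscr{F}_{\reg}$ (Definition \ref{def:evaluation}), I treat that case. Writing $\pmb{\ell}\coloneqq(L_0(F),\dots,L_{|F|-1}(F))^{\top}$, $\pmb{h}\coloneqq(h_0,\dots,h_{|F|-1})^{\top}$, and $\pmb{1}\coloneqq(1,\dots,1)^{\top}$ as column vectors, and using that each row of $Q(F)$ sums to $1$ (immediate from \eqref{eq:9x9htdrx1001} and $\sum_{s\in\mathcal{S}}\mu(s)=1$), one checks $\sum_{j\in[F]}(h_j-h_i)Q_{i,j}(F)=(Q(F)\pmb{h})_i-h_i$. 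Hence \eqref{eq:potential} holds for every $i\in[F]$ if and only if $(I-Q(F))\pmb{h}=\pmb{\ell}-L(F)\pmb{1}$, so it suffices to show this system in the unknown $\pmb{h}$ is consistent.

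I would then invoke the standard criterion that a real linear system $A\pmb{h}=\pmb{b}$ is solvable iff $\pmb{b}$ is orthogonal to every $\pmb{v}$ with $A^{\top}\pmb{v}=\pmb{0}$. With $A=I-Q(F)$ this requires $\pmb{\ell}-L(F)\pmb{1}$ to be orthogonal to every (signed) solution of $Q(F)^{\top}\pmb{v}=\pmb{v}$, i.e.\ to every left eigenvector of $Q(F)$ for eigenvalue $1$. The key structural input is that, because $F\in\mathscr{F}_{\reg}$, this left $1$-eigenspace is exactly $\mathrm{span}(\pmb{\pi}(F))$: it contains $\pmb{\pi}(F)$ by \eqref{eq:stationary1}, and if it had dimension $\geq 2$ one could produce a stationary distribution distinct from $\pmb{\pi}(F)$, contradicting uniqueness — for instance by considering the Ces\`aro-averaged limit $P\coloneqq\lim_{N\to\infty}\frac1N\sum_{n=0}^{N-1}Q(F)^n$, which exists for any finite stochastic matrix, is itself stochastic, satisfies $PQ(F)=P$ so that each of its rows is a stationary distribution, and has rank equal to the dimension of the eigenspace. (Equivalently, one may cite the textbook fact that for a stochastic matrix the dimension of the left $1$-eigenspace equals the number of recurrent communicating classes, which is $1$ precisely when the stationary distribution is unique; this links to $\kernel_F$ via Lemma \ref{lem:kernel}.) Granting this, the required orthogonality is the one-line computation $\pmb{\pi}(F)\cdot(\pmb{\ell}-L(F)\pmb{1})=\sum_{i\in[F]}\pi_i(F)L_i(F)-L(F)\sum_{i\in[F]}\pi_i(F)=L(F)-L(F)=0$ by \eqref{eq:evaluation} and \eqref{eq:stationary2}, so a solution $\pmb{h}$ exists, which is the assertion.

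The step I expect to be the main obstacle is proving that regularity forces the left $1$-eigenspace of $Q(F)$ to be one-dimensional: the hypothesis only supplies uniqueness of the \emph{nonnegative, normalized} stationary vector, and upgrading this to one-dimensionality of the full signed eigenspace needs either the existence and basic properties of the Ces\`aro limit $P$ (most self-contained; the mild point being that a finite stochastic matrix carries no nontrivial Jordan block for eigenvalue $1$, so $\frac1N\sum_{n<N}Q(F)^n$ converges) or an appeal to the recurrent-class structure of finite Markov chains. An alternative route that localizes the difficulty is to decompose $[F]$ into $\kernel_F$ (the recurrent states, by Lemma \ref{lem:kernel}) and its complement: first solve \eqref{eq:potential} restricted to $\kernel_F$, where the restricted matrix is an irreducible stochastic matrix so Perron--Frobenius makes eigenvalue $1$ simple, and then extend $\pmb{h}$ to the transient states, using that the sub-stochastic block on $[F]\setminus\kernel_F$ has spectral radius $<1$ so that $I$ minus that block is invertible. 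Either way, once the eigenspace dimension is pinned down the rest is the routine computation above.
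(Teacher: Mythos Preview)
Your approach is correct. The paper does not actually prove this lemma: it merely cites Puterman's \emph{Markov Decision Processes} (Theorem~8.2.6 and equation~(8.2.12)) for the existence of the ``bias'' vector satisfying the evaluation equation, which is precisely the Poisson equation you identify. So the underlying mathematics is identical --- you have simply unpacked what Puterman's theorem says and sketched a self-contained proof via the Fredholm alternative, whereas the paper outsources the argument entirely.

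Two remarks. First, your observation that the statement should really be restricted to $F\in\mathscr{F}_{\reg}$ (since $L(F)$ is only defined there, and since a constant gain requires a unichain structure) is correct and is a point the paper's statement glosses over; the lemma is only ever invoked for regular $F$ in the proof of Theorem~\ref{thm:differ}, so no harm is done. Second, the step you flag as the obstacle --- that uniqueness of the stationary \emph{distribution} forces the entire left $1$-eigenspace to be one-dimensional --- is exactly the content hidden inside Puterman's unichain hypothesis, and both of your suggested routes (Ces\`aro limit, or Perron--Frobenius on the single recurrent class $\kernel_F$ combined with spectral radius $<1$ on the transient block) are standard and sound. Your version is therefore strictly more informative than the paper's citation while arriving at the same conclusion by the same mechanism.
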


See \cite[Sec. 8.2]{Puterman} for proof of Lemma \ref{lem:potential}.
The vector $h$ called ``bias'' defined as \cite[(8.2.2)]{Puterman} satisfies (\ref{eq:potential}) of this paper.
This fact is shown as \cite[(8.2.12)]{Puterman} in \cite[Theorem 8.2.6]{Puterman},
where $g, r,$ and $P$ in \cite[(8.2.12)]{Puterman} correspond to the notations of this paper as follows:
\begin{equation*}
g =  \left[
    \begin{array}{c}
      L(F)\\
      L(F)\\
      \vdots\\
      L(F)\\
    \end{array}
  \right],\quad
 r =  \left[
    \begin{array}{c}
      L_0(F)\\
      L_1(F)\\
      \vdots\\
      L_{|F|-1}(F)\\
    \end{array}
  \right],\quad
 P = Q(F).
\end{equation*}

A real vector $(h_0, h_1, \ldots, h_{|F|-1})$ satisfying (\ref{eq:potential}) is not unique.
We refer to arbitrarily chosen one of them as $h(F) = (h_0(F), h_1(F), \ldots, h_{|F|-1}(F))$.

\begin{lemma}
\label{lem:improve}
For any $F(f, \trans), F'(f', \trans') \in \mathscr{F}_{\reg}$ such that $|F| = |F'|$,
if the following conditions (a) and (b) hold, then $L(F') \leq L(F)$.
\begin{itemize}
\item[(a)] $L_i(F) = L_i(F')$ for any $i \in [F]$.
\item[(b)] $h_{\trans_i(s)}(F) \geq h_{\trans'_i(s)}(F)$ for any $i \in [F]$ and $s \in \mathcal{S}$.
\end{itemize}
\end{lemma}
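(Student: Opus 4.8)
The plan is to derive the inequality $L(F') \le L(F)$ by manipulating the ``bias'' identity of Lemma~\ref{lem:potential} applied to $F'$ and comparing it term-by-term with the same identity for $F$. First I would write down what Lemma~\ref{lem:potential} gives for $F'$: there exists $h(F') = (h_0(F'), \ldots, h_{|F'|-1}(F'))$ with
\[
L(F') = L_i(F') + \sum_{j \in [F']} (h_j(F') - h_i(F')) Q_{i,j}(F') \quad \text{for all } i \in [F'].
\]
The natural idea is \emph{not} to use $h(F')$, however, but to test how well the \emph{old} bias $h(F)$ performs against the \emph{new} transition structure. So I would define, for each $i \in [F']=[F]$, the quantity
\[
\delta_i \coloneqq L_i(F') + \sum_{j \in [F']} (h_j(F) - h_i(F)) Q_{i,j}(F') - L(F),
\]
and show $\delta_i \le 0$ for every $i$. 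Using condition (a), $L_i(F') = L_i(F)$, and subtracting the identity (\ref{eq:potential}) for $F$, this reduces to showing
\[
\sum_{j \in [F]} (h_j(F) - h_i(F))\bigl(Q_{i,j}(F') - Q_{i,j}(F)\bigr) \le 0.
\]
Now I would rewrite each transition probability as a sum over source symbols via (\ref{eq:9x9htdrx1001}): $\sum_j (h_j(F)-h_i(F)) Q_{i,j}(F) = \sum_{s \in \mathcal{S}} \mu(s)\,(h_{\trans_i(s)}(F) - h_i(F))$, and similarly for $F'$ with $\trans'$. Thus the left-hand side becomes $\sum_{s \in \mathcal{S}} \mu(s)\,(h_{\trans'_i(s)}(F) - h_{\trans_i(s)}(F))$, which is $\le 0$ term by term by condition (b) since each $\mu(s) > 0$. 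Hence $\delta_i \le 0$, i.e.
\[
L(F) \ge L_i(F') + \sum_{j \in [F']} (h_j(F) - h_i(F)) Q_{i,j}(F') \quad \text{for all } i \in [F'].
\]

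The remaining step is a standard ``averaging against the stationary distribution'' argument. I would multiply the last inequality by $\pi_i(F') \ge 0$ and sum over $i \in [F']$. Writing $\pmb{h} = (h_0(F), \ldots, h_{|F|-1}(F))^{\!\top}$, the cross term $\sum_i \pi_i(F') \sum_j (h_j(F) - h_i(F)) Q_{i,j}(F')$ telescopes: the $\sum_{i,j} \pi_i(F') Q_{i,j}(F') h_j(F)$ part equals $(\pmb{\pi}(F') Q(F'))\pmb{h} = \pmb{\pi}(F')\pmb{h}$ by the stationarity equation (\ref{eq:stationary1}), which exactly cancels the $\sum_i \pi_i(F') h_i(F) = \pmb{\pi}(F')\pmb{h}$ part. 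Meanwhile $\sum_i \pi_i(F') L_i(F') = L(F')$ by (\ref{eq:evaluation}) and $\sum_i \pi_i(F') L(F) = L(F)$ since $\sum_i \pi_i(F') = 1$ by (\ref{eq:stationary2}). This yields $L(F) \ge L(F')$, as desired.

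The main obstacle is mostly bookkeeping rather than conceptual: one must be careful that $h(F)$ is only defined up to the freedom noted after Lemma~\ref{lem:potential}, but this is harmless because we fix one choice $h(F)$ throughout and condition (b) is stated with respect to that same fixed choice. A second point requiring a little care is the validity of the matrix manipulation $\pmb{\pi}(F') Q(F') = \pmb{\pi}(F')$: this is exactly Definition~\ref{def:stationary} (\ref{eq:stationary1}) for $F'$, which applies since $F' \in \mathscr{F}_{\reg}$ by hypothesis, and we also need $\sum_i \pi_i(F') = 1$ from (\ref{eq:stationary2}); no nonnegativity of $\pi_i(F')$ is actually needed for the telescoping, though it is available via Lemma~\ref{lem:stationary}. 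Everything else is routine rearrangement of finite sums.
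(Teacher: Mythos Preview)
Your argument is correct, but it follows a different line from the paper's. The paper does not average against $\pmb{\pi}(F')$; instead it selects a single extremal index $p \in \argmin_i (h_i(F)-h_i(F'))$, applies the bias identity (\ref{eq:potential}) for $F'$ at that one state using $h(F')$, then uses the extremal choice of $p$ to replace $h(F')$ by $h(F)$ via the inequality $h_i(F')-h_p(F') \le h_i(F)-h_p(F)$, and finally invokes conditions (a), (b) and the identity for $F$ at $p$. Your approach never touches $h(F')$ or Lemma~\ref{lem:potential} for $F'$: you only use the bias identity for $F$, establish the per-state inequality $\delta_i \le 0$, and then let the stationarity of $\pmb{\pi}(F')$ do the work via telescoping. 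Your route is arguably more transparent and avoids the somewhat opaque extremal-index trick; the paper's route, on the other hand, yields the conclusion from a single state $p$ without summing, which is a slight economy of its own. One small clarification: you write that nonnegativity of $\pi_i(F')$ is not needed for the telescoping, which is true, but it \emph{is} needed for the step where you multiply $\delta_i \le 0$ by $\pi_i(F')$ and sum while preserving the inequality direction---you already flag $\pi_i(F') \ge 0$ earlier when you do this, so the proof stands, but the closing remark could mislead a reader.
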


Using these lemmas, we now prove Theorem \ref{thm:differ}.

\begin{proof}[Proof of Theorem \ref{thm:differ}]
We fix an integer $k \geq 0$ arbitrarily and prove Theorem \ref{thm:differ} by induction for $|F|$.
For the base case $|F| = 1$, the code-tuple $F^{\dagger} \coloneqq F$ satisfies (a)--(d) of Theorem \ref{thm:differ} as desired.
We now consider the induction step for $|F| \geq 2$.

We consider an irreducible part $\bar{F}(\bar{f}, \bar{\trans})$ of $F$.
By Lemmas \ref{lem:remove} and \ref{lem:irr-prefset}, the following statements ($\bar{\mathrm{a}}$)--($\bar{\mathrm{c}}$) hold (cf. (a)--(c) of Theorem \ref{thm:differ}).
\begin{itemize}
\item[($\bar{\mathrm{a}}$)] $\bar{F}  \in \mathscr{F}_{\irr} \cap \mathscr{F}_{\ext} \cap \mathscr{F}_{k\hdec}$.
\item[($\bar{\mathrm{b}}$)] $L(\bar{F}) = L(F)$.
\item[($\bar{\mathrm{c}}$)] $\prefset^k_{\bar{F}} \subseteq \prefset^k_F$.
\end{itemize}
Therefore, if $|\prefset^k_{\bar{F}}| = |\bar{F}|$, then $F^{\dagger} \coloneqq \bar{F}$ satisfies (a)--(d) of Theorem \ref{thm:differ} as desired.
Thus, we now assume $|\prefset^k_{\bar{F}}| < |\bar{F}|$.
Then we can choose $i', j' \in [\bar{F}]$ such that $i' \neq j'$ and $\PREF^k_{\bar{F}, i'}  = \PREF^k_{\bar{F}, j'}$ by pigeonhole principle.
We define
$F'(f', \trans') \in \mathscr{F}^{(|\bar{F}|)}$ as
\begin{align}
f'_i(s) &= \bar{f}_i(s), \label{eq:21l9xdppl0h1}\\
\trans'_i(s) &= 
\begin{cases}
p & \,\,\text{if}\,\, \bar{\trans}_i(s) \in \mathcal{I},\\
\bar{\trans}_i(s) & \,\,\text{if}\,\, \bar{\trans}_i(s) \not\in \mathcal{I}
\end{cases}
\label{eq:ynxygdkzsrk1}
\end{align}
for $i \in [F']$ and $s \in \mathcal{S}$,
where
\begin{equation}
\mathcal{I} \coloneqq \{i \in [\bar{F}] : \PREF^k_{\bar{F}, i}  = \PREF^k_{\bar{F}, i'} (= \PREF^k_{\bar{F},  j'})\}
\end{equation}
and we choose 
\begin{equation}
\label{eq:tuhir8am5say}
p \in \arg\min_{i \in \mathcal{I}} h_i(\bar{F})
\end{equation}
arbitrarily.

Then we obtain $F' \in \mathscr{F}_{\reg}$ by applying Lemma \ref{lem:chooseone} since $\bar{F} \in \mathscr{F}_{\irr}$.
Also, we obtain $F' \in \mathscr{F}_{\ext} \cap \mathscr{F}_{k\hdec}$
and
\begin{equation}
\label{eq:55me7vpzr8he}
\prefset^k_{F'} = \prefset^k_{\bar{F}}
\end{equation}
for any $i \in [F']$ by applying Lemma \ref{lem:pref-unchanged} (i)--(iii) since $\bar{f}_i(s) = f'_i(s)$ and $\PREF^k_{\bar{F}, \bar{\trans}_i(s)} = \PREF^k_{\bar{F}, \trans'_i(s)}$ for any $i \in [\bar{F}]$ and $s \in \mathcal{S}$ by (\ref{eq:21l9xdppl0h1}) and (\ref{eq:ynxygdkzsrk1}).
Moreover, we can see
\begin{equation}
\label{eq:hpvmxtw3pohj}
L(F') \leq L(\bar{F})
\end{equation} by applying Lemma \ref{lem:improve} because $F'$ satisfies (a) (resp. (b)) of Lemma \ref{lem:improve} by (\ref{eq:21l9xdppl0h1}) (resp. (\ref{eq:ynxygdkzsrk1})--(\ref{eq:tuhir8am5say})).

Since $|\mathcal{I}| \geq |\{i', j'\}| \geq 2$, we have $\mathcal{I} \setminus \{p\} \neq \emptyset$.
Also, for any $i \in \mathcal{I} \setminus \{p\}$, we have $i \not\in \kernel_{F'}$ since for any $j \in [F'] \setminus \{i\}$, there exists no $\pmb{x} \in \mathcal{S}^{\ast}$ such that $\trans'^{\ast}_j(\pmb{x}) = i$ by (\ref{eq:ynxygdkzsrk1}).
Therefore, we have
\begin{equation}
\label{eq:jmckcr9oz3s3}
\kernel_{F'} \subsetneq [F'].
\end{equation}

For an irreducible part $\bar{F'}$ of $F'$, we have
\begin{equation}
|\bar{F'}|
= |\kernel_{F'}|
\overset{(\mathrm{A})}{<} |F'|
= |\bar{F}|
= |\kernel_F|
\leq |F|,
\end{equation}
where
(A) follows from $(\ref{eq:jmckcr9oz3s3})$.
Therefore, by applying the induction hypothesis to $\bar{F'}$, we can see that there exists $F^{\dagger} \in \mathscr{F}$ satisfying the following conditions ($\mathrm{a}^{\dagger}$)--($\mathrm{d}^{\dagger}$).
\begin{itemize}
\item[($\mathrm{a}^{\dagger}$)] $F^{\dagger} \in \mathscr{F}_{\irr} \cap \mathscr{F}_{\ext} \cap \mathscr{F}_{k\hdec}$.
\item[($\mathrm{b}^{\dagger}$)] $L(F^{\dagger}) \leq L(\bar{F'})$.
\item[($\mathrm{c}^{\dagger}$)] $\prefset^k_{F^{\dagger}} \subseteq \prefset^k_{\bar{F'}}$.
\item[($\mathrm{d}^{\dagger}$)] $|\prefset^k_{F^{\dagger}}| = |F^{\dagger}|$.
\end{itemize}
We can see that $F^{\dagger}$ is a desired code-tuple, that is, $F^{\dagger}$ satisfies (a)--(d) of Theorem \ref{thm:differ} as follows.
First, (a) and (d) are directly from ($\mathrm{a}^{\dagger}$) and ($\mathrm{d}^{\dagger}$), respectively.
We obtain (b) as follows:
\begin{equation}
L(F^{\dagger}) \overset{(\mathrm{A})}{\leq} L(\bar{F'})
\overset{(\mathrm{B})}{=} L(F')
\overset{(\mathrm{C})}{\leq} L(\bar{F})
\overset{(\mathrm{D})}{=} L(F),
\end{equation}
where
(A) follows from ($\mathrm{b}^{\dagger}$),
(B) follows Lemma \ref{lem:remove},
(C) follows from (\ref{eq:hpvmxtw3pohj}),
and (D) follows Lemma \ref{lem:remove}.
The condition (c) holds because
\begin{align*}
\prefset^k_{F^{\dagger}}
\overset{(\mathrm{A})}\subseteq \prefset^k_{\bar{F'}}
\overset{(\mathrm{B})}\subseteq \prefset^k_{F'}
\overset{(\mathrm{C})}{=} \prefset^k_{\bar{F}}
\overset{(\mathrm{D})}\subseteq \prefset^k_{F},
\end{align*}
where
(A) follows from ($\mathrm{c}^{\dagger}$),
(B) follows from Lemma \ref{lem:irr-prefset},
(C) follows from (\ref{eq:55me7vpzr8he}),
and (D) follows from Lemma \ref{lem:irr-prefset}.
\end{proof}

As a consequence of Theorem \ref{thm:differ}, we can prove the existence of an \emph{optimal} $k$-bit delay decodable code-tuple, that is,
$F^{\ast} \in \mathscr{F}_{\reg} \cap \mathscr{F}_{\ext} \cap \mathscr{F}_{k\hdec}$ such that 
$L(F^{\ast}) \leq L(F)$ for any $F \in \mathscr{F}_{\reg} \cap \mathscr{F}_{\ext} \cap \mathscr{F}_{k\hdec}$.
We prove this fact in Appendix \ref{sec:proof-goodsetTopt}.

We define $\mathscr{F}_{k\hopt}$ as the set of all optimal $k$-bit delay decodable code-tuples as following Definition \ref{def:optimalset}.
\begin{definition}
\label{def:optimalset}
For an integer $k \geq 0$, we define 
\begin{equation}
\label{eq:mk8q01zbvbov}
\mathscr{F}_{k\hopt} \coloneqq \argmin_{F \in \mathscr{F}_{\reg} \cap \mathscr{F}_{\ext} \cap \mathscr{F}_{k\hdec}} L(F).
\end{equation}
\end{definition}
Note that $\mathscr{F}_{k\hopt}$ depends on the probability distribution $\mu$ of the source symbols, and we are now discussing on an arbitrarily fixed $\mu$.

\begin{example}
\label{ex:optimal}
Let $(\mu(\mathrm{a}), \mu(\mathrm{b}), \mu(\mathrm{c}), \mu(\mathrm{d})) = (0.1, 0.2, \allowbreak 0.3, 0.4)$.
Then the code-tuple $F^{(\delta)}$ in Table \ref{tab:optimal} is an optimal $2$-bit delay decodable code-tuple with $L(F^{(\delta)}) \approx 1.8667$.
\end{example}

\subsection{Theorem \ref{thm:complete}}

Theorem \ref{thm:complete} gives a necessary condition for $F \in \mathscr{F}_{\reg} \cap \mathscr{F}_{\ext} \cap \mathscr{F}_{k\hdec}$ to be optimal.
Recall that every internal node in a code-tree of Huffman code has two child nodes because of its optimality.
This leads to that any bit sequence is a prefix of codeword sequence of some source sequence.
More formally, 
\begin{equation}
\label{eq:uf124xag8678}
{}^{\forall}\pmb{b} \in \mathcal{C}^{\ast}; {}^{\exists} \pmb{x} \in \mathcal{S}^{\ast}; f_{\Huff}(\pmb{x}) \succeq \pmb{b},
\end{equation}
where $f_{\Huff}(\pmb{x})$ is the codeword sequence of $\pmb{x}$ with the Huffman code.
The following Theorem \ref{thm:complete} is a generalization of this property of Huffman codes to $k$-bit delay decodable code-tuples for $k \geq 0$.

\begin{theorem}
\label{thm:complete}
For any integer $k \geq 0$, $F \in \mathscr{F}_{k\hopt}$, $i \in \kernel_F$, and $\pmb{b} = b_1b_2\ldots b_l \in \mathcal{C}^{\geq k}$, if $b_1b_2 \ldots b_k \in \PREF^k_{F, i}$, then $\pmb{b} \in \PREF^{\ast}_{F, i}$.
\end{theorem}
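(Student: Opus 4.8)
The plan is to argue by contradiction: suppose $F \in \mathscr{F}_{k\hopt}$, $i \in \kernel_F$, $\pmb{b} = b_1b_2\ldots b_l \in \mathcal{C}^{\geq k}$ with $b_1\ldots b_k \in \PREF^k_{F,i}$, but $\pmb{b} \notin \PREF^{\ast}_{F,i}$. I would take $\pmb{b}$ to be a \emph{shortest} such counterexample for the given $F$ and $i$; note $l > k$ since $b_1\ldots b_k \in \PREF^k_{F,i} = \PREF^k_{F,i}(\lambda)$ already witnesses $b_1\ldots b_k \in \PREF^{\ast}_{F,i}$. By minimality, $\pref(\pmb{b}) = b_1\ldots b_{l-1} \in \PREF^{\ast}_{F,i}$, so there is $\pmb{x} \in \mathcal{S}^{+}$ with $f^{\ast}_i(\pmb{x}) \succeq b_1\ldots b_{l-1}$, but no source sequence pushes the codeword one bit further in the direction of $b_l$. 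The intuition, mirroring the Huffman ``every internal node has two children'' fact, is that this ``missing branch'' is wasted: we can shorten some codeword(s) of $F$ to reclaim it, producing a code-tuple with strictly smaller average codeword length that is still $k$-bit delay decodable, contradicting optimality.

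Concretely, I would first locate a specific code table and codeword responsible for the defect. Using Lemma~\ref{lem:f_T}(iii) and the recursive structure of $f^{\ast}$, decompose $f^{\ast}_i(\pmb{x}) = f^{\ast}_i(\pmb{x}')f_{j}(s)\pmb{r}$ where $j = \trans^{\ast}_i(\pmb{x}')$, $s$ is an appropriate symbol, and the prefix $b_1\ldots b_{l-1}$ ends strictly inside $f_j(s)$ or exactly at its end; then the state $j$ (or $\trans_j(s)$) together with the trailing bits $\pmb b' := (f^{\ast}_i(\pmb x'))^{-1}(b_1\ldots b_{l-1})$ or an appropriate suffix of it describes ``where we are'' in the coding process. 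The condition $\pmb{b} \notin \PREF^{\ast}_{F,i}$ translates, via Definition~\ref{def:pref} and~(\ref{eq:pref3}), into a statement that from this state no continuation of the source can emit $b_l$ next: i.e.\ for the relevant state $j'$ and residual bit string, $b_l \pmb c \notin \PREF^{k+1}_{F,j'}(\text{residual})$ for all $\pmb c$, equivalently $\PREF^{\ast}_{F,j'}(\pmb b'' )$ avoids the $b_l$-branch, where $\pmb b''$ is the residual. Here $i \in \kernel_F$ is used to guarantee the state $j$ is actually reachable (so that modifying $f_j$ genuinely affects $L(F)$), and regularity/extendability guarantee $L(F)$ is well-defined and that lengths can be extended (Lemma~\ref{lem:F_ext}, Lemma~\ref{lem:longest}).

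Having isolated the slack, I would construct a modified code-tuple $F'$: pick a source symbol $t$ whose codeword in the offending table $f_j$ (or some table reachable along the path) has a bit that can be deleted because the freed branch $b_l$ is unused — more precisely, replace $f_j(t)$ by a strictly shorter string, keeping all $\trans$ maps unchanged, so that $L_j(F') < L_j(F)$ while $L_{j''}(F') = L_{j''}(F)$ for $j'' \neq j$. One must check (i) $F'$ is still regular (the $\trans$ maps are untouched, so $\kernel_{F'} = \kernel_F$, hence $\pmb\pi(F') = \pmb\pi(F)$ by Lemma~\ref{lem:kernel} and Remark~\ref{rem:transitionprobability}), giving $L(F') < L(F)$ since $\pi_j(F) > 0$ by $j \in \kernel_F$ and Lemma~\ref{lem:kernel}(ii); (ii) $F'$ is still extendable; and, the crux, (iii) $F' \in \mathscr{F}_{k\hdec}$ — i.e.\ shortening this codeword does not collide with any $\PREF^k$ or $\bar\PREF^k$ set in Definition~\ref{def:k-bitdelay}, which is exactly where the hypothesis ``$b_l$-branch unused'' does its work. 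The hard part will be step (iii): carefully choosing \emph{which} codeword to shorten and \emph{how}, and verifying that the $\PREF^k$/$\bar\PREF^k$ disjointness conditions are preserved after the surgery — this requires tracking how shortening $f_j(t)$ perturbs $f^{\ast}_{i'}$ for all $i'$ and all source prefixes, using Lemma~\ref{lem:pref-inc} and Lemma~\ref{lem:f_T} to reduce the infinitely many conditions to finitely many local checks around the modified codeword. Once $F' \in \mathscr{F}_{\reg} \cap \mathscr{F}_{\ext} \cap \mathscr{F}_{k\hdec}$ with $L(F') < L(F)$ is established, it contradicts $F \in \mathscr{F}_{k\hopt}$, completing the proof.
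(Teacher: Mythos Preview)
Your high-level strategy matches the paper's: contradiction, shortest counterexample $\pmb{b}$, observe $\pref(\pmb{b}) \in \PREF^{\ast}_{F,i}$, and exploit the ``missing branch'' to shorten something and strictly decrease $L(F)$. The problem is your proposed surgery: you plan to shorten a single codeword $f_j(t)$ in the \emph{original} code-tuple while keeping all $\trans$ maps fixed. This does not work in general. The ``$b_l$-branch unused'' hypothesis is a statement about codeword sequences starting from table $i$; but the same table $f_j$ can be reached from other starting tables $i' \neq i$ (or from $i$ itself along a different prefix, landing with a different residual bit-string), and along those paths the analogous branch may well be used. Shortening $f_j(t)$ then creates a genuine collision in Definition~\ref{def:k-bitdelay} for some other context, so step~(iii) fails. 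Your own remark that one must ``track how shortening $f_j(t)$ perturbs $f^{\ast}_{i'}$ for all $i'$'' is exactly the obstruction, and nothing in the proposal resolves it.

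The paper's key additional idea is an \emph{unrolling} construction that you are missing. Before shortening anything, the paper replaces $F$ by an equivalent $F'$ with many duplicated code tables, one copy $f'_{\langle \pmb{z}\rangle}$ for each source prefix $\pmb{z} \in \mathcal{S}^{\leq L}$ with $L = |F|(|\pmb{d}|+1)$; there is a homomorphism $F' \to F$ so $F' \in \mathscr{F}_{\reg}\cap\mathscr{F}_{\ext}\cap\mathscr{F}_{k\hdec}$ and $L(F')=L(F)$ (Lemma~\ref{lem:duplicate}). Only \emph{then} does the paper shorten codewords, and only in those copies $f'_{\langle \pmb{z}\rangle}$ for which $f'^{\ast}_{\langle\lambda\rangle}(\pmb{z}) \prec \pmb{d} \preceq f'^{\ast}_{\langle\lambda\rangle}(\pmb{z}s)$, i.e.\ precisely along the paths where the redundant bit $d_l$ occurs. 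The duplicated tables decouple the different contexts in which the same original table was used, so the modification is path-local and the $k$-bit delay decodability of the other contexts is untouched. Even with this device, verifying Definition~\ref{def:k-bitdelay} for the shortened tuple $F''$ is substantial (Lemmas~\ref{lem:fddot}, \ref{lem:fddot2}, \ref{lem:psi-1} and a lengthy case analysis). Your proposal needs this unrolling step, or some equivalent mechanism, before the ``shorten a codeword'' idea can be made rigorous.
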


\begin{remark}
A Huffman code is represented by a $1$-code-tuple $F \in \mathscr{F}^{(1)}$.
We have $F \in \mathscr{F}_{0\hopt}$ by the optimality of Huffman codes.
Applying Theorem \ref{thm:complete} to $F$ with $k = 0$, we obtain
\begin{equation*}
{}^{\forall}\pmb{b} \in \mathcal{C}^{\ast}; \pmb{b} \in \PREF^{\ast}_{F, 0},
\end{equation*}
which is equivalent to (\ref{eq:uf124xag8678}), and thus Theorem \ref{thm:complete} is indeed a generalization of the property (\ref{eq:uf124xag8678}) of Huffman codes.
\end{remark}

\begin{table}
\caption{An example of $\pmb{x} \in \mathcal{S}^{\ast}$ such that $f^{\ast}_i(\pmb{x}) \succeq \pmb{b}$, where $F(f, \trans) \coloneqq F^{(\delta)}$ in Table \ref{tab:optimal}, $i \in \{0, 1\}$, and $\pmb{b} \in \mathcal{C}^3$}
\label{tab:complete}
\centering
\begin{tabular}{c | cccccccc}
\hline
\diagbox{i}{\pmb{b}}& $000$ & $001$ & $010$ & $011$ & $100$ & $101$ & $110$ & $111$ \\
\hline
$0$ & bb & ba & cb & ca & a & dc & dd & db \\
$1$ & - & - & cb & ca & db & da & a & ba\\
\hline
\end{tabular}
\end{table}

\begin{example}
For $F(f, \trans) \coloneqq F^{(\delta)}$ in Table \ref{tab:optimal}, we have $F \in \mathscr{F}_{2\hopt}$ for $(\mu(\mathrm{a}), \mu(\mathrm{b}), \mu(\mathrm{c}), \mu(\mathrm{d})) = (0.1, 0.2, \allowbreak 0.3, 0.4)$ (cf. Example \ref{ex:optimal}).
Theorem \ref{thm:complete} claims that for any $i \in \kernel_F = \{0, 1\}$ and $\pmb{b} \in \mathcal{C}^{\geq 2}$ such that $b_1b_2 \in \PREF^2_{F, i}$, it holds that
$\pmb{b} \in \PREF^{\ast}_{F, i}$, that is, there exists $\pmb{x} \in \mathcal{S}^{\ast}$ such that $f^{\ast}_i(\pmb{x}) \succeq \pmb{b}$.

For $i \in \{0, 1\}$ and $\pmb{b} \in \mathcal{C}^3$ such that $b_1b_2 \in \PREF^2_{F, i}$, 
Table \ref{tab:complete} shows an example of $\pmb{x} \in \mathcal{S}^{\ast}$ such that $f^{\ast}_i(\pmb{x}) \succeq \pmb{b}$.
For example, we have $f^{\ast}_0(\mathrm{ca}) \succeq 011$ and $f^{\ast}_1(\mathrm{ba}) \succeq 111$.
Note that $b_1b_2 \in \PREF^2_{F, i}$ does not hold for $(i, \pmb{b}) = (1, 000)$ and $(i, \pmb{b}) = (1, 001)$.
\end{example}

\begin{proof}[Proof of Theorem \ref{thm:complete}]

We prove by contradiction assuming that there exist $p \in \kernel_F$ and $\pmb{b} = b_1b_2\ldots b_l \in \mathcal{C}^{\geq k}$ such that 
\begin{equation}
\label{eq:lusqxp3wrnhn}
\pmb{b} \not\in \PREF^{\ast}_{F, p}, \quad b_1b_2 \ldots b_k \in \PREF^k_{F, p}.
\end{equation}
Without loss of generality, we assume $p = |F|-1$ and $\pmb{b}$ is the shortest sequence satisfying (\ref{eq:lusqxp3wrnhn}).
Because we have $l > k$ by (\ref{eq:lusqxp3wrnhn}), we have $\pref(\pmb{b}) \succeq b_1b_2 \ldots b_k \in \PREF^k_{F, |F|-1}$.
Since $\pmb{b}$ is the shortest sequence satisfying (\ref{eq:lusqxp3wrnhn}), it must hold that $\pref(\pmb{b}) \in \PREF^{\ast}_{F, |F|-1}$.
Hence, by $F \in \mathscr{F}_{\ext}$ and Lemma \ref{lem:pref-inc} (i), we have $\pmb{d} = d_1d_2\ldots d_l \coloneqq \pref(\pmb{b})\bar{b_l} \in \PREF^{\ast}_{F, |F|-1}$,
where $\bar{c}$ denotes the negation of $c \in \mathcal{C}$, that is, $\bar{0} \coloneqq 1$ and $\bar{1} \coloneqq 0$.
Namely, we have
\begin{equation}
\label{eq:2cz6211gg810}
\pmb{d} \in \PREF^{\ast}_{F, |F|-1}, \quad \pref(\pmb{d})\bar{d_l} = \pmb{b} \not\in \PREF^{\ast}_{F, |F|-1}.
\end{equation}

We state the key idea of the proof as follows.
By (\ref{eq:2cz6211gg810}), whenever the decoder reads a prefix $\pref(\pmb{d})$ of the codeword sequence, the decoder can know that the following bit is $d_l$ without reading it. 
Hence, the bit $d_l$ gives no information and is unnecessary for the $k$-bit delay decodability of the mapping $f^{\ast}_{|F|-1}$.
We consider obtaining another code-tuple $F'' \in \mathscr{F}_{\reg} \cap \mathscr{F}_{\ext} \cap \mathscr{F}_{k\hdec}$ such that $L(F'') < L(F)$ by removing this redundant bit, which leads to a contradiction to $F \in \mathscr{F}_{k\hopt}$ as desired.
However, naive removing a bit may impair the $k$-bit delay decodability of the other mappings $f^{\ast}_i$ for $i \in [|F|-1]$.
Accordingly, we first define a code-tuple $F'$ which is essentially equivalent to $F$ by adding some duplicates of the code tables to $F$.
Then by making changes to the replicated code tables instead of the original code tables, we obtain the desired $F''$ without affecting the $k$-bit delay decodability of $f^{\ast}_i$ for $i \in [|F|-1]$.

We define the code-tuple $F'$ as follows.
Put $L \coloneqq |F|(|\pmb{d}|+1)$ and $M \coloneqq |\mathcal{S}^{\leq L}|$.
We number all the sequences of $\mathcal{S}^{\leq L}$ as $\pmb{z}^{(0)}, \pmb{z}^{(1)}, \pmb{z}^{(2)}, \ldots, \pmb{z}^{(M-1)}$ in any order but $\pmb{z}^{(0)} \coloneqq \lambda$.
For $\pmb{z}' \in \mathcal{S}^{\leq L}$, we define $\langle \pmb{z}' \rangle \coloneqq |F|-1 + t$, where $t$ is the integer such that $\pmb{z}^{(t)} = \pmb{z}'$.
Note that $\langle \lambda \rangle = |F|-1$ since $\pmb{z}^{(0)} = \lambda$.
We define the code-tuple $F' \in \mathscr{F}^{(|F|-1+M)}$ consisting of $f'_0, f'_1, \ldots, f'_{|F|-1}, f'_{\langle \pmb{z}^{(1)}\rangle}, f'_{\langle \pmb{z}^{(2)}\rangle}, \ldots, f'_{\langle \pmb{z}^{(M-1)}\rangle}$ and 
$\trans'_0, \trans'_1, \ldots, \trans'_{|F|-1}, \trans'_{\langle \pmb{z}^{(1)}\rangle}, \trans'_{\langle \pmb{z}^{(2)}\rangle}, \allowbreak \ldots, \trans'_{\langle \pmb{z}^{(M-1)}\rangle}$ as
\begin{align}
f'_i(s) &=
\begin{cases}
f_{\trans^{\ast}_{\langle \lambda \rangle}(\pmb{z})}(s) &\,\,\text{if}\,\,i = \langle \pmb{z}\rangle \,\,\text{for some}\,\, \pmb{z} \in \mathcal{S}^{\leq L},\\
f_i(s) &\,\,\text{otherwise},\\
\end{cases} \label{eq:r3lzkx96owxn}\\
\trans'_i(s) &=
\begin{cases}
\langle \pmb{z}s \rangle &\,\,\text{if}\,\,i = \langle \pmb{z}\rangle \,\,\text{for some}\,\, \pmb{z} \in \mathcal{S}^{\leq L-1},\\
\trans^{\ast}_{\langle \lambda \rangle}(\pmb{z}s) &\,\,\text{if}\,\,i = \langle \pmb{z}\rangle \,\,\text{for some}\,\, \pmb{z} \in \mathcal{S}^{L},\\
\trans_i(s) &\,\,\text{otherwise}
\end{cases} \label{eq:je4ogqd03chp}
\end{align}
for $i \in [F']$ and $s \in \mathcal{S}$.
Then $F'$ satisfies the following Lemma \ref{lem:fdot}.
See Appendix \ref{subsec:proof-fdot} for the proof of Lemma \ref{lem:fdot}.

\begin{lemma}
\label{lem:fdot}
For any $\pmb{z} \in \mathcal{S}^{\leq L}$, the following statements (i) and (ii) hold.
\begin{itemize}
\item[(i)] $\trans'^{\ast}_{\langle \lambda \rangle}(\pmb{z}) = \langle \pmb{z} \rangle$.
\item[(ii)] $\langle \pmb{z} \rangle \in \kernel_{F'}$.
\end{itemize}
\end{lemma}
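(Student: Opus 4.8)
The plan is to prove (i) by induction on $|\pmb{z}|$ and then derive (ii) from it. Before starting, note that by construction $\langle\cdot\rangle$ is a bijection from $\mathcal{S}^{\leq L}$ onto $\{|F|-1,|F|,\ldots,|F|-2+M\}$ with $\langle\lambda\rangle=|F|-1$, so $[F']$ is the disjoint union of $\{0,1,\ldots,|F|-2\}$ and $\{\langle\pmb{z}\rangle:\pmb{z}\in\mathcal{S}^{\leq L}\}$; hence the case conditions in (\ref{eq:r3lzkx96owxn}) and (\ref{eq:je4ogqd03chp}) are well defined, and on indices $i\in\{0,1,\ldots,|F|-2\}$ the ``otherwise'' clauses apply, so there $f'_i=f_i$ and $\trans'_i=\trans_i$. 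Recall also that $|F|-1=p\in\kernel_F$ by the hypothesis of Theorem \ref{thm:complete}.

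For (i), the base case $\pmb{z}=\lambda$ is immediate from (\ref{eq:tstar}). For $\pmb{z}=\pmb{z}'s$ with $1\leq|\pmb{z}|\leq L$ we have $\pmb{z}'\in\mathcal{S}^{\leq L-1}$, so by Lemma \ref{lem:f_T}(ii), the induction hypothesis, (\ref{eq:tstar}) applied to the single symbol $s$, and the first case of (\ref{eq:je4ogqd03chp}),
\[
\trans'^{\ast}_{\langle\lambda\rangle}(\pmb{z}) = \trans'^{\ast}_{\trans'^{\ast}_{\langle\lambda\rangle}(\pmb{z}')}(s) = \trans'^{\ast}_{\langle\pmb{z}'\rangle}(s) = \trans'_{\langle\pmb{z}'\rangle}(s) = \langle\pmb{z}'s\rangle = \langle\pmb{z}\rangle ,
\]
which closes the induction.

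For (ii), the crucial reduction is that it suffices to show $\langle\lambda\rangle=|F|-1\in\kernel_{F'}$: granting this, fix $\pmb{z}\in\mathcal{S}^{\leq L}$ and $j\in[F']$ and choose $\pmb{x}$ with $\trans'^{\ast}_j(\pmb{x})=\langle\lambda\rangle$; then Lemma \ref{lem:f_T}(ii) together with (i) gives $\trans'^{\ast}_j(\pmb{x}\pmb{z})=\trans'^{\ast}_{\langle\lambda\rangle}(\pmb{z})=\langle\pmb{z}\rangle$, so $\langle\pmb{z}\rangle\in\kernel_{F'}$. To prove $|F|-1\in\kernel_{F'}$, fix $j\in[F']$ and argue in two steps. \emph{Step 1 (reach the original part).} If $j\in\{0,1,\ldots,|F|-1\}$ set $\pmb{w}=\lambda$. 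Otherwise $j=\langle\pmb{z}'\rangle$ with $\pmb{z}'\neq\lambda$; pick any $\pmb{w}\in\mathcal{S}^{L-|\pmb{z}'|+1}$ and show, by induction on the number of symbols read, that $\trans'^{\ast}_{\langle\pmb{z}'\rangle}(\pmb{w})=\trans^{\ast}_{|F|-1}(\pmb{z}'\pmb{w})$: the first case of (\ref{eq:je4ogqd03chp}) applies as long as the simulated history has length at most $L-1$, extending it one symbol at a time until, after $L-|\pmb{z}'|$ symbols, it has length exactly $L$, and then the second case of (\ref{eq:je4ogqd03chp}) on the last symbol sends the walk to $\trans^{\ast}_{\langle\lambda\rangle}(\pmb{z}'\pmb{w})=\trans^{\ast}_{|F|-1}(\pmb{z}'\pmb{w})$. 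In either case $j_0\coloneqq\trans'^{\ast}_j(\pmb{w})\in[F]=\{0,1,\ldots,|F|-1\}$, since $\trans^{\ast}$ is $[F]$-valued. \emph{Step 2 (reach $|F|-1$).} If $j_0=|F|-1$ we are done. Otherwise $j_0\in\{0,1,\ldots,|F|-2\}$, and since $|F|-1\in\kernel_F$ there is a shortest $\pmb{v}=v_1v_2\ldots v_n$ with $\trans^{\ast}_{j_0}(\pmb{v})=|F|-1$; by minimality $\trans^{\ast}_{j_0}(v_1\ldots v_m)\in\{0,1,\ldots,|F|-2\}$ for every $m<n$, and on such indices $F'$ agrees with $F$, so a short induction gives $\trans'^{\ast}_{j_0}(\pmb{v})=\trans^{\ast}_{j_0}(\pmb{v})=|F|-1$, whence $\trans'^{\ast}_j(\pmb{w}\pmb{v})=\trans'^{\ast}_{j_0}(\pmb{v})=|F|-1$ by Lemma \ref{lem:f_T}(ii).

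The one point requiring care is Step 1 of (ii): $\pmb{w}$ must have length exactly $L-|\pmb{z}'|+1$, so that the walk through the duplicated tables grows the simulated history to length exactly $L$ and then escapes back into $[F]$ at the controlled state $\trans^{\ast}_{|F|-1}(\pmb{z}'\pmb{w})$ after precisely one more symbol; feeding fewer symbols leaves the walk among the duplicated tables. Everything else reduces to routine inductions built from Lemma \ref{lem:f_T}(ii) and the case analyses (\ref{eq:r3lzkx96owxn})--(\ref{eq:je4ogqd03chp}).
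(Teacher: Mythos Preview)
Your proof is correct and follows essentially the same approach as the paper: part (i) is proved by the identical induction on $|\pmb{z}|$, and part (ii) is reduced to showing $\langle\lambda\rangle\in\kernel_{F'}$ and then established by the same two-step argument---first move from an arbitrary $j\in[F']$ back into $[F]$ (your Step~1, the paper's Case~(II)), then use $|F|-1\in\kernel_F$ together with the fact that $F'$ agrees with $F$ on indices in $[F]\setminus\{\langle\lambda\rangle\}$ to reach $\langle\lambda\rangle$ along a shortest path (your Step~2, the paper's Case~(I)). The only difference is organisational: the paper treats $j\in[F]$ first and then uses that case as a subroutine for $j=\langle\pmb{z}\rangle$, whereas you handle the reduction to $[F]$ first and the path to $\langle\lambda\rangle$ second, but the underlying logic is identical.
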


Lemma \ref{lem:fdot} (i) claims that the code table in $F'$ used next after encoding $\pmb{z} \in \mathcal{S}^{\leq L}$ starting from $f'_{\langle \lambda \rangle}$ is $f'_{\langle \pmb{z} \rangle}$, which is a duplicate of the code table in $F$ used next after encoding $\pmb{z}$ starting from $f_{\langle \lambda \rangle}$.
This leads to the equivalency of $F$ and $F'$ shown next.

We confirm that $F'$ is equivalent to $F$, that is, $F' \in \mathscr{F}_{\reg} \cap \mathscr{F}_{\ext} \cap \mathscr{F}_{k\hdec}$ and $L(F') = L(F)$.
We obtain $F' \in \mathscr{F}_{\reg}$ from Lemma \ref{lem:fdot} (ii) and Lemma \ref{lem:kernel} (i).
To prove $F' \in \mathscr{F}_{\ext} \cap \mathscr{F}_{k\hdec}$ and $L(F') = L(F)$ by using  Lemma \ref{lem:duplicate}, 
we show that a mapping $\varphi: [F'] \rightarrow[F]$ defined as the following (\ref{eq:gtet1tbkkabj}) is a homomorphism:
\begin{equation}
\label{eq:gtet1tbkkabj}
\varphi(i) = \begin{cases}
i &\,\,\text{if}\,\, i \in [F],\\  
\trans^{\ast}_{\langle \lambda \rangle}(\pmb{z}) &\,\,\text{if}\,\,i = \langle \pmb{z}\rangle \,\,\text{for some}\,\, \pmb{z} \in \mathcal{S}^{\leq L}\\
\end{cases}\\
\end{equation}
for $i \in [F']$.
The case $i = |F| - 1 = \langle \lambda \rangle$ applies to both of the first and second cases of (\ref{eq:gtet1tbkkabj}).
However, this case is consistent since $\trans^{\ast}_{\langle \lambda \rangle}(\pmb{z}) = \trans^{\ast}_{\langle \lambda \rangle}(\lambda) = \langle \lambda \rangle = i$.
We see that $\varphi$ satisfies (\ref{eq:phi1}) directly from (\ref{eq:r3lzkx96owxn}) and (\ref{eq:gtet1tbkkabj}).
We confirm that $\varphi$ satisfies also (\ref{eq:phi2}) as follows:
\begin{align*}
\varphi(\trans'_i(s))
&\overset{(\mathrm{A})}{=} \begin{cases}
\varphi(\langle \pmb{z}s \rangle) &\,\,\text{if}\,\,i = \langle \pmb{z}\rangle \,\,\text{for some}\,\, \pmb{z} \in \mathcal{S}^{\leq L-1},\\
\varphi(\trans^{\ast}_{\langle \lambda \rangle}(\pmb{z}s)) &\,\,\text{if}\,\,i = \langle \pmb{z}\rangle \,\,\text{for some}\,\, \pmb{z} \in \mathcal{S}^{L},\\
\varphi(\trans_i(s)) &\,\,\text{otherwise},\\
\end{cases}\\
&\overset{(\mathrm{B})}{=} \begin{cases}
\trans^{\ast}_{\langle \lambda \rangle}(\pmb{z}s) &\,\,\text{if}\,\,i = \langle \pmb{z}\rangle \,\,\text{for some}\,\, \pmb{z} \in \mathcal{S}^{\leq L-1},\\
\trans^{\ast}_{\langle \lambda \rangle}(\pmb{z}s) &\,\,\text{if}\,\,i = \langle \pmb{z}\rangle \,\,\text{for some}\,\, \pmb{z} \in \mathcal{S}^{L},\\
\trans_i(s) &\,\,\text{otherwise},\\
\end{cases}\\
&\overset{(\mathrm{C})}{=} \begin{cases}
\trans_{\trans^{\ast}_{\langle \lambda \rangle}(\pmb{z})}(s) &\,\,\text{if}\,\,i = \langle \pmb{z}\rangle \,\,\text{for some}\,\, \pmb{z} \in \mathcal{S}^{\leq L},\\
\trans_i(s) &\,\,\text{otherwise},\\
\end{cases}\\
&\overset{(\mathrm{D})}{=} \trans_{\varphi(i)}(s),
\end{align*}
where
(A) follows from (\ref{eq:je4ogqd03chp}),
(B) follows from (\ref{eq:gtet1tbkkabj}),
(C) follows from Lemma \ref{lem:f_T} (ii),
and (D) follows from (\ref{eq:gtet1tbkkabj}).
Hence, by Lemma \ref{lem:duplicate} (iv)--(vi), we obtain $F' \in \mathscr{F}_{\ext} \cap \mathscr{F}_{k\hdec}$ and $L(F') = L(F)$.

Now, we define a code-tuple $F'' \in \mathscr{F}^{(|F'|)}$ as
\begin{equation}
\label{eq:fddot}
f''_{i}(s) =
\begin{cases}
f'^{\ast}_{\langle \lambda \rangle}(\pmb{z})^{-1}\pref(\pmb{d})\pmb{d}^{-1}(f'^{\ast}_{\langle \lambda \rangle}(\pmb{z}s))\\
\quad \,\,\text{if}\,\,i = \langle \pmb{z}\rangle \,\,\text{and}\,\, f'^{\ast}_{\langle \lambda \rangle}(\pmb{z}) \prec \pmb{d} \preceq f'^{\ast}_{\langle \lambda \rangle}(\pmb{z}s) \quad \,\,\text{for some}\,\, \pmb{z} \in \mathcal{S}^{\leq L},\\
f'_i(s) \quad\quad\quad\quad \,\,\text{otherwise},
\end{cases}
\end{equation}
\begin{equation}
\label{eq:g0yqx2cwxfd1}
\trans''_i(s) = \trans'_i(s)
\end{equation}
for $i \in [F'']$ and $s \in \mathcal{S}$.

Intuitively, (\ref{eq:fddot}) means that $F''$ is obtained by removing the bit $d_l$ from codeword sequences of $F'$ such that $f'^{\ast}_{\langle \lambda \rangle}(\pmb{z}) \succeq \pmb{d}$.

Then $F''$ satisfies the following Lemma \ref{lem:fddot}.
See Appendix \ref{subsec:proof-fddot} for the proof of Lemma \ref{lem:fddot}.

\begin{lemma}
The following statements (i)--(iii) hold.
\label{lem:fddot}
\begin{itemize}
\item[(i)] For any $\pmb{z} \in \mathcal{S}^{\leq L}$ and $\pmb{x} \in \mathcal{S}^{\leq L-|\pmb{z}|}$,
we have
\begin{equation}
\label{eq:i5igfy04wlhe}
f''^{\ast}_{\langle \pmb{z} \rangle}(\pmb{x}) =
\begin{cases}
f'^{\ast}_{\langle \lambda \rangle}(\pmb{z})^{-1}\pref(\pmb{d})\pmb{d}^{-1}(f'^{\ast}_{\langle \lambda \rangle}(\pmb{z}\pmb{x}))
\quad \,\,\text{if}\,\, f'^{\ast}_{\langle \lambda \rangle}(\pmb{z}) \prec \pmb{d} \preceq f'^{\ast}_{\langle \lambda \rangle}(\pmb{z}\pmb{x}),\\
f'^{\ast}_{\langle \pmb{z} \rangle}(\pmb{x}) \quad\quad\quad\quad \,\,\text{otherwise}.\\
\end{cases}
\end{equation}

\item[(ii)] For any $\pmb{z} \in \mathcal{S}^{\leq L}$ and $s, s' \in \mathcal{S}$, if $f''_{\langle \pmb{z} \rangle}(s) \prec f''_{\langle \pmb{z} \rangle}(s')$, then $f'_{\langle \pmb{z} \rangle}(s) \prec f'_{\langle \pmb{z} \rangle}(s')$.

\item[(iii)] For any $\pmb{x} \in \mathcal{S}^{\geq L}$, we have
$|f^{\ast}_{\langle \lambda \rangle}(\pmb{x})| = |f'^{\ast}_{\langle \lambda \rangle}(\pmb{x})| \geq |\pmb{d}|+1$ and $|f''^{\ast}_{\langle \lambda \rangle}(\pmb{x})| \geq |\pmb{d}|$.
\end{itemize}
\end{lemma}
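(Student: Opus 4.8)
\emph{Overall plan.} I would prove the three parts in the order (i), (iii), (ii): part (iii) is a quantitative corollary of part (i) together with Lemma~\ref{lem:longest}, while part (ii) is logically separate and is where the optimality hypothesis enters. Throughout I would abbreviate, for a fixed $\pmb{z}$ and $s$, $u \coloneqq f'^{\ast}_{\langle \lambda \rangle}(\pmb{z})$, $v \coloneqq f'^{\ast}_{\langle \lambda \rangle}(\pmb{z}s)$, and use the identities $f'_{\langle \pmb{z} \rangle}(s) = u^{-1}v$ and $f'^{\ast}_{\langle \pmb{z} \rangle}(\pmb{x}) = u^{-1}f'^{\ast}_{\langle \lambda \rangle}(\pmb{z}\pmb{x})$, which follow from Lemma~\ref{lem:fdot}~(i), Lemma~\ref{lem:f_T}~(i), and the cancellation rules (i)--(iii) for $(\cdot)^{-1}$.

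\emph{Part (i).} I would induct on $|\pmb{x}|$, proving the claim for all eligible $\pmb{z}$ at once. The base case $\pmb{x} = \lambda$ is immediate since the premise $f'^{\ast}_{\langle \lambda \rangle}(\pmb{z}) \prec \pmb{d} \preceq f'^{\ast}_{\langle \lambda \rangle}(\pmb{z})$ is self-contradictory. For the step write $\pmb{x} = s\pmb{x}'$; since $|\pmb{x}| \geq 1$ we have $\pmb{z} \in \mathcal{S}^{\leq L-1}$, so $\trans''_{\langle \pmb{z} \rangle}(s) = \trans'_{\langle \pmb{z} \rangle}(s) = \langle \pmb{z}s \rangle$ by (\ref{eq:je4ogqd03chp}) and (\ref{eq:g0yqx2cwxfd1}), whence $f''^{\ast}_{\langle \pmb{z} \rangle}(s\pmb{x}') = f''_{\langle \pmb{z} \rangle}(s)\, f''^{\ast}_{\langle \pmb{z}s \rangle}(\pmb{x}')$ by Lemma~\ref{lem:f_T}~(i). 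I would then substitute (\ref{eq:fddot}) for $f''_{\langle \pmb{z} \rangle}(s)$ and the induction hypothesis for $f''^{\ast}_{\langle \pmb{z}s \rangle}(\pmb{x}')$, and split into the three mutually exclusive cases $u \prec \pmb{d} \preceq v$, $v \prec \pmb{d} \preceq w$, and ``neither'', where $w \coloneqq f'^{\ast}_{\langle \lambda \rangle}(\pmb{z}s\pmb{x}')$ and $u \preceq v \preceq w$ by Lemma~\ref{lem:f_T}~(iii). In each case exactly one of the two factors is in its ``modified'' form, and the cancellation rules collapse the product to $u^{-1}\pref(\pmb{d})\pmb{d}^{-1}w$ in the first two cases and to $u^{-1}w$ in the third (in the third case one also checks that $u \prec \pmb{d} \preceq w$ is impossible, using that $\pmb{d}$ and $v$ are both prefixes of $w$ and hence comparable). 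This bookkeeping is mechanical.

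\emph{Part (iii).} Since $\varphi(\langle \lambda \rangle) = |F|-1$, Lemma~\ref{lem:duplicate}~(i) applied to the homomorphism $\varphi$ gives $f'^{\ast}_{\langle \lambda \rangle}(\pmb{x}) = f^{\ast}_{\langle \lambda \rangle}(\pmb{x})$, so the two are literally equal. As $F \in \mathscr{F}_{k\hopt} \subseteq \mathscr{F}_{\ext} \cap \mathscr{F}_{k\hdec}$ and $|\pmb{x}| \geq L = |F|(|\pmb{d}|+1)$, Lemma~\ref{lem:longest} gives $|f^{\ast}_{\langle \lambda \rangle}(\pmb{x})| \geq \lfloor |\pmb{x}|/|F| \rfloor \geq |\pmb{d}|+1$. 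For the third bound, write $\pmb{x} = \pmb{x}'\pmb{y}$ with $\pmb{x}' \in \mathcal{S}^L$; by Lemma~\ref{lem:f_T}~(i) for $F''$ and $\trans''^{\ast}_{\langle \lambda \rangle}(\pmb{x}') = \langle \pmb{x}' \rangle$ (Lemma~\ref{lem:fdot}~(i), since $\trans'' = \trans'$) we get $|f''^{\ast}_{\langle \lambda \rangle}(\pmb{x})| \geq |f''^{\ast}_{\langle \lambda \rangle}(\pmb{x}')|$, and part (i) with $\pmb{z} = \lambda$ (noting $\lambda \prec \pmb{d}$ because $|\pmb{d}| = l > k \geq 0$) shows this equals $|f'^{\ast}_{\langle \lambda \rangle}(\pmb{x}')|$ or $|f'^{\ast}_{\langle \lambda \rangle}(\pmb{x}')|-1$, which is $\geq |\pmb{d}|+1-1 = |\pmb{d}|$ by the bound just proved applied to $\pmb{x}'$.

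\emph{Part (ii) and the main obstacle.} Observe $u\,f''_{\langle \pmb{z} \rangle}(s)$ equals $\pref(\pmb{d})\pmb{d}^{-1}v$ when $u \prec \pmb{d} \preceq v$ and $v$ otherwise (and $f'_{\langle \pmb{z} \rangle}(s) = u^{-1}v$). Assuming $f''_{\langle \pmb{z} \rangle}(s) \prec f''_{\langle \pmb{z} \rangle}(s')$, prepend $u$ and split by which of $s, s'$ triggers the modification. If neither or both do, the common prefix ($\lambda$, resp.\ $\pref(\pmb{d})$) cancels, giving $f'^{\ast}_{\langle \lambda \rangle}(\pmb{z}s) \prec f'^{\ast}_{\langle \lambda \rangle}(\pmb{z}s')$ and hence $f'_{\langle \pmb{z} \rangle}(s) \prec f'_{\langle \pmb{z} \rangle}(s')$. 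The remaining case (only $s$ modified) is the crux: from $\pref(\pmb{d})\pmb{d}^{-1}v \prec f'^{\ast}_{\langle \lambda \rangle}(\pmb{z}s')$ one reads off $\pref(\pmb{d}) \prec f'^{\ast}_{\langle \lambda \rangle}(\pmb{z}s')$, so its length-$l$ prefix is $\pref(\pmb{d})d_l = \pmb{d}$ or $\pref(\pmb{d})\bar{d_l} = \pmb{b}$; the latter is impossible because $f'^{\ast}_{\langle \lambda \rangle}(\pmb{z}s') = f^{\ast}_{|F|-1}(\pmb{z}s')$ (Lemma~\ref{lem:duplicate}~(i)) with $\pmb{z}s' \in \mathcal{S}^{+}$ would then give $\pmb{b} \in \PREF^{\ast}_{F, |F|-1}$, contradicting (\ref{eq:2cz6211gg810}); so $\pmb{d} \preceq f'^{\ast}_{\langle \lambda \rangle}(\pmb{z}s')$, which together with $u \prec \pmb{d}$ contradicts $s'$ being unmodified. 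Hence this case never occurs and the implication holds vacuously. The only genuinely delicate point in the whole lemma is exactly this invocation of $\pmb{b} \notin \PREF^{\ast}_{F, |F|-1}$; everything else is prefix/inverse-element manipulation plus one application of Lemma~\ref{lem:longest}.
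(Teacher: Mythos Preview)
Your arguments for (i) and (iii) are correct and match the paper's approach; in fact your treatment of (iii) is slightly more careful than the paper's, since you explicitly reduce $\pmb{x}\in\mathcal{S}^{\geq L}$ to $\pmb{x}'\in\mathcal{S}^L$ before invoking (i), whereas the paper applies (i) somewhat loosely.

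For (ii), however, your case split is incomplete. You distinguish ``neither modified'', ``both modified'', and ``only $s$ modified'', but you never treat the fourth case ``only $s'$ modified'' (that is, $u\prec\pmb{d}$, $\pmb{d}\not\preceq f'^{\ast}_{\langle\lambda\rangle}(\pmb{z}s)$, $\pmb{d}\preceq f'^{\ast}_{\langle\lambda\rangle}(\pmb{z}s')$). In that case the hypothesis reads
\[
f'^{\ast}_{\langle\lambda\rangle}(\pmb{z}s)\ \prec\ \pref(\pmb{d})\,\pmb{d}^{-1}f'^{\ast}_{\langle\lambda\rangle}(\pmb{z}s'),
\]
and this does \emph{not} cancel to the desired $f'^{\ast}_{\langle\lambda\rangle}(\pmb{z}s)\prec f'^{\ast}_{\langle\lambda\rangle}(\pmb{z}s')$ by pure prefix algebra, because $\pmb{d}$ need not be a prefix of the right-hand side. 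The fix is the same $\pmb{b}$-trick you used for ``only $s$ modified'': from the display, $f'^{\ast}_{\langle\lambda\rangle}(\pmb{z}s)$ and $\pref(\pmb{d})$ are comparable; if $f'^{\ast}_{\langle\lambda\rangle}(\pmb{z}s)\preceq\pref(\pmb{d})$ then $f'^{\ast}_{\langle\lambda\rangle}(\pmb{z}s)\prec\pmb{d}\preceq f'^{\ast}_{\langle\lambda\rangle}(\pmb{z}s')$ and you are done; if $\pref(\pmb{d})\prec f'^{\ast}_{\langle\lambda\rangle}(\pmb{z}s)$, then (\ref{eq:2cz6211gg810}) forces $\pmb{d}\preceq f'^{\ast}_{\langle\lambda\rangle}(\pmb{z}s)$, contradicting that $s$ is unmodified. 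So the missing case is easy to repair, but as written your ``remaining case'' sentence claims exhaustiveness that isn't there.
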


We show that $F'' \in \mathscr{F}_{\reg} \cap \mathscr{F}_{\ext} \cap \mathscr{F}_{k\hdec}$ and $L(F'') < L(F')$ ($= L(F)$ as shown above),
which conflicts with $F \in \mathscr{F}_{k\hopt}$ and completes the proof of Theorem \ref{thm:complete}.

(Proof of $F'' \in \mathscr{F}_{\reg}$): From $F' \in \mathscr{F}_{\reg}$ and (\ref{eq:g0yqx2cwxfd1}).

(Proof of $F'' \in \mathscr{F}_{\ext}$):
Choose $j \in [F'']$ arbitrarily.
Since $\langle \lambda \rangle \in \kernel_{F'} = \kernel_{F''}$ by Lemma \ref{lem:fdot} (ii) and (\ref{eq:g0yqx2cwxfd1}), there exists $\pmb{x} \in \mathcal{S}^{\ast}$ such that
\begin{equation}
\label{eq:98q5npcu7fk8}
\trans''^{\ast}_j(\pmb{x}) = \langle \lambda \rangle.
\end{equation}
Also, we can choose $\pmb{x}' \in \mathcal{S}^{L}$ such that
\begin{equation}
\label{eq:wzkt55cvmj5z}
f'^{\ast}_{\langle \lambda \rangle}(\pmb{x}') \succeq \pmb{d}
\end{equation}
by Lemma \ref{lem:fddot} (iii).
We have
\begin{align*}
|f''^{\ast}_j(\pmb{x}\pmb{x}')|
&\overset{(\mathrm{A})}{=} |f''^{\ast}_j(\pmb{x})| + |f''^{\ast}_{\trans''^{\ast}_j(\pmb{x})}(\pmb{x}')|\\
&\geq |f''^{\ast}_{\trans''^{\ast}_j(\pmb{x})}(\pmb{x}')|\\
&\overset{(\mathrm{B})}{=} |f''^{\ast}_{\langle \lambda \rangle}(\pmb{x}')|\\
&\overset{(\mathrm{C})}{=} |f'^{\ast}_{\langle \lambda \rangle}(\lambda)^{-1}\pref(\pmb{d})\pmb{d}^{-1}f'^{\ast}_{\langle \lambda \rangle}(\pmb{x}')|\\
&= |f'^{\ast}_{\langle \lambda \rangle}(\pmb{x}')|-1\\
&\overset{(\mathrm{D})}{\geq} |\pmb{d}|\\
&\geq 1,
\end{align*}
where
(A) follows from Lemma \ref{lem:f_T} (i),
(B) follows from (\ref{eq:98q5npcu7fk8}),
(C) follows from (\ref{eq:wzkt55cvmj5z}) and the first case of (\ref{eq:i5igfy04wlhe}),
and (D) follows from Lemma \ref{lem:fddot} (iii).
Hence, by (\ref{eq:pref3}), $\PREF^1_{F'', j} \neq \emptyset$ holds for any $j \in [F'']$, which leads to $F'' \in \mathscr{F}_{\ext}$ as desired.

(Proof of $L(F'') < L(F')$):
For any $i \in [F'']$ and $s \in \mathcal{S}$, we have $|f''_i(s)| \leq |f'_i(s)|$ by (\ref{eq:fddot}).
Hence, for any $i \in [F'']$, we have
\begin{equation}
\label{eq:x6x5hn8zqhk2}
\pi_i(F')L_i(F'') \leq \pi_i(F')L_i(F').
\end{equation}

By Lemma \ref{lem:fddot} (iii), we can choose $\pmb{x} = x_1x_2\ldots x_L \in \mathcal{S}^{L}$ such that $f'^{\ast}_{\langle \lambda \rangle}(\pmb{x}) \succeq \pmb{d}$.
Since $f'^{\ast}_{\langle \lambda \rangle}(\lambda) \prec \pmb{d} \preceq f'^{\ast}_{\langle \lambda \rangle}(\pmb{x})$,
there exists exactly one integer $r$ such that
\begin{equation}
\label{eq:3enm1lws9tq0}
f'^{\ast}_{\langle \lambda \rangle}(x_1x_2\ldots x_{r-1}) \prec \pmb{d} \preceq f'^{\ast}_{\langle \lambda \rangle}(x_1x_2\ldots x_r),
\end{equation}
which leads to
\begin{align}
|f''_{\langle \pmb{z}\rangle}(x_r)| 
&\overset{(\mathrm{A})}{=} |f'^{\ast}_{\langle \lambda \rangle}(\pmb{z})^{-1}\pref(\pmb{d})\pmb{d}^{-1}(f'^{\ast}_{\langle \lambda \rangle}(\pmb{z}x_r))|
= |f'_{\langle \pmb{z}\rangle}(x_r)| - 1
< |f'_{\langle \pmb{z} \rangle}(x_r)|, \label{eq:elk2jzqnt3mr}
\end{align}
where $\pmb{z}\coloneqq x_1x_2\ldots x_{r-1}$, and
(A) follows from (\ref{eq:3enm1lws9tq0}) and the first case of (\ref{eq:fddot}).
This leads to
\begin{equation}
\label{eq:0vstqkzkfhsg}
\pi_{\langle \pmb{z} \rangle}(F') L_{\langle \pmb{z} \rangle}(F'') < \pi_{\langle \pmb{z} \rangle}(F') L_{\langle \pmb{z} \rangle}(F')
\end{equation}
because $\pi_{\langle \pmb{z} \rangle}(F') > 0$ by Lemma \ref{lem:fdot} (ii) and Lemma \ref{lem:kernel} (ii).

Hence, we have
\begin{align*}
L(F'')
&= \sum_{i \in [F'']} \pi_i(F'') L_i(F'')\\
&\overset{(\mathrm{A})}{=} \sum_{i \in [F'']} \pi_i(F') L_i(F'')\\
&= \sum_{i \in [F'']\setminus \{\langle \pmb{z} \rangle\}} \pi_i(F') L_i(F'') + \pi_{\langle \pmb{z} \rangle}(F') L_{\langle \pmb{z} \rangle}(F'')\\
&\overset{(\mathrm{B})}{\leq} \sum_{i \in [F'']\setminus \{\langle \pmb{z} \rangle\}} \pi_i(F') L_i(F') + \pi_{\langle \pmb{z} \rangle}(F') L_{\langle \pmb{z} \rangle}(F'')\\
&\overset{(\mathrm{C})}{<} \sum_{i \in [F'']\setminus \{\langle \pmb{z} \rangle\}} \pi_i(F') L_i(F') + \pi_{\langle \pmb{z} \rangle}(F') L_{\langle \pmb{z} \rangle}(F')\\
&= \sum_{i \in [F']} \pi_i(F') L_i(F')\\
&= L(F')
\end{align*}
as desired,
where 
(A) follows from (\ref{eq:g0yqx2cwxfd1}),
(B) follows from (\ref{eq:x6x5hn8zqhk2}),
and (C) follows from (\ref{eq:0vstqkzkfhsg}).

(Proof of $F'' \in \mathscr{F}_{k\hdec}$):
To prove $F'' \in \mathscr{F}_{k\hdec}$, we use the following Lemma \ref{lem:fddot2}, where 
$\mathcal{J} \coloneqq  ([F'] \setminus \langle \lambda \rangle) \cup \{\langle \pmb{z} \rangle : \pmb{z} \in \mathcal{S}^L\}
= [F'] \setminus \{\langle \pmb{z} \rangle : \pmb{z} \in \mathcal{S}^{\leq L-1}\}$.
See Appendix \ref{subsec:proof-fddot2} for the proof of Lemma \ref{lem:fddot2}.

\begin{lemma}
The following statements (i)--(iii) hold.
\label{lem:fddot2}
\begin{itemize}
\item[(i)] For any $\pmb{x} \in \mathcal{S}^{\ast}$ and $\pmb{c} \in \mathcal{C}^{\leq k}$, if $f''^{\ast}_{\langle \lambda \rangle}(\pmb{x}) \succeq \pmb{c}$, then $f'^{\ast}_{\langle \lambda \rangle}(\pmb{x}) \succeq \pmb{c}$.
Therefore, we have $\PREF^k_{F', \langle \lambda \rangle} \supseteq \PREF^k_{F'', \langle \lambda \rangle}$ by (\ref{eq:pref3}).
\item[(ii)] For any $i \in \mathcal{J}$ and $s \in \mathcal{S}$, we have $f''_i(s) = f'_i(s)$.
\item[(iii)] For any $i \in \mathcal{J}$ and $\pmb{b} \in \mathcal{C}^{\ast}$, we have $\PREF^k_{F'', i}(\pmb{b}) \subseteq \PREF^k_{F', i}(\pmb{b})$ and $\bar{\PREF}^k_{F'', i}(\pmb{b}) \subseteq \bar{\PREF}^k_{F', i}(\pmb{b})$.
\end{itemize}
\end{lemma}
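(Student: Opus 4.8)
I would prove the three parts in the order (ii), (i), (iii), since (ii) is needed in the other two. \emph{Statement (ii).} Recall $\mathcal{J}=\{0,1,\ldots,|F|-2\}\cup\{\langle\pmb{z}\rangle:\pmb{z}\in\mathcal{S}^{L}\}$. If $i\in\{0,\ldots,|F|-2\}$, then $i$ is not of the form $\langle\pmb{z}'\rangle$ for any $\pmb{z}'$ (all such indices are $\geq|F|-1$), so the first case of (\ref{eq:fddot}) never applies and $f''_i(s)=f'_i(s)$. If $i=\langle\pmb{z}\rangle$ with $\pmb{z}\in\mathcal{S}^{L}$, then since $\langle\cdot\rangle$ is injective the first case of (\ref{eq:fddot}) can only apply with this particular $\pmb{z}$, which would require $f'^{\ast}_{\langle\lambda\rangle}(\pmb{z})\prec\pmb{d}$ and hence $|f'^{\ast}_{\langle\lambda\rangle}(\pmb{z})|<|\pmb{d}|$; but $\pmb{z}\in\mathcal{S}^{\geq L}$ gives $|f'^{\ast}_{\langle\lambda\rangle}(\pmb{z})|\geq|\pmb{d}|+1$ by Lemma \ref{lem:fddot} (iii), a contradiction. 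So $f''_i(s)=f'_i(s)$ in all cases.

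\emph{Statement (i).} First take $\pmb{x}\in\mathcal{S}^{\leq L}$ and apply Lemma \ref{lem:fddot} (i) with $\pmb{z}=\lambda$ (note $f'^{\ast}_{\langle\lambda\rangle}(\lambda)=\lambda\prec\pmb{d}$). If $\pmb{d}\not\preceq f'^{\ast}_{\langle\lambda\rangle}(\pmb{x})$ the two codewords are equal; otherwise $f'^{\ast}_{\langle\lambda\rangle}(\pmb{x})=\pmb{d}\pmb{w}$ and $f''^{\ast}_{\langle\lambda\rangle}(\pmb{x})=\pref(\pmb{d})\pmb{w}$, so they share the prefix $\pref(\pmb{d})$ of length $l-1\geq k\geq|\pmb{c}|$ (recall $l=|\pmb{d}|>k$) while $|f'^{\ast}_{\langle\lambda\rangle}(\pmb{x})|\geq l>|\pmb{c}|$; hence $f''^{\ast}_{\langle\lambda\rangle}(\pmb{x})\succeq\pmb{c}$ forces $f'^{\ast}_{\langle\lambda\rangle}(\pmb{x})\succeq\pmb{c}$. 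For $|\pmb{x}|>L$, write $\pmb{x}=\pmb{z}\pmb{x}'$ with $\pmb{z}\in\mathcal{S}^{L}$; by Lemma \ref{lem:fddot} (iii) both $|f'^{\ast}_{\langle\lambda\rangle}(\pmb{z})|\geq l+1>|\pmb{c}|$ and $|f''^{\ast}_{\langle\lambda\rangle}(\pmb{z})|\geq l>|\pmb{c}|$, so by Lemma \ref{lem:f_T} (i),(iii) the relation ``$\succeq\pmb{c}$'' for $\pmb{x}$ is equivalent to that for $\pmb{z}$ in each of $F'$ and $F''$, reducing to the case already treated. The inclusion $\PREF^{k}_{F',\langle\lambda\rangle}\supseteq\PREF^{k}_{F'',\langle\lambda\rangle}$ then follows from (\ref{eq:pref3}).

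\emph{Statement (iii).} Let $i\in\mathcal{J}$, $\pmb{b}\in\mathcal{C}^{\ast}$ and $\pmb{c}\in\PREF^{k}_{F'',i}(\pmb{b})$ with witness $\pmb{x}=x_1\ldots x_n\in\mathcal{S}^{+}$, so $f''^{\ast}_i(\pmb{x})\succeq\pmb{b}\pmb{c}$ and $f''_i(x_1)\succeq\pmb{b}$; by (ii) also $f'_i(x_1)=f''_i(x_1)\succeq\pmb{b}$. Since $\trans''=\trans'$, the sequence of code-table indices $i=i_1,i_2,\ldots$ visited in encoding $\pmb{x}$ is the same under $F'$ and $F''$; call $t$ a crossing if the first case of (\ref{eq:fddot}) applies to $(i_t,x_t)$, and note $f''_{i_t}(x_t)$ can differ from $f'_{i_t}(x_t)$ only at crossings. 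If there is no crossing we are done with $\pmb{x}$ itself. Otherwise let $t^{\ast}$ be the first crossing, so $i_{t^{\ast}}=\langle\pmb{z}\rangle$ with $f'^{\ast}_{\langle\lambda\rangle}(\pmb{z})\prec\pmb{d}\preceq f'^{\ast}_{\langle\lambda\rangle}(\pmb{z}x_{t^{\ast}})$. Inspecting (\ref{eq:je4ogqd03chp}), the only way to arrive at $\langle\pmb{z}\rangle$ is along $\langle\lambda\rangle,\langle z_1\rangle,\langle z_1z_2\rangle,\ldots$, so $i_{t_0}=\langle\lambda\rangle$ and $\pmb{z}=x_{t_0}\ldots x_{t^{\ast}-1}$ for $t_0\coloneqq t^{\ast}-|\pmb{z}|$, and $t_0\geq2$ since $i_1=i\in\mathcal{J}$ while $\langle\lambda\rangle\notin\mathcal{J}$. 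As no crossing occurs before $t^{\ast}$, the $F'$- and $F''$-codewords of $x_1\ldots x_{t^{\ast}-1}$ coincide and equal $\pmb{p}\,f'^{\ast}_{\langle\lambda\rangle}(\pmb{z})$ with $\pmb{p}\coloneqq f'^{\ast}_i(x_1\ldots x_{t_0-1})$ (Lemma \ref{lem:f_T} (i)); computing $f''_{\langle\pmb{z}\rangle}(x_{t^{\ast}})$ from (\ref{eq:fddot}) via the properties of $\pmb{x}^{-1}$ and using $f'^{\ast}_{\langle\lambda\rangle}(\pmb{z})\preceq\pref(\pmb{d})$, the codewords of $x_1\ldots x_{t^{\ast}}$ become $\pmb{p}\,\pmb{d}\,\pmb{w}$ under $F'$ and $\pmb{p}\,\pref(\pmb{d})\,\pmb{w}$ under $F''$, where $f'^{\ast}_{\langle\lambda\rangle}(\pmb{z}x_{t^{\ast}})=\pmb{d}\pmb{w}$. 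Modifications at crossings after $t^{\ast}$ touch only codeword positions beyond $|\pmb{p}|+l-1$, so $f'^{\ast}_i(\pmb{x})$ and $f''^{\ast}_i(\pmb{x})$ agree on their first $|\pmb{p}|+l-1$ bits, and both have at least that many bits. Since $t_0\geq2$ and $f'_i(x_1)\succeq\pmb{b}$ we get $|\pmb{p}|\geq|f'_i(x_1)|\geq|\pmb{b}|$, hence $|\pmb{p}|+l-1\geq|\pmb{b}|+k=|\pmb{b}\pmb{c}|$; as $f''^{\ast}_i(\pmb{x})\succeq\pmb{b}\pmb{c}$, also $f'^{\ast}_i(\pmb{x})\succeq\pmb{b}\pmb{c}$, so $\pmb{c}\in\PREF^{k}_{F',i}(\pmb{b})$. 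Running the same argument with $f''_i(x_1)\succ\pmb{b}$ gives $\bar{\PREF}^{k}_{F'',i}(\pmb{b})\subseteq\bar{\PREF}^{k}_{F',i}(\pmb{b})$.

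\textbf{Main obstacle.} The crux is statement (iii): one must follow the common encoding path through the many duplicated code tables of $F'$ (and $F''$), isolate the first crossing $t^{\ast}$ and its subtree-entry time $t_0$, and verify that the single deleted bit sits at codeword position at least $|\pmb{b}|+k$. This relies on three facts that must be argued carefully — $t_0\geq2$, so that the prefix $f'_i(x_1)\succeq\pmb{b}$ has already been emitted before the path enters the duplicated subtree; the deletion occurring exactly where the subtree-local codeword $f'^{\ast}_{\langle\lambda\rangle}(\cdot)$ first reaches $\pmb{d}$ (so that precisely the $l$-th bit of that local codeword is removed); and $l-1\geq k$. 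Statements (i) and (ii) are comparatively routine consequences of Lemma \ref{lem:fddot} and the definition (\ref{eq:fddot}).
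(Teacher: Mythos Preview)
Your proof is correct. Parts (i) and (ii) follow the paper's argument essentially verbatim; your treatment of (i) is in fact slightly more careful than the paper's, since you explicitly reduce the case $|\pmb{x}|>L$ (where Lemma~\ref{lem:fddot}(i) is not literally stated) to the case $|\pmb{x}|\leq L$ via Lemma~\ref{lem:fddot}(iii).

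For (iii), you and the paper diverge. The paper argues by induction on $|\pmb{x}|$: peeling off $x_1$ using $f'_i(x_1)=f''_i(x_1)$ (from (ii)), it reduces to showing $\pmb{c}'\coloneqq f'_i(x_1)^{-1}\pmb{b}\pmb{c}$ lies in $\PREF^{|\pmb{c}'|}_{F',\trans''_i(x_1)}$; since $\trans''_i(x_1)\in[F]$ for $i\in\mathcal{J}$, this is handled either by the induction hypothesis (if $\trans''_i(x_1)\in[F]\setminus\{\langle\lambda\rangle\}\subseteq\mathcal{J}$) or by part (i) (if $\trans''_i(x_1)=\langle\lambda\rangle$). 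Your approach is instead a direct single-pass argument that locates the first crossing $t^{\ast}$, backs up to the entry time $t_0$ at $\langle\lambda\rangle$, and observes that the deleted bit $d_l$ sits at overall codeword position $|\pmb{p}|+l$; the crucial inequality $|\pmb{p}|+l-1\geq|\pmb{b}|+k$ then comes from $t_0\geq2$ (so $\pmb{p}\succeq f'_i(x_1)\succeq\pmb{b}$) and $l>k$. Your route avoids induction and makes the geometric picture explicit, at the cost of having to justify the path structure (that $\langle\pmb{z}\rangle$ with $\pmb{z}\neq\lambda$ is only reachable from $\langle\pref(\pmb{z})\rangle$, hence $t_0$ is well defined and $\geq2$). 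The paper's induction is shorter to write but leans on (i) as a black box for the step where the path first enters the duplicated subtree.
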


Also, for $\pmb{z} \in \mathcal{S}^{\ast}$, we define a mapping $\psi_{\pmb{z}} : \mathcal{C}^{\ast} \rightarrow \mathcal{C}^{\ast}$ as
\begin{equation}
\label{eq:psi-1}
\psi_{\pmb{z}}(\pmb{b}) = 
\begin{cases}
f'^{\ast}_{\langle \lambda \rangle}(\pmb{z})^{-1}\pmb{d}\pref(\pmb{d})^{-1}(f'^{\ast}_{\langle \lambda \rangle}(\pmb{z})\pmb{b})
\quad \,\,\text{if}\,\, f'^{\ast}_{\langle \lambda \rangle}(\pmb{z}) \preceq \pref(\pmb{d}) \prec f'^{\ast}_{\langle \lambda \rangle}(\pmb{z})\pmb{b},\\
\pmb{b} \quad\quad\quad\quad \,\,\text{otherwise}\\
\end{cases}
\end{equation}
for $\pmb{b} \in \mathcal{C}^{\ast}$.
Then $\psi_{\pmb{z}}$ satisfies the following Lemma \ref{lem:psi-1}.

\begin{lemma}
The following statements (i)--(iii) hold.
\label{lem:psi-1}
\begin{enumerate}[(i)]
\item For any $\pmb{z} \in \mathcal{S}^{\ast}$ and $\pmb{b}, \pmb{b}' \in \mathcal{C}^{\ast}$, if $\pmb{b} \preceq \pmb{b}'$, then $\psi_{\pmb{z}}(\pmb{b}) \preceq \psi_{\pmb{z}}(\pmb{b}')$.
\item For any $\pmb{z} \in \mathcal{S}^{\leq L}$, $\pmb{x} \in \mathcal{S}^{\leq L - |\pmb{z}|}$, and $\pmb{c} \in \mathcal{C}^{\ast}$, we have
\begin{align}
\psi_{\pmb{z}}(f''^{\ast}_{\langle \pmb{z} \rangle} (\pmb{x}) \pmb{c})
&= \begin{cases}
\pref(f'^{\ast}_{\langle \pmb{z} \rangle} (\pmb{x}))
\quad \,\,\text{if}\,\, f'^{\ast}_{\langle \lambda \rangle}(\pmb{z}) \prec f'^{\ast}_{\langle \lambda \rangle}(\pmb{zx}) = \pmb{d}, \pmb{c} = \lambda,\\
f'^{\ast}_{\langle \pmb{z} \rangle} (\pmb{x}) \psi_{\pmb{zx}}(\pmb{c})  \,\,\text{otherwise}.\\
\end{cases} \label{eq:gebqpcvgz9hc}
\end{align}
\item For any $\pmb{z} \in \mathcal{S}^{L}$ and $\pmb{b} \in \mathcal{C}^{\ast}$, we have $\psi_{\pmb{z}}(\pmb{b}) = \pmb{b}$.
\end{enumerate}
\end{lemma}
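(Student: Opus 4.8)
The plan is to establish the three parts in the order (iii), (i), (ii): part (iii) is essentially immediate, part (i) is a routine case analysis on prefix relations, and part (ii) is the substantive statement, deduced from the structural lemmas about $F'$ and $F''$ proved above. Throughout I abbreviate $\pmb{w} \coloneqq f'^{\ast}_{\langle \lambda \rangle}(\pmb{z})$ and $\pmb{w}' \coloneqq f'^{\ast}_{\langle \lambda \rangle}(\pmb{z}\pmb{x})$, I call the prefix relation distinguishing the two cases of \eqref{eq:psi-1} the \emph{$\psi_{\pmb{z}}$-condition}, and I freely use the cancellation identities (i)--(iii) for the ``inverse'' notation $\pmb{x}^{-1}$ from Section~\ref{sec:preliminary} to justify the algebraic simplifications.

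For (iii): if $\pmb{z} \in \mathcal{S}^{L}$, then Lemma~\ref{lem:fddot}~(iii) applied with its argument set to $\pmb{z} \in \mathcal{S}^{\geq L}$ gives $|f'^{\ast}_{\langle \lambda \rangle}(\pmb{z})| \geq |\pmb{d}| + 1 > |\pref(\pmb{d})|$, hence $f'^{\ast}_{\langle \lambda \rangle}(\pmb{z}) \not\preceq \pref(\pmb{d})$; the $\psi_{\pmb{z}}$-condition can never hold, so $\psi_{\pmb{z}}(\pmb{b}) = \pmb{b}$ for every $\pmb{b}$. For (i): only $\pmb{w}$ and $\pmb{d}$ enter the definition of $\psi_{\pmb{z}}$, and $\pmb{b} \preceq \pmb{b}'$ gives $\pmb{w}\pmb{b} \preceq \pmb{w}\pmb{b}'$. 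If the $\psi_{\pmb{z}}$-condition ``$\pmb{w} \preceq \pref(\pmb{d}) \prec \pmb{w}\pmb{b}$'' holds for $\pmb{b}$, then it holds for $\pmb{b}'$ and $\psi_{\pmb{z}}(\pmb{b})$, $\psi_{\pmb{z}}(\pmb{b}')$ are obtained from $\pmb{w}\pmb{b}$, $\pmb{w}\pmb{b}'$ by the prefix-preserving operations ``delete the prefix $\pref(\pmb{d})$, prepend $\pmb{d}$, delete the prefix $\pmb{w}$'', so $\psi_{\pmb{z}}(\pmb{b}) \preceq \psi_{\pmb{z}}(\pmb{b}')$. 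If it fails for $\pmb{b}$, then either $\pmb{w} \not\preceq \pref(\pmb{d})$, or $\pmb{w}\pmb{b}$ and $\pref(\pmb{d})$ are incomparable---in both cases it fails for $\pmb{b}'$ too and $\psi_{\pmb{z}}(\pmb{b}) = \pmb{b} \preceq \pmb{b}' = \psi_{\pmb{z}}(\pmb{b}')$---or $\pmb{w} \preceq \pmb{w}\pmb{b} \preceq \pref(\pmb{d})$, and then $\pmb{b} \preceq \pmb{w}^{-1}\pref(\pmb{d}) \prec \pmb{w}^{-1}\pmb{d} \preceq \psi_{\pmb{z}}(\pmb{b}')$ whether or not the condition fires for $\pmb{b}'$.

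For (ii): by Lemma~\ref{lem:f_T}~(i) applied to $F'$ and Lemma~\ref{lem:fdot}~(i) (which gives $\trans'^{\ast}_{\langle\lambda\rangle}(\pmb{z}) = \langle\pmb{z}\rangle$), we have $\pmb{w}' = \pmb{w}\, f'^{\ast}_{\langle\pmb{z}\rangle}(\pmb{x})$, so $f'^{\ast}_{\langle\pmb{z}\rangle}(\pmb{x}) = \pmb{w}^{-1}\pmb{w}'$ and $\pmb{w} \preceq \pmb{w}'$. I split according to the two branches of \eqref{eq:i5igfy04wlhe}. In the branch $\pmb{w} \prec \pmb{d} \preceq \pmb{w}'$ we have $\pmb{w} \preceq \pref(\pmb{d})$ and $\pmb{w}\,(f''^{\ast}_{\langle\pmb{z}\rangle}(\pmb{x})\pmb{c}) = \pref(\pmb{d})\pmb{d}^{-1}\pmb{w}'\pmb{c}$; if moreover $\pmb{w}' = \pmb{d}$ and $\pmb{c} = \lambda$ (the first case of the statement), the $\psi_{\pmb{z}}$-condition fails and $\psi_{\pmb{z}}(f''^{\ast}_{\langle\pmb{z}\rangle}(\pmb{x})) = f''^{\ast}_{\langle\pmb{z}\rangle}(\pmb{x}) = \pmb{w}^{-1}\pref(\pmb{d}) = \pref(\pmb{w}^{-1}\pmb{d}) = \pref(f'^{\ast}_{\langle\pmb{z}\rangle}(\pmb{x}))$; otherwise the $\psi_{\pmb{z}}$-condition holds, cancellation gives $\psi_{\pmb{z}}(f''^{\ast}_{\langle\pmb{z}\rangle}(\pmb{x})\pmb{c}) = \pmb{w}^{-1}\pmb{w}'\pmb{c} = f'^{\ast}_{\langle\pmb{z}\rangle}(\pmb{x})\pmb{c}$ and $\psi_{\pmb{z}\pmb{x}}(\pmb{c}) = \pmb{c}$ (since $|\pmb{w}'| \geq |\pmb{d}| > |\pref(\pmb{d})|$ forces $\pmb{w}' \not\preceq \pref(\pmb{d})$), matching the ``otherwise'' case. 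In the complementary branch $f''^{\ast}_{\langle\pmb{z}\rangle}(\pmb{x}) = f'^{\ast}_{\langle\pmb{z}\rangle}(\pmb{x}) = \pmb{w}^{-1}\pmb{w}'$; here the first case of the statement is impossible (it would force $\pmb{w} \prec \pmb{w}' = \pmb{d}$, i.e.\ the previous branch), so it remains to prove $\psi_{\pmb{z}}(\pmb{w}^{-1}\pmb{w}'\pmb{c}) = \pmb{w}^{-1}\pmb{w}'\,\psi_{\pmb{z}\pmb{x}}(\pmb{c})$. Comparing the $\psi_{\pmb{z}}$-condition ``$\pmb{w} \preceq \pref(\pmb{d}) \prec \pmb{w}'\pmb{c}$'' with the $\psi_{\pmb{z}\pmb{x}}$-condition ``$\pmb{w}' \preceq \pref(\pmb{d}) \prec \pmb{w}'\pmb{c}$'': if $\pmb{w}' \preceq \pref(\pmb{d})$ the two conditions coincide, and a cancellation of $\pmb{w}'$ shows the two sides of \eqref{eq:gebqpcvgz9hc} agree (they equal $\pmb{w}^{-1}\pmb{d}\pref(\pmb{d})^{-1}\pmb{w}'\pmb{c}$ when the conditions hold and $\pmb{w}^{-1}\pmb{w}'\pmb{c}$ when they fail); if $\pmb{w}' \not\preceq \pref(\pmb{d})$ then $\psi_{\pmb{z}\pmb{x}}(\pmb{c}) = \pmb{c}$ and it remains only to check that the $\psi_{\pmb{z}}$-condition also fails.

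This last point is the crux, and it is where the defining property of $\pmb{b}$ enters. Assume toward a contradiction that $\pmb{w} \preceq \pref(\pmb{d}) \prec \pmb{w}'\pmb{c}$; then $\pref(\pmb{d})$ and $\pmb{w}'$ are both prefixes of $\pmb{w}'\pmb{c}$, hence comparable, and $\pmb{w}' \not\preceq \pref(\pmb{d})$ forces $\pref(\pmb{d}) \prec \pmb{w}'$. The bit of $\pmb{w}'$ immediately following $\pref(\pmb{d})$ is either $d_l$---then $\pmb{d} \preceq \pmb{w}'$, hence $\pmb{w} \prec \pmb{d} \preceq \pmb{w}'$, contradicting that we are in the complementary branch---or $\bar{d_l}$---then $\pmb{b} = \pref(\pmb{d})\bar{d_l} \preceq \pmb{w}' = f'^{\ast}_{\langle\lambda\rangle}(\pmb{z}\pmb{x}) = f^{\ast}_{|F|-1}(\pmb{z}\pmb{x})$ by Lemma~\ref{lem:duplicate}~(i), so $\pmb{b} \in \PREF^{\ast}_{F, |F|-1}$, contradicting \eqref{eq:2cz6211gg810}. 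Hence the $\psi_{\pmb{z}}$-condition fails, both sides of \eqref{eq:gebqpcvgz9hc} equal $\pmb{w}^{-1}\pmb{w}'\pmb{c}$, and (ii) is proved. Aside from this step the remaining work is mechanical manipulation of the ``inverse'' notation; note that the length hypotheses on $\pmb{z}$ and $\pmb{x}$ in (ii) are exactly what is needed to apply Lemma~\ref{lem:fddot}~(i).
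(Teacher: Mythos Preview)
Your proof is correct and follows essentially the same route as the paper: both are case analyses driven by the definition of $\psi_{\pmb{z}}$ together with Lemma~\ref{lem:fddot}~(i), with the key property \eqref{eq:2cz6211gg810} used to exclude the bad branch; your organization of (ii) around the two cases of \eqref{eq:i5igfy04wlhe} is a mild rearrangement of the paper's three-way split on the position of $\pref(\pmb{d})$. One small slip in (i): in your last sub-case ($\pmb{w} \preceq \pmb{w}\pmb{b} \preceq \pref(\pmb{d})$) the claim ``$\pmb{w}^{-1}\pmb{d} \preceq \psi_{\pmb{z}}(\pmb{b}')$ whether or not the condition fires for $\pmb{b}'$'' is false when the condition also fails for $\pmb{b}'$ (then $\psi_{\pmb{z}}(\pmb{b}') = \pmb{b}'$ with $\pmb{w}\pmb{b}' \preceq \pref(\pmb{d})$), but in that sub-sub-case $\psi_{\pmb{z}}(\pmb{b}) = \pmb{b} \preceq \pmb{b}' = \psi_{\pmb{z}}(\pmb{b}')$ directly, so the conclusion is unaffected.
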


See Appendix \ref{subsec:proof-psi-1} for the proof of Lemma \ref{lem:psi-1}.

By Lemma \ref{lem:psi-1} (ii) with $\pmb{c} = \lambda$, it holds that $\psi_{\pmb{z}}(f''^{\ast}_{\langle \pmb{z} \rangle} (\pmb{x})) = f'^{\ast}_{\langle \pmb{z} \rangle} (\pmb{x})$ in most cases.
Thus, we can intuitively interpret the mapping $\psi_{\pmb{z}}$ as a kind of an inverse transformation of (\ref{eq:i5igfy04wlhe}).
We prove $k$-bit delay decodability of $F''$ later by attributing it to $k$-bit delay decodability of $F'$ using $\psi_{\pmb{z}}$.

Now we prove $F'' \in \mathscr{F}_{k\hdec}$.
We first show that $F''$ satisfies Definition \ref{def:k-bitdelay} (i).
Namely, we show that $\PREF^k_{F'', \trans'_i(s)} \cap \bar{\PREF}^k_{F'', i}(f''_i(s)) = \emptyset$ for any $i \in [F'']$ and $s \in \mathcal{S}$ dividing into the following two cases: the case $i \in \mathcal{J}$ and the case $i \in [F''] \setminus \mathcal{J}$.

\begin{itemize}
\item The case $i \in \mathcal{J}$:
Then for any $i \in \mathcal{J}$ and $s \in \mathcal{S}$, we have
\begin{align*}
\PREF^k_{F'', \trans''_i(s)} \cap \bar{\PREF}^k_{F'', i}(f''_i(s))
&\overset{(\mathrm{A})}{\subseteq} \PREF^k_{F', \trans''_i(s)} \cap \bar{\PREF}^k_{F', i}(f''_i(s))\\
&\overset{(\mathrm{B})}{=} \PREF^k_{F', \trans'_i(s)} \cap \bar{\PREF}^k_{F', i}(f'_i(s))\\
&\overset{(\mathrm{C})}{=} \emptyset,
\end{align*}
where
(A) follows from Lemma \ref{lem:fddot2} (i) (iii) since $\trans''_i(s) \in [F]$,
(B) follows from Lemma \ref{lem:fddot2} (ii) and (\ref{eq:g0yqx2cwxfd1}),
and (C) follows from $F' \in \mathscr{F}_{k\hdec}$.

\item The case $i \in [F''] \setminus \mathcal{J}$:
We prove by contradiction assuming that the exist $\pmb{z} \in \mathcal{S}^{\leq L-1}$, $s \in \mathcal{S}$, and $\pmb{c} \in \bar{\PREF}^k_{F'', \langle\pmb{z}\rangle}(f''_{\langle\pmb{z}\rangle}(s)) \cap \PREF^k_{F'', \langle\pmb{z}s\rangle}$.
By $\pmb{c} \in \bar{\PREF}^k_{F'', \langle\pmb{z}\rangle}(f''_{\langle\pmb{z}\rangle}(s))$ and (\ref{eq:pref2}), there exist $\pmb{x} \in \mathcal{S}^{L-|\pmb{z}|}$ and $\pmb{y} \in \mathcal{S}^{\ast}$ such that
\begin{equation}
\label{eq:o6q4wim5xzak}
f''^{\ast}_{\langle \pmb{z} \rangle}(\pmb{x}\pmb{y}) \succeq f''_{\langle \pmb{z} \rangle}(s) \pmb{c}
\end{equation}
and
\begin{equation}
\label{eq:cd94d67co5is}
f''_{\langle \pmb{z} \rangle}(x_1) \succ f''_{\langle \pmb{z} \rangle}(s).
\end{equation}
By Lemma \ref{lem:F_ext}, we may assume
\begin{equation}
\label{eq:hznwfkjgtb3c}
|f''^{\ast}_{\langle \pmb{zx} \rangle}(\pmb{y})| \geq \max\{k, 1\}.
\end{equation}
By (\ref{eq:cd94d67co5is}) and Lemma \ref{lem:fddot} (ii), we obtain
\begin{equation}
\label{eq:97u9hprr3k3s}
f'_{\langle \pmb{z} \rangle}(x_1) \succ f'_{\langle \pmb{z} \rangle}(s).
\end{equation}
This shows that $f'_{\langle \pmb{z} \rangle}$ is not prefix-free, which conflicts with $F' \in \mathscr{F}_{k\hdec}$ in the case $k = 0$ by Lemma \ref{lem:0dec-prefixfree}.
Thus, we consider the case $k \geq 1$, that is,
\begin{equation}
\label{eq:ge6refzynw83}
\pmb{c} \neq \lambda.
\end{equation}
Equation (\ref{eq:o6q4wim5xzak}) leads to
\begin{align}
f''^{\ast}_{\langle \pmb{z} \rangle}(\pmb{x}\pmb{y}) \succeq f''_{\langle \pmb{z} \rangle}(s) \pmb{c}
&\overset{(\mathrm{A})}{\implies} \psi_{\pmb{z}}(f''^{\ast}_{\langle \pmb{z} \rangle}(\pmb{x}\pmb{y}))
\succeq \psi_{\pmb{z}}(f''_{\langle \pmb{z} \rangle}(s) \pmb{c}) \nonumber\\
&\overset{(\mathrm{B})}{\iff} \psi_{\pmb{z}}(f''^{\ast}_{\langle \pmb{z} \rangle}(\pmb{x}) f''^{\ast}_{\langle \pmb{z}\pmb{x} \rangle}(\pmb{y}))
\succeq \psi_{\pmb{z}}(f''_{\langle \pmb{z} \rangle}(s) \pmb{c}) \nonumber\\
&\overset{(\mathrm{C})}{\iff} f'^{\ast}_{\langle \pmb{z} \rangle}(\pmb{x}) \psi_{\pmb{z}\pmb{x}}(f''^{\ast}_{\langle \pmb{z}\pmb{x} \rangle}(\pmb{y}))
\succeq f'_{\langle \pmb{z} \rangle}(s)\psi_{\pmb{z}s}(\pmb{c}) \nonumber\\
&\overset{(\mathrm{D})}{\iff} f'^{\ast}_{\langle \pmb{z} \rangle}(\pmb{x}) f''^{\ast}_{\langle \pmb{z}\pmb{x} \rangle}(\pmb{y})
\succeq f'_{\langle \pmb{z} \rangle}(s)\psi_{\pmb{z}s}(\pmb{c}) \label{eq:ub3q36542x96},
\end{align}
where
(A) follows from Lemma \ref{lem:psi-1} (i),
(B) follows from Lemma \ref{lem:f_T} (i) and Lemma \ref{lem:fdot} (i),
(C) follows from (\ref{eq:hznwfkjgtb3c}), (\ref{eq:ge6refzynw83}), and the second case of (\ref{eq:gebqpcvgz9hc}),
and (D) follows from Lemma \ref{lem:psi-1} (iii) and $|\pmb{zx}| = L$.

Now, for $\pmb{b} \in \mathcal{C}^{\geq k}$, let $[\pmb{b}]_k$ denote the prefix of length $k$ of $\pmb{b}$.
Then by (\ref{eq:97u9hprr3k3s}) and (\ref{eq:ub3q36542x96}), we have
\begin{equation}
\label{eq:0hab0zqds0ty}
f'^{\ast}_{\langle \pmb{z} \rangle}(\pmb{x}) [f''^{\ast}_{\langle \pmb{z}\pmb{x} \rangle}(\pmb{y})]_k
\succeq f'_{\langle \pmb{z} \rangle}(s)[\psi_{\pmb{z}s}(\pmb{c})]_k.
\end{equation}
Also, we have
\begin{equation}
[f''^{\ast}_{\langle \pmb{z}\pmb{x} \rangle}(\pmb{y})]_k \in \PREF^k_{F'', \langle \pmb{zx} \rangle} \overset{(\mathrm{A})}{\subseteq} \PREF^k_{F', \langle \pmb{zx} \rangle},
\end{equation}
where (A) follows from Lemma \ref{lem:fddot2} (iii) and $\langle \pmb{zx} \rangle \in \mathcal{S}^L \subseteq \mathcal{J}$.
Hence, by (\ref{eq:pref3}) there exists $\pmb{y}' \in \mathcal{S}^{\ast}$ such that
$f'^{\ast}_{\langle \pmb{z}\pmb{x} \rangle}(\pmb{y}') \succeq [f''^{\ast}_{\langle \pmb{z}\pmb{x} \rangle}(\pmb{y})]_k$, which leads to 
\begin{align}
\label{eq:vihwlnbhtk4n}
f'^{\ast}_{\langle \pmb{z} \rangle}(\pmb{x}\pmb{y}')
&= f'^{\ast}_{\langle \pmb{z} \rangle}(\pmb{x}) f'^{\ast}_{\langle \pmb{z}\pmb{x} \rangle}(\pmb{y}')
\succeq f'^{\ast}_{\langle \pmb{z} \rangle}(\pmb{x}) [f''^{\ast}_{\langle \pmb{z}\pmb{x} \rangle}(\pmb{y})]_k 
\overset{(\mathrm{A})}{\succeq} f'_{\langle \pmb{z} \rangle}(s) [\psi_{\pmb{z}s}(\pmb{c})]_k,
\end{align}
where (A) follows from (\ref{eq:0hab0zqds0ty}).
Equations (\ref{eq:97u9hprr3k3s}) and (\ref{eq:vihwlnbhtk4n}) show
\begin{equation}
\label{eq:xkhgxisgau0l}
[\psi_{\pmb{z}s}(\pmb{c})]_k \in \bar{\PREF}^k_{F', \langle \pmb{z} \rangle}(f'_{\langle \pmb{z} \rangle}(s))
\end{equation}
by (\ref{eq:pref2}).

On the other hand, by $\pmb{c} \in \PREF^k_{F'', \langle\pmb{z}s\rangle}$ and (\ref{eq:pref3}), there exist $\pmb{x} \in \mathcal{S}^{L-|\pmb{z}s|}$ and $\pmb{y} \in \mathcal{S}^{\ast}$ such that
\begin{equation}
\label{eq:rtf3glg7thoe}
f''^{\ast}_{\langle \pmb{z}s \rangle}(\pmb{x}\pmb{y}) \succeq \pmb{c}.
\end{equation}
By Lemma \ref{lem:F_ext}, we may assume
\begin{equation}
\label{eq:kdnf706jdq2d}
|f''^{\ast}_{\langle \pmb{z}s\pmb{x} \rangle}(\pmb{y})| \geq k \geq 1.
\end{equation}
We have
\begin{align*}
f'^{\ast}_{\langle \pmb{z}s \rangle}(\pmb{x}) f''^{\ast}_{\langle \pmb{z}s\pmb{x} \rangle}(\pmb{y})
&\overset{(\mathrm{A})}{=} f'^{\ast}_{\langle \pmb{z}s \rangle}(\pmb{x}) \psi_{\pmb{z}s\pmb{x}}(f''^{\ast}_{\langle \pmb{z}s\pmb{x} \rangle}(\pmb{y}))
\overset{(\mathrm{B})}{=}  \psi_{\pmb{z}s}(f''^{\ast}_{\langle \pmb{z}s \rangle}(\pmb{x}\pmb{y}))
\overset{(\mathrm{C})}{\succeq} \psi_{\pmb{z}s}(\pmb{c}),
\end{align*}
where
(A) follows from Lemma \ref{lem:psi-1} (iii) and $|\pmb{z}s\pmb{x}| = L$,
(B) follows from  (\ref{eq:kdnf706jdq2d}) and the second case of (\ref{eq:gebqpcvgz9hc}),
and (C) follows from (\ref{eq:rtf3glg7thoe}) and Lemma \ref{lem:psi-1} (i).

Hence, we have
\begin{equation}
\label{eq:8zjo9028udgh}
f'^{\ast}_{\langle \pmb{z}s \rangle}(\pmb{x}) [f''^{\ast}_{\langle \pmb{z}s\pmb{x} \rangle}(\pmb{y})]_k
\succeq [\psi_{\pmb{z}s}(\pmb{c})]_k.
\end{equation}
Also, we have
\begin{equation}
[f''^{\ast}_{\langle \pmb{z}s\pmb{x} \rangle}(\pmb{y})]_k \in \PREF^k_{F'', \langle \pmb{z}s\pmb{x} \rangle} \overset{(\mathrm{A})}{\subseteq} \PREF^k_{F', \langle \pmb{z}s\pmb{x} \rangle},
\end{equation}
where (A) follows from Lemma \ref{lem:fddot2} (iii) and $\langle \pmb{z}s\pmb{x} \rangle \in \mathcal{S}^L \subseteq \mathcal{J}$.
Hence, there exists $\pmb{y}' \in \mathcal{S}^{\ast}$ such that
$f'^{\ast}_{\langle \pmb{z}\pmb{x} \rangle}(\pmb{y}') \succeq [f''^{\ast}_{\langle \pmb{z}s\pmb{x} \rangle}(\pmb{y})]_k$,
which leads to 
\begin{align}
\label{eq:5nw09tdlaulb}
f'^{\ast}_{\langle \pmb{z}s \rangle}(\pmb{x}\pmb{y}')
&= f'^{\ast}_{\langle \pmb{z}s \rangle}(\pmb{x}) f'^{\ast}_{\langle \pmb{z}\pmb{x}s \rangle}(\pmb{y}')
\succeq f'^{\ast}_{\langle \pmb{z}s \rangle}(\pmb{x}) [f''^{\ast}_{\langle \pmb{z}s\pmb{x} \rangle}(\pmb{y})]_k 
\overset{(\mathrm{A})}{\succeq} [\psi_{\pmb{z}s}(\pmb{c})]_k,
\end{align}
where (A) follows from (\ref{eq:8zjo9028udgh}).
This shows
\begin{equation}
\label{eq:hu05xayiw2ed}
[\psi_{\pmb{z}s}(\pmb{c})]_k \in \PREF^k_{F', \langle \pmb{z}s \rangle}
\end{equation}
by (\ref{eq:pref3}).
By (\ref{eq:xkhgxisgau0l}) and (\ref{eq:hu05xayiw2ed}),
the code-tuple $F'$ does not satisfy Definition \ref{def:k-bitdelay} (i), which conflicts with $F' \in \mathscr{F}_{k\hdec}$.
\end{itemize}
Consequently, $F''$ satisfies Definition \ref{def:k-bitdelay} (i).

Next, we show that $F''$ satisfies Definition \ref{def:k-bitdelay} (ii).
Namely, we show that for any $i \in [F'']$ and $s, s' \in \mathcal{S}$ such that $s \neq s'$ and $f''_i(s) = f''_i(s')$, we have $\PREF^k_{F', \trans'_i(s)} \cap \PREF^k_{F', \trans'_i(s')} = \emptyset$.
We prove for the following two cases: the case $i \in \mathcal{J}$ and the case $i \in [F''] \setminus \mathcal{J}$.

\begin{itemize}
\item The case $i \in \mathcal{J}$:
Then for any $i \in \mathcal{J}$ and $s, s' \in \mathcal{S}$ such that $s \neq s'$ and $f''_i(s) = f''_i(s')$, we have
\begin{equation}
\label{eq:rx2efdxu2i5d}
f'_i(s) = f'_i(s')
\end{equation}
by Lemma \ref{lem:fddot2} (ii), and we have
\begin{align*}
\PREF^k_{F'', \trans''_i(s)} \cap \PREF^k_{F'', \trans''_i(s')}
&\overset{(\mathrm{A})}{\subseteq} \PREF^k_{F', \trans''_i(s)} \cap \PREF^k_{F', \trans''_i(s')}
\overset{(\mathrm{B})}{=} \PREF^k_{F', \trans'_i(s)} \cap \PREF^k_{F', \trans'_i(s')} \overset{(\mathrm{C})}{=} \emptyset,
\end{align*}
where
(A) follows from Lemma \ref{lem:fddot2} (i) (iii) since $\trans''_i(s), \allowbreak \trans''_i(s') \in [F]$,
(B) follows from (\ref{eq:g0yqx2cwxfd1}),
and (C) follows from $F' \in \mathscr{F}_{k\hdec}$ and (\ref{eq:rx2efdxu2i5d}).

\item The case $i \in [F''] \setminus \mathcal{J}$:
We prove by contradiction assuming that there exists $\pmb{z} \in \mathcal{S}^{\leq L-1}, s, s' \in \mathcal{S}$, and $\pmb{c} \in \PREF^k_{F'', \langle\pmb{z}s\rangle} \cap \PREF^k_{F'', \langle\pmb{z}s'\rangle}$ such that $s \neq s'$ and
\begin{equation}
\label{eq:iofwmooucr64}
f''_{\langle \pmb{z} \rangle}(s) = f''_{\langle \pmb{z} \rangle}(s').
\end{equation}
By the similar way to derive (\ref{eq:hu05xayiw2ed}), we obtain
\begin{equation}
\label{eq:ifte7279emhl}
[\psi_{\pmb{z}s}(\pmb{c})]_k \in \PREF^k_{F', \langle\pmb{z}s\rangle}
\end{equation}
 from $\pmb{c} \in \PREF^k_{F'', \langle\pmb{z}s\rangle}$.
By (\ref{eq:iofwmooucr64}) and Lemma \ref{lem:psi-1} (i),
we have
\begin{equation}
\label{eq:zq7ajj39x777}
\psi_{\langle \pmb{z} \rangle}(f''_{\langle \pmb{z} \rangle}(s)) = \psi_{\langle \pmb{z} \rangle}(f''_{\langle \pmb{z} \rangle}(s')).
\end{equation}
By Lemma \ref{lem:psi-1} (ii), exactly one of $f'_{\langle \pmb{z} \rangle}(s) = f'_{\langle \pmb{z} \rangle}(s')$, $f'_{\langle \pmb{z} \rangle}(s) \prec f'_{\langle \pmb{z} \rangle}(s')$, and $f'_{\langle \pmb{z} \rangle}(s) \succ f'_{\langle \pmb{z} \rangle}(s')$ holds.
Therefore, $f'_{\langle \pmb{z} \rangle}$ is not prefix-free, which conflicts with $F' \in \mathscr{F}_{k\hdec}$ in the case $k = 0$ by Lemma \ref{lem:0dec-prefixfree}.
We consider the case $k \geq 1$, that is,
\begin{equation}
\label{eq:capkno0i5kee}
\pmb{c} \neq \lambda.
\end{equation}
We consider the following two cases separately: the case $f'_{\langle \pmb{z} \rangle}(s) = f'_{\langle \pmb{z} \rangle}(s')$ and the case $f'_{\langle \pmb{z} \rangle}(s) \prec f'_{\langle \pmb{z} \rangle}(s')$.
Note that we may exclude the case $f'_{\langle \pmb{z} \rangle}(s) \succ f'_{\langle \pmb{z} \rangle}(s')$ by symmetry.

\begin{itemize}
\item The case $f'_{\langle \pmb{z} \rangle}(s) = f'_{\langle \pmb{z} \rangle}(s')$:
By (\ref{eq:psi-1}), we have $\psi_{\pmb{z}s}(\pmb{c}) = \psi_{\pmb{z}s'}(\pmb{c})$ and thus
\begin{equation}
\label{eq:m2hj94a4b003}
[\psi_{\pmb{z}s}(\pmb{c})]_k = [\psi_{\pmb{z}s'}(\pmb{c})]_k \overset{(\mathrm{A})}{\in} \PREF^k_{F', \langle\pmb{z}s'\rangle},
\end{equation}
where (A) is obtained from $\pmb{c} \in \PREF^k_{F'', \langle\pmb{z}s'\rangle}$ by the similar way to derive (\ref{eq:hu05xayiw2ed}).

By (\ref{eq:ifte7279emhl}), (\ref{eq:m2hj94a4b003}), and $f'_{\langle \pmb{z} \rangle}(s) = f'_{\langle \pmb{z} \rangle}(s')$,
the code-tuple $F'$ does not satisfy Definition \ref{def:k-bitdelay} (ii), which conflicts with $F' \in \mathscr{F}_{k\hdec}$.

\item The case $f'_{\langle \pmb{z} \rangle}(s) \prec f'_{\langle \pmb{z} \rangle}(s')$:
Then by (\ref{eq:zq7ajj39x777}) and Lemma \ref{lem:psi-1} (ii),
it must hold that
\begin{equation}
\label{eq:df0g61poqdfr}
f'^{\ast}_{\langle \lambda \rangle}(\pmb{z}) \prec f'^{\ast}_{\langle \lambda \rangle}(\pmb{z}s') = \pmb{d}
\end{equation}
and
\begin{equation}
\label{eq:tv9pahtdqyls}
f'_{\langle \pmb{z} \rangle}(s) = \pref(f'_{\langle \pmb{z} \rangle}(s')).
\end{equation}
Thus, we have
\begin{align}
f'_{\langle \pmb{z} \rangle}(s)d_l
&\overset{(\mathrm{A})}{=} \pref(f'_{\langle \pmb{z} \rangle}(s'))d_l \nonumber\\
&= f'^{\ast}_{\langle \lambda \rangle}(\pmb{z})^{-1} f'^{\ast}_{\langle \lambda \rangle}(\pmb{z}) \pref(f'_{\langle \pmb{z} \rangle}(s'))d_l\nonumber\\
&\overset{(\mathrm{B})}{=} f'^{\ast}_{\langle \lambda \rangle}(\pmb{z})^{-1} \pref(f'^{\ast}_{\langle \lambda \rangle}(\pmb{z}s'))d_l\nonumber\\
&\overset{(\mathrm{C})}{=} f'^{\ast}_{\langle \lambda \rangle}(\pmb{z})^{-1} \pref(\pmb{d})d_l\nonumber\\
&= f'^{\ast}_{\langle \lambda \rangle}(\pmb{z})^{-1} \pmb{d}\nonumber\\
&\overset{(\mathrm{D})}{=} f'^{\ast}_{\langle \lambda \rangle}(\pmb{z})^{-1} f'^{\ast}_{\langle \lambda \rangle}(\pmb{z}s')\nonumber\\
&\overset{(\mathrm{E})}{=} f'^{\ast}_{\langle \lambda \rangle}(\pmb{z})^{-1} f'^{\ast}_{\langle \lambda \rangle}(\pmb{z})f'_{\langle \pmb{z} \rangle}(s')\nonumber\\
&= f'_{\langle \pmb{z} \rangle}(s'), \label{eq:gweuyhey56ha}
\end{align}
where
(A) follows from (\ref{eq:tv9pahtdqyls}),
(B) follows from Lemma \ref{lem:f_T} (i) and Lemma \ref{lem:fdot} (i),
(C) follows from (\ref{eq:df0g61poqdfr}),
(D) follows from (\ref{eq:df0g61poqdfr}),
and (E) follows from Lemma \ref{lem:f_T} (i)  and Lemma \ref{lem:fdot} (i).

Also, we have
\begin{align}
\pref(\pmb{d}) 
&\overset{(\mathrm{A})}{=}  \pref(f'^{\ast}_{\langle \lambda \rangle}(\pmb{z}s')) \nonumber\\
&= \pref(f'^{\ast}_{\langle \lambda \rangle}(\pmb{z}) f'_{\langle \pmb{z} \rangle}(s')) \nonumber\\
&\overset{(\mathrm{B})}{=} \pref(f'^{\ast}_{\langle \lambda \rangle}(\pmb{z}) f'_{\langle \pmb{z} \rangle}(s)d_l) \nonumber\\
&= f'^{\ast}_{\langle \lambda \rangle}(\pmb{z}) f'_{\langle \pmb{z} \rangle}(s) \nonumber\\
&= f'^{\ast}_{\langle \lambda \rangle}(\pmb{z}s), \label{eq:5rn1davq8bs5}
\end{align}
where (A) follows from (\ref{eq:df0g61poqdfr}),
and (B) follows from (\ref{eq:gweuyhey56ha}).

By $\pmb{c} \in \PREF^k_{F'', \langle\pmb{z}s'\rangle}$ and (\ref{eq:pref3}), there exist $\pmb{x} \in \mathcal{S}^{L-|\pmb{z}s'|}$ and $\pmb{y} \in \mathcal{S}^{\ast}$ such that
\begin{equation}
\label{eq:d51ydrv24rkh}
f''^{\ast}_{\langle \pmb{z}s' \rangle}(\pmb{x}\pmb{y}) \succeq \pmb{c}.
\end{equation}
By Lemma \ref{lem:F_ext}, we may assume
\begin{equation}
\label{eq:1kpudy9rq6i8}
|f''^{\ast}_{\langle \pmb{z}s'\pmb{x} \rangle}(\pmb{y})| \geq k \geq 1.
\end{equation}
We have
\begin{align*}
\lefteqn{f'_{\langle \pmb{z} \rangle}(s') f'^{\ast}_{\langle \pmb{z}s' \rangle}(\pmb{x}) f''^{\ast}_{\langle \pmb{z}s'\pmb{x} \rangle}(\pmb{y})}\nonumber\\
&\overset{(\mathrm{A})}{=} f'_{\langle \pmb{z} \rangle}(s') \psi_{\pmb{z}s'}(f''^{\ast}_{\langle \pmb{z}s' \rangle}(\pmb{x}) f''^{\ast}_{\langle \pmb{z}s'\pmb{x} \rangle}(\pmb{y}))\\
&\overset{(\mathrm{B})}{=} f'_{\langle \pmb{z} \rangle}(s') \psi_{\pmb{z}s'}(f''^{\ast}_{\langle \pmb{z}s' \rangle}(\pmb{x}\pmb{y}))\\
&\overset{(\mathrm{C})}{\succeq} f'_{\langle \pmb{z} \rangle}(s') \psi_{\pmb{z}s'}(\pmb{c})\\
&\overset{(\mathrm{D})}{=} f'_{\langle \pmb{z} \rangle}(s)d_l \psi_{\pmb{z}s'}(\pmb{c})\\
&\overset{(\mathrm{E})}{=} f'_{\langle \pmb{z} \rangle}(s)d_l \pmb{c}\\
&= f'_{\langle \pmb{z} \rangle}(s) \pref(\pmb{d})^{-1}\pmb{d}\pref(\pmb{d})^{-1}(\pref(\pmb{d})\pmb{c})\\
&\overset{(\mathrm{F})}{=} f'_{\langle \pmb{z} \rangle}(s) f'^{\ast}_{\langle \lambda \rangle}(\pmb{z}s)^{-1}\pmb{d}\pref(\pmb{d})^{-1}(f'^{\ast}_{\langle \lambda \rangle}(\pmb{z}s)\pmb{c})\\
&\overset{(\mathrm{G})}{=} f'_{\langle \pmb{z} \rangle}(s) \psi_{\pmb{z}s}(\pmb{c}),
\end{align*}
where
(A) follows from (\ref{eq:1kpudy9rq6i8}) and the second case of (\ref{eq:gebqpcvgz9hc}),
(B) follows from Lemma \ref{lem:f_T} (i) and Lemma \ref{lem:fdot} (i),
(C) follows from (\ref{eq:d51ydrv24rkh}) and Lemma \ref{lem:psi-1} (i),
(D) follows from (\ref{eq:gweuyhey56ha}),
(E) follows from the second case of (\ref{eq:psi-1}) because $f'^{\ast}_{\langle \lambda \rangle}(\pmb{z}s') \preceq \pref(\pmb{d})$ does not hold by (\ref{eq:df0g61poqdfr}),
(F) follows from (\ref{eq:5rn1davq8bs5}),
and (G) follows from the first case of (\ref{eq:psi-1}) because $f'^{\ast}_{\langle \lambda \rangle}(\pmb{z}s) = \pref(f'^{\ast}_{\langle \lambda \rangle}(\pmb{z}s')) = \pref(\pmb{d}) \prec f'^{\ast}_{\langle \lambda \rangle}(\pmb{z}s)\pmb{c}$ by (\ref{eq:df0g61poqdfr}), (\ref{eq:tv9pahtdqyls}), and (\ref{eq:capkno0i5kee}).

Hence, by $f'_{\langle \pmb{z} \rangle}(s) \prec f'_{\langle \pmb{z} \rangle}(s')$, we have
\begin{equation}
\label{eq:bd695t2710fv}
f'_{\langle \pmb{z} \rangle}(s') f'^{\ast}_{\langle \pmb{z}s' \rangle}(\pmb{x}) [f''^{\ast}_{\langle \pmb{z}s'\pmb{x} \rangle}(\pmb{y})]_k
\succeq f'_{\langle \pmb{z} \rangle}(s) [\psi_{\pmb{z}s}(\pmb{c})]_k.
\end{equation}
Also, we have
\begin{equation}
[f''^{\ast}_{\langle \pmb{z}s'\pmb{x} \rangle}(\pmb{y})]_k \in \PREF^k_{F'', \langle \pmb{z}s'\pmb{x} \rangle} \overset{(\mathrm{A})}{\subseteq} \PREF^k_{F', \langle \pmb{z}s'\pmb{x} \rangle},
\end{equation}
where (A) follows from Lemma \ref{lem:fddot2} (iii) and $\langle \pmb{z}s'\pmb{x} \rangle \in \mathcal{S}^L \subseteq \mathcal{J}$.
Hence, there exists $\pmb{y}' \in \mathcal{S}^{\ast}$ such that
$f'^{\ast}_{\langle \pmb{z}s'\pmb{x} \rangle}(\pmb{y}') \succeq [f''^{\ast}_{\langle \pmb{z}s'\pmb{x} \rangle}(\pmb{y})]_k$,
which leads to 
\begin{align}
\label{eq:7tc5umq31cyj}
f'^{\ast}_{\langle \pmb{z} \rangle}(s'\pmb{x}\pmb{y}')
&= f'_{\langle \pmb{z} \rangle}(s') f'^{\ast}_{\langle \pmb{z}s' \rangle}(\pmb{x}) f'^{\ast}_{\langle \pmb{z}s'\pmb{x} \rangle}(\pmb{y}')
 \nonumber\\
&\succeq f'_{\langle \pmb{z} \rangle}(s') f'^{\ast}_{\langle \pmb{z}s' \rangle}(\pmb{x}) [f'^{\ast}_{\langle \pmb{z}s'\pmb{x} \rangle}(\pmb{y}')]_k
 \nonumber\\
&\overset{(\mathrm{A})}{\succeq} f'_{\langle \pmb{z} \rangle}(s) [\psi_{\pmb{z}s}(\pmb{c})]_k,
\end{align}
where (A) follows from (\ref{eq:bd695t2710fv}).
The assumption that $f'_{\langle \pmb{z} \rangle}(s) \prec f'_{\langle \pmb{z} \rangle}(s')$ and (\ref{eq:7tc5umq31cyj}) shows that
\begin{equation}
\label{eq:jv1udzuc1qdc}
[\psi_{\pmb{z}s}(\pmb{c})]_k \in \bar{\PREF}^k_{F', \langle\pmb{z}\rangle}(f'_{\langle \pmb{z} \rangle}(s))
\end{equation}
by (\ref{eq:pref2}).
By (\ref{eq:ifte7279emhl}) and (\ref{eq:jv1udzuc1qdc}), the code-tuple $F'$ does not satisfy Definition \ref{def:k-bitdelay} (i), which conflicts with $F' \in \mathscr{F}_{k\hdec}$.
\end{itemize}
\end{itemize}
Consequently, $F''$ satisfies Definition \ref{def:k-bitdelay} (ii).
\end{proof}

\section{Conclusion}
\label{sec:conclusion}

This paper discussed the general properties of $k$-bit delay decodable code-tuples for $k \geq 0$ and proved two main theorems.
Theorem \ref{thm:differ} guarantees that it suffices to consider only irreducible code-tuples with at most $2^{(2^k)}$ code tables to achieve the optimal average codeword length.
Theorem \ref{thm:complete} is a generalization of the necessary condition of Huffman codes that every internal node in the code-tree has two child nodes.
Both theorems enable us to limit the scope of code-tuples to be considered when discussing optimal $k$-bit delay decodable code-tuples.

\appendix

\section{Equivalency of the Definitions of a $k$-bit Delay Decodable Code-tuple}
\label{sec:equiv}

We confirm that Definition \ref{def:k-bitdelay} in this paper is equivalent to the definition of a $k$-bit delay decodable code-tuple in \cite{Hashimoto2022}.
We first introduce the following Definition \ref{def:pos-neg}.

\begin{definition}
\label{def:pos-neg}
Let $F(f, \trans) \in \mathscr{F}$ and $i \in [F]$.
\begin{itemize}
\item[(i)] A pair $(\pmb{x}, \pmb{c}) \in \mathcal{S}^{\ast} \times \mathcal{C}^{\ast}$ is said to be \emph{$f^{\ast}_i$-positive} if
for any $\pmb{x}^{\prime} \in \mathcal{S}^{\ast}$, if $f^{\ast}_i(\pmb{x})\pmb{c} \preceq f^{\ast}_i(\pmb{x}^{\prime})$, 
then $\pmb{x} \preceq \pmb{x}^{\prime}$.
\item[(ii)] A pair $(\pmb{x}, \pmb{c}) \in \mathcal{S}^{\ast} \times \mathcal{C}^{\ast}$ is said to be \emph{$f^{\ast}_i$-negative} if
for any $\pmb{x}^{\prime} \in \mathcal{S}^{\ast}$, if $f^{\ast}_i(\pmb{x})\pmb{c} \preceq f^{\ast}_i(\pmb{x}^{\prime})$, 
then $\pmb{x} \not\preceq \pmb{x}^{\prime}$.
\end{itemize}
\end{definition}

Then the definition of a $k$-bit delay decodable code-tuple in \cite{Hashimoto2022} is stated as the following Definition  \ref{def:k-bitdelay0}.

 \begin{definition}
  \label{def:k-bitdelay0}
Let $k \geq 0$ be an integer. A code-tuple $F$ is said to be \emph{$k$-bit delay decodable} if
 for any $i \in [F]$ and $(\pmb{x}, \pmb{c}) \in \mathcal{S}^{\ast} \times \mathcal{C}^k$, the pair $(\pmb{x}, \pmb{c})$ is {$f^{\ast}_i$-positive} or {$f^{\ast}_i$-negative}.
 \end{definition}
 
We show that the following conditions (a) and (b) are equivalent for any $F(f, \trans) \in \mathscr{F}$.
\begin{itemize}
\item[(a)] For any $i \in [F]$ and $(\pmb{x}, \pmb{c}) \in \mathcal{S}^{\ast} \times \mathcal{C}^k$, the pair $(\pmb{x}, \pmb{c})$ is $f^{\ast}_i$-positive or $f^{\ast}_i$-negative.
\item[(b)] The code-tuple $F$ satisfies Definition \ref{def:k-bitdelay} (i) and (ii).
\end{itemize}

((a) $\implies$ (b)): We show the contraposition.
Assume that (b) does not hold.
We consider the following two cases separately: the case where Definition \ref{def:k-bitdelay} (i) is false and the case where Definition \ref{def:k-bitdelay} (ii) is false.

\begin{itemize}
\item The case where Definition \ref{def:k-bitdelay} (i) is false:
Then there exist $i \in [F]$, $s\in \mathcal{S}$, and $\pmb{c} \in \PREF^k_{F, \trans_i(s)} \cap \bar{\PREF}^k_{F, i}(f_i(s))$.
By (\ref{eq:pref2}) and (\ref{eq:pref3}), there exist $\pmb{x} = x_1x_2\ldots, x_n \in \mathcal{S}^\ast$ and $\pmb{x}' = x'_1x'_2\ldots x'_{n'} \in \mathcal{S}^{+}$ such that
\begin{equation}
\label{eq:bqqsx722vk74}
f^{\ast}_{\trans_i(s)}(\pmb{x}) \succeq \pmb{c},
\end{equation}
\begin{equation}
\label{eq:456ve1a9snhp}
f^{\ast}_i(\pmb{x}') \succeq f_i(s)\pmb{c},
\end{equation}
 \begin{equation}
\label{eq:94h7r770efbn}
 f_i(x'_1) \succ f_i(s).
 \end{equation}
We have
\begin{equation}
\label{eq:t6dfks2mngfi}
f^{\ast}_i(s\pmb{x}) \overset{(\mathrm{A})}{=} f_i(s)f^{\ast}_{\trans_i(s)}(\pmb{x}) \overset{(\mathrm{B})}{\succeq} f_i(s)\pmb{c},
\end{equation}
where
(A) follows from (\ref{eq:fstar}),
and (B) follows from (\ref{eq:bqqsx722vk74}).
By (\ref{eq:t6dfks2mngfi}) and $s \preceq s\pmb{x}$, the pair $(s, \pmb{c})$ is not $f^{\ast}_i$-negative.
On the other hand, since $s \neq x'_1$ by (\ref{eq:94h7r770efbn}), we have $s \not\preceq \pmb{x}'$.
Hence, by (\ref{eq:456ve1a9snhp}), the pair $(s, \pmb{c})$ is not $f^{\ast}_i$-positive.
Since the pair $(s, \pmb{c})$ is neither $f^{\ast}_i$-positive nor $f^{\ast}_i$-negative, the condition (a) does not hold.

\item The case where Definition \ref{def:k-bitdelay} (ii) is false:
Then there exist $i \in [F]$, $s, s' \in \mathcal{S}$, and $\pmb{c} \in \PREF^k_{F, \trans_i(s)} \cap \PREF^k_{F, \trans_i(s')}$ such that $s \neq s'$ and
\begin{equation}
\label{eq:m95lm043j0dq}
f_i(s) = f_i(s').
\end{equation}
By (\ref{eq:pref3}), there exist $\pmb{x}, \pmb{x}' \in \mathcal{S}^{\ast}$ such that
\begin{equation}
\label{eq:g6b5snkgshq3}
f^{\ast}_{\trans_i(s)}(\pmb{x}) \succeq \pmb{c}
\end{equation}
and
\begin{equation}
\label{eq:r8wdebklb7gd}
f^{\ast}_{\trans_i(s')}(\pmb{x}') \succeq \pmb{c}.
\end{equation}
Thus, we have
\begin{equation}
\label{eq:gk1gudp2s9xq}
f^{\ast}_i(s\pmb{x}) \overset{(\mathrm{A})}{=} f_i(s)f^{\ast}_{\trans_i(s)}(\pmb{x}) \overset{(\mathrm{B})}{\succeq} f_i(s)\pmb{c}
\end{equation}
and
\begin{align}
f^{\ast}_i(s'\pmb{x}')
&\overset{(\mathrm{C})}{=} f_i(s')f^{\ast}_{\trans_i(s')}(\pmb{x}') 
\overset{(\mathrm{D})}{=} f_i(s)f^{\ast}_{\trans_i(s')}(\pmb{x}')
\overset{(\mathrm{E})}{\succeq} f_i(s)\pmb{c}, \label{eq:c3fystg10ttz}
\end{align}
where
(A) follows from (\ref{eq:fstar}),
(B) follows from (\ref{eq:g6b5snkgshq3}),
(C) follows from (\ref{eq:fstar}),
(D) follows from (\ref{eq:m95lm043j0dq}),
and (E) follows from (\ref{eq:r8wdebklb7gd}).
By (\ref{eq:gk1gudp2s9xq}) and $s \preceq s\pmb{x}$, the pair $(s, \pmb{c})$ is not $f^{\ast}_i$-negative.
On the other hand, by $s \not\preceq s'\pmb{x}$ and (\ref{eq:c3fystg10ttz}), the pair $(s, \pmb{c})$ is not $f^{\ast}_i$-positive.
Since the pair $(s, \pmb{c})$ is neither $f^{\ast}_i$-positive nor $f^{\ast}_i$-negative, the condition (a) does not hold.
\end{itemize}

((b) $\implies$ (a)): 
We show the contraposition.
Assume that (a) does not hold. Then there exist $i \in [F]$ and $(\pmb{x}, \pmb{c}) \in \mathcal{S}^{\ast} \times \mathcal{C}^k$ such that $(\pmb{x}, \pmb{c})$ is neither $f^{\ast}_i$-positive nor $f^{\ast}_i$-negative.
Thus, there exist $\pmb{x}', \pmb{x}'' \in \mathcal{S}^{\ast}$ such that
\begin{equation}
\label{eq:sgw0eotiv293}
f^{\ast}_i(\pmb{x})\pmb{c} \preceq f^{\ast}_i(\pmb{x}'),
\end{equation}
\begin{equation}
\label{eq:wfkfx6eosfry}
f^{\ast}_i(\pmb{x})\pmb{c} \preceq f^{\ast}_i(\pmb{x}''),
\end{equation}
\begin{equation}
\label{eq:0y2hqitfrs0g}
\pmb{x} \preceq \pmb{x}',
\end{equation}
\begin{equation}
\label{eq:mjqmn1nueyib}
\pmb{x} \not\preceq \pmb{x}''.
\end{equation}

We consider the following two cases separately: the case $\pmb{x} \succeq \pmb{x}''$ and the case $\pmb{x} \not\succeq \pmb{x}''$.
\begin{itemize}
\item The case $\pmb{x} \succeq \pmb{x}''$:
By Lemma \ref{lem:f_T} (iii), we have
\begin{equation}
\label{eq:c61y8syqkodc}
f^{\ast}_i(\pmb{x}) \succeq f^{\ast}_i(\pmb{x}'').
\end{equation}
Hence, by (\ref{eq:wfkfx6eosfry}), it must hold that $\pmb{c} = \lambda$.
Namely, only $k = 0$ is possible now.

Since (\ref{eq:mjqmn1nueyib}) and $\pmb{x} \succeq \pmb{x}''$ lead to $\pmb{x} \succ \pmb{x}''$, 
there exists $\pmb{u} = u_1u_2\ldots u_n \in \mathcal{S}^{+}$ such that $\pmb{x} = \pmb{x}''\pmb{u}$.
Defining $j \coloneqq \trans^{\ast}_i(\pmb{x}'')$, we have
\begin{align}
f^{\ast}_i(\pmb{x})
&\overset{(\mathrm{A})}{=}  f^{\ast}_i(\pmb{x}'')f^{\ast}_j(\pmb{u}) 
\overset{(\mathrm{B})}{=} f^{\ast}_i(\pmb{x})f^{\ast}_j(\pmb{u}) 
\overset{(\mathrm{C})}{=} f^{\ast}_i(\pmb{x})f^{\ast}_j(u_1)f^{\ast}_{\trans_j(u_1)}(\suff(\pmb{u})), \label{eq:jubkcr32zgcc}
\end{align}
where 
(A) follows from Lemma \ref{lem:f_T} (i),
(B) follows because we have $f^{\ast}_i(\pmb{x}) \preceq  f^{\ast}_i(\pmb{x}'')$ by (\ref{eq:wfkfx6eosfry}) and we have $f^{\ast}_i(\pmb{x}) \succeq  f^{\ast}_i(\pmb{x}'')$ by (\ref{eq:c61y8syqkodc}),
and (C) follows from (\ref{eq:fstar}).
Comparing both sides of (\ref{eq:jubkcr32zgcc}), we obtain
\begin{equation}
\label{eq:r3hvtkqz4af3}
f_j(u_1) = \lambda
\end{equation}
and $f^{\ast}_{\trans_j(u_1)}(\suff(\pmb{u})) = \lambda$.

We now show that (b) does not hold dividing into two cases by whether $f_j$ is injective.
\begin{itemize}
\item If $f_j$ is not injective, then $F$ does not satisfy Definition \ref{def:k-bitdelay} (ii) by $k = 0$ and Lemma \ref{lem:0dec-prefixfree}.
\item If $f_j$ is injective, then there exists $s \in \mathcal{S}$ such that $f_j(s) \succ \lambda$ by $\sigma \geq 2$, which leads to
\begin{equation}
\label{eq:oljo9b0jd0fx}
\bar{\PREF}^0_{F, j} \neq \emptyset
\end{equation}
by (\ref{eq:pref2}).
We see that $F$ does not satisfy Definition \ref{def:k-bitdelay} (i) because 
\begin{align*}
\PREF^0_{F, \trans_j(u_1)} \cap \bar{\PREF}^0_{F, j}(f_j(u_1))
&\overset{(\mathrm{A})}{=} \PREF^0_{F, \trans_j(u_1)} \cap \bar{\PREF}^0_{F, j}\\
&\overset{(\mathrm{B})}{=} \{\lambda\} \cap \bar{\PREF}^0_{F, j}\\
&\overset{(\mathrm{C})}{=} \{\lambda\} \cap \{\lambda\}\\
&= \{\lambda\}\\
&\neq \emptyset,
\end{align*}
where
(A) follows from (\ref{eq:r3hvtkqz4af3}),
(B) follows from (\ref{eq:pref3}),
and (C) follows from (\ref{eq:oljo9b0jd0fx}).
\end{itemize}

\item The case $\pmb{x} \not\succeq \pmb{x}''$:
By (\ref{eq:mjqmn1nueyib}) and $\pmb{x} \not\succeq \pmb{x}''$, 
there exist $\pmb{z} = z_1z_2\ldots z_{n} \in \mathcal{S}^{+}$ and $\pmb{z}'' = z''_1z''_2\ldots z''_{n''} \in \mathcal{S}^{+}$ such that
\begin{align}
\pmb{x} &= \pmb{y}\pmb{z}, \label{eq:wvf7fojs80yn}\\
\pmb{x}'' &= \pmb{y}\pmb{z}'', \label{eq:4bb6m52ayl25}\\
z_1 &\neq z''_1 \label{eq:xxxjqkzwgkel},
\end{align}
where $\pmb{y}$ is the longest common prefix of $\pmb{x}$ and $\pmb{x}''$.
Also, by (\ref{eq:0y2hqitfrs0g}), there exists $\pmb{w} \in \mathcal{S}^{\ast}$ such that
\begin{equation}
\pmb{x}' = \pmb{xw}.
\end{equation}
Defining $\pmb{z}' = z'_1z'_2\ldots z'_{n'} \coloneqq \pmb{zw}$, we have
\begin{equation}
\label{eq:q0qp077d2kx6}
\pmb{x}' = \pmb{xw} = \pmb{yzw} = \pmb{yz}',
\end{equation}
\begin{equation}
\label{eq:rgqayx9ks4yg}
z_1 = z'_1.
\end{equation}

Then defining $j \coloneqq \trans^{\ast}_i(\pmb{y})$, we have 
\begin{align}
&\lefteqn{f^{\ast}_i(\pmb{y}) f_j(z'_1) f^{\ast}_{\trans_j(z'_1)}(\suff(\pmb{z}'))}\nonumber\\
&\overset{(\mathrm{A})}{=} f^{\ast}_i(\pmb{y}) f^{\ast}_j(\pmb{z}')
\overset{(\mathrm{B})}{=} f^{\ast}_i(\pmb{y} \pmb{z}')
\overset{(\mathrm{C})}{=} f^{\ast}_i(\pmb{x}')
\overset{(\mathrm{D})}{\succeq} f^{\ast}_i(\pmb{x})\pmb{c}, \label{eq:qqxgu76jwgm2}
\end{align}
where
(A) follows from (\ref{eq:fstar}),
(B) follows from Lemma \ref{lem:f_T} (i),
(C) follows from (\ref{eq:q0qp077d2kx6}),
and (D) follows from (\ref{eq:sgw0eotiv293}).
Similarly, by (\ref{eq:wfkfx6eosfry}) and (\ref{eq:4bb6m52ayl25}), we have
\begin{align}
&\lefteqn{f^{\ast}_i(\pmb{y}) f_j(z''_1) f^{\ast}_{\trans_j(z''_1)}(\suff(\pmb{z}''))}\nonumber\\
&=  f^{\ast}_i(\pmb{y}) f^{\ast}_j(\pmb{z}'') 
=  f^{\ast}_i(\pmb{y} \pmb{z}'')
= f^{\ast}_i(\pmb{x}'')
\succeq f^{\ast}_i(\pmb{x})\pmb{c}. \label{eq:bm0j7xt71xqb}
\end{align}

Also, we have
\begin{align}
f^{\ast}_i(\pmb{x})\pmb{c}
&\overset{(\mathrm{A})}{=} f^{\ast}_i(\pmb{y}\pmb{z}) \pmb{c} 
\overset{(\mathrm{B})}{=} f^{\ast}_i(\pmb{y})f^{\ast}_j(\pmb{z}) \pmb{c} 
\overset{(\mathrm{C})}{=} f^{\ast}_i(\pmb{y})f_j(z_1) f^{\ast}_{\trans_j(z_1)}(\suff(\pmb{z})) \pmb{c}, \label{eq:xh2hmm1vpzvn}
\end{align}
where
(A) follows from (\ref{eq:wvf7fojs80yn}),
(B) follows from Lemma \ref{lem:f_T} (i),
and (C) follows from (\ref{eq:fstar}).

Thus, we have
\begin{align}
f_j(z'_1) f^{\ast}_{\trans_j(z'_1)}(\suff(\pmb{z}'))
&\overset{(\mathrm{A})}{\succeq} f_j(z_1) f^{\ast}_{\trans_j(z_1)}(\suff(\pmb{z})) \pmb{c} \nonumber\\
&\overset{(\mathrm{B})}{=}  f_j(z'_1) f^{\ast}_{\trans_j(z'_1)}(\suff(\pmb{z})) \pmb{c}  \nonumber\\
&\succeq f_j(z'_1) \pmb{c}', \label{eq:ywtm3nqdxg92}
\end{align}
where $\pmb{c}' \in \mathcal{C}^k$ is defined as the prefix of length $k$ of $f^{\ast}_{\trans_j(z'_1)}(\suff(\pmb{z})) \pmb{c}$, and
(A) follows from (\ref{eq:qqxgu76jwgm2}) and (\ref{eq:xh2hmm1vpzvn}),
and (B) follows from (\ref{eq:rgqayx9ks4yg}).
Similarly, we have
\begin{align}
f_j(z''_1) f^{\ast}_{\trans_j(z''_1)}(\suff(\pmb{z}''))
&\overset{(\mathrm{A})}{\succeq} f_j(z_1) f^{\ast}_{\trans_j(z_1)}(\suff(\pmb{z})) \pmb{c} \nonumber\\
&\overset{(\mathrm{B})}{=}  f_j(z'_1) f^{\ast}_{\trans_j(z'_1)}(\suff(\pmb{z})) \pmb{c} \nonumber\\
&\succeq f_j(z'_1) \pmb{c}', \label{eq:l1p7sy6pk3qp}
\end{align}
where
(A) follows from (\ref{eq:bm0j7xt71xqb}) and (\ref{eq:xh2hmm1vpzvn}),
and (B) follows from (\ref{eq:rgqayx9ks4yg}).
By (\ref{eq:ywtm3nqdxg92}), we have $f^{\ast}_{\trans_j(z'_1)}(\suff(\pmb{z}')) \succeq \pmb{c}'$,
which leads to
\begin{equation}
\label{eq:2e5xkiwgd6wd}
\pmb{c}' \in \PREF^k_{F, \trans_j(z'_1)}
\end{equation}
by (\ref{eq:pref3}).

By (\ref{eq:l1p7sy6pk3qp}), at least one of $f_j(z'_1) \preceq f_j(z''_1)$ and $f_j(z'_1) \succeq f_j(z''_1)$ holds.
We may assume $f_j(z'_1) \preceq f_j(z''_1)$ by symmetry.
We consider the following two cases separately: the case $f_j(z'_1) \prec f_j(z''_1)$ and the case $f_j(z'_1) = f_j(z''_1)$.
\begin{itemize}
\item the case $f_j(z'_1) \prec f_j(z''_1)$:
We have
\begin{equation}
\label{eq:icbm2ucps4rc}
f^{\ast}_j(\pmb{z}'') \overset{(\mathrm{A})}{=} f_j(z''_1)f^{\ast}_{\trans_j(z''_1)}(\suff(\pmb{z}'')) \overset{(\mathrm{B})}{\succeq} f_j(z'_1) \pmb{c}',
\end{equation}
where
(A) follows from (\ref{eq:fstar}),
and (B) follows from (\ref{eq:l1p7sy6pk3qp}).
By (\ref{eq:icbm2ucps4rc}) and $f_j(z'_1) \prec f_j(z''_1)$, we obtain
\begin{equation}
\label{eq:n2w0ykhkz0ut}
\pmb{c}' \in \bar{\PREF}^k_{F, j}(f_j(z'_1))
\end{equation}
by (\ref{eq:pref2}).
By (\ref{eq:2e5xkiwgd6wd}) and (\ref{eq:n2w0ykhkz0ut}), the code-tuple $F$ does not satisfy Definition \ref{def:k-bitdelay} (i).

\item the case $f_j(z'_1) = f_j(z''_1)$: 
We have 
\begin{align*}
f_j(z'_1) f^{\ast}_{\trans_j(z''_1)}(\suff(\pmb{z}''))
&\overset{(\mathrm{A})}{=} f_j(z''_1) f^{\ast}_{\trans_j(z''_1)}(\suff(\pmb{z}''))
\overset{(\mathrm{B})}{\succeq} f_j(z'_1)\pmb{c}',
\end{align*}
where
(A) follows from $f_j(z'_1) = f_j(z''_1)$,
and (B) follows from (\ref{eq:l1p7sy6pk3qp}).
This shows $f^{\ast}_{\trans_j(z''_1)}(\suff(\pmb{z}'')) \succeq \pmb{c}'$, which leads to 
\begin{equation}
\label{eq:xyo1q6tzfxox}
\pmb{c}' \in \PREF^k_{F, \trans_j(z''_1)}
\end{equation}
by (\ref{eq:pref3}).
By $f_j(z'_1) = f_j(z''_1)$, (\ref{eq:xxxjqkzwgkel}), (\ref{eq:2e5xkiwgd6wd}), and (\ref{eq:xyo1q6tzfxox}),
the code-tuple $F$ does not satisfy Definition \ref{def:k-bitdelay} (ii).
\end{itemize}

\end{itemize}

\section{Proof of the Existence of an Optimal Code-tuple}
\label{sec:proof-goodsetTopt}

For $m \in \{1, 2, \ldots, M \coloneqq 2^{(2^k)}\}$, the number of possible tuples $(\trans_0, \trans_1, \ldots, \trans_{m-1})$ (i.e., a tuple of $m$ mappings from $\mathcal{S}$ to $[m]$) is $m^{\sigma m}$, in particular, finite.
Hence, the number of possible vectors $\pmb{\pi}(F') = (\pi_0(F'), \pi_1(F'), \ldots, \pi_{m-1}(F'))$ of a code-tuple $F' \in \mathscr{F}'$ is also finite (cf. Remark \ref{rem:transitionprobability}), where
\begin{equation}
\mathscr{F}' \coloneqq \{F' \in \mathscr{F}_{\irr} \cap \mathscr{F}_{\ext} \cap \mathscr{F}_{k\hdec}: |F'| \leq M\}.
\end{equation}
Therefore, $\mathcal{D} \coloneqq \{\pi_i(F') : F' \in \mathscr{F}', i \in [F']\}$ is a finite set and has the minimum value $\delta \coloneqq \min \mathcal{D}$.
Note that $\delta > 0$ holds since $\pi_i(F') > 0$ for any $F' \in \mathscr{F}'$ and $i \in [F']$ by $\mathscr{F}' \subseteq \mathscr{F}_{\irr}$ and Lemma \ref{lem:kernel} (ii).

Now, we define
\begin{equation}
\mathscr{F}'' \coloneqq \{F'(f', \trans') \in \mathscr{F}': \sum_{i \in [F'], s \in \mathcal{S}} |f'_i(s)| \leq \frac{l}{\delta\nu}\},
\end{equation}
where $l \coloneqq \lceil \log_2 \sigma \rceil$ and $\nu \coloneqq \min_{s \in \mathcal{S}} \mu(s)$.
Note that
\begin{equation}
\label{eq:f0uoc788q7d5}
0 < \nu \leq 1 / \sigma.
\end{equation}
Then $\mathscr{F}''$ is not empty because $\tilde{F}(\tilde{f}_0, \tilde{\trans}_0) \in \mathscr{F}^{(1)}$ defined as the following (\ref{eq:u5w1u1d7szub}) is in $\mathscr{F}''$:
\begin{align}
\label{eq:u5w1u1d7szub}
\tilde{f}_0(s_r) = b(r), \quad \tilde{\trans}_0(s_r) = 0
\end{align}
for $r = 0, 1, 2, \ldots, \sigma-1$, where $\mathcal{S} = \{s_0, s_1, \ldots, s_{\sigma-1}\}$ and $b(r)$ denotes the binary representation of length $l$ of the integer $r$.
In fact, we obtain $\tilde{F} \in \mathscr{F}''$ by
\begin{equation}
\sum_{i \in [\tilde{F}], s \in \mathcal{S}} |\tilde{f}_i(s)| = \sum_{s \in \mathcal{S}} |\tilde{f}_0(s)| =  \sigma l
\overset{(\mathrm{A})}{\leq} \frac{l}{\nu} \overset{(\mathrm{B})}{\leq} \frac{l}{\delta\nu},
\end{equation}
where
(A) follows from (\ref{eq:f0uoc788q7d5}),
and (B) follows from $0 < \delta \leq 1$.
Since $\mathscr{F}''$ is a non-empty and finite set, there exists $F^{\ast} \in \mathscr{F}''$ such that
\begin{equation}
\label{eq:4w134pp591j2}
L(F^{\ast}) = \min_{F'' \in \mathscr{F}''} L(F'').
\end{equation}
To complete the proof, it suffices to show that $L(F^{\ast}) \leq L(F)$ for any $F  \in \mathscr{F}_{\reg} \cap \mathscr{F}_{\ext} \cap \mathscr{F}_{k\hdec}$.

First, we can see that $L(F^{\ast}) \leq L(F')$ for any $F'  \in \mathscr{F}'$ because
for any $F'(f', \trans') \in \mathscr{F}' \setminus \mathscr{F}''$, we have
\begin{align*}
L(F') &= \sum_{i \in [F']}\pi_i(F') L_i(F')\\
&= \sum_{i \in [F']}\pi_i(F') \sum_{s \in \mathcal{S}} \mu(s)|f'_i(s)|\\
&\overset{(\mathrm{A})}{\geq} \delta\nu \sum_{i \in [F'], s \in \mathcal{S}} |f'_i(s)|\\
&\overset{(\mathrm{B})}{>} \delta\nu \cdot \frac{l}{\delta\nu}\\
&= l\\
&= L(\tilde{F})\\
&\overset{(\mathrm{C})}{\geq} L(F^{\ast}),
\end{align*}
where
(A) follows from the definitions of $\delta$ and $\nu$,
(B) follows from $F' \not\in \mathscr{F}''$,
and (C) follows from (\ref{eq:4w134pp591j2}).
Hence, we have
\begin{equation}
\label{eq:ce0kvxm7dbb4}
L(F^{\ast}) = \min_{F' \in \mathscr{F}'} L(F').
\end{equation}

By Theorem \ref{thm:differ}, for any $F \in \mathscr{F}_{\reg} \cap \mathscr{F}_{\ext} \cap \mathscr{F}_{k\hdec}$,
there exists $F' \in \mathscr{F}_{\irr} \cap \mathscr{F}_{\ext} \cap \mathscr{F}_{k\hdec}$ such that $L(F') \leq L(F)$ and $|\prefset^k_{F'}| = |F'|$.
Then we have $F' \in \mathscr{F}'$ because
\begin{equation}
|F'| = |\prefset^k_{F'}| \leq |\mathcal{P}({\mathcal{C}^k})| = 2^{(2^k)} = M,
\end{equation}
where $\mathcal{P}({\mathcal{C}^k})$ denotes the power set of $\mathcal{C}^k$.
Therefore, for any $F \in \mathscr{F}_{\reg} \cap \mathscr{F}_{\ext} \cap \mathscr{F}_{k\hdec}$, we have
\begin{equation}
L(F) \geq L(F') \overset{(\mathrm{A})}{\geq} L(F^{\ast})
\end{equation}
as desired, where
(A) follows from (\ref{eq:ce0kvxm7dbb4}).

\section{Proofs of Lemmas}
\label{sec:proofs}

\subsection{Proof of Lemma \ref{lem:longest}}
\label{sec:proof-longest}

To prove Lemma \ref{lem:longest}, we first show the following Lemma \ref{lem:pref-sum}.
\begin{lemma}
\label{lem:pref-sum}
For any integer $k \geq 0$, $F(f, \trans) \in \mathscr{F}$ and $i \in [F]$, the following statements (i) and (ii) hold.
\begin{itemize}
\item[(i)] $\PREF^k_{F, i} \supseteq \bar{\PREF}^k_{F, i}$.
\item[(ii)] For any $s \in \mathcal{S}$ such that $f_i(s) = \lambda$, we have $\PREF^k_{F, i} \supseteq \PREF^k_{F, \trans_i(s)}$.
\end{itemize}
\end{lemma}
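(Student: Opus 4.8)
The plan is to derive both statements directly from the definitions, using the $\pmb{b}=\lambda$ reformulation (\ref{eq:pref3}) of $\PREF^k_{F,i}$ together with the recursion (\ref{eq:fstar}) for $f^{\ast}_i$. No heavy machinery is needed.

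For part (i), I would simply observe that for every $\pmb{b}\in\mathcal{C}^{\ast}$ the defining requirement $f_i(x_1)\succ\pmb{b}$ in (\ref{eq:pref2}) is strictly stronger than the requirement $f_i(x_1)\succeq\pmb{b}$ in (\ref{eq:pref1}); hence $\bar{\PREF}^k_{F,i}(\pmb{b})\subseteq\PREF^k_{F,i}(\pmb{b})$, and taking $\pmb{b}=\lambda$ gives $\bar{\PREF}^k_{F,i}\subseteq\PREF^k_{F,i}$, which is exactly (i).

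For part (ii), I would fix $s\in\mathcal{S}$ with $f_i(s)=\lambda$ and take an arbitrary $\pmb{c}\in\PREF^k_{F,\trans_i(s)}$. By (\ref{eq:pref3}) there is $\pmb{x}\in\mathcal{S}^{\ast}$ with $f^{\ast}_{\trans_i(s)}(\pmb{x})\succeq\pmb{c}$. Then $s\pmb{x}\in\mathcal{S}^{+}$, and by (\ref{eq:fstar}) together with $f_i(s)=\lambda$ we get $f^{\ast}_i(s\pmb{x})=f_i(s)\,f^{\ast}_{\trans_i(s)}(\pmb{x})=f^{\ast}_{\trans_i(s)}(\pmb{x})\succeq\pmb{c}$, so applying (\ref{eq:pref3}) again yields $\pmb{c}\in\PREF^k_{F,i}$. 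Since $\pmb{c}$ was arbitrary, $\PREF^k_{F,i}\supseteq\PREF^k_{F,\trans_i(s)}$.

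I do not expect any real obstacle here; both parts are immediate unfoldings of the definitions. The only point worth a moment's care is, in (ii), to \emph{prepend} the symbol $s$ rather than append it, so that the ``start with $f_i$'' structure of $f^{\ast}_i$ is used correctly and the witnessing source sequence lies in $\mathcal{S}^{+}$, as the form (\ref{eq:pref3}) requires.
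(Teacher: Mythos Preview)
Your proposal is correct and matches the paper's own proof essentially line for line: part (i) is dispatched by comparing the defining conditions (\ref{eq:pref1}) and (\ref{eq:pref2}), and part (ii) is proved by taking a witness $\pmb{x}$ for $\pmb{c}\in\PREF^k_{F,\trans_i(s)}$, prepending $s$, and using $f_i(s)=\lambda$ together with the recursion for $f^{\ast}_i$ to conclude $f^{\ast}_i(s\pmb{x})\succeq\pmb{c}$. The only cosmetic difference is that the paper cites Lemma~\ref{lem:f_T}(i) rather than (\ref{eq:fstar}) for the factorization $f^{\ast}_i(s\pmb{x})=f_i(s)f^{\ast}_{\trans_i(s)}(\pmb{x})$, which amounts to the same thing.
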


\begin{proof}[Proof of Lemma \ref{lem:pref-sum}]
(Proof of (i)): Directly from (\ref{eq:pref1}) and (\ref{eq:pref2}).

(Proof of (ii)):
Choose $\pmb{c} \in \PREF^k_{F, \trans_i(s)}$ arbitrarily.
Then there exists $\pmb{x} \in \mathcal{S}^{\ast}$ such that
\begin{equation}
\label{eq:epkmwg4abc6w}
f^{\ast}_{\trans_i(s)}(\pmb{x}) \succeq \pmb{c}
\end{equation}
by (\ref{eq:pref3}).
We have 
\begin{equation}
f^{\ast}_i(s\pmb{x})
\overset{(\mathrm{A})}{=} f_i(s)f^{\ast}_{\trans_i(s)}(\pmb{x})
\overset{(\mathrm{B})}{=} f^{\ast}_{\trans_i(s)}(\pmb{x})
\overset{(\mathrm{C})}{\succeq} \pmb{c},
\end{equation}
where
(A) follows from Lemma \ref{lem:f_T} (i),
(B) follows from the assumption,
and (C) follows from (\ref{eq:epkmwg4abc6w}).
This leads to $\pmb{c} \in \PREF^k_{F, i}$.
\end{proof}

\begin{proof}[Proof of Lemma \ref{lem:longest}]
It suffices to show that $|f^{\ast}_i(\pmb{x})| \geq 1$ holds for any $i \in [F]$ and $\pmb{x} \in \mathcal{S}^{|F|}$.
We prove by contradiction assuming that there exist $i \in [F]$ and $\pmb{x} = x_1x_2\ldots x_{|F|} \in \mathcal{S}^{|F|}$ such that $f^{\ast}_i(\pmb{x}) = \lambda$.
Then by pigeonhole principle, we can choose integers $p, q$ such that $0 \leq p < q \leq |F|$ and
\begin{equation}
\label{eq:g2gg9zkmkjej}
\trans^{\ast}_i(x_1x_2\ldots x_p) = \trans^{\ast}_i(x_1x_2\ldots x_q) \eqqcolon j.
\end{equation}
We have
\begin{align}
\trans^{\ast}_j(x_{p+1}x_{p+2}\ldots x_q)
&\overset{(\mathrm{A})}{=} \trans^{\ast}_{\trans^{\ast}_i(x_1x_2\ldots x_p)}(x_{p+1}x_{p+2}\ldots x_q)
\overset{(\mathrm{B})}{=} \trans^{\ast}_i(x_1x_2\ldots x_q) \overset{(\mathrm{C})}{=} j,\label{eq:ghzv2yj7kboz}
\end{align}
where
(A) follows from (\ref{eq:g2gg9zkmkjej}),
(B) follows from Lemma \ref{lem:f_T} (ii),
and (C) follows from (\ref{eq:g2gg9zkmkjej}).
Thus, we obtain
\begin{align}
\PREF^k_{F, \trans_j(x_{p+1})}
&\overset{(\mathrm{A})}{\supseteq} \PREF^k_{F, \trans^{\ast}_j(x_{p+1}x_{p+2})} \nonumber\\
&\overset{(\mathrm{A})}{\supseteq}  \cdots \nonumber\\
&\overset{(\mathrm{A})}{\supseteq}  \PREF^k_{F, \trans^{\ast}_j(x_{p+1}x_{p+2}\ldots x_q)} \nonumber\\
&\overset{(\mathrm{B})}{=} \PREF^k_{F, j}, \label{eq:fmey1lo6g0mb}
\end{align}
where
(A)s follow from Lemma \ref{lem:pref-sum} (ii) and $f^{\ast}_i(\pmb{x}) = \lambda$,
and (B) follows from (\ref{eq:ghzv2yj7kboz}).

We consider the following two cases separately: the case $\bar{\PREF}^k_{F, j} \neq \emptyset$ and the case $\bar{\PREF}^k_{F, j} = \emptyset$.
\begin{itemize}
\item The case $\bar{\PREF}^k_{F, j} \neq \emptyset$: 
We have 
\begin{align*}
\PREF^k_{F, \trans_j(x_{p+1})} \cap \bar{\PREF}^k_{F, j}
&\overset{(\mathrm{A})}{\supseteq} \PREF^k_{F, j}  \cap \bar{\PREF}^k_{F, j}
\overset{(\mathrm{B})}{\supseteq} \bar{\PREF}^k_{F, j} \cap \bar{\PREF}^k_{F, j}
= \bar{\PREF}^k_{F, j}
\overset{(\mathrm{C})}{\neq} \emptyset,
\end{align*}
where
 (A) follows from (\ref{eq:fmey1lo6g0mb}),
 (B) follows from Lemma \ref{lem:pref-sum} (i),
and  (C) follows from the assumption.
Therefore, $F$ does not satisfy Definition \ref{def:k-bitdelay} (i), which conflicts with $F \in \mathscr{F}_{k\hdec}$.
\item  The case $\bar{\PREF}^k_{F, j} = \emptyset$: By Corollary \ref{cor:pref-inc} (ii), we have $\bar{\PREF}^0_{F, j} = \emptyset$.
Hence, by (\ref{eq:pref2}), there is no symbol $s' \in \mathcal{S}$ such that $f_j(s') \succ \lambda$.
Therefore, by $\sigma \geq 2$, there exists $s \in \mathcal{S}$ such that $s \neq x_{p+1}$ and $f_j(s) = \lambda = f_j(x_{p+1})$.
We have
\begin{align*}
\PREF^k_{F, \trans_j(x_{p+1})} \cap \PREF^k_{F, \trans_j(s)}
&\overset{(\mathrm{A})}{\supseteq} \PREF^k_{F, j}  \cap \PREF^k_{F, \trans_j(s)}\\
&\overset{(\mathrm{B})}{\supseteq} \PREF^k_{F, \trans_j(s)} \cap \PREF^k_{F, \trans_j(s)}\\
&= \PREF^k_{F, \trans_j(s)}  \overset{(\mathrm{C})}{\neq} \emptyset,
\end{align*}
where
(A) follows from (\ref{eq:fmey1lo6g0mb}),
(B) follows from Lemma \ref{lem:pref-sum} (ii),
and (C) follows from $F \in \mathscr{F}_{\ext}$ and Corollary \ref{cor:pref-inc} (i).
Therefore, $F$ does not satisfy Definition \ref{def:k-bitdelay} (ii), which conflicts with $F \in \mathscr{F}_{k\hdec}$.
\end{itemize}
\end{proof}

\subsection{Proof of Lemma \ref{lem:stationary}}
\label{subsec:proof-stationary}

In preparation for the proof, we introduce the following Definition \ref{def:closed} and Lemma \ref{lem:kernel02}.

\begin{definition}
\label{def:closed}
Let $F(f, \trans) \in \mathscr{F}$.
A set $\mathcal{I} \subseteq [F]$ is said to be \emph{closed} if 
for any $i \in \mathcal{I}$ and $s \in \mathcal{S}$, it holds that $\trans_i(s) \in \mathcal{I}$.
\end{definition}

\begin{lemma}
\label{lem:kernel02}
For any $F \in \mathscr{F}$ and $\pmb{x} = (x_0, x_1, \ldots, \allowbreak x_{|F|-1}) \in \mathbb{R}^{|F|}$,
if
\begin{equation}
\label{eq:7gvchdcogtn4}
\pmb{x}Q(F) = \pmb{x},
\end{equation}
then both of $\mathcal{I}_+ \coloneqq \{i \in [F] : x_i > 0\}$ and $\mathcal{I}_- \coloneqq \{i \in [F] : x_i < 0\}$ are closed.
\end{lemma}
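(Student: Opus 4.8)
The plan is to prove Lemma \ref{lem:kernel02} by a direct sign-chasing argument on the equation $\pmb{x}Q(F) = \pmb{x}$. Written out coordinatewise, the hypothesis (\ref{eq:7gvchdcogtn4}) says that for every $j \in [F]$,
\[
x_j = \sum_{i \in [F]} x_i Q_{i,j}(F).
\]
Since $Q_{i,j}(F) \geq 0$ for all $i,j$ (each $Q_{i,j}(F)$ is a sum of probabilities $\mu(s)$ by (\ref{eq:9x9htdrx1001})) and $\sum_{j \in [F]} Q_{i,j}(F) = 1$ for each $i$ (because $\trans_i$ maps $\mathcal{S}$ into $[F]$, so every $s$ contributes to exactly one $Q_{i,j}(F)$), the matrix $Q(F)$ is row-stochastic. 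So the plan is to exploit this together with the fixed-point equation to control where positive and negative mass can flow.

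First I would show $\mathcal{I}_+$ is closed, arguing by contradiction: suppose $i \in \mathcal{I}_+$ and $s \in \mathcal{S}$ with $j \coloneqq \trans_i(s) \notin \mathcal{I}_+$, i.e.\ $x_j \leq 0$. The idea is to sum the fixed-point equation over $j \in \mathcal{I}_+$ and compare. Concretely, consider $\sum_{j \in \mathcal{I}_+} x_j = \sum_{j \in \mathcal{I}_+} \sum_{i \in [F]} x_i Q_{i,j}(F)$. On the right-hand side, the terms with $x_i \leq 0$ contribute something $\leq 0$, so $\sum_{j \in \mathcal{I}_+} x_j \leq \sum_{i \in \mathcal{I}_+} x_i \sum_{j \in \mathcal{I}_+} Q_{i,j}(F) \leq \sum_{i \in \mathcal{I}_+} x_i$, using $\sum_{j \in \mathcal{I}_+} Q_{i,j}(F) \leq 1$ and $x_i > 0$. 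For the chain of inequalities to be consistent we need equality throughout; but if some $i_0 \in \mathcal{I}_+$ has $\trans_{i_0}(s_0) = j_0 \notin \mathcal{I}_+$, then $\sum_{j \in \mathcal{I}_+} Q_{i_0, j}(F) \leq 1 - \mu(s_0) < 1$, which strictly decreases the middle expression — a contradiction. Hence no such $s$ exists and $\mathcal{I}_+$ is closed. The argument for $\mathcal{I}_-$ is identical after replacing $\pmb{x}$ by $-\pmb{x}$, which also satisfies (\ref{eq:7gvchdcogtn4}), so $\mathcal{I}_- = \{i : (-x_i) > 0\}$ is closed by the same reasoning.

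I expect the main obstacle to be stating the comparison cleanly enough that the strictness is visible: one must carefully track that if $\mathcal{I}_+$ fails to be closed then at least one row-sum restricted to $\mathcal{I}_+$ drops strictly below $1$ while being weighted by a strictly positive $x_i$, and that all the other inequalities in the chain go the ``right'' way so the strict drop cannot be compensated. A robust way to phrase this is to look at the single identity $\sum_{j \in \mathcal{I}_+} x_j = \sum_{i \in [F]} x_i \bigl(\sum_{j \in \mathcal{I}_+} Q_{i,j}(F)\bigr)$ and split the outer sum into $i \in \mathcal{I}_+$ and $i \notin \mathcal{I}_+$; the second group is $\leq 0$ term by term, and the first group is $\leq \sum_{i \in \mathcal{I}_+} x_i$ with equality iff $\sum_{j \in \mathcal{I}_+} Q_{i,j}(F) = 1$ for every $i \in \mathcal{I}_+$. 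Since the left side equals $\sum_{i \in \mathcal{I}_+} x_i$ exactly, both groups must be tight, forcing $\sum_{j \in \mathcal{I}_+} Q_{i,j}(F) = 1$ for all $i \in \mathcal{I}_+$, which is precisely the statement that $\trans_i(s) \in \mathcal{I}_+$ for all $i \in \mathcal{I}_+$ and $s \in \mathcal{S}$. That completes the proof; no heavy machinery is needed, just the row-stochasticity of $Q(F)$ and positivity of $\mu$.
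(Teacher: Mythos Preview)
Your proof is correct and follows essentially the same approach as the paper: both arguments sum the fixed-point equation over $j \in \mathcal{I}_+$, use row-stochasticity of $Q(F)$ to bound the result by $\sum_{i \in \mathcal{I}_+} x_i$, and then observe that equality forces $\sum_{j \in \mathcal{I}_+} Q_{i,j}(F) = 1$ for every $i \in \mathcal{I}_+$, hence $Q_{i,j}(F) = 0$ for $i \in \mathcal{I}_+$, $j \notin \mathcal{I}_+$ (using $\mu(s) > 0$). Your presentation is slightly more streamlined, and handling $\mathcal{I}_-$ by replacing $\pmb{x}$ with $-\pmb{x}$ is a clean shortcut the paper leaves implicit.
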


\begin{proof}[Proof of Lemma \ref{lem:kernel02}]
By symmetry, it suffices to prove only that $\mathcal{I}_+$ is closed.
We have
\begin{align}
\lefteqn{\sum_{i \in \mathcal{I}_+} \sum_{j \in \mathcal{I}_+} x_j Q_{j, i}(F) + \sum_{i \in \mathcal{I}_+} \sum_{j \in [F] \setminus \mathcal{I}_+} x_j Q_{j, i}(F)}\nonumber\\
&= \sum_{i \in \mathcal{I}_+} \sum_{j \in [F]} x_jQ_{j, i}(F)\nonumber\\
&\overset{(\mathrm{A})}{=} \sum_{i \in \mathcal{I}_+} x_i \nonumber\\
&\overset{(\mathrm{B})}{=} \sum_{i \in \mathcal{I}_+} x_i \sum_{j \in [F]} Q_{i, j}(F) \nonumber\\
&= \sum_{i \in \mathcal{I}_+} \sum_{j \in [F]}  x_i Q_{i, j}(F) \nonumber\\
&= \sum_{i \in \mathcal{I}_+} \sum_{j \in \mathcal{I}_+}  x_i Q_{i, j}(F) + \sum_{i \in \mathcal{I}_+} \sum_{j \in [F] \setminus \mathcal{I}_+}  x_i Q_{i, j}(F) \nonumber\\
&\overset{(\mathrm{C})}{=} \sum_{i \in \mathcal{I}_+} \sum_{j \in \mathcal{I}_+}  x_j Q_{j, i}(F) + \sum_{i \in \mathcal{I}_+} \sum_{j \in [F] \setminus \mathcal{I}_+}  x_i Q_{i, j}(F), \label{eq:ztvib4xnlpoi}
\end{align}
where
(A) follows since from (\ref{eq:7gvchdcogtn4}),
(B) follows from $\sum_{j \in [F]} Q_{i, j}(F) = 1$ for any $i \in [F]$,
and (C) is obtained by exchanging the roles of $i$ and $j$ in the first term.
Therefore, we have 
\begin{align*}
\label{eq:stationary7}
0 &\overset{(\mathrm{A})}{\geq} \sum_{i \in \mathcal{I}_+} \sum_{j \in [F] \setminus \mathcal{I}_+} x_j Q_{j, i}(F)
\overset{(\mathrm{B})}{=} \sum_{i \in \mathcal{I}_+} \sum_{j \in [F] \setminus \mathcal{I}_+}  x_i Q_{i, j}(F)
\overset{(\mathrm{C})}{\geq} 0,
\end{align*}
where
(A) follows since $x_j \leq 0$ for any $j \in [F] \setminus \mathcal{I}_+$,
(B) is obtained by eliminating the first terms from the leftmost and rightmost sides of (\ref{eq:ztvib4xnlpoi}),
and (C) follows since $x_i > 0$ for any $i \in \mathcal{I}_+$.
This shows
\begin{equation*}
\sum_{i \in \mathcal{I}_+} \sum_{j \in [F] \setminus \mathcal{I}_+}  x_i Q_{i, j}(F) = 0.
\end{equation*}
Since $x_i > 0$ holds for any $i \in \mathcal{I}_+$, it must hold that $Q_{i, j}(F) = 0$ for any $i \in \mathcal{I}_+$ and $j \in [F] \setminus \mathcal{I}_+$.
This implies that for any $i \in \mathcal{I}_+$ and $s \in \mathcal{S}$, we have $\trans_i(s) \in \mathcal{I}_+$;
that is, $\mathcal{I}_{+}$ is closed as desired.
\end{proof}

\begin{proof}[Proof of Lemma \ref{lem:stationary}]
Equation (\ref{eq:stationary1}) can be rewritten as
\begin{equation}
\label{eq:weaw3ce74e22}
\pmb{\pi}A = \pmb{0},
\end{equation}
where $A = (A_{i, j}) \coloneqq Q(F) - E$ and $E$ is the identity matrix.
We have $\det A = 0$ because the sum of each row of $A$ equals $0$:
for any $i \in [F]$, we have
\begin{align*}
\sum_{j \in [F]} A_{i, j} 
&= \sum_{j \in [F]} (Q_{i, j}(F) - \delta_{ij})\\
&= \sum_{j \in [F]} Q_{i, j}(F) - \sum_{j \in [F]} \delta_{ij}\\
&= \sum_{j \in [F]} Q_{i, j}(F) - 1\\
&= \sum_{j \in [F]}\sum_{s \in \mathcal{S}, \trans_i(s) = j} \mu(s) - 1\\
&= \sum_{s \in \mathcal{S}} \mu(s) - 1\\
&= 0,
\end{align*}
where $\delta_{ij}$ denotes Kronecker delta.
Thus, the dimension of the null space of $A$ is greater than or equal to $1$.
In particular, Equation (\ref{eq:weaw3ce74e22}), which is equivalent to (\ref{eq:stationary1}), has a non-trivial solution $\pmb{\pi} \neq \pmb{0}$.
We choose such $\pmb{\pi} = (\pi_0, \pi_1, \ldots, \pi_{|F|-1}) \neq \pmb{0}$.
Then both of $\mathcal{I}_+ \coloneqq \{i \in [F] : \pi_i > 0\}$ and $\mathcal{I}_- \coloneqq \{i \in [F] : \pi_i < 0\}$ are closed by Lemma \ref{lem:kernel02}.
Hence, we have
\begin{align}
\label{eq:xbtsa8mbdrio}
{}^{\forall}i \in \mathcal{I}_+; {}^{\forall} j \in [F] \setminus \mathcal{I}_{+}; & Q_{i, j}(F) = 0, \\
\label{eq:equmcbqpefoo}
{}^{\forall}i \in \mathcal{I}_-; {}^{\forall} j \in [F] \setminus \mathcal{I}_{-}; & Q_{i, j}(F) = 0.
\end{align}

Since $\pmb{\pi} \neq \pmb{0}$, we have $\sum_{i \in [F]} |\pi_i| > 0$ and thus we can define $\pmb{\pi}' = (\pi'_0, \pi'_1, \ldots, \pi'_{|F|-1}) \in \mathbb{R}^{|F|}$ as
\begin{equation}
\label{eq:rj3ye3n6tg30}
\pi'_i = \frac{|\pi_i|}{\sum_{i \in [F]} |\pi_i|}
\end{equation}
for $i \in [F]$.
This vector $\pmb{\pi}'$ is a desired stationary distribution of $F$.
In fact, by the definition, $\pmb{\pi}'$ clearly satisfies (\ref{eq:stationary2}) and $\pi_i' \geq 0$ for any $i \in [F]$.
Also, we can see that $\pmb{\pi}'$ satisfies (\ref{eq:stationary1}) because
for any $j \in [F]$, we have
\begin{align*}
\lefteqn{\Big(\sum_{i \in [F]} |\pi_i|\Big) \Big(\sum_{i \in [F]} \pi'_i Q_{i, j}(F) \Big)}\\
&\overset{(\mathrm{A})}{=} \sum_{i \in [F]}  |\pi_i|Q_{i, j}(F)\\
&=  \sum_{i \in \mathcal{I}_+}  \pi_i Q_{i, j}(F)- \sum_{i \in \mathcal{I}_-} \pi_i Q_{i, j}(F) \\
&\overset{(\mathrm{B})}{=}  \begin{cases}
\sum_{i \in \mathcal{I}_+} \pi_i Q_{i, j}(F)  &\,\,\text{if}\,\, j \in \mathcal{I}_+,\\
- \sum_{i \in \mathcal{I}_-} \pi_i Q_{i, j}(F)  &\,\,\text{if}\,\, j \in \mathcal{I}_-,\\
0 &\,\,\text{otherwise},\\
\end{cases}\\
&\overset{(\mathrm{C})}{=}  \begin{cases}
\sum_{i \in \mathcal{I}_+} \pi_iQ_{i, j}(F)  + \sum_{i \in \mathcal{I}_-} \pi_iQ_{i, j}(F)  &\,\,\text{if}\,\, j \in \mathcal{I}_+,\\
-\sum_{i \in \mathcal{I}_+} \pi_iQ_{i, j}(F)  - \sum_{i \in \mathcal{I}_-} \pi_iQ_{i, j}(F)  &\,\,\text{if}\,\, j \in \mathcal{I}_-,\\
0 &\,\,\text{otherwise},\\
\end{cases}\\
&=  \begin{cases}
\sum_{i \in [F]} \pi_iQ_{i, j}(F)  &\,\,\text{if}\,\, j \in \mathcal{I}_+,\\
-\sum_{i \in [F]} \pi_iQ_{i, j}(F) &\,\,\text{if}\,\, j \in \mathcal{I}_-,\\
0 &\,\,\text{otherwise},\\
\end{cases}\\
&\overset{(\mathrm{D})}{=}  \begin{cases}
\pi_j &\,\,\text{if}\,\, j \in \mathcal{I}_+,\\
- \pi_j &\,\,\text{if}\,\, j \in \mathcal{I}_-,\\
0 &\,\,\text{otherwise},\\
\end{cases}\\
&= |\pi_j|\\
&\overset{(\mathrm{E})}{=} \Big(\sum_{i \in [F]} |\pi_i|\Big) \pi'_j,
\end{align*}
where
(A) follows from (\ref{eq:rj3ye3n6tg30}),
(B) follows from (\ref{eq:xbtsa8mbdrio}) and (\ref{eq:equmcbqpefoo}),
(C) follows from (\ref{eq:xbtsa8mbdrio}) and (\ref{eq:equmcbqpefoo}),
(D) follows since $\pmb{\pi}$ is a stationary distribution of $F$,
and (E) follows from (\ref{eq:rj3ye3n6tg30}).
\end{proof}

\subsection{Proof of Lemma \ref{lem:duplicate}}
\label{subsec:proof-duplicate}

\begin{proof}[Proof of Lemma \ref{lem:duplicate}]
(Proof of (i)):
We first show that $f'^{\ast}_i(\pmb{x}) = f^{\ast}_{\varphi(i)}(\pmb{x})$ for any $i \in [F']$ and $\pmb{x} \in \mathcal{S}^{\ast}$ by induction for $|\pmb{x}|$.
For the base case $|\pmb{x}| = 0$, we have $f'^{\ast}_i(\lambda) = \lambda = f^{\ast}_{\varphi(i)}(\lambda)$ by (\ref{eq:fstar}).
We consider the induction step for $|\pmb{x}| \geq 1$.
We have
\begin{align*}
f'^{\ast}_i(\pmb{x})
&\overset{(\mathrm{A})}{=} f'_i(x_1)f'^{\ast}_{\trans'_i(x_1)}(\suff(\pmb{x}))\\
&\overset{(\mathrm{B})}{=} f_{\varphi(i)}(x_1)f'^{\ast}_{\trans'_i(x_1)}(\suff(\pmb{x}))\\
&\overset{(\mathrm{C})}{=} f_{\varphi(i)}(x_1)f^{\ast}_{\varphi(\trans'_i(x_1))}(\suff(\pmb{x}))\\
&\overset{(\mathrm{D})}{=} f_{\varphi(i)}(x_1)f^{\ast}_{\trans_{\varphi(i)}(x_1)}(\suff(\pmb{x}))\\
&\overset{(\mathrm{E})}{=} f^{\ast}_{\varphi(i)}(\pmb{x})
\end{align*}
as desired, where
(A) follows from (\ref{eq:fstar}),
(B) follows from (\ref{eq:phi1}),
(C) follows from the induction hypothesis,
(D) follows from (\ref{eq:phi2}),
and (E) follows from (\ref{eq:fstar}).

Next, we show that $\varphi(\trans'^{\ast}_i(\pmb{x})) = \trans^{\ast}_{\varphi(i)}(\pmb{x})$ for any $i \in [F']$ and $\pmb{x} \in \mathcal{S}^{\ast}$ by induction for $|\pmb{x}|$.
For the base case $|\pmb{x}| = 0$, we have $\varphi(\trans'^{\ast}_i(\lambda)) = \varphi(i) = \trans^{\ast}_{\varphi(i)}(\lambda)$ by (\ref{eq:tstar}).
We consider the induction step for $|\pmb{x}| \geq 1$.
We have
\begin{align*}
\varphi(\trans'^{\ast}_i(\pmb{x}))
&\overset{(\mathrm{A})}{=} \varphi(\trans'^{\ast}_{\trans'_i(x_1)}(\suff(\pmb{x})))
\overset{(\mathrm{B})}{=} \trans^{\ast}_{\varphi(\trans'_i(x_1))}(\suff(\pmb{x}))\\
&\overset{(\mathrm{C})}{=} \trans^{\ast}_{\trans_{\varphi(i)}(x_1)}(\suff(\pmb{x}))
\overset{(\mathrm{D})}{=} \trans^{\ast}_{\varphi(i)}(\pmb{x})
\end{align*}
as desired, where
(A) follows from (\ref{eq:tstar}),
(B) follows from the induction hypothesis,
(C) follows from (\ref{eq:phi2}),
and (D) follows from (\ref{eq:tstar}).

(Proof of (ii)):
For any $\pmb{c} \in \mathcal{C}^{\ast}$, we have
\begin{align*}
\pmb{c} \in \PREF^{\ast}_{F', i}(\pmb{b})
&\overset{(\mathrm{A})}{\iff} {}^{\exists}\pmb{x} \in \mathcal{S}^{+}; (f'^{\ast}_i(\pmb{x}) \succeq \pmb{b}\pmb{c}, f'_i(x_1) \succeq \pmb{b})\\
&\overset{(\mathrm{B})}{\iff} {}^{\exists}\pmb{x} \in \mathcal{S}^{+}; (f^{\ast}_{\varphi(i)}(\pmb{x}) \succeq \pmb{b}\pmb{c}, f_{\varphi(i)}(x_1) \succeq \pmb{b})\\
&\overset{(\mathrm{C})}{\iff} \pmb{c} \in \PREF^{\ast}_{F, \varphi(i)}(\pmb{b}),
\end{align*}
where
(A) follows from (\ref{eq:pref1}),
(B) follows from (i) of this lemma,
and (C) follows from (\ref{eq:pref1}).
This shows that $\PREF^{\ast}_{F', i}(\pmb{b}) = \PREF^{\ast}_{F, \varphi(i)}(\pmb{b})$.
We can prove $\bar{\PREF}^{\ast}_{F', i}(\pmb{b}) = \bar{\PREF}^{\ast}_{F, \varphi(i)}(\pmb{b})$ by the same way using (\ref{eq:pref2}).

(Proof of (iii)):
For any $i' \in [F']$ and $j \in [F]$, we have
\begin{align}
\sum_{j' \in \mathcal{A}_j} Q_{i', j'}(F')
&\overset{(\mathrm{A})}{=} \sum_{j' \in \mathcal{A}_j} \sum_{\substack{s \in \mathcal{S}\\ \trans'_{i'}(s) = j'}} \mu(s)
= \sum_{\substack{s \in \mathcal{S}\\ \trans'_{i'}(s) \in \mathcal{A}_j}} \mu(s)\nonumber\\
&\overset{(\mathrm{B})}{=} \sum_{\substack{s \in \mathcal{S}\\ \varphi(\trans'_{i'}(s)) = j}} \mu(s)
\overset{(\mathrm{C})}{=} \sum_{\substack{s \in \mathcal{S}\\ \trans_{\varphi(i')}(s) = j}} \mu(s)\nonumber\\
&\overset{(\mathrm{D})}{=} Q_{\varphi(i'), j}(F)
= Q_{i, j}(F)  \label{eq:np8qbt6u0hqz},
\end{align}
where $i \coloneqq \varphi(i')$ and
(A) follows from (\ref{eq:9x9htdrx1001}),
(B) follows from (\ref{eq:twqab1injvre}),
(C) follows from (\ref{eq:phi2}),
and (D) follows from (\ref{eq:9x9htdrx1001}).

Thus, for any $j \in [F]$, we have
\begin{align}
\pi_j &= \sum_{j' \in \mathcal{A}_j} \pi'_{j'}\nonumber\\
&\overset{(\mathrm{A})}{=} \sum_{j' \in \mathcal{A}_j} \sum_{i' \in [F']} \pi'_{i'}Q_{i', j'}(F')\nonumber\\
&=\sum_{j' \in \mathcal{A}_j} \sum_{i \in [F]} \sum_{i' \in \mathcal{A}_i} \pi'_{i'}Q_{i', j'}(F')\nonumber\\
&= \sum_{i \in [F]} \sum_{i' \in \mathcal{A}_i} \pi'_{i'} \sum_{j' \in \mathcal{A}_j} Q_{i', j'}(F')\nonumber\\
&\overset{(\mathrm{B})}{=} \sum_{i \in [F]} \sum_{i' \in \mathcal{A}_i} \pi'_{i'} Q_{i, j}(F)\nonumber\\
&= \sum_{i \in [F]} Q_{i, j}(F) \sum_{i' \in \mathcal{A}_i} \pi'_{i'} \nonumber\\
&= \sum_{i \in [F]} Q_{i, j}(F) \pi_i \label{eq:kjjrie6kps13},
\end{align}
where
(A) follows since $\pmb{\pi}'$ satisfies (\ref{eq:stationary1}),
and (B) follows from (\ref{eq:np8qbt6u0hqz}) and $i' \in \mathcal{A}_i$.

Also, we have
\begin{align}
\label{eq:p925jvkoor03}
\sum_{i \in [F]} \pi_i
= \sum_{i \in [F]} \sum_{i' \in \mathcal{A}_i} \pi'_{i'}
= \sum_{i' \in [F']} \pi'_{i'}
&\overset{(\mathrm{A})}{=} 1,
\end{align}
where (A) follows since $\pmb{\pi}'$ satisfies (\ref{eq:stationary2}).
By (\ref{eq:kjjrie6kps13}) and (\ref{eq:p925jvkoor03}), $\pmb{\pi}$ is a stationary distribution of $F$.

(Proof of (iv)):
We have
\begin{align*}
F \in \mathscr{F}_{\ext}
&\iff {}^{\forall}i \in [F]; \PREF^1_{F, i} \neq \emptyset\\
&\implies {}^{\forall}i' \in [F']; \PREF^1_{F, \varphi(i')} \neq \emptyset\\
&\overset{(\mathrm{A})}{\iff} {}^{\forall}i' \in [F']; \PREF^1_{F', i'} \neq \emptyset\\
&\iff F' \in \mathscr{F}_{\ext},
\end{align*}
where (A) follows from (ii) of this lemma.

(Proof of (v)):
By $F, F' \in \mathscr{F}_{\reg}$, the code-tuples $F$ and $F'$ have the unique stationary distributions $\pmb{\pi}(F)$ and $\pmb{\pi}(F')$, respectively.
By (iii) of this lemma, we have
\begin{equation}
\label{eq:npipchud5ezq}
{}^{\forall} j \in [F]; \pi_j(F) = \sum_{j' \in \mathcal{A}_j} \pi_{j'}(F'),
\end{equation}
where 
\begin{equation}
\label{eq:kr0b6tszizg4}
\mathcal{A}_i \coloneqq \{i' \in [F'] : \varphi(i') = i\}
\end{equation}
for $i \in [F]$.
Therefore, we have
\begin{align*}
L(F') &= \sum_{i' \in [F']} \pi_{i'}(F') L_{i'}(F')\\
&= \sum_{i \in [F]} \sum_{i' \in \mathcal{A}_i} \pi_{i'}(F') L_{i'}(F')\\
&\overset{(\mathrm{A})}{=} \sum_{i \in [F]} \sum_{i' \in \mathcal{A}_i} \pi_{i'}(F') L_{\varphi(i')}(F)\\
&\overset{(\mathrm{B})}{=} \sum_{i \in [F]} \sum_{i' \in \mathcal{A}_i} \pi_{i'}(F') L_i(F)\\
&= \sum_{i \in [F]}  L_i(F) \sum_{i' \in \mathcal{A}_i} \pi_{i'}(F')\\
&\overset{(\mathrm{C})}{=} \sum_{i \in [F]} \pi_i(F) L_i(F)\\
&=  L(F)
\end{align*}
as desired, where
(A) follows from (\ref{eq:phi1}) (cf. Remark \ref{rem:transitionprobability}),
(B) follows from (\ref{eq:kr0b6tszizg4}) and $i' \in \mathcal{A}_i$,
and (C) follows from (\ref{eq:npipchud5ezq}).

(Proof of (vi)):
For any $i \in [F']$ and $s \in \mathcal{S}$, we have
\begin{align*}
\PREF^k_{F', \trans'_i(s)} \cap \bar{\PREF}^k_{F', i}(f'_i(s))
&\overset{(\mathrm{A})}{=} \PREF^k_{F, \varphi(\trans'_i(s))} \cap \bar{\PREF}^k_{F, \varphi(i)}(f'_i(s))\\
&\overset{(\mathrm{B})}{=} \PREF^k_{F, \trans_{\varphi(i)}(s)} \cap \bar{\PREF}^k_{F, \varphi(i)}(f_{\varphi(i)}(s))\\
&\overset{(\mathrm{C})}{=} \emptyset,
\end{align*}
where
(A) follows from (ii) of this lemma,
(B) follows from (\ref{eq:phi1}) and (\ref{eq:phi2}),
and (C) follows from $F \in \mathscr{F}_{k\hdec}$.
Namely, $F'$ satisfies Definition \ref{def:k-bitdelay} (i).

Choose $i \in [F']$ and $s, s' \in \mathcal{S}$ such that $s \neq s'$ and $f'_i(s) = f'_i(s')$ arbitrarily.
Then by (\ref{eq:phi1}), we have
\begin{equation}
\label{eq:ezqwi02upw8m}
f_{\varphi(i)}(s) = f'_i(s) = f'_i(s') = f_{\varphi(i)}(s').
\end{equation}
Thus, we obtain
\begin{align*}
\PREF^k_{F', \trans'_i(s)} \cap \PREF^k_{F', \trans'_i(s')}
&\overset{(\mathrm{A})}{=} \PREF^k_{F, \varphi(\trans'_i(s))} \cap \PREF^k_{F, \varphi(\trans'_i(s'))}
\overset{(\mathrm{B})}{=} \PREF^k_{F, \trans_{\varphi(i)}(s)} \cap \PREF^k_{F, \trans_{\varphi(i)}(s')}
\overset{(\mathrm{C})}{=} \emptyset,
\end{align*}
where
(A) follows from (ii) of this lemma,
(B) follows from (\ref{eq:phi2}),
(C) follows from (\ref{eq:ezqwi02upw8m}) and $F \in \mathscr{F}_{k\hdec}$.
Namely, $F'$ satisfies Definition \ref{def:k-bitdelay} (ii).
\end{proof}

\subsection{Proof of Lemma \ref{lem:kernel}}
\label{subsec:proof-kernel}

To prove Lemma \ref{lem:kernel}, we first prove the following Lemmas \ref{lem:closed2}--\ref{lem:kernel0}.
Lemmas \ref{lem:closed2} and \ref{lem:closed} relate to closed sets defined in Appendix \ref{subsec:proof-stationary}.

\begin{lemma}
\label{lem:closed2}
For any $F \in \mathscr{F}$, the following statements (i) and (ii) hold.
\begin{itemize}
\item[(i)] $\kernel_F$ is closed.
\item[(ii)] For any non-empty closed set $\mathcal{I} \subseteq [F]$, we have $\kernel_F \subseteq \mathcal{I}$.
\end{itemize}
\end{lemma}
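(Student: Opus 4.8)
The plan is to prove the two statements about $\kernel_F$ directly from Definition~\ref{def:kernel}, using the basic composition identity $\trans^{\ast}_j(\pmb{x}\pmb{y}) = \trans^{\ast}_{\trans^{\ast}_j(\pmb{x})}(\pmb{y})$ from Lemma~\ref{lem:f_T}~(ii) together with the fact that $\trans^{\ast}_j(s) = \trans_j(s)$ for a single-symbol string (which follows from Definition~\ref{def:f_T}).

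For part~(i), I would take $i \in \kernel_F$ and $s \in \mathcal{S}$ arbitrary, and show $\trans_i(s) \in \kernel_F$. So fix any $j \in [F]$; I must produce $\pmb{x} \in \mathcal{S}^{\ast}$ with $\trans^{\ast}_j(\pmb{x}) = \trans_i(s)$. Since $i \in \kernel_F$, there exists $\pmb{w} \in \mathcal{S}^{\ast}$ with $\trans^{\ast}_j(\pmb{w}) = i$. Then setting $\pmb{x} \coloneqq \pmb{w}s$, Lemma~\ref{lem:f_T}~(ii) gives $\trans^{\ast}_j(\pmb{w}s) = \trans^{\ast}_{\trans^{\ast}_j(\pmb{w})}(s) = \trans^{\ast}_i(s) = \trans_i(s)$, as required. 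Hence $\trans_i(s) \in \kernel_F$, so $\kernel_F$ is closed.

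For part~(ii), let $\mathcal{I} \subseteq [F]$ be a non-empty closed set, and pick some $i_0 \in \mathcal{I}$. I claim $\kernel_F \subseteq \mathcal{I}$. First, since $\mathcal{I}$ is closed, a straightforward induction on $|\pmb{x}|$ shows $\trans^{\ast}_{i_0}(\pmb{x}) \in \mathcal{I}$ for every $\pmb{x} \in \mathcal{S}^{\ast}$: the base case is $\trans^{\ast}_{i_0}(\lambda) = i_0 \in \mathcal{I}$, and the step uses $\trans^{\ast}_{i_0}(\pmb{x}) = \trans^{\ast}_{\trans_{i_0}(x_1)}(\suff(\pmb{x}))$ from Definition~\ref{def:f_T} together with closedness to move the starting index inside $\mathcal{I}$ (alternatively write $\pmb{x} = \pmb{x}'s$ and apply Lemma~\ref{lem:f_T}~(ii) plus closedness at the last step). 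Now take any $i \in \kernel_F$. By definition of $\kernel_F$ applied with $j = i_0$, there exists $\pmb{x} \in \mathcal{S}^{\ast}$ with $\trans^{\ast}_{i_0}(\pmb{x}) = i$; but the claim just established forces $\trans^{\ast}_{i_0}(\pmb{x}) \in \mathcal{I}$, so $i \in \mathcal{I}$. Therefore $\kernel_F \subseteq \mathcal{I}$.

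I do not anticipate a genuine obstacle here; the only mild care needed is getting the induction in part~(ii) to line up cleanly with whichever recursion (the $\suff$-based one from Definition~\ref{def:f_T} or the concatenation one from Lemma~\ref{lem:f_T}~(ii)) one chooses, and making sure the single-symbol identity $\trans^{\ast}_i(s) = \trans_i(s)$ is invoked correctly in part~(i). Both are immediate unwindings of the definitions.
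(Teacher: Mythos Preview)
Your proposal is correct and matches the paper's proof essentially verbatim for part~(i). For part~(ii) the paper phrases the same idea as a contradiction (follow the path $i_l \coloneqq \trans^{\ast}_j(x_1\ldots x_l)$ from $j \in \mathcal{I}$ to $i \notin \mathcal{I}$ and find the first exit, contradicting closedness), whereas you phrase it as a direct induction that the trajectory stays in $\mathcal{I}$; these are the same argument.
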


\begin{proof}[Proof of Lemma \ref{lem:closed2}]
(Proof of (i)):
Choose $i \in \kernel_F$ and $s \in \mathcal{S}$ arbitrarily.
For any $j \in [F]$, there exists $\pmb{x} \in \mathcal{S}^{\ast}$ such that $\trans^{\ast}_j(\pmb{x}) = i$, which leads to
\begin{equation}
\trans^{\ast}_j(\pmb{x}s) \overset{(\mathrm{A})}{=} \trans_{\trans^{\ast}_j(\pmb{x})}(s) = \trans_i(s),
\end{equation}
where (A) follows from Lemma \ref{lem:f_T} (ii).
This shows $\trans_i(s) \in \kernel_F$.

(Proof of (ii)):
Choose $i \in \kernel_F$ arbitrarily.
We prove $i \in \mathcal{I}$ by contradiction assuming the contrary $i \not \in \mathcal{I}$.
Since $\mathcal{I} \neq \emptyset$, we can choose $j \in \mathcal{I}$.
By $i \in \kernel_F$, there exists $\pmb{x} = x_1x_2\ldots x_n \in \mathcal{S}^{\ast}$ such that $\trans^{\ast}_j(\pmb{x}) = i$.
We define $i_l \coloneqq \trans^{\ast}_j(x_1x_2\ldots x_l)$ for $l = 0, 1, 2, \ldots, n$.
Since $i_0 = \trans^{\ast}_j(\lambda) = j \in \mathcal{I}$ and $i_n = \trans^{\ast}_j(\pmb{x}) = i \not \in \mathcal{I}$,
there exists an integer $0 \leq l < n$ such that $i_l \in \mathcal{I}$ and $i_{l+1} = \trans_{i_l}(x_{l+1}) \not\in \mathcal{I}$.
This conflicts with that $\mathcal{I}$ is closed.
\end{proof}

\begin{lemma}
\label{lem:closed}
For any $F \in \mathscr{F}$ and non-empty closed set $\mathcal{I} \subseteq [F]$, the following statements (i) and (ii) hold.
\begin{itemize}
\item[(i)] There exist $F' \in \mathscr{F}^{(|\mathcal{I}|)}$ and an injective homomorphism $\varphi : [F'] \rightarrow [F]$ from $F'$ to $F$ such that $\mathcal{I} = \varphi([F']) \coloneqq \{\varphi(i) : i \in [F']\}$.
\item[(ii)] There exists a stationary distribution $\pmb{\pi} = (\pi_0, \allowbreak \pi_1, \ldots, \pi_{|F|-1})$ of $F$ such that $\pi_i = 0$ for any $i \in [F] \setminus \mathcal{I}$.
\end{itemize}
\end{lemma}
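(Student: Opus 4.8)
The plan is to establish (i) by an explicit construction of $F'$ and the map $\varphi$, and then obtain (ii) as an immediate consequence of (i) together with Lemmas \ref{lem:stationary} and \ref{lem:duplicate}.

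For (i), I would first fix an enumeration $\mathcal{I} = \{j_0, j_1, \ldots, j_{|\mathcal{I}|-1}\}$ of the (nonempty, hence at least one) elements of $\mathcal{I}$, set $m' \coloneqq |\mathcal{I}| \geq 1$, and define $\varphi : [m'] \rightarrow [F]$ by $\varphi(a) \coloneqq j_a$. This $\varphi$ is injective and satisfies $\varphi([m']) = \mathcal{I}$ by construction, so it only remains to define $F'(f', \trans') \in \mathscr{F}^{(m')}$ so that $\varphi$ becomes a homomorphism from $F'$ to $F$. I would put $f'_a(s) \coloneqq f_{\varphi(a)}(s)$ for $a \in [F']$ and $s \in \mathcal{S}$, which makes (\ref{eq:phi1}) hold by definition. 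For the transition maps, the key observation is that $\mathcal{I}$ is closed, so $\trans_{\varphi(a)}(s) \in \mathcal{I} = \varphi([F'])$ for every $a \in [F']$ and $s \in \mathcal{S}$; since $\varphi$ is injective there is a unique $b \in [F']$ with $\varphi(b) = \trans_{\varphi(a)}(s)$, and I would define $\trans'_a(s) \coloneqq b$. Then $\varphi(\trans'_a(s)) = \varphi(b) = \trans_{\varphi(a)}(s)$, so (\ref{eq:phi2}) holds and $\varphi$ is a homomorphism from $F'$ to $F$ with $\mathcal{I} = \varphi([F'])$, as required.

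For (ii), I would apply Lemma \ref{lem:stationary} to $F'$ to obtain a stationary distribution $\pmb{\pi}' = (\pi'_0, \pi'_1, \ldots, \pi'_{|F'|-1})$ of $F'$ (the nonnegativity part of Lemma \ref{lem:stationary} is not even needed here). Applying Lemma \ref{lem:duplicate} (iii) to the homomorphism $\varphi$ from part (i), the vector $\pmb{\pi} = (\pi_0, \pi_1, \ldots, \pi_{|F|-1})$ defined by $\pi_j = \sum_{j' \in \mathcal{A}_j} \pi'_{j'}$ with $\mathcal{A}_j \coloneqq \{j' \in [F'] : \varphi(j') = j\}$ is a stationary distribution of $F$. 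For $j \in [F] \setminus \mathcal{I} = [F] \setminus \varphi([F'])$ we have $\mathcal{A}_j = \emptyset$, hence $\pi_j = 0$ (empty sum), which is exactly the assertion of (ii).

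I do not anticipate a serious obstacle; the only place demanding care is the well-definedness of $\trans'$, where one must simultaneously use that $\mathcal{I}$ is closed (so that $\trans_{\varphi(a)}(s)$ lands in $\varphi([F'])$) and that $\varphi$ is injective (so that its preimage is a single index). The remainder is bookkeeping plus invocations of the already-proved Lemmas \ref{lem:stationary} and \ref{lem:duplicate}.
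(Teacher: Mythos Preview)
Your proposal is correct and matches the paper's proof essentially step for step: the paper also enumerates $\mathcal{I}$, defines $\varphi$ as the corresponding bijection onto $\mathcal{I}$, sets $f'_a(s) = f_{\varphi(a)}(s)$ and $\trans'_a(s) = \varphi^{-1}(\trans_{\varphi(a)}(s))$ (well-defined by closedness and injectivity), and then derives (ii) from (i) via Lemma~\ref{lem:stationary} and Lemma~\ref{lem:duplicate}~(iii) exactly as you do.
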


\begin{proof}[Proof of Lemma \ref{lem:closed}]
(Proof of (i)):
Suppose $\mathcal{I} = \{i_0, i_1, \ldots, i_{m-1}\}$, where $i_0 < i_1 < \cdots < i_{m-1}$ and $m = |\mathcal{I}|$.
We define a mapping $\varphi : [m] \rightarrow [F]$ as $\varphi(j) = i_j$ for $j \in [m]$.
Since $\varphi$ is injective and $\varphi([m]) = \mathcal{I}$, we can consider the inverse mapping $\varphi^{-1} : \mathcal{I} \rightarrow [m]$, which maps $\varphi(i)$ to $i$ for any $i \in [m]$.
Also, we define $F'(f', \trans') \in \mathscr{F}^{(m)}$ as
\begin{align}
f'_i(s) &= f_{\varphi(i)}(s), \label{eq:8dpb9trty1gg}\\
\trans'_i(s) &= \varphi^{-1}(\trans_{\varphi(i)}(s)) \label{eq:3l3b00el4n7f}
\end{align}
for $i \in [F']$ and $s \in \mathcal{S}$.
Since $\mathcal{I}$ is closed, we have $\trans_{\varphi(i)}(s) \in \mathcal{I}$ and thus $\trans'_i(s) = \varphi^{-1}(\trans_{\varphi(i)}(s)) \in [m] = [F']$;
that is, $F'$ is indeed well-defined.
We can see that $\varphi$ is a homomorphism from $F'$ to $F$ directly from (\ref{eq:8dpb9trty1gg}) and (\ref{eq:3l3b00el4n7f}).

(Proof of (ii)):
By (i) of this lemma, there exist $F' \in \mathscr{F}$ and an injective homomorphism $\varphi : [F'] \rightarrow [F]$  from $F'$ to $F$ such that
\begin{equation}
\label{eq:cjoobaqo5l8x}
\varphi([F']) = \mathcal{I}.
\end{equation}
By Lemma \ref{lem:stationary}, we can choose a stationary distribution $\pmb{\pi}'$ of $F'$.
By Lemma \ref{lem:duplicate} (iii), the vector $\pmb{\pi} \in \mathbb{R}^{|F|}$ defined as (\ref{eq:3720vpgjeuph}) is a stationary distribution of $F$.
This vector $\pmb{\pi}$ is a desired stationary distribution because $\mathcal{A}_i = \{i' \in [F'] : \varphi(i') = i\} = \emptyset$ holds for any $i \in [F] \setminus \mathcal{I}$ by (\ref{eq:cjoobaqo5l8x}).
\end{proof}

\begin{lemma}
\label{lem:kernel0}
For any $F \in \mathscr{F}$,  If $\kernel_{F} = \emptyset$, then there exist $p, q \in [F]$ such that $\mathcal{I}_p \cap \mathcal{I}_q = \emptyset$,
where $\mathcal{I}_i \coloneqq \{\trans^{\ast}_i(\pmb{x}) : \pmb{x} \in \mathcal{S}^{\ast}\}$ for $i \in [F]$.
\end{lemma}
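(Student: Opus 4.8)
The plan is to prove the contrapositive: assuming $\mathcal{I}_p \cap \mathcal{I}_q \neq \emptyset$ for \emph{all} $p, q \in [F]$, I will show $\kernel_F \neq \emptyset$.

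First I would record two elementary facts about the sets $\mathcal{I}_i$. Since $\trans^{\ast}_i(\lambda) = i$ by (\ref{eq:tstar}), each $\mathcal{I}_i$ is non-empty, containing $i$ itself. Moreover, the family $\{\mathcal{I}_i\}_{i \in [F]}$ is hereditary under reachability: if $j \in \mathcal{I}_i$, say $j = \trans^{\ast}_i(\pmb{x})$, then for every $\pmb{y} \in \mathcal{S}^{\ast}$ Lemma \ref{lem:f_T} (ii) gives $\trans^{\ast}_j(\pmb{y}) = \trans^{\ast}_{\trans^{\ast}_i(\pmb{x})}(\pmb{y}) = \trans^{\ast}_i(\pmb{x}\pmb{y}) \in \mathcal{I}_i$, so $\mathcal{I}_j \subseteq \mathcal{I}_i$. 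I would also note the set-theoretic reformulation $\kernel_F = \bigcap_{j \in [F]} \mathcal{I}_j$, which is immediate from Definition \ref{def:kernel}.

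Next, since $\{\mathcal{I}_i : i \in [F]\}$ is a finite non-empty family of sets, it has at least one element minimal with respect to inclusion; fix such an $\mathcal{I}_p$. For an arbitrary $q \in [F]$, the assumption gives some $r \in \mathcal{I}_p \cap \mathcal{I}_q$. From $r \in \mathcal{I}_p$ and the hereditary property, $\mathcal{I}_r \subseteq \mathcal{I}_p$, so minimality of $\mathcal{I}_p$ forces $\mathcal{I}_r = \mathcal{I}_p$; from $r \in \mathcal{I}_q$ and the hereditary property, $\mathcal{I}_r \subseteq \mathcal{I}_q$. Combining, $\mathcal{I}_p \subseteq \mathcal{I}_q$. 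As $q$ was arbitrary, $\mathcal{I}_p \subseteq \bigcap_{q \in [F]} \mathcal{I}_q = \kernel_F$, and since $p \in \mathcal{I}_p$ we conclude $\kernel_F \neq \emptyset$, completing the contrapositive.

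There is no serious obstacle here: the only ingredients are the reformulation $\kernel_F = \bigcap_j \mathcal{I}_j$, the hereditary property (a one-line consequence of Lemma \ref{lem:f_T} (ii)), and the existence of a minimal element in a finite family of sets. The one point to state carefully is why minimality of $\mathcal{I}_p$ upgrades $\mathcal{I}_r \subseteq \mathcal{I}_p$ to equality, namely that $\mathcal{I}_r$ is itself a member of the family $\{\mathcal{I}_i : i \in [F]\}$ and hence cannot be a proper subset of the minimal $\mathcal{I}_p$.
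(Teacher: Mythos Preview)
Your proof is correct and shares with the paper the same overall framework: proving the contrapositive, establishing the hereditary property $j \in \mathcal{I}_i \Rightarrow \mathcal{I}_j \subseteq \mathcal{I}_i$ via Lemma~\ref{lem:f_T}~(ii), and rewriting $\kernel_F$ as $\bigcap_{j \in [F]} \mathcal{I}_j$. The difference lies in the final step. The paper proves $\bigcap_{i \in [r]} \mathcal{I}_i \neq \emptyset$ by induction on $r$: given $j$ in the intersection of the first $r-1$ sets, hereditarity gives $\mathcal{I}_j \subseteq \bigcap_{i \in [r-1]} \mathcal{I}_i$, and then the pairwise-intersection hypothesis yields $\mathcal{I}_j \cap \mathcal{I}_{r-1} \neq \emptyset$. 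You instead pick $\mathcal{I}_p$ minimal in the finite family and show directly that $\mathcal{I}_p \subseteq \mathcal{I}_q$ for every $q$, which is slightly stronger (it exhibits an entire set inside $\kernel_F$, not just non-emptiness) and avoids the induction bookkeeping. Both arguments are short; yours is marginally more direct, while the paper's induction is perhaps more routine to verify step by step.
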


\begin{proof}[Proof of Lemma \ref{lem:kernel0}]
We first show that for any $i, j \in [F]$, we have
\begin{equation}
\label{eq:u1kenn7zebgw}
j \in \mathcal{I}_i \implies \mathcal{I}_j \subseteq \mathcal{I}_i.
\end{equation}
Assume $j \in \mathcal{I}_i$ and choose $p \in \mathcal{I}_j$ arbitrarily.
Then there exists $\pmb{x} \in \mathcal{S}^{\ast}$ such that $\trans^{\ast}_j(\pmb{x}) = p$.
Also, by $j \in \mathcal{I}_i$, there exists $\pmb{y} \in \mathcal{S}^{\ast}$ such that $\trans^{\ast}_i(\pmb{y}) = j$.
Therefore, we have
\begin{equation*}
\trans^{\ast}_i(\pmb{yx})
\overset{(\mathrm{A})}{=} \trans^{\ast}_{\trans^{\ast}_i(\pmb{y})}(\pmb{x})
= \trans^{\ast}_j(\pmb{x})
= p,
\end{equation*}
where (A) follows from Lemma \ref{lem:f_T} (ii).
This leads to $p \in \mathcal{I}_i$ and thus we obtain (\ref{eq:u1kenn7zebgw}).

Now, we prove Lemma \ref{lem:kernel0} by proving its contraposition.
Namely, we show $\kernel_{F} \neq \emptyset$ assuming that
\begin{equation}
\label{eq:ovsyl7xqcns0}
{}^{\forall}i, j \in [F]; \mathcal{I}_i \cap \mathcal{I}_j \neq \emptyset.
\end{equation}
We can see that
\begin{equation*}
\kernel_F = \bigcap_{i \in [F]} \mathcal{I}_i
\end{equation*}
because for any $j \in [F]$, it holds that
\begin{align*}
j \in \bigcap_{i \in [F]} \mathcal{I}_i
&\iff {}^{\forall} i \in [F]; j \in \mathcal{I}_i \\
&\iff {}^{\forall} i \in [F]; {}^{\exists} \pmb{x} \in \mathcal{S}^{\ast}; \trans^{\ast}_i(\pmb{x}) = j\\
&\iff j \in \kernel_F.
\end{align*}
Thus, to show $\kernel_{F} \neq \emptyset$, it suffices to show that \begin{equation}
\label{eq:wpjhflge5ppg}
\bigcap_{i \in [r]} \mathcal{I}_i \neq \emptyset
\end{equation}
for any $r = 1, 2, \ldots, |F|$ since the case $r = |F|$ gives the desired result.

We prove (\ref{eq:wpjhflge5ppg}) by induction for $r$.
The base case $r = 1$ is trivial since $\mathcal{I}_0 \owns 0$.
We consider the induction step for $r \geq 2$.
By the induction hypothesis, we have $\bigcap_{i \in [r-1]} \mathcal{I}_i \neq \emptyset$.
Therefore, we can choose $j \in [F]$ such that $j \in \mathcal{I}_i$ for any $i \in [r-1]$.
By (\ref{eq:u1kenn7zebgw}), we have $\mathcal{I}_j \subseteq \mathcal{I}_i$ for any $i \in [r-1]$ and thus
\begin{equation}
\label{eq:1ymkzi1xamyn}
\mathcal{I}_j \subseteq \bigcap_{i \in [r-1]} \mathcal{I}_i.
\end{equation}
Hence, we obtain
\begin{equation*}
\bigcap_{i \in [r]} \mathcal{I}_i
= \Big( \bigcap_{i \in [r-1]} \mathcal{I}_i \Big) \cap \mathcal{I}_{r-1}
\overset{(\mathrm{A})}{\supseteq} \mathcal{I}_j \cap \mathcal{I}_{r-1}
\overset{(\mathrm{B})}{\neq} \emptyset
\end{equation*}
as desired, where
(A) follows from (\ref{eq:1ymkzi1xamyn}),
and (B) follows from (\ref{eq:ovsyl7xqcns0}).
\end{proof}

\begin{proof}[Proof of Lemma \ref{lem:kernel}]
(Proof of (i)):
(Necessity)
We assume $\kernel_{F} = \emptyset$ and show that $F$ has two distinct stationary distributions.
By Lemma \ref{lem:kernel0}, we can choose $p, q \in [F]$ such that
\begin{equation}
\label{eq:e31o1qy8dxd3}
\mathcal{I}_p \cap \mathcal{I}_q = \emptyset.
\end{equation}
We can see that $\mathcal{I}_p$ is not empty since $\mathcal{I}_p \owns p$ and also see that $\mathcal{I}_p$ is closed because for any $i \in \mathcal{I}_p$, we have
\begin{equation}
\{\trans_i(s) : s \in \mathcal{S}\} \subseteq \{\trans^{\ast}_i(\pmb{x}) : \pmb{x} \in \mathcal{S}^{\ast}\}
= \mathcal{I}_i \overset{(\mathrm{A})}{\subseteq} \mathcal{I}_p,
\end{equation}
where (A) follows from (\ref{eq:u1kenn7zebgw}).
By the same argument, also $\mathcal{I}_q$ is a non-empty closed set.
Therefore, by Lemma \ref{lem:closed} (ii), there exist stationary distributions $\pmb{\pi} = (\pi_0, \pi_1, \ldots, \pi_{|F|-1})$ and $\pmb{\pi}' = (\pi'_0, \pi'_1, \ldots, \pi'_{|F|-1})$ of $F$ such that
\begin{equation}
\label{eq:xe6uq39oql3h}
{}^{\forall} i \in [F] \setminus \mathcal{I}_p; \pi_i = 0
\end{equation}
and
\begin{equation}
\label{eq:s2l44qnbjz4j}
{}^{\forall} i \in [F] \setminus \mathcal{I}_q; \pi'_i = 0.
\end{equation}
Since $\pmb{\pi}$ satisfies (\ref{eq:stationary2}), we have $\pi_j > 0$ for some $j \in [F]$.
By (\ref{eq:xe6uq39oql3h}) and (\ref{eq:e31o1qy8dxd3}), it must hold that $j \in \mathcal{I}_p \subseteq [F] \setminus \mathcal{I}_q$.
Hence, we obtain $\pi'_j = 0 < \pi_j$ by (\ref{eq:s2l44qnbjz4j}).
This shows $\pmb{\pi} \neq \pmb{\pi}'$.
Therefore, we conclude that $F$ has two distinct stationary distributions as desired.

(Sufficiency)
We prove $\kernel_{F} = \emptyset$ assuming that there exist two distinct stationary distributions $\pmb{\pi}  = (\pi_0, \pi_1, \ldots, \pi_{|F|-1})$ and $\pmb{\pi}'  = (\pi'_0, \pi'_1, \ldots, \pi'_{|F|-1})$ of $F$.
Then $\pmb{x} = (x_0, x_1, \ldots, x_{|F|-1}) \coloneqq \pmb{\pi} - \pmb{\pi}' \neq \pmb{0}$ satisfies
 \begin{equation}
 \label{eq:stationary3}
\pmb{x}Q(F) = \pmb{\pi}Q(F) - \pmb{\pi}'Q(F) \overset{(\mathrm{A})}{=} \pmb{\pi} - \pmb{\pi}' = \pmb{x},
\end{equation}
\begin{equation}
\label{eq:stationary4}
\sum_{i \in [F]} x_i =  \sum_{i \in [F]} \pi_i - \sum_{i \in [F]}  \pi'_i \overset{(\mathrm{B})}{=}  1 - 1 = 0,
\end{equation}
where
(A) follows from (\ref{eq:stationary1}),
and (B) follows from (\ref{eq:stationary2}).
 Thus, by $\pmb{x} \neq \pmb{0}$ and (\ref{eq:stationary4}), both of $\mathcal{I}_+ \coloneqq \{i \in [F] : x_i > 0\}$ and $\mathcal{I}_- \coloneqq \{i \in [F] : x_i < 0\}$ are non-empty sets.
Also, both of $\mathcal{I}_+$ and $\mathcal{I}_-$ are closed by (\ref{eq:stationary3}) and Lemma \ref{lem:kernel02} stated in Appendix \ref{subsec:proof-stationary}.
Therefore, by Lemma \ref{lem:closed2} (ii), we obtain $\kernel_F \subseteq \mathcal{I}_+$ and $\kernel_F \subseteq \mathcal{I}_-$, which conclude
$\kernel_F \subseteq \mathcal{I}_+ \cap \mathcal{I}_- = \emptyset$ as desired.

(Proof of (ii)):
We show $\kernel_{F} = \mathcal{I}_+ \coloneqq \{i \in [F] : \pi_i(F) > 0\}$.

($\kernel_{F} \subseteq \mathcal{I}_+$) By (\ref{eq:stationary2}), the set $\mathcal{I}_+$ is not empty.
Also, by (\ref{eq:stationary1}) and Lemma \ref{lem:kernel02} stated in Appendix \ref{subsec:proof-stationary}, the set $\mathcal{I}_+$ is closed.
Hence, we obtain $\kernel_{F} \subseteq \mathcal{I}_+$ by Lemma \ref{lem:closed2} (ii).

($\kernel_{F} \supseteq \mathcal{I}_+$) Since $\kernel_F$ is closed by Lemma \ref{lem:closed2} (i), we see from Lemma \ref{lem:closed} (ii) that the unique stationary distribution $\pmb{\pi}(F)$ satisfies $\pi_i(F) = 0$ for any $i \in [F] \setminus \kernel_F$.
Therefore, we obtain $\kernel_F \supseteq \mathcal{I}_+$.
\end{proof}

\subsection{Proof of Lemma \ref{lem:irr-part}}
\label{subsec:proof-irr-part}

The proof of Lemma \ref{lem:irr-part} relies on Lemmas \ref{lem:closed2} and \ref{lem:closed} stated in Appendix \ref{subsec:proof-kernel}.

\begin{proof}[Proof of Lemma \ref{lem:irr-part}]
Since $\kernel_F$ is closed by Lemma \ref{lem:closed2} (i),
we see from Lemma \ref{lem:closed} (i) that there exist $\bar{F}(\bar{f}, \bar{\trans}) \in \mathscr{F}$ and an injective homomorphism $\varphi : [\bar{F}] \rightarrow [F]$ from $F'$ to $F$ such that $\varphi([\bar{F}]) = \kernel_F$.
Now, it suffices to show $\bar{F} \in \mathscr{F}_{\irr}$.

For any $i, j \in [\bar{F}]$,
there exists $\pmb{x} \in \mathcal{S}^{\ast}$ such that
\begin{equation}
\label{eq:a5yuz7ru8p08}
\trans^{\ast}_{\varphi(i)}(\pmb{x}) = \varphi(j)
\end{equation}
by $\varphi(j) \in \varphi([\bar{F}]) = \kernel_{F}$.
Thus, for any $i, j \in [F]$, we have
\begin{align*}
\bar{\trans}^{\ast}_i(\pmb{x})
&= \varphi^{-1}(\varphi(\bar{\trans}^{\ast}_i(\pmb{x})))
\overset{(\mathrm{A})}{=} \varphi^{-1}(\trans^{\ast}_{\varphi(i)}(\pmb{x}))
\overset{(\mathrm{B})}{=} \varphi^{-1}(\varphi(j))
= j,
\end{align*}
where
(A) follows from Lemma \ref{lem:duplicate} (i),
and (B) follows from (\ref{eq:a5yuz7ru8p08}).
Therefore, $\bar{F} \in \mathscr{F}_{\irr}$ holds.
\end{proof}

\subsection{Proof of Lemma \ref{lem:pref-unchanged}}
\label{subsec:proof-pref-unchanged}

\begin{proof}[Proof of Lemma \ref{lem:pref-unchanged}]
(Proof of (i)):
We prove only $\PREF^k_{F, i}(\pmb{b}) \supseteq \PREF^k_{F', i}(\pmb{b})$ for any $i \in [F]$ and $\pmb{b} \in \mathcal{C}^{\ast}$ because we can prove $\PREF^k_{F, i}(\pmb{b}) \subseteq \PREF^k_{F', i}(\pmb{b})$, $\bar{\PREF}^k_{F, i}(\pmb{b}) \supseteq \bar{\PREF}^k_{F', i}(\pmb{b})$, and $\bar{\PREF}^k_{F, i}(\pmb{b}) \subseteq \bar{\PREF}^k_{F', i}(\pmb{b})$ in the similar way.
To prove $\PREF^k_{F, i}(\pmb{b}) \supseteq \PREF^k_{F', i}(\pmb{b})$, it suffices to prove that for any $(i, \pmb{x}, \pmb{b}, \pmb{c}) \in [F] \times \mathcal{S}^{+} \times \mathcal{C}^{\ast} \times \mathcal{C}^{\leq k}$, we have
\begin{align}
(f'^{\ast}_i(\pmb{x}) \succeq \pmb{b}\pmb{c}, f'_i(x_1) \succeq \pmb{b})
&\implies {}^{\exists}\pmb{x}' \in \mathcal{S}^{+}; (f^{\ast}_i(\pmb{x}') \succeq \pmb{b}\pmb{c}, f_i(x'_1) \succeq \pmb{b}) \label{eq:uaoj2b5ueybr}
\end{align}
because this shows that for any $i \in [F']$, $\pmb{b} \in \mathcal{C}^{\ast}$, and $\pmb{c} \in \mathcal{C}^k$, we have
\begin{align*}
\pmb{c} \in \PREF^k_{F', i}(\pmb{b})
&\overset{(\mathrm{A})}{\iff} {}^{\exists}\pmb{x} \in \mathcal{S}^{+}; (f'^{\ast}_i(\pmb{x}) \succeq \pmb{b}\pmb{c}, f_i(x_1) \succeq \pmb{b})\\
&\overset{(\mathrm{B})}{\implies} {}^{\exists}\pmb{x}' \in \mathcal{S}^{+}; (f^{\ast}_i(\pmb{x}') \succeq \pmb{b}\pmb{c}, f_i(x'_1) \succeq \pmb{b})\\
&\overset{(\mathrm{C})}{\iff} \pmb{c} \in \PREF^k_{F, i}(\pmb{b})
\end{align*}
as desired, where
(A) follows from (\ref{eq:pref1}),
(B) follows from (\ref{eq:uaoj2b5ueybr}),
and (C) follows from (\ref{eq:pref1}).

Choose $(i, \pmb{x}, \pmb{b}, \pmb{c}) \in [F] \times \mathcal{S}^{+} \times \mathcal{C}^{\ast} \times \mathcal{C}^{\leq k}$ arbitrarily and 
assume
\begin{equation}
\label{eq:fukjlkhx8fs9k}
f'^{\ast}_i(\pmb{x}) \succeq \pmb{b}\pmb{c}
\end{equation}
and
\begin{equation}
\label{eq:1yy0qhed18vf}
f'_i(x_1) \succeq \pmb{b}.
\end{equation}
Then we have
\begin{equation}
\label{eq:vl4krmm8ndjo}
f_i(x_1) \overset{(\mathrm{A})}{=} f'_i(x_1) \overset{(\mathrm{B})}{\succeq} \pmb{b},
\end{equation}
where (A) follows from the assumption (a) of this lemma,
and (B) follows from (\ref{eq:1yy0qhed18vf}).

We prove (\ref{eq:uaoj2b5ueybr}) by induction for $|\pmb{x}|$.
For the base case $|\pmb{x}| = 1$, we have
\begin{equation}
\label{eq:vvilz1z5mg2s}
f^{\ast}_i(\pmb{x}) = f_i(x_1)
\overset{(\mathrm{A})} = f'_i(x_1)
= f'^{\ast}_i(\pmb{x})
\overset{(\mathrm{B})}{\succeq}
\pmb{b}\pmb{c},
\end{equation}
where
(A) follows from the assumption (a) of this lemma,
and (B) follows from (\ref{eq:fukjlkhx8fs9k}).
By (\ref{eq:vvilz1z5mg2s}) and (\ref{eq:vl4krmm8ndjo}), the claim (\ref{eq:uaoj2b5ueybr}) holds for the base case $|\pmb{x}| = 1$.

We consider the induction step for $|\pmb{x}| \geq 2$.
We have
\begin{align}
f_i(x_1)f'^{\ast}_{\trans'_i(x_1)}(\suff(\pmb{x}))
&\overset{(\mathrm{A})}{=} f'_i(x_1)f'^{\ast}_{\trans'_i(x_1)}(\suff(\pmb{x}))
\overset{(\mathrm{B})}{=} f'^{\ast}_i(\pmb{x}) \overset{(\mathrm{C})}{\succeq} \pmb{b}\pmb{c}, \label{eq:k3v4hbuwbkgpo}
\end{align}
where
(A) follows from the assumption (a) of this lemma,
(B) follows from (\ref{eq:fstar}),
and (C) follows from (\ref{eq:fukjlkhx8fs9k}).
Therefore, $f_i(x_1) \succeq \pmb{b}\pmb{c}$ or $f_i(x_1) \prec \pmb{b}\pmb{c}$ holds.
In the case $f_i(x_1) \succeq \pmb{b}\pmb{c}$, clearly $\pmb{x}' \coloneqq x_1$ satisfies $f^{\ast}_i(\pmb{x}') \succeq \pmb{b}\pmb{c}$ and $f_i(x'_1) = f_i(x_1) \succeq \pmb{b}$ by (\ref{eq:vl4krmm8ndjo}) as desired.
Thus, now we assume $f_i(x_1) \prec \pmb{b}\pmb{c}$.
Then we have 
\begin{align}
|f_i(x_1)^{-1}\pmb{b}\pmb{c}|
&= - |f_i(x_1)| + |\pmb{b}| + |\pmb{c}|
\overset{(\mathrm{A})}{=} - |f'_i(x_1)| + |\pmb{b}| + |\pmb{c}|
\overset{(\mathrm{B})}{\leq}  |\pmb{c}| \leq k, \label{eq:qgsn4s28gp0y}
\end{align}
where
(A) follows from the assumption (a) of this lemma,
and (B) follows from (\ref{eq:1yy0qhed18vf}).
By (\ref{eq:k3v4hbuwbkgpo}), we have
\begin{equation}
\label{eq:s5c41waomab3}
f'^{\ast}_{\trans'_i(x_1)}(\suff(\pmb{x})) \succeq f_i(x_1)^{-1}\pmb{b}\pmb{c}.
\end{equation}
By (\ref{eq:qgsn4s28gp0y}) and (\ref{eq:s5c41waomab3}), we can apply the induction hypothesis to $(\trans'_i(x_1), \suff(\pmb{x}), \allowbreak \lambda, f_i(x_1)^{-1}\pmb{b}\pmb{c})$.
Hence, there exists $\pmb{x}' \in \mathcal{S}^{\ast}$ such that
$f^{\ast}_{\trans'_i(x_1)}(\pmb{x}') \succeq f_i(x_1)^{-1}\pmb{b}\pmb{c}$,
which leads to $f_i(x_1)^{-1}\pmb{b}\pmb{c} \in \PREF^{k'}_{F, \trans'_i(x_1)}$ by (\ref{eq:pref3}), where $k' \coloneqq |f_i(x_1)^{-1}\pmb{b}\pmb{c}|$.
By Lemma \ref{lem:pref-inc} (i), 
there exists $\pmb{c}' \in \mathcal{C}^{k-k'}$ such that
\begin{equation}
f_i(x_1)^{-1}\pmb{b}\pmb{c}\pmb{c}' \in \PREF^k_{F, \trans'_i(x_1)} \overset{(\mathrm{A})}{=} \PREF^k_{F, \trans_i(x_1)},
\end{equation}
where (A) follows from the assumption (b) of this lemma.
By (\ref{eq:pref3}), there exists $\pmb{x}'' \in \mathcal{S}^{\ast}$ such that 
\begin{equation}
\label{eq:o15zob8zuiw9}
f^{\ast}_{\trans_i(x_1)}(\pmb{x}'') \succeq f_i(x_1)^{-1}\pmb{b}\pmb{c}\pmb{c}' \succeq f_i(x_1)^{-1}\pmb{b}\pmb{c}.
\end{equation}
Thus, we have
\begin{align}
f^{\ast}_i(x_1\pmb{x}'')
&\overset{(\mathrm{A})}{=} f_i(x_1)f^{\ast}_{\trans_i(x_1)}(\pmb{x}'') \nonumber\\
&\overset{(\mathrm{B})}{\succeq} f_i(x_1)f_i(x_1)^{-1}\pmb{b}\pmb{c} \nonumber\\
&= \pmb{b}\pmb{c}, \label{eq:eeyn9kbb4v0o}
\end{align}
where
(A) follows from (\ref{eq:fstar}),
and (B) follows from (\ref{eq:o15zob8zuiw9}).
The induction is completed by (\ref{eq:vl4krmm8ndjo}) and (\ref{eq:eeyn9kbb4v0o}).

(Proof of (ii)): 
We have
\begin{align*}
F \in \mathscr{F}_{\ext}
&\iff {}^{\forall}i \in [F]; \PREF^1_{F, i} \neq \emptyset\\
&\overset{(\mathrm{A})}{\iff} {}^{\forall}i \in [F']; \PREF^1_{F', i} \neq \emptyset\\
&\iff F' \in \mathscr{F}_{\ext},
\end{align*}
where (A) follows from (i) of this lemma.

(Proof of (iii)): 
For any $i \in [F']$ and $s \in \mathcal{S}$, we have
\begin{align*}
\PREF^k_{F', \trans'_i(s)} \cap \bar{\PREF}^k_{F', i}(f'_i(s))
&\overset{(\mathrm{A})}{=} \PREF^k_{F, \trans'_i(s)} \cap \bar{\PREF}^k_{F, i}(f'_i(s))
\overset{(\mathrm{B})}{=} \PREF^k_{F, \trans_i(s)} \cap \bar{\PREF}^k_{F, i}(f_i(s)) 
\overset{(\mathrm{C})}{=} \emptyset,
\end{align*}
where
(A) follows from (i) of this lemma,
(B) follows from the assumptions (a) and (b),
and (C) follows from $F \in \mathscr{F}_{k\hdec}$.
Namely, $F'$ satisfies Definition \ref{def:k-bitdelay} (i).

For any $i \in [F']$ and $s, s' \in \mathcal{S}$ such that $s \neq s'$ and $f'_i(s) = f'_i(s')$, we have
\begin{equation}
\label{eq:81yuxd8qy20q}
f_i(s) = f_i(s')
\end{equation}
by the assumption (a), and we have
\begin{align*}
\PREF^k_{F', \trans'_i(s)} \cap \PREF^k_{F', \trans'_i(s')}
&\overset{(\mathrm{A})}{\subseteq} \PREF^k_{F, \trans'_i(s)} \cap \PREF^k_{F, \trans'_i(s')}
\overset{(\mathrm{B})}{=} \PREF^k_{F, \trans_i(s)} \cap \PREF^k_{F, \trans_i(s')} 
\overset{(\mathrm{C})}{=} \emptyset,
\end{align*}
where
(A) follows from (i) of this lemma,
(B) follows from the assumptions (b),
and (C) follows from $F \in \mathscr{F}_{k\hdec}$ and (\ref{eq:81yuxd8qy20q}).
Namely, $F'$ satisfies Definition \ref{def:k-bitdelay} (ii).
\end{proof}

\subsection{Proof of Lemma \ref{lem:chooseone}}
\label{subsec:proof-chooseone}

\begin{proof}[Proof of Lemma \ref{lem:chooseone}]
We show $\kernel_{F'} \owns p$ since this implies $F' \in \mathscr{F}_{\reg}$ by Lemma \ref{lem:kernel} (i).
Namely, we show that for any $j \in [F']$, there exists $\pmb{x} \in \mathcal{S}^{\ast}$ such that $\trans'^{\ast}_j(\pmb{x}) = p$.

For $j = p$, the sequence $\pmb{x} \coloneqq \lambda$ satisfies $\trans'^{\ast}_{j}(\pmb{x}) = p$ by (\ref{eq:tstar}).
Thus, we now consider the case $j \neq p$. Choose $j \in [F'] \setminus \{p\}$ arbitrarily. Since $p \in \kernel_{F}$ by $F \in \mathscr{F}_{\irr}$, there exists $\pmb{x} = x_1x_2\ldots x_n \in \mathcal{S}^{+}$ such that $\trans^{\ast}_j(\pmb{x}) = p$. Let $r \geq 1$ be the minimum positive integer such that
\begin{equation}
\label{eq:nf8031j99by9}
\trans^{\ast}_j(x_1x_2\ldots x_r) \in \mathcal{I}.
\end{equation}
Note that there exists such an integer $r \leq n$ since $\trans^{\ast}_j(\pmb{x}) = \trans^{\ast}_j(x_1x_2\ldots x_n) = p \in \mathcal{I}$.
We show that 
\begin{equation}
\label{eq:3aykcjky7gi4}
\trans'^{\ast}_j(x_1x_2\ldots x_{r'}) = \trans^{\ast}_j(x_1x_2\ldots x_{r'})
\end{equation}
for any $r' = 1, 2, \ldots, r-1$ by induction for $r'$.
For the base case $r' = 0$, we have $\trans'^{\ast}_j(\lambda) = j = \trans^{\ast}_j(\lambda)$ by (\ref{eq:tstar}).
We consider the induction step for $r' \geq 1$. We have
\begin{align*}
\trans'^{\ast}_j(x_1x_2\ldots x_{r'})
&\overset{(\mathrm{A})}{=} \trans'_{\trans'^{\ast}_j(x_1x_2\ldots x_{r'-1})}(x_{r'})\\
&\overset{(\mathrm{B})}{=} \trans'_{\trans^{\ast}_j(x_1x_2\ldots x_{r'-1})}(x_{r'})\\
&\overset{(\mathrm{C})}{=} \trans_{\trans^{\ast}_j(x_1x_2\ldots x_{r'-1})}(x_{r'})\\
&\overset{(\mathrm{D})}{=} \trans^{\ast}_j(x_1x_2\ldots x_{r'})
\end{align*}
as desired, where
(A) follows from Lemma \ref{lem:f_T} (ii),
(B) follows from the induction hypothesis,
(C) is obtained by applying the second case of (\ref{eq:hbm12zjixhzy}) since $\trans_{\trans^{\ast}_j(x_1x_2\ldots x_{r'-1})}(x_{r'}) = \trans^{\ast}_j(x_1x_2\ldots x_{r'}) \not\in \mathcal{I}$ by $r' \leq r-1$ and the minimality of $r$, 
and (D) follows from Lemma \ref{lem:f_T} (ii).

Thus, we obtain
\begin{align*}
\trans'^{\ast}_j(x_1x_2\ldots x_r)
&\overset{(\mathrm{A})}{=} \trans'_{\trans'^{\ast}_j(x_1x_2\ldots x_{r-1})}(x_r)
\overset{(\mathrm{B})}{=}\trans'_{\trans^{\ast}_j(x_1x_2\ldots x_{r-1})}(x_r) 
\overset{(\mathrm{C})}{=} p
\end{align*}
as desired, where
(A) follows from Lemma \ref{lem:f_T} (ii),
(B) follows from (\ref{eq:3aykcjky7gi4}),
and (C) follows from (\ref{eq:nf8031j99by9}) and the first case of (\ref{eq:hbm12zjixhzy}).
\end{proof}

\subsection{Proof of Lemma \ref{lem:improve}}
\label{subsec:proof-improve}

\begin{proof}[Proof of Lemma \ref{lem:improve}]
Let $p \in \argmin_{i \in [F]} (h_i(F)-h_i(F'))$.
Then it holds that
\begin{equation}
\label{eq:gapmxf1qb1k5}
{}^{\forall} i \in [F]; h_i(F') - h_p(F') \leq h_i(F) - h_p(F).
\end{equation}
We have
\begin{align}
\sum_{i \in [F]}(h_i(F) - h_p(F))Q_{p, i}(F)
&= \sum_{i \in [F]}(h_i(F) - h_p(F))\sum_{\substack{s \in \mathcal{S}\\\trans_p(s) = i}}\mu(s)\nonumber\\
&= \sum_{i \in [F]}\sum_{\substack{s \in \mathcal{S}\\\trans_p(s) = i}}(h_i(F) - h_p(F))\mu(s)\nonumber\\
&= \sum_{i \in [F]}\sum_{\substack{s \in \mathcal{S}\\\trans_p(s) = i}}(h_{\trans_p(s)}(F) - h_p(F))\mu(s)\nonumber\\
&= \sum_{s \in \mathcal{S}}(h_{\trans_p(s)}(F) - h_p(F))\mu(s). \label{eq:lsjabhzl1385}
\end{align}
Similarly, we have
\begin{align}
\sum_{i \in [F]}(h_i(F) - h_p(F))Q_{p, i}(F')
&= \sum_{s \in \mathcal{S}}(h_{\trans'_p(s)}(F) - h_p(F))\mu(s). \label{eq:yuk28f95g633}
\end{align}
Hence, we obtain
\begin{align*}
L(F') &\overset{(\mathrm{A})}{=} L_p(F') + \sum_{i \in [F]}(h_i(F') - h_p(F'))Q_{p, i}(F')\\
&\overset{(\mathrm{B})}{\leq} L_p(F') + \sum_{i \in [F]}(h_i(F) - h_p(F))Q_{p, i}(F')\\
&\overset{(\mathrm{C})}{=} L_p(F') + \sum_{s \in \mathcal{S}}(h_{\trans'_p(s)}(F) - h_p(F))\mu(s)\\
&\overset{(\mathrm{D})}{\leq} L_p(F) + \sum_{s \in \mathcal{S}}(h_{\trans_p(s)}(F) - h_p(F))\mu(s)\\
&\overset{(\mathrm{E})}{=} L_p(F) + \sum_{i \in [F]}(h_i(F) - h_p(F))Q_{p, i}(F)\\
&\overset{(\mathrm{F})}{=} L(F)
\end{align*}
as desired, where
(A) follows from (\ref{eq:potential}),
(B) follows from (\ref{eq:gapmxf1qb1k5}),
(C) follows from (\ref{eq:yuk28f95g633}),
(D) follows from the assumptions (a) and (b) of this lemma,
(E) follows from (\ref{eq:lsjabhzl1385}),
and (F) follows from (\ref{eq:potential}).
\end{proof}

\subsection{Proof of Lemma \ref{lem:fdot}}
\label{subsec:proof-fdot}

\begin{proof}[Proof of Lemma \ref{lem:fdot}]
(Proof of (i)):
We prove by induction for $|\pmb{z}|$.
For the base case $|\pmb{z}| = 0$, we have $\trans'^{\ast}_{\langle \lambda \rangle}(\lambda) = \langle \lambda \rangle$ by (\ref{eq:tstar}).
We consider the induction step for $|\pmb{z}| \geq 1$. We have
\begin{equation}
\trans'^{\ast}_{\langle \lambda \rangle}(\pmb{z})
\overset{(\mathrm{A})}{=} \trans'_{\trans'^{\ast}_{\langle \lambda \rangle}(\pref(\pmb{z}))}(z_n)
\overset{(\mathrm{B})}{=} \trans'_{\langle \pref(\pmb{z}) \rangle}(z_n)
\overset{(\mathrm{C})}{=} \langle \pmb{z} \rangle,
\end{equation}
where
$\pmb{z} = z_1z_2\ldots z_n$ and 
(A) follows from Lemma \ref{lem:f_T} (ii),
(B) follows from the induction hypothesis,
and (C) follows from the first case of (\ref{eq:je4ogqd03chp}).

(Proof of (ii)):
It suffices to show that $\langle \lambda \rangle \in \kernel_{F'}$ because
it guarantees that for any $j \in [F']$, 
there exists $\pmb{x} \in \mathcal{S}^{\ast}$ such that $\trans'^{\ast}_j(\pmb{x}) = \langle \lambda \rangle$, which leads to that for any $\pmb{z} \in \mathcal{S}^{\leq L}$, we have
\begin{equation}
\trans'^{\ast}_j(\pmb{x}\pmb{z})
\overset{(\mathrm{A})}{=} \trans'^{\ast}_{\trans'^{\ast}_j(\pmb{x})}(\pmb{z})
= \trans'^{\ast}_{\langle \lambda \rangle}(\pmb{z})
\overset{(\mathrm{B})}{=} \langle \pmb{z} \rangle
\end{equation}
as desired, where
(A) follows from Lemma \ref{lem:f_T} (ii),
and (B) follows from (i) of this lemma.

To prove $\langle \lambda \rangle \in \kernel_{F'}$, we show that there exists $\pmb{x} \in \mathcal{S}^{\ast}$ such that $\trans'^{\ast}_j(\pmb{x}) = \langle \lambda \rangle$ for the following two cases separately: (I) the case $j \in [F]$ and (II) the case $j = [F'] \setminus [F]$.

\begin{itemize}
\item[(I)] The case $j \in [F]$:
By the assumption that $p = \langle \lambda \rangle \in \kernel_F$, there exists $\pmb{x} = x_1x_2\ldots x_{n'} \in \mathcal{S}^{\ast}$ such that $\trans^{\ast}_j(\pmb{x}) = \langle \lambda \rangle$. We choose the shortest $\pmb{x}$ among such sequences.
Then we can see $\trans'^{\ast}_j(x_1x_2\ldots x_r) = \trans^{\ast}_j(x_1x_2\ldots x_r)$
for any $r = 0, 1, 2, \ldots, n$ by induction for $r$.
For the base case $r = 0$, we have $\trans'^{\ast}_j(\lambda) = j = \trans^{\ast}_j(\lambda)$ by (\ref{eq:tstar}).
We consider the induction step for $r \geq 1$. We have
\begin{align*}
\trans'^{\ast}_j(x_1x_2\ldots x_{r})
&\overset{(\mathrm{A})}{=} \trans'_{\trans'^{\ast}_j(x_1x_2\ldots x_{r-1})}(x_{r})\\
&\overset{(\mathrm{B})}{=} \trans'_{\trans^{\ast}_j(x_1x_2\ldots x_{r-1})}(x_{r})\\
&\overset{(\mathrm{C})}{=} \trans_{\trans^{\ast}_j(x_1x_2\ldots x_{r-1})}(x_{r})\\
&\overset{(\mathrm{D})}{=} \trans^{\ast}_j(x_1x_2\ldots x_{r})
\end{align*}
as desired, where
(A) follows from (\ref{eq:tstar}),
(B) follows from the induction hypothesis,
(C) follows from the third case of (\ref{eq:je4ogqd03chp}) since $\trans^{\ast}_j(x_1x_2\ldots x_{r-1}) \allowbreak \in [F] \setminus \{\langle\lambda \rangle\}$ by the definition of $\pmb{x}$, 
and (D) follows from Lemma \ref{lem:f_T} (ii).
Therefore, we obtain $\trans'^{\ast}_j(\pmb{x}) = \trans^{\ast}_j(\pmb{x}) = \langle \lambda \rangle$ as desired.

\item[(II)] The case where $j = [F'] \setminus [F]$:
Then we have $j = \pmb{z} \rangle$ for some $\pmb{z} \in \mathcal{S}^{\leq L}$.
Choose $\pmb{z'} = z'_1z'_2 \ldots z'_{n'} \in \mathcal{S}^{L - |\pmb{z}|+1}$ arbitrarily.
We have
\begin{align*}
\trans'^{\ast}_{\langle \lambda \rangle}(\pmb{z}\pmb{z}')
&\overset{(\mathrm{A})}{=} \trans'_{\trans'^{\ast}_{\langle \lambda \rangle}(\pmb{z}\pref(\pmb{z}'))}(z'_{n'})\\
&\overset{(\mathrm{B})}{=} \trans'_{\langle \pmb{z}\pref(\pmb{z}') \rangle}(z'_{n'})
\overset{(\mathrm{C})}{=} \trans^{\ast}_{\langle \lambda \rangle}(\pmb{z}\pmb{z}')\\
&= \trans^{\ast}_{|F|-1}(\pmb{z}\pmb{z}')
\in [F],
\end{align*}
where 
(A) follows from Lemma \ref{lem:f_T} (ii),
(B) follows from (i) of this lemma and $\pmb{z}\pref(\pmb{z}') \in \mathcal{S}^{\leq L}$,
and (C) follows from the second case of (\ref{eq:je4ogqd03chp}) and $\pmb{z}\pref(\pmb{z}') \in \mathcal{S}^{L}$.
Hence, by the discussion for the case (I) above, there exists $\pmb{x}' \in \mathcal{S}^\ast$ such that $\trans'^{\ast}_{\trans'^{\ast}_{\langle \lambda \rangle}(\pmb{z}\pmb{z}')}(\pmb{x}') = \langle \lambda \rangle$.
Thus, $\pmb{x} \coloneqq \pmb{z}'\pmb{x}'$ satisfies
\begin{align*}
\trans'^{\ast}_{\langle \pmb{z} \rangle}(\pmb{x})
&= \trans'^{\ast}_{\langle \pmb{z} \rangle}(\pmb{z}'\pmb{x}')
\overset{(\mathrm{A})}{=}\trans'^{\ast}_{\trans'^{\ast}_{\langle \lambda \rangle}(\pmb{z})}(\pmb{z}'\pmb{x}')
\overset{(\mathrm{B})}{=} \trans'^{\ast}_{\langle \lambda \rangle}(\pmb{z}\pmb{z}'\pmb{x}')
\overset{(\mathrm{C})}{=} \trans'^{\ast}_{\trans'^{\ast}_{\langle \lambda \rangle}(\pmb{z}\pmb{z}')}(\pmb{x}')
= \langle \lambda \rangle,
\end{align*}
where
(A) follows from (i) of this lemma,
(B) follows from Lemma \ref{lem:f_T} (ii),
and (C) follows from Lemma \ref{lem:f_T} (ii).
\end{itemize}
\end{proof}

\subsection{Proof of Lemma \ref{lem:fddot}}
\label{subsec:proof-fddot}

\begin{proof}[Proof of Lemma \ref{lem:fddot}]
(Proof of (i)):
We prove by the induction for $|\pmb{x}|$.
For the base case $|\pmb{x}| = 0$, we have $f''^{\ast}_{\langle \pmb{z} \rangle}(\lambda) = \lambda = f'^{\ast}_{\langle \pmb{z} \rangle}(\lambda)$ by (\ref{eq:fstar}).
We consider the induction step for $|\pmb{x}| \geq 1$ choosing $\pmb{z} \in \mathcal{S}^{\leq L}$ arbitrarily and dividing into the following two cases:
the case $f'^{\ast}_{\langle \lambda \rangle}(\pmb{z}) \prec \pmb{d} \preceq f'^{\ast}_{\langle \lambda \rangle}(\pmb{z}\pmb{x})$ and the other case.
\begin{itemize}
\item The case $f'^{\ast}_{\langle \lambda \rangle}(\pmb{z}) \prec \pmb{d} \preceq f'^{\ast}_{\langle \lambda \rangle}(\pmb{z}\pmb{x})$:
We consider the following two cases separately: the case $f'^{\ast}_{\langle \lambda \rangle}(\pmb{z}) \prec \pmb{d} \preceq f'^{\ast}_{\langle \lambda \rangle}(\pmb{z}x_1)$ and the case $f'^{\ast}_{\langle \lambda \rangle}(\pmb{z}x_1) \prec \pmb{d} \preceq f'^{\ast}_{\langle \lambda \rangle}(\pmb{z}\pmb{x})$.

\begin{itemize}
\item The case $f'^{\ast}_{\langle \lambda \rangle}(\pmb{z}) \prec \pmb{d} \preceq f'^{\ast}_{\langle \lambda \rangle}(\pmb{z}x_1)$:
We have
\begin{align*}
f''^{\ast}_{\langle \pmb{z} \rangle}(\pmb{x})
&\overset{(\mathrm{A})}{=} f''_{\langle \pmb{z} \rangle}(x_1)f''^{\ast}_{\langle \pmb{z}x_1 \rangle}(\suff(\pmb{x}))\\
&\overset{(\mathrm{B})}{=} f'^{\ast}_{\langle \pmb{z} \rangle}(\pmb{z})^{-1}\pref(\pmb{d})\pmb{d}^{-1}f'^{\ast}_{\langle \lambda \rangle}(\pmb{z}x_1)f''^{\ast}_{\langle \pmb{z}x_1 \rangle}(\suff(\pmb{x}))\\
&\overset{(\mathrm{C})}{=} f'^{\ast}_{\langle \pmb{z} \rangle}(\pmb{z})^{-1}\pref(\pmb{d})\pmb{d}^{-1}f'^{\ast}_{\langle \lambda \rangle}(\pmb{z}x_1)f'^{\ast}_{\langle \pmb{z}x_1 \rangle}(\suff(\pmb{x}))\\
&\overset{(\mathrm{D})}{=} f'^{\ast}_{\langle \pmb{z} \rangle}(\pmb{z})^{-1}\pref(\pmb{d})\pmb{d}^{-1}f'^{\ast}_{\langle \lambda \rangle}(\pmb{z}\pmb{x}),
\end{align*}
where
(A) follows from (\ref{eq:fstar}) and Lemma \ref{lem:fdot} (i),
(B) follows from the first case of (\ref{eq:fddot}) and $f'^{\ast}_{\langle \lambda \rangle}(\pmb{z}) \prec \pmb{d} \preceq f'^{\ast}_{\langle \lambda \rangle}(\pmb{z}x_1)$,
(C) follows from the second case of (\ref{eq:i5igfy04wlhe}) by the induction hypothesis and $f'^{\ast}_{\langle \lambda \rangle}(\pmb{z}x_1) \not\prec \pmb{d}$,
and (D) follows from (\ref{eq:fstar}).

\item The case $f'^{\ast}_{\langle \lambda \rangle}(\pmb{z}x_1) \prec \pmb{d} \preceq f'^{\ast}_{\langle \lambda \rangle}(\pmb{z}\pmb{x})$:
We have
\begin{align*}
f''^{\ast}_{\langle \pmb{z} \rangle}(\pmb{x})
&\overset{(\mathrm{A})}{=} f''_{\langle \pmb{z} \rangle}(x_1)f''^{\ast}_{\langle \pmb{z}x_1 \rangle}(\suff(\pmb{x}))\\
&\overset{(\mathrm{B})}{=} f'_{\langle \pmb{z} \rangle}(x_1)f''^{\ast}_{\langle \pmb{z}x_1 \rangle}(\suff(\pmb{x}))\\
&\overset{(\mathrm{C})}{=} f'_{\langle \pmb{z} \rangle}(x_1)f'^{\ast}_{\langle \lambda \rangle}(\pmb{z}x_1)^{-1}\pref(\pmb{d})\pmb{d}^{-1}(f'^{\ast}_{\langle \lambda \rangle}(\pmb{z}\pmb{x}))\\
&\overset{(\mathrm{D})}{=} f'^{\ast}_{\langle \pmb{z} \rangle}(\pmb{z})^{-1}\pref(\pmb{d})\pmb{d}^{-1}f'^{\ast}_{\langle \lambda \rangle}(\pmb{z}\pmb{x}),
\end{align*}where
(A) follows from (\ref{eq:fstar}) and Lemma \ref{lem:fdot} (i),
(B) follows from the second case of (\ref{eq:fddot}) since $\pmb{d} \not\preceq f'^{\ast}_{\langle \lambda \rangle}(\pmb{z}x_1)$,
(C) follows from the first case of (\ref{eq:i5igfy04wlhe}) by the induction hypothesis and $f'^{\ast}_{\langle \lambda \rangle}(\pmb{z}x_1) \prec \pmb{d} \preceq f'^{\ast}_{\langle \lambda \rangle}(\pmb{z}\pmb{x})$,
and (D) follows from (\ref{eq:fstar}).
\end{itemize}

\item The other case: We have
\begin{align*}
f''^{\ast}_{\langle \pmb{z} \rangle}(\pmb{x})
&\overset{(\mathrm{A})}{=} f''_{\langle \pmb{z} \rangle}(x_1)f''^{\ast}_{\langle \pmb{z}x_1 \rangle}(\suff(\pmb{x}))\\
&\overset{(\mathrm{B})}{=} f'_{\langle \pmb{z} \rangle}(x_1)f''^{\ast}_{\langle \pmb{z}x_1 \rangle}(\suff(\pmb{x}))\\
&\overset{(\mathrm{C})}{=} f'_{\langle \pmb{z} \rangle}(x_1)f'^{\ast}_{\langle \pmb{z}x_1 \rangle}(\suff(\pmb{x}))\\
&\overset{(\mathrm{D})}{=} f'^{\ast}_{\langle \pmb{z} \rangle}(\pmb{x}),
\end{align*}
where
(A) follows from (\ref{eq:fstar}) and Lemma \ref{lem:fdot} (i),
(B) follows from the second case of (\ref{eq:fddot}) since $f'^{\ast}_{\langle \lambda \rangle}(\pmb{z}) \prec \pmb{d} \preceq f'^{\ast}_{\langle \lambda \rangle}(\pmb{z}x_1)$ does not hold,
(C) follows from the second case of (\ref{eq:i5igfy04wlhe}) by the induction hypothesis and that $f'^{\ast}_{\langle \lambda \rangle}(\pmb{z}) \prec \pmb{d} \preceq f'^{\ast}_{\langle \lambda \rangle}(\pmb{z}x_1)$ does not hold,
and (D) follows from (\ref{eq:fstar}).
\end{itemize}

(Proof of (ii)):
Assume that
\begin{equation}
\label{eq:b4nps5s151zm}
f''_{\langle \pmb{z} \rangle}(s) \prec f''_{\langle \pmb{z} \rangle}(s').
\end{equation}
 In the case $f'^{\ast}_{\langle \lambda \rangle}(\pmb{z}) \not\prec \pmb{d}$, we have 
 \begin{equation}
 f'_{\langle \pmb{z} \rangle}(s)
\overset{(\mathrm{A})}{=} f''_{\langle \pmb{z} \rangle}(s)
\overset{(\mathrm{B})}{\prec} f''_{\langle \pmb{z} \rangle}(s')
\overset{(\mathrm{C})}{=} f'_{\langle \pmb{z} \rangle}(s')
\end{equation}
as desired, where
(A) follows from the second case of (\ref{eq:fddot}) and $f'^{\ast}_{\langle \lambda \rangle}(\pmb{z}) \not\prec \pmb{d}$,
(B) follows from (\ref{eq:b4nps5s151zm}),
and (C) follows from the second case of (\ref{eq:fddot}) and $f'^{\ast}_{\langle \lambda \rangle}(\pmb{z}) \not\prec \pmb{d}$.

We consider the case $f'^{\ast}_{\langle \lambda \rangle}(\pmb{z}) \prec \pmb{d}$ dividing into four cases by whether $\pmb{d} \preceq f'^{\ast}_{\langle \lambda \rangle}(\pmb{z}s)$ and whether $\pmb{d} \preceq f'^{\ast}_{\langle \lambda \rangle}(\pmb{z}s')$.
\begin{itemize}
\item The case $\pmb{d} \preceq f'^{\ast}_{\langle \lambda \rangle}(\pmb{z}s), \pmb{d} \preceq f'^{\ast}_{\langle \lambda \rangle}(\pmb{z}s')$:
We have 
\begin{align}
\lefteqn{f'^{\ast}_{\langle \lambda \rangle}(\pmb{z})^{-1}\pref(\pmb{d})\pmb{d}^{-1}(f'^{\ast}_{\langle \lambda \rangle}(\pmb{z})f'^{\ast}_{\langle \pmb{z} \rangle}(s))} \nonumber\\
&\overset{(\mathrm{A})}{=} f'^{\ast}_{\langle \lambda \rangle}(\pmb{z})^{-1}\pref(\pmb{d})\pmb{d}^{-1}f'^{\ast}_{\langle \lambda \rangle}(\pmb{z}s) \nonumber\\
&\overset{(\mathrm{B})}{=} f''_{\langle \pmb{z} \rangle}(s) \nonumber\\
&\overset{(\mathrm{C})}{\prec} f''_{\langle \pmb{z} \rangle}(s')  \nonumber\\
&\overset{(\mathrm{D})}{=} f'^{\ast}_{\langle \lambda \rangle}(\pmb{z})^{-1}\pref(\pmb{d})\pmb{d}^{-1}f'^{\ast}_{\langle \lambda \rangle}(\pmb{z}s') \nonumber\\
&\overset{(\mathrm{E})}{=}f'^{\ast}_{\langle \lambda \rangle}(\pmb{z})^{-1}\pref(\pmb{d})\pmb{d}^{-1}(f'^{\ast}_{\langle \lambda \rangle}(\pmb{z})f'_{\langle \pmb{z} \rangle}(s')), \label{eq:0n9mk4ye2bcs}
\end{align}
where
(A) follows from Lemma \ref{lem:f_T} (i) and Lemma \ref{lem:fdot} (i),
(B) follows from the first case of (\ref{eq:fddot}) and $\pmb{d} \preceq f'^{\ast}_{\langle \lambda \rangle}(\pmb{z}s)$, 
(C) follows from (\ref{eq:b4nps5s151zm}),
(D) follows from the first case of (\ref{eq:fddot}) and $\pmb{d} \preceq f'^{\ast}_{\langle \lambda \rangle}(\pmb{z}s')$,
and (E) follows from Lemma \ref{lem:f_T} (i) and Lemma \ref{lem:fdot} (i).
Comparing both sides of (\ref{eq:0n9mk4ye2bcs}), we obtain $f'_{\langle \pmb{z} \rangle}(s) \prec f'_{\langle \pmb{z} \rangle}(s')$ as desired.

\item The case $\pmb{d} \preceq f'^{\ast}_{\langle \lambda \rangle}(\pmb{z}s), \pmb{d} \not\preceq f'^{\ast}_{\langle \lambda \rangle}(\pmb{z}s')$:
We show that this case is impossible.
We have 
\begin{align*}
f'^{\ast}_{\langle \lambda \rangle}(\pmb{z}s')
&\overset{(\mathrm{A})}{=} f'^{\ast}_{\langle \lambda \rangle}(\pmb{z})f'_{\langle \pmb{z} \rangle}(s')\nonumber\\
&\overset{(\mathrm{B})}{=} f'^{\ast}_{\langle \lambda \rangle}(\pmb{z})f''_{\langle \pmb{z} \rangle}(s')\\
&\overset{(\mathrm{C})}{\succ} f'^{\ast}_{\langle \lambda \rangle}(\pmb{z})f''_{\langle \pmb{z} \rangle}(s)\\
&\overset{(\mathrm{D})}{=} f'^{\ast}_{\langle \lambda \rangle}(\pmb{z})f'^{\ast}_{\langle \lambda \rangle}(\pmb{z})^{-1}\pmb{d}\pref(\pmb{d})^{-1}f'^{\ast}_{\langle \lambda \rangle}(\pmb{z}s)\\
&= \pmb{d}\pref(\pmb{d})^{-1}\pmb{d}\\
&\succeq \pmb{d},
\end{align*}
where
(A) follows from Lemma \ref{lem:f_T} (i) and Lemma \ref{lem:fdot} (i),
(B) follows from the second case of (\ref{eq:fddot}) and $\pmb{d} \not\preceq f'^{\ast}_{\langle \lambda \rangle}(\pmb{z}s')$,
(C) follows from (\ref{eq:b4nps5s151zm}),
and (D) follows from the first case of (\ref{eq:fddot}) and $\pmb{d} \preceq f'^{\ast}_{\langle \lambda \rangle}(\pmb{z}s)$.
This conflicts with $\pmb{d} \not\preceq f'^{\ast}_{\langle \lambda \rangle}(\pmb{z}s')$.

\item The case $\pmb{d} \not\preceq f'^{\ast}_{\langle \lambda \rangle}(\pmb{z}s), \pmb{d} \preceq f'^{\ast}_{\langle \lambda \rangle}(\pmb{z}s')$:
 We have
\begin{align*}
f'^{\ast}_{\langle \lambda \rangle}(\pmb{z}s)
&\overset{(\mathrm{A})}{=} f'^{\ast}_{\langle \lambda \rangle}(\pmb{z})f'_{\langle \pmb{z} \rangle}(s)\\
&\overset{(\mathrm{B})}{=} f'^{\ast}_{\langle \lambda \rangle}(\pmb{z})f''_{\langle \pmb{z} \rangle}(s)\\
&\overset{(\mathrm{C})}{\prec} f'^{\ast}_{\langle \lambda \rangle}(\pmb{z})f''_{\langle \pmb{z} \rangle}(s')\\
&\overset{(\mathrm{D})}{=} f'^{\ast}_{\langle \lambda \rangle}(\pmb{z})f'^{\ast}_{\langle \lambda \rangle}(\pmb{z})^{-1}\pmb{d}\pref(\pmb{d})^{-1}f'^{\ast}_{\langle \lambda \rangle}(\pmb{z}s')\\
&= \pmb{d}\pref(\pmb{d})^{-1}f'^{\ast}_{\langle \lambda \rangle}(\pmb{z}s'),
\end{align*}
where
(A) follows from Lemma \ref{lem:f_T} (i) and Lemma \ref{lem:fdot} (i),
(B) follows from the second case of (\ref{eq:fddot}) and $\pmb{d} \not\preceq f'^{\ast}_{\langle \lambda \rangle}(\pmb{z}s)$,
(C) follows from  (\ref{eq:b4nps5s151zm}),
and (D) follows from the first case of (\ref{eq:fddot}) and $\pmb{d} \preceq f'^{\ast}_{\langle \lambda \rangle}(\pmb{z}s')$.

Therefore, we have at least one of $f'^{\ast}_{\langle \lambda \rangle}(\pmb{z}s) \prec \pmb{d}$ and $f'^{\ast}_{\langle \lambda \rangle}(\pmb{z}s) \succeq \pmb{d}$.
Since $\pmb{d} \not\preceq f'^{\ast}_{\langle \lambda \rangle}(\pmb{z}s)$, we have $f'^{\ast}_{\langle \lambda \rangle}(\pmb{z}s) \prec \pmb{d}$.
Thus, we have $f'^{\ast}_{\langle \lambda \rangle}(\pmb{z}s) \prec \pmb{d} \preceq f'^{\ast}_{\langle \lambda \rangle}(\pmb{z}s')$, which leads to $f'_{\langle \pmb{z} \rangle}(s) \prec f'_{\langle \pmb{z} \rangle}(s')$ as desired.

\item The case $\pmb{d} \not\preceq f'^{\ast}_{\langle \lambda \rangle}(\pmb{z}s), \pmb{d} \not\preceq f'^{\ast}_{\langle \lambda \rangle}(\pmb{z}s')$:
We have 
\begin{equation}
f'_{\langle \pmb{z} \rangle}(s)
\overset{(\mathrm{A})}{=} f''_{\langle \pmb{z} \rangle}(s)
\overset{(\mathrm{B})}{\prec} f''_{\langle \pmb{z} \rangle}(s')
\overset{(\mathrm{C})}{=} f'_{\langle \pmb{z} \rangle}(s')
\end{equation}
as desired, where
(A) follows from the second case of (\ref{eq:fddot}) and $\pmb{d} \not\preceq f'^{\ast}_{\langle \lambda \rangle}(\pmb{z}s)$,
(B) follows from (\ref{eq:b4nps5s151zm}),
and (C) follows from the second case of (\ref{eq:fddot}) and $\pmb{d} \not\preceq f'^{\ast}_{\langle \lambda \rangle}(\pmb{z}s')$.
\end{itemize}

(Proof of (iii)):
Choose $\pmb[x] \in \mathcal{S}^{\geq L}$ arbitrarily.
We have
\begin{align}
|f'^{\ast}_{\langle \lambda \rangle}(\pmb{x})|
\overset{(\mathrm{A})}{=} |f^{\ast}_{\langle \lambda \rangle}(\pmb{x})|
\overset{(\mathrm{B})}{\geq} \left\lfloor \frac{|\pmb{x}|}{|F|} \right\rfloor
\geq  \left\lfloor \frac{L}{|F|} \right\rfloor
\overset{(\mathrm{C})}{=} \left\lfloor \frac{|F|(|\pmb{d}|+1)}{|F|} \right\rfloor 
= |\pmb{d}|+1, \label{eq:h754dc3iyon4}
\end{align}
where
(A) follows from Lemma \ref{lem:duplicate} (i) since $\varphi$ defined in (\ref{eq:gtet1tbkkabj}) is a homomorphism from $F'$ to $F$,
(B) follows from Lemma \ref{lem:longest},
and (C) follows from the definition of $L$.
Also, we have
\begin{align*}
|f''^{\ast}_{\langle \lambda \rangle}(\pmb{x})|
&\overset{(\mathrm{A})}{\geq} \min \{|f'^{\ast}_{\langle \lambda \rangle}(\pmb{x})|, |f'^{\ast}_{\langle \lambda \rangle}(\pmb{z})^{-1}\pref(\pmb{d})\pmb{d}^{-1}(f'^{\ast}_{\langle \lambda \rangle}(\pmb{z}\pmb{x}))|\} \nonumber\\
&= \min \{ |f'^{\ast}_{\langle \pmb{z} \rangle}(\pmb{x})|, |f'^{\ast}_{\langle \pmb{z} \rangle}(\pmb{x})| - 1\} \nonumber\\
&\overset{(\mathrm{B})}{\geq} |\pmb{d}|,
\end{align*}
where
(A) follows from (i) of this lemma,
and (B) follows from (\ref{eq:h754dc3iyon4}).
\end{proof}

\subsection{Proof of Lemma \ref{lem:fddot2}}
\label{subsec:proof-fddot2}

\begin{proof}[Proof of Lemma \ref{lem:fddot2}]

(Proof of (i)):
Assume
\begin{equation}
\label{eq:ckeqzkg47tsl}
f''^{\ast}_{\langle \lambda \rangle}(\pmb{x}) \succeq \pmb{c}.
\end{equation}
We consider the following two cases separately: the case $\pmb{d} \preceq f'^{\ast}_{\langle \lambda \rangle}(\pmb{x})$ and the case $\pmb{d} \not\preceq f'^{\ast}_{\langle \lambda \rangle}(\pmb{x})$.

\begin{itemize}
\item The case $\pmb{d} \preceq f'^{\ast}_{\langle \lambda \rangle}(\pmb{x})$:
We have
\begin{align}
\label{eq:wh4n94pxvfzg}
f''^{\ast}_{\langle \lambda \rangle}(\pmb{x})
\overset{(\mathrm{A})}{=} \pref(\pmb{d}) \pmb{d}^{-1} f'^{\ast}_{\langle \lambda \rangle}(\pmb{x})
\succeq \pref(\pmb{d}),
\end{align}
where (A) follows from the first case of (\ref{eq:i5igfy04wlhe}) and $\pmb{d} \preceq f'^{\ast}_{\langle \lambda \rangle}(\pmb{x})$.
Comparing (\ref{eq:ckeqzkg47tsl}) and (\ref{eq:wh4n94pxvfzg}), we have $\pref(\pmb{d}) \succeq \pmb{c}$ since $|\pref(\pmb{d})| \geq k \geq |\pmb{c}|$.
Therefore, by $\pmb{d} \preceq f'^{\ast}_{\langle \lambda \rangle}(\pmb{x})$,
we obtain $f'^{\ast}_{\langle \lambda \rangle}(\pmb{x}) \succeq \pmb{d} \succeq \pref(\pmb{d}) \succeq \pmb{c}$ as desired.

\item The case $\pmb{d} \not\preceq f'^{\ast}_{\langle \lambda \rangle}(\pmb{x})$:
We have
\begin{equation}
f'^{\ast}_{\langle \lambda \rangle}(\pmb{x})
\overset{(\mathrm{A})}{=} f''^{\ast}_{\langle \lambda \rangle}(\pmb{x})
\overset{(\mathrm{B})}{\succeq} \pmb{c},
\end{equation}
where (A) follows from the second case of (\ref{eq:i5igfy04wlhe}) and $\pmb{d} \not\preceq f'^{\ast}_{\langle \lambda \rangle}(\pmb{x})$,
and (B) follows from (\ref{eq:ckeqzkg47tsl}).
\end{itemize}

(Proof of (ii)):
For $i \in [F] \setminus \{\langle \lambda \rangle\}$, we have $f''_i(s) = f'_i(s)$ directly from the second case of (\ref{eq:fddot}).
We consider the case where $i = \langle \pmb{z} \rangle$ for some $\pmb{z} \in \mathcal{S}^L$.
Then we have $f'^{\ast}_{\langle \lambda \rangle}(\pmb{z}) \not\prec \pmb{d}$ because 
$|f'^{\ast}_{\langle \lambda \rangle}(\pmb{z})| \geq |\pmb{d}|+1$ by Lemma \ref{lem:fddot} (iii).
Therefore, by the second case of (\ref{eq:fddot}), we obtain $f''_i(s) = f'_i(s)$.

(Proof of (iii)):
We prove only that $\PREF^k_{F'', i}(\pmb{b}) \subseteq \PREF^k_{F', i}(\pmb{b})$ for any $i \in \mathcal{J}$ and $\pmb{b} \in \mathcal{C}^{\ast}$ because we can prove $\bar{\PREF}^k_{F'', i}(\pmb{b}) \subseteq \bar{\PREF}^k_{F', i}(\pmb{b})$ in the similar way.
To prove $\PREF^k_{F'', i}(\pmb{b}) \subseteq \PREF^k_{F', i}(\pmb{b})$, it suffices to prove that
for any $(i, \pmb{x}, \pmb{b}, \pmb{c}) \in \mathcal{J} \times \mathcal{S}^{+} \times \mathcal{C}^{\ast} \times \mathcal{C}^{\leq k}$, we have
\begin{align}
(f''^{\ast}_i(\pmb{x}) \succeq \pmb{b}\pmb{c}, f''_i(x_1) \succeq \pmb{b})
&\implies {}^{\exists}\pmb{x}' \in \mathcal{S}^{+}; (f'^{\ast}_i(\pmb{x}') \succeq \pmb{b}\pmb{c}, f'_i(x'_1) \succeq \pmb{b}) \label{eq:ju3g6hv4k9a4}
\end{align}
because this shows that for any $i \in \mathcal{J}$, $\pmb{b} \in \mathcal{C}^{\ast}$, and $\pmb{c} \in \mathcal{C}^k$, we have
\begin{align*}
\pmb{c} \in \PREF^k_{F'', i}(\pmb{b})
&\overset{(\mathrm{A})}{\iff} {}^{\exists}\pmb{x} \in \mathcal{S}^{+}; (f''^{\ast}_i(\pmb{x}) \succeq \pmb{b}\pmb{c}, f''_i(x_1) \succeq \pmb{b})\\
&\overset{(\mathrm{B})}{\implies} {}^{\exists}\pmb{x}' \in \mathcal{S}^{+}; (f'^{\ast}_i(\pmb{x}') \succeq \pmb{b}\pmb{c}, f'_i(x'_1) \succeq \pmb{b})\\
&\overset{(\mathrm{C})}{\iff} \pmb{c} \in \PREF^k_{F', i}(\pmb{b})
\end{align*}
as desired, where
(A) follows from (\ref{eq:pref1}),
(B) follows from (\ref{eq:ju3g6hv4k9a4}),
and (C) follows from (\ref{eq:pref1}).

Choose $(i, \pmb{x}, \pmb{b}, \pmb{c}) \in [F] \times \mathcal{S}^{+} \times \mathcal{C}^{\ast} \times \mathcal{C}^{\leq k}$ arbitrarily and assume
\begin{equation}
\label{eq:mtiglxn32kjk}
f''^{\ast}_i(\pmb{x}) \succeq \pmb{b}\pmb{c}
\end{equation}
and
\begin{equation}
\label{eq:pmg4lzz8qyo0}
f''_i(x_1) \succeq \pmb{b}.
\end{equation}
Then we have
\begin{equation}
\label{eq:n2a6qwpk4yif}
f'_i(x_1) \overset{(\mathrm{A})}{=} f''_i(x_1) \overset{(\mathrm{B})}{\succeq} \pmb{b},
\end{equation}
where (A) follows from (ii) of this lemma,
and (B) follows from (\ref{eq:pmg4lzz8qyo0}).

We prove (\ref{eq:ju3g6hv4k9a4}) by induction for $|\pmb{x}|$.
For the base case $|\pmb{x}| = 1$, we have
\begin{equation}
\label{eq:05c82yn9pr7f}
f'^{\ast}_i(\pmb{x})
= f'_i(x_1)
\overset{(\mathrm{A})} = f''_i(x_1)
= f''^{\ast}_i(\pmb{x})
\overset{(\mathrm{B})}{\succeq}
\pmb{b}\pmb{c}
\end{equation}
as desired, where
(A) follows from (ii) of this lemma,
and (B) follows from (\ref{eq:mtiglxn32kjk}).
By (\ref{eq:05c82yn9pr7f}) and (\ref{eq:n2a6qwpk4yif}), the claim (\ref{eq:ju3g6hv4k9a4}) holds for the base case $|\pmb{x}| = 1$.

We consider the induction step for $|\pmb{x}| \geq 2$.
We have
\begin{align}
f'_i(x_1)f''^{\ast}_{\trans''_i(x_1)}(\suff(\pmb{x}))
&\overset{(\mathrm{A})}{=} f''_i(x_1)f''^{\ast}_{\trans''_i(x_1)}(\suff(\pmb{x}))
\overset{(\mathrm{B})}{=} f''^{\ast}_i(\pmb{x}) \overset{(\mathrm{C})}{\succeq} \pmb{b}\pmb{c}, \label{eq:2k893w5sucds}
\end{align}
where
(A) follows from (ii) of this lemma,
(B) follows from (\ref{eq:fstar}),
and (C) follows from (\ref{eq:mtiglxn32kjk}).

Therefore, $f'_i(x_1) \succeq \pmb{b}\pmb{c}$ or $f'_i(x_1) \prec \pmb{b}\pmb{c}$ holds.
In the case $f'_i(x_1) \succeq \pmb{b}\pmb{c}$, the sequence $\pmb{x}' \coloneqq x_1$ satisfies $f'^{\ast}_i(\pmb{x}') \succeq \pmb{b}\pmb{c}$ and $f'_i(x'_1) = f'_i(x_1) \succeq \pmb{b}$ by (\ref{eq:n2a6qwpk4yif}) as desired.
Thus, now we assume $f'_i(x_1) \prec \pmb{b}\pmb{c}$.
Then we have 
\begin{align}
|f'_i(x_1)^{-1}\pmb{b}\pmb{c}|
&= - |f'_i(x_1)| + |\pmb{b}| + |\pmb{c}|
\overset{(\mathrm{A})}{=} - |f''_i(x_1)| + |\pmb{b}| + |\pmb{c}|
\overset{(\mathrm{B})}{\leq} |\pmb{c}|
\leq k,\label{eq:zebgryw0i016}
\end{align}
where
(A) follows from (ii) of this lemma,
and (B) follows from (\ref{eq:pmg4lzz8qyo0}).
By (\ref{eq:2k893w5sucds}), we have
\begin{equation}
\label{eq:0hrjyvle93b6}
f''^{\ast}_{\trans''_i(x_1)}(\suff(\pmb{x})) \succeq f'_i(x_1)^{-1}\pmb{b}\pmb{c}.
\end{equation}

We can see that there exists $\pmb{x}' \in \mathcal{S}^{+}$ such that
\begin{equation}
\label{eq:p2apgc7qchat}
f'^{\ast}_{\trans''_i(x_1)}(\pmb{x}') \succeq f'_i(x_1)^{-1}\pmb{b}\pmb{c}
\end{equation}
as follows.
\begin{itemize}
\item The case $\trans''_i(x_1) = \langle \lambda \rangle$: By (\ref{eq:zebgryw0i016}), we can apply (i) of this lemma to obtain that $\pmb{x}' \coloneqq \suff(\pmb{x})$ satisfies (\ref{eq:p2apgc7qchat}) from (\ref{eq:0hrjyvle93b6}).
\item The case $\trans''_i(x_1) \in \mathcal{J}$: By (\ref{eq:zebgryw0i016}) and (\ref{eq:0hrjyvle93b6}), we can apply the induction hypothesis to $(\trans''_i(x_1), \suff(\pmb{x}), \lambda, f'_i(x_1)^{-1}\pmb{b}\pmb{c})$.
\end{itemize}
Therefore, we have
\begin{align}
f'^{\ast}_i(x_1\pmb{x}')
&\overset{(\mathrm{A})}{=} f'_i(x_1)f'^{\ast}_{\trans'_i(x_1)}(\pmb{x}')
\overset{(\mathrm{B})}{=} f'_i(x_1)f'^{\ast}_{\trans''_i(x_1)}(\pmb{x}')
\overset{(\mathrm{C})}{\succeq} f'_i(x_1)f'_i(x_1)^{-1}\pmb{b}\pmb{c}
= \pmb{b}\pmb{c}, \label{eq:bb04kybrpm5q}
\end{align}
where
(A) follows from (\ref{eq:fstar}),
(B) follows from (\ref{eq:g0yqx2cwxfd1}),
and (C) follows from (\ref{eq:p2apgc7qchat}).
The induction is completed by (\ref{eq:n2a6qwpk4yif}) and (\ref{eq:bb04kybrpm5q}).
\end{proof}

\subsection{Proof of Lemma \ref{lem:psi-1}}
\label{subsec:proof-psi-1}

\begin{proof}[Proof of Lemma \ref{lem:psi-1}]
(Proof of (i)):
Assume that
\begin{equation}
\label{eq:ebpzimbqneev}
\pmb{b} \preceq \pmb{b}'.
\end{equation}
In the case $f'^{\ast}_{\langle \lambda \rangle}(\pmb{z}) \not\preceq \pref(\pmb{d})$, 
we have
\begin{equation}
\psi_{\pmb{z}}(\pmb{b})
\overset{(\mathrm{A})}{=} \pmb{b}
\overset{(\mathrm{B})}{\preceq} \pmb{b}'
\overset{(\mathrm{C})}{=} \psi_{\pmb{z}}(\pmb{b}'),
\end{equation}
where
(A) follows from the second case of (\ref{eq:psi-1}) and $f'^{\ast}_{\langle \lambda \rangle}(\pmb{z}) \not\preceq \pref(\pmb{d})$,
(B) follows from (\ref{eq:ebpzimbqneev}),
and (C) follows from the second case of (\ref{eq:psi-1}) and $f'^{\ast}_{\langle \lambda \rangle}(\pmb{z}) \not\preceq \pref(\pmb{d})$.

We consider the case $f'^{\ast}_{\langle \lambda \rangle}(\pmb{z}) \preceq \pref(\pmb{d})$ 
dividing into four cases by whether $\pref(\pmb{d}) \prec f'^{\ast}_{\langle \lambda \rangle}(\pmb{z})\pmb{b}$ and whether $\pref(\pmb{d}) \prec f'^{\ast}_{\langle \lambda \rangle}(\pmb{z})\pmb{b}'$.

\begin{itemize}
\item The case $\pref(\pmb{d}) \prec f'^{\ast}_{\langle \lambda \rangle}(\pmb{z})\pmb{b},\pref(\pmb{d}) \prec f'^{\ast}_{\langle \lambda \rangle}(\pmb{z})\pmb{b}'$:
We have
\begin{align*}
\psi_{\pmb{z}}(\pmb{b}) 
&\overset{(\mathrm{A})}{=}  f'^{\ast}_{\langle \lambda \rangle}(\pmb{z})^{-1}\pmb{d}\pref(\pmb{d})^{-1}(f'^{\ast}_{\langle \lambda \rangle}(\pmb{z})\pmb{b})\\
&\overset{(\mathrm{B})}{\preceq}  f'^{\ast}_{\langle \lambda \rangle}(\pmb{z})^{-1}\pmb{d}\pref(\pmb{d})^{-1}(f'^{\ast}_{\langle \lambda \rangle}(\pmb{z})\pmb{b}')
\overset{(\mathrm{C})}{=} \psi_{\pmb{z}}(\pmb{b}')
\end{align*}
as desired, where
(A) follows from the first case of (\ref{eq:psi-1}) and $\pref(\pmb{d}) \prec f'^{\ast}_{\langle \lambda \rangle}(\pmb{z})\pmb{b}$,
(B) follows from (\ref{eq:ebpzimbqneev}),
and (C) follows from the first case of (\ref{eq:psi-1}) and $\pref(\pmb{d}) \prec f'^{\ast}_{\langle \lambda \rangle}(\pmb{z})\pmb{b}'$.

\item The case $\pref(\pmb{d}) \prec f'^{\ast}_{\langle \lambda \rangle}(\pmb{z})\pmb{b},\pref(\pmb{d}) \not\prec f'^{\ast}_{\langle \lambda \rangle}(\pmb{z})\pmb{b}'$:
This case is impossible because (\ref{eq:ebpzimbqneev}) leads to $\pref(\pmb{d}) \prec f'^{\ast}_{\langle \lambda \rangle}(\pmb{z})\pmb{b}\preceq f'^{\ast}_{\langle \lambda \rangle}(\pmb{z})\pmb{b}'$, which conflicts with $\pref(\pmb{d}) \not\prec f'^{\ast}_{\langle \lambda \rangle}(\pmb{z})\pmb{b}'$.

\item The case $\pref(\pmb{d}) \not\prec f'^{\ast}_{\langle \lambda \rangle}(\pmb{z})\pmb{b}, \pref(\pmb{d}) \prec f'^{\ast}_{\langle \lambda \rangle}(\pmb{z})\pmb{b}'$:
By (\ref{eq:ebpzimbqneev}), we have
\begin{equation}
\label{eq:r5udbmcr4atk}
f'^{\ast}_{\langle \lambda \rangle}(\pmb{z})\pmb{b} \preceq f'^{\ast}_{\langle \lambda \rangle}(\pmb{z})\pmb{b}'.
\end{equation}
By (\ref{eq:r5udbmcr4atk}) and $\pref(\pmb{d}) \prec f'^{\ast}_{\langle \lambda \rangle}(\pmb{z})\pmb{b}'$,
exactly one of $\pref(\pmb{d}) \prec f'^{\ast}_{\langle \lambda \rangle}(\pmb{z})\pmb{b}$ and $\pref(\pmb{d}) \succeq f'^{\ast}_{\langle \lambda \rangle}(\pmb{z})\pmb{b}$ holds.
Since the former does not hold by $\pref(\pmb{d}) \not\prec f'^{\ast}_{\langle \lambda \rangle}(\pmb{z})\pmb{b}$,
the latter holds:
\begin{equation}
\label{eq:c3oxzym98dhz}
f'^{\ast}_{\langle \lambda \rangle}(\pmb{z})\pmb{b} \preceq \pref(\pmb{d}).
\end{equation}
Thus, we have
\begin{align*}
\psi_{\pmb{z}}(\pmb{b})
&\overset{(\mathrm{A})}{=} \pmb{b}
= f'^{\ast}_{\langle \lambda \rangle}(\pmb{z})^{-1}f'^{\ast}_{\langle \lambda \rangle}(\pmb{z})\pmb{b}
\overset{(\mathrm{B})}{\preceq} f'^{\ast}_{\langle \lambda \rangle}(\pmb{z})^{-1}\pref(\pmb{d})\\
&\preceq f'^{\ast}_{\langle \lambda \rangle}(\pmb{z})^{-1}\pmb{d}\pref(\pmb{d})^{-1}(f'^{\ast}_{\langle \lambda \rangle}(\pmb{z})\pmb{b}')
\overset{(\mathrm{C})}{=} \psi_{\pmb{z}}(\pmb{b}'),
\end{align*}
where
(A) follows from the second case of (\ref{eq:psi-1}) and $\pref(\pmb{d}) \not\prec f'^{\ast}_{\langle \lambda \rangle}(\pmb{z})\pmb{b}$,
(B) follows from (\ref{eq:c3oxzym98dhz}),
and (C) follows from the first case of (\ref{eq:psi-1}) and $\pref(\pmb{d}) \prec f'^{\ast}_{\langle \lambda \rangle}(\pmb{z})\pmb{b}'$.

\item The case $\pref(\pmb{d}) \not\prec f'^{\ast}_{\langle \lambda \rangle}(\pmb{z})\pmb{b},\pref(\pmb{d}) \not\prec f'^{\ast}_{\langle \lambda \rangle}(\pmb{z})\pmb{b}'$:
We have
\begin{equation}
\psi_{\pmb{z}}(\pmb{b}) \overset{(\mathrm{A})}{=} \pmb{b}
\overset{(\mathrm{B})}{\preceq} \pmb{b}' \overset{(\mathrm{C})}{=} \psi_{\pmb{z}}(\pmb{b}')
\end{equation}
as desired, where
(A) follows from the second case of (\ref{eq:psi-1}) and $\pref(\pmb{d}) \not\prec f'^{\ast}_{\langle \lambda \rangle}(\pmb{z})\pmb{b}$,
(B) follows from (\ref{eq:ebpzimbqneev}),
and (C) follows from the second case of (\ref{eq:psi-1}) and $\pref(\pmb{d}) \not\prec f'^{\ast}_{\langle \lambda \rangle}(\pmb{z})\pmb{b}'$.
\end{itemize}

(Proof of (ii)):
We consider the following three cases separately: (I) the case $f'^{\ast}_{\langle \lambda \rangle}(\pmb{z}) \preceq \pref(\pmb{d}) \prec f'^{\ast}_{\langle \lambda \rangle}(\pmb{zx})$, (II) the case $f'^{\ast}_{\langle \lambda \rangle}(\pmb{z}\pmb{x}) \preceq \pref(\pmb{d}) \prec f'^{\ast}_{\langle \lambda \rangle}(\pmb{z}\pmb{x}) \pmb{c}$, and (III) the other case:

\begin{itemize}
\item[(I)] The case $f'^{\ast}_{\langle \lambda \rangle}(\pmb{z}) \preceq \pref(\pmb{d}) \prec f'^{\ast}_{\langle \lambda \rangle}(\pmb{zx})$:
We have
\begin{equation}
\label{eq:98ohcwg7568z}
f'^{\ast}_{\langle \lambda \rangle}(\pmb{z}) \prec \pmb{d} \preceq f'^{\ast}_{\langle \lambda \rangle}(\pmb{zx})
\end{equation}
since
\begin{equation}
\pref(\pmb{d}) \bar{d}_l
\overset{(\mathrm{A})}{\not\in} \PREF^{\ast}_{F, \langle \lambda \rangle}
\overset{(\mathrm{B})}{=} \PREF^{\ast}_{F', \langle \lambda \rangle},
\end{equation}
where 
(A) follows from (\ref{eq:2cz6211gg810}),
and (B) follows from Lemma \ref{lem:duplicate} (ii) since $\varphi$ defined in (\ref{eq:gtet1tbkkabj}) is a homomorphism from $F'$ to $F$.
Therefore, by the second case of (\ref{eq:psi-1}), we obtain
\begin{equation}
\label{eq:vhnklq74di4m}
f''^{\ast}_{\langle \pmb{z} \rangle}(\pmb{x}) = f'^{\ast}_{\langle \lambda \rangle}(\pmb{z})^{-1}\pref(\pmb{d})\pmb{d}^{-1}(f'^{\ast}_{\langle \lambda \rangle}(\pmb{z}\pmb{x})).
\end{equation}

We consider the following two cases separately: (I-A) the case $f'^{\ast}_{\langle \lambda \rangle}(\pmb{z}) \prec f'^{\ast}_{\langle \lambda \rangle}(\pmb{zx}) = \pmb{d}, \pmb{c} = \lambda$ and (I-B) the other case.
\begin{itemize}
\item[(I-A)] The case $f'^{\ast}_{\langle \lambda \rangle}(\pmb{z}) \prec f'^{\ast}_{\langle \lambda \rangle}(\pmb{zx}) = \pmb{d}, \pmb{c} = \lambda$:
We have
\begin{align}
f'^{\ast}_{\langle \lambda \rangle}(\pmb{z})f''^{\ast}_{\langle \pmb{z} \rangle}(\pmb{x})\pmb{c}
&\overset{(\mathrm{A})}{=}  f'^{\ast}_{\langle \lambda \rangle}(\pmb{z})f'^{\ast}_{\langle \lambda \rangle}(\pmb{z})^{-1}\pref(\pmb{d})\pmb{d}^{-1}(f'^{\ast}_{\langle \lambda \rangle}(\pmb{z}\pmb{x}))\pmb{c}\nonumber\\
&\overset{(\mathrm{B})}{=}  f'^{\ast}_{\langle \lambda \rangle}(\pmb{z})f'^{\ast}_{\langle \lambda \rangle}(\pmb{z})^{-1}\pref(\pmb{d})\pmb{d}^{-1}\pmb{d}\pmb{c}\nonumber\\
&\overset{(\mathrm{C})}{=} \pref(\pmb{d})\nonumber\\
&\not\succ \pref(\pmb{d}), \label{eq:tslo5leloxhy}
\end{align}
where
(A) follows from (\ref{eq:vhnklq74di4m}),
(B) follows from $f'^{\ast}_{\langle \lambda \rangle}(\pmb{zx}) = \pmb{d}$,
and (C) follows from $\pmb{c} = \lambda$.

Hence, we have
\begin{align*}
\psi_{\pmb{z}}(f''^{\ast}_{\langle \pmb{z} \rangle} (\pmb{x}) \pmb{c})
&\overset{(\mathrm{A})}{=} f''^{\ast}_{\langle \pmb{z} \rangle} (\pmb{x}) \pmb{c}\\
&\overset{(\mathrm{B})}{=} f'^{\ast}_{\langle \lambda \rangle}(\pmb{z})^{-1}\pref(\pmb{d})\pmb{d}^{-1}f'^{\ast}_{\langle \lambda \rangle}(\pmb{z}\pmb{x})\pmb{c}\\
&\overset{(\mathrm{C})}{=} f'^{\ast}_{\langle \lambda \rangle}(\pmb{z})^{-1}\pref(f'^{\ast}_{\langle \lambda \rangle}(\pmb{z}\pmb{x}))\pmb{d}^{-1}\pmb{d}\\
&\overset{(\mathrm{D})}{=} f'^{\ast}_{\langle \lambda \rangle}(\pmb{z})^{-1}f'^{\ast}_{\langle \lambda \rangle}(\pmb{z})\pref(f'^{\ast}_{\langle \pmb{z} \rangle}(\pmb{x})))\pmb{d}^{-1}\pmb{d}\\
&= \pref(f'^{\ast}_{\langle \pmb{z}\rangle}(\pmb{x}))
\end{align*}
as desired, where
(A) follows from the second case of (\ref{eq:psi-1}) and (\ref{eq:tslo5leloxhy}),
(B) follows from (\ref{eq:vhnklq74di4m}),
(C) follows from $f'^{\ast}_{\langle \lambda \rangle}(\pmb{zx}) = \pmb{d}$ and $\pmb{c} = \lambda$,
and (D) follows from Lemma \ref{lem:f_T} (i), Lemma \ref{lem:fdot} (i), and $f'^{\ast}_{\langle \lambda \rangle}(\pmb{z}) \prec f'^{\ast}_{\langle \lambda \rangle}(\pmb{zx})$.

\item[(I-B)] The other case:
Then by (\ref{eq:98ohcwg7568z}), we have
\begin{equation}
\label{eq:nbm6p8q0eb2m}
\pmb{d} \prec f'^{\ast}_{\langle \lambda \rangle}(\pmb{zx})\pmb{c},
\end{equation}
since it does not hold that $f'^{\ast}_{\langle \lambda \rangle}(\pmb{z}) \prec f'^{\ast}_{\langle \lambda \rangle}(\pmb{zx}) = \pmb{d}, \pmb{c} = \lambda$ by the assumption of the case (I-B).

We have
\begin{align}
f'^{\ast}_{\langle \lambda \rangle}(\pmb{z})f''^{\ast}_{\langle \pmb{z} \rangle}(\pmb{x})\pmb{c}
&\overset{(\mathrm{A})}{=} f'^{\ast}_{\langle \lambda \rangle}(\pmb{z})f'^{\ast}_{\langle \lambda \rangle}(\pmb{z})^{-1}\pref(\pmb{d})\pmb{d}^{-1}(f'^{\ast}_{\langle \lambda \rangle}(\pmb{z}\pmb{x}))\pmb{c}\nonumber\\
&\overset{(\mathrm{B})}{\succ} f'^{\ast}_{\langle \lambda \rangle}(\pmb{z})f'^{\ast}_{\langle \lambda \rangle}(\pmb{z})^{-1}\pref(\pmb{d})\pmb{d}^{-1}\pmb{d}\nonumber\\
&= \pref(\pmb{d}) \label{eq:8n1fmv36lng1}\\
&\overset{(\mathrm{C})}{\succeq} f'^{\ast}_{\langle \lambda \rangle}(\pmb{z}) \label{eq:cl3zr8c7d4hf}
\end{align}
as desired, where
(A) follows from (\ref{eq:vhnklq74di4m}),
(B) follows from (\ref{eq:nbm6p8q0eb2m}),
and (C) follows from the assumption of the case (I).

Hence, we have
\begin{align*}
\psi_{\pmb{z}}(f''_{\langle \pmb{z} \rangle} (\pmb{x}) \pmb{c})
&\overset{(\mathrm{A})}{=}  f'^{\ast}_{\langle \lambda \rangle}(\pmb{z})^{-1}\pmb{d}\pref(\pmb{d})^{-1}(f'^{\ast}_{\langle \lambda \rangle}(\pmb{z})f''^{\ast}_{\langle \pmb{z} \rangle}(\pmb{x})\pmb{c})\\
&\overset{(\mathrm{B})}{=}  f'^{\ast}_{\langle \lambda \rangle}(\pmb{z})^{-1}\pmb{d}\pref(\pmb{d})^{-1}(f'^{\ast}_{\langle \lambda \rangle}(\pmb{z})f'^{\ast}_{\langle \lambda \rangle}(\pmb{z})^{-1}\\
&\quad \pref(\pmb{d})\pmb{d}^{-1}(f'^{\ast}_{\langle \lambda \rangle}(\pmb{z}\pmb{x})\pmb{c}))\\
&=  f'^{\ast}_{\langle \pmb{z} \rangle}(\pmb{x})\pmb{c}\\
&\overset{(\mathrm{C})}{=}  f'^{\ast}_{\langle \pmb{z} \rangle}(\pmb{x})\psi_{\pmb{zx}}(\pmb{c}),
\end{align*}
where
(A) follows from the first case of (\ref{eq:psi-1}), (\ref{eq:8n1fmv36lng1}), and (\ref{eq:cl3zr8c7d4hf}),
(B) follows from (\ref{eq:vhnklq74di4m}),
and (C) follows from the second case of (\ref{eq:psi-1}) and the assumption of the case (I).
\end{itemize}

\item[(II)] The case $f'^{\ast}_{\langle \lambda \rangle}(\pmb{z}\pmb{x}) \preceq \pref(\pmb{d}) \prec f'^{\ast}_{\langle \lambda \rangle}(\pmb{z}\pmb{x}) \pmb{c}$:
Then since $\pmb{d} \not \preceq f'^{\ast}_{\langle \lambda \rangle}(\pmb{z}\pmb{x})$, we have
\begin{equation}
\label{eq:zmtfdfdxntjl}
f''^{\ast}_{\langle \pmb{z} \rangle}(\pmb{x}) = f'^{\ast}_{\langle \pmb{z} \rangle}(\pmb{x})
\end{equation}
applying the second case of (\ref{eq:i5igfy04wlhe}).
Therefore, we have
\begin{align}
f'^{\ast}_{\langle \lambda \rangle}(\pmb{z})
&\preceq f'^{\ast}_{\langle \lambda \rangle}(\pmb{z}\pmb{x}) \nonumber\\
&\overset{(\mathrm{A})}{\preceq} \pref(\pmb{d}) \label{eq:t9bf3rrxb006}\\
&\overset{(\mathrm{A})}{\prec} f'^{\ast}_{\langle \lambda \rangle}(\pmb{z}\pmb{x}) \pmb{c}\nonumber\\
&\overset{(\mathrm{B})}{=} f'^{\ast}_{\langle \lambda \rangle}(\pmb{z})f'^{\ast}_{\langle \pmb{z} \rangle}(\pmb{x}) \pmb{c} \nonumber\\
&\overset{(\mathrm{C})}{=} f'^{\ast}_{\langle \lambda \rangle}(\pmb{z})f''^{\ast}_{\langle \pmb{z} \rangle}(\pmb{x}) \pmb{c}, \label{eq:1pzl6zbf9wuz}
\end{align}
where
(A)s follow from the assumption of the case (II), 
(B) follows from Lemma \ref{lem:f_T} (i) and Lemma \ref{lem:fdot} (i),
and (C) follows from (\ref{eq:zmtfdfdxntjl}).

Hence, we have 
\begin{align*}
\psi_{\pmb{z}}(f''^{\ast}_{\langle \pmb{z} \rangle}(\pmb{x})\pmb{c})
&\overset{(\mathrm{A})}{=} f'^{\ast}_{\langle \lambda \rangle}(\pmb{z})^{-1}\pmb{d}\pref(\pmb{d})^{-1}(f'^{\ast}_{\langle \lambda \rangle}(\pmb{z})f''^{\ast}_{\langle \pmb{z} \rangle}(\pmb{x})\pmb{c}) \\
&\overset{(\mathrm{B})}{=}  f'^{\ast}_{\langle \lambda \rangle}(\pmb{z})^{-1}\pmb{d}\pref(\pmb{d})^{-1}(f'^{\ast}_{\langle \lambda \rangle}(\pmb{z})f'^{\ast}_{\langle \pmb{z} \rangle}(\pmb{x})\pmb{c}) \\
&\overset{(\mathrm{C})}{=} f'^{\ast}_{\langle \lambda \rangle}(\pmb{z})^{-1}\pmb{d}\pref(\pmb{d})^{-1}(f'^{\ast}_{\langle \lambda \rangle}(\pmb{zx})\pmb{c}) \\
&= f'^{\ast}_{\langle \pmb{z} \rangle} (\pmb{x}) f'^{\ast}_{\langle \pmb{z} \rangle} (\pmb{x})^{-1} f'^{\ast}_{\langle \lambda \rangle}(\pmb{z})^{-1}\pmb{d}\pref(\pmb{d})^{-1}(f'^{\ast}_{\langle \lambda \rangle}(\pmb{zx})\pmb{c}) \\
&\overset{(\mathrm{D})}{=} f'^{\ast}_{\langle \pmb{z} \rangle} (\pmb{x}) f'^{\ast}_{\langle \lambda \rangle}(\pmb{zx})^{-1}\pmb{d}\pref(\pmb{d})^{-1}(f'^{\ast}_{\langle \lambda \rangle}(\pmb{zx})\pmb{c}) \\
&\overset{(\mathrm{E})}{=} f'^{\ast}_{\langle \pmb{z} \rangle} (\pmb{x}) \psi_{\pmb{zx}}(\pmb{c})
\end{align*}
as desired, where
(A) follows from the first case of (\ref{eq:psi-1}), (\ref{eq:t9bf3rrxb006}), and (\ref{eq:1pzl6zbf9wuz}),
(B) follows from (\ref{eq:zmtfdfdxntjl}),
(C) follows from Lemma \ref{lem:f_T} (i) and Lemma \ref{lem:fdot} (i),
(D) follows from Lemma \ref{lem:f_T} (i) and Lemma \ref{lem:fdot} (i),
and (E) follows from the first case of (\ref{eq:psi-1}) and the assumption of the case (II).

\item[(III)] The other case:
The following implication holds:
\begin{align}
f'^{\ast}_{\langle \lambda \rangle}(\pmb{z}) \prec \pmb{d} \preceq f'^{\ast}_{\langle \lambda \rangle}(\pmb{zx})
&\implies f'^{\ast}_{\langle \lambda \rangle}(\pmb{z}) \preceq \pref(\pmb{d}) \prec f'^{\ast}_{\langle \lambda \rangle}(\pmb{zx})\label{eq:y5f0ctncj25q}
\end{align}
Now, it does not hold that $f'^{\ast}_{\langle \lambda \rangle}(\pmb{z}) \preceq \pref(\pmb{d}) \prec f'^{\ast}_{\langle \lambda \rangle}(\pmb{zx})$ by the assumption of the case (III).
Hence, by the contraposition of (\ref{eq:y5f0ctncj25q}), we see that
$f'^{\ast}_{\langle \lambda \rangle}(\pmb{z}) \prec \pmb{d} \preceq f'^{\ast}_{\langle \lambda \rangle}(\pmb{zx})$ does not hold.
Therefore, we obtain
\begin{equation}
\label{eq:523mr9j3co7t}
f''^{\ast}_{\langle \pmb{z} \rangle}(\pmb{x}) = f'^{\ast}_{\langle \pmb{z} \rangle}(\pmb{x})
\end{equation}
applying the second case of (\ref{eq:i5igfy04wlhe}).

By the assumption of the case (III), neither $f'^{\ast}_{\langle \lambda \rangle}(\pmb{z}) \preceq \pref(\pmb{d}) \prec f'^{\ast}_{\langle \lambda \rangle}(\pmb{zx})$ nor $f'^{\ast}_{\langle \lambda \rangle}(\pmb{z}\pmb{x}) \preceq \pref(\pmb{d}) \prec f'^{\ast}_{\langle \lambda \rangle}(\pmb{z}\pmb{x}) \pmb{c}$ hold.
Hence, the following condition does not hold:
\begin{equation}
f'^{\ast}_{\langle \lambda \rangle}(\pmb{z})
\preceq \pref(\pmb{d})
\prec f'^{\ast}_{\langle \lambda \rangle}(\pmb{z}\pmb{x}) \pmb{c}
\overset{(\mathrm{A})}{=} f'^{\ast}_{\langle \lambda \rangle}(\pmb{z})f''^{\ast}_{\langle \pmb{z} \rangle}(\pmb{x})\pmb{c},
\end{equation}
where (A) follows from (\ref{eq:523mr9j3co7t}).
Therefore, by the second case of (\ref{eq:psi-1}), we have
\begin{equation}
\label{eq:n3hwci6ksq12}
\psi_{\pmb{z}}(f''^{\ast}_{\langle \pmb{z} \rangle}(\pmb{x})\pmb{c}) = f''^{\ast}_{\langle \pmb{z} \rangle} (\pmb{x})\pmb{c}.
\end{equation}

Thus, we have
\begin{align*}
f'^{\ast}_{\langle \pmb{z} \rangle} (\pmb{x}) \psi_{\pmb{zx}}(\pmb{c})
&\overset{(\mathrm{A})}{=} f''^{\ast}_{\langle \pmb{z} \rangle} (\pmb{x}) \psi_{\pmb{zx}}(\pmb{c})
\overset{(\mathrm{B})}{=} f''^{\ast}_{\langle \pmb{z} \rangle} (\pmb{x})\pmb{c}
\overset{(\mathrm{C})}{=} \psi_{\pmb{z}}(f''^{\ast}_{\langle \pmb{z} \rangle}(\pmb{x})\pmb{c})
\end{align*}
\end{itemize}
as desired, where
(A) follows from (\ref{eq:523mr9j3co7t}),
(B) follows from the second case of (\ref{eq:psi-1}) since $f'^{\ast}_{\langle \lambda \rangle}(\pmb{z}\pmb{x}) \preceq \pref(\pmb{d}) \prec f'^{\ast}_{\langle \lambda \rangle}(\pmb{z}\pmb{x}) \pmb{c}$ does not hold by the assumption of the case (III),
and (C) follows from (\ref{eq:n3hwci6ksq12}).

(Proof of (iii)):
We have $f'^{\ast}_{\langle \lambda \rangle}(\pmb{z}) \not\preceq \pref(\pmb{d})$ because $|f'^{\ast}_{\langle \lambda \rangle}(\pmb{z})| > |\pmb{d}|$ by Lemma \ref{lem:fddot} (iii).
Hence, by the second case of (\ref{eq:psi-1}), we obtain $\psi_{\langle \pmb{z} \rangle}(\pmb{b}) = \pmb{b}$ as desired.
\end{proof}

\section{List of Notations}
\label{sec:notation}

\begin{tabular}{lp{0.85\textwidth}}
  $\mathcal{A} \times \mathcal{B}$ & the Cartesian product of sets $\mathcal{A}$ and $\mathcal{B}$, that is, $\{(a, b) : a \in \mathcal{A}, b \in \mathcal{B}\}$, defined at the beginning of Section \ref{sec:preliminary}. \\
  $|\mathcal{A}|$ & the cardinality of a set $\mathcal{A}$, defined at the beginning of Section \ref{sec:preliminary}. \\
  $\mathcal{A}^k$ & the set of all sequences of length $k$ over a set $\mathcal{A}$, defined at the beginning of Section \ref{sec:preliminary}. \\
 $\mathcal{A}^{\geq k}$ & the set of all sequences of length greater than or equal to $k$ over a set $\mathcal{A}$, defined at the beginning of Section \ref{sec:preliminary}. \\
 $\mathcal{A}^{\leq k}$ & the set of all sequences of length less than or equal to  $k$ over a set $\mathcal{A}$, defined at the beginning of Section \ref{sec:preliminary}. \\
  $\mathcal{A}^{\ast}$ & the set of all sequences of finite length over a set $\mathcal{A}$, defined at the beginning of Section \ref{sec:preliminary}.\\
  $\mathcal{A}^{+}$ & the set of all sequences of finite positive length over a set $\mathcal{A}$, defined at the beginning of Section \ref{sec:preliminary}.\\
  $\mathcal{C}$ & the coding alphabet $\mathcal{C} = \{0, 1\}$, at the beginning of Section \ref{sec:preliminary}.\\
  $\bar{c}$ & the negation of $c \in \mathcal{C}$, that is, $\bar{0} = 1, \bar{1} = 0$ defined 
  at the beginning of the proof of Theorem \ref{thm:complete}.\\
  $f^{\ast}_i$ & defined in Definition \ref{def:f_T}. \\
  $F$ & simplified notation of a code-tuple $F(f_0, f_1, \allowbreak \ldots, f_{m-1}, \trans_0, \trans_1, \ldots, \trans_{m-1})$, 
  also written as $F(f, \trans)$, defined below Definition \ref{def:treepair}.\\
  $\bar{F}$ & an irreducible part of $F$, defined in Definition \ref{def:irr-part}.\\
  $|F|$ & the number of code tables of $F$, defined below Definition \ref{def:treepair}. \\
  $[F]$ & simplified notation of $[|F|] = \{0, 1, \allowbreak  2, \ldots, |F|-1\}$, defined below Definition \ref{def:treepair}.\\
 \end{tabular}
 \newpage \noindent
 \begin{tabular}{lp{0.85\textwidth}}
  $\mathscr{F}^{(m)}$ & the set of all $m$-code-tuples, defined after Definition \ref{def:treepair}.\\
  $\mathscr{F}$ & the set of all code-tuples, defined after Definition \ref{def:treepair}.\\
  $\mathscr{F}_{\ext}$ & the set of all extendable code-tuples, defined in Definition \ref{def:F_ext}. \\
  $\mathscr{F}_{k\hopt}$ & defined in Definition \ref{def:optimalset}. \\
  $\mathscr{F}_{\reg}$ & the set of all regular code-tuples, defined in Definition \ref{def:regular}. \\
  $h(F)$ & defined after Lemma \ref{lem:potential}.\\
  $L(F)$ & the average codeword length of a code-tuple $F$, defined in Definition \ref{def:evaluation}. \\
  $L_i(F)$ & the average codeword length of the $i$-th code table of $F$, defined in Definition \ref{def:evaluation}. \\
  $[m]$ & $\{0, 1, 2, \ldots, m-1\}$, defined at the beginning of Section \ref{sec:preliminary}.\\
  $\mathcal{P}^k_{F, i}$ & defined in Definition \ref{def:pref}.\\
  $\bar{\mathcal{P}}^k_{F, i}$ & defined in Definition \ref{def:pref}.\\
    $\mathcal{P}^{\ast}_{F, i}$ & defined in Definition \ref{def:prefstar}.\\
  $\bar{\mathcal{P}}^{\ast}_{F, i}$ & defined in Definition \ref{def:prefstar}.\\
  $\prefset^k_F$ & defined in Definition \ref{def:prefset}.\\
  $\pref(\pmb{x})$ & the sequence obtained by deleting the last letter of $\pmb{x}$, defined at the beginning of Section \ref{sec:preliminary}. \\
  $Q(F)$ & the transition probability matrix, defined in Definition \ref{def:transprobability}.\\
  $Q_{i, j}(F)$ & the transition probability, defined in Definition \ref{def:transprobability}.\\
  $\mathbb{R}$ & the set of all real numbers.\\
  $\mathbb{R}^m$ & the set of all $m$-dimensional real row vectors for an integer $m \geq 1$.\\
  $\mathcal{S}$ & the source alphabet, defined at the beginning of Section \ref{sec:preliminary}.\\
  $\suff(\pmb{x})$ & the sequence obtained by deleting the first letter of $\pmb{x}$, defined at the beginning of Section \ref{sec:preliminary}. \\
  $\pmb{x} \preceq \pmb{y}$ & \pmb{x} is a prefix of \pmb{y}, defined at the beginning of Section \ref{sec:preliminary}.\\
    $\pmb{x} \prec \pmb{y}$ & $\pmb{x} \preceq \pmb{y}$ and $\pmb{x} \neq \pmb{y}$, defined at the beginning of Section \ref{sec:preliminary}. \\
  $|\pmb{x}|$ & the length of a sequence $\pmb{x}$, defined at the beginning of Section \ref{sec:preliminary}.\\
  $\pmb{x}^{-1}\pmb{y}$ & the sequence $\pmb{z}$ such that $\pmb{x}\pmb{z} = \pmb{y}$ defined at the beginning of Section \ref{sec:preliminary}.\\
  $\lambda$ & the empty sequence, defined at the beginning of Section \ref{sec:preliminary}.\\
  $\mu(s)$ & the probability of occurrence of symbol $s$, defined at the beginning of Subsection \ref{subsec:evaluation}. \\
  $\pmb{\pi}(F)$ & defined in Definition \ref{def:regular}.\\
  $\sigma$ & the alphabet size, defined at the beginning of Section \ref{sec:preliminary}.\\
  $\trans^{\ast}_i$ & defined in Definition \ref{def:f_T}. \\
\end{tabular}

\section*{Acknowledgment}
This work was supported in part by JSPS KAKENHI Grant Number JP18H01436, and in part by KIOXIA.


\end{document}